\documentclass[12pt]{article}
\usepackage{lmodern}
\usepackage{microtype}
\usepackage{enumitem}
\usepackage{natbib}
\usepackage[boxed,vlined,linesnumbered]{algorithm2e}
\DontPrintSemicolon
\usepackage{float}
\usepackage{amsmath,amssymb,amsfonts,amsthm,mathtools}
\usepackage{xcolor}
\definecolor{Emerald}{HTML}{00674F}
\usepackage{hyperref}
\hypersetup{
    colorlinks=true,
    linkcolor=blue,
    citecolor=Emerald,
    filecolor=magenta,
    urlcolor=blue,
    pdftitle={Bandwidth-Free Inference for Recursive Nonlinear Impulse Responses},
    pdfpagemode=FullScreen
}
\makeatletter
\renewcommand{\hyper@natlinkstart}[1]{%
  \Hy@backout{#1}%
  \textbf\bgroup
  \hyper@linkstart{cite}{cite.#1}%
  \def\hyper@nat@current{#1}%
}
\renewcommand{\hyper@natlinkend}{%
  \hyper@linkend
  \egroup
}
\renewcommand{\hyper@natlinkbreak}[2]{%
  \hyper@linkend
  \egroup
  #1%
  \textbf\bgroup
  \hyper@linkstart{cite}{cite.#2}%
}
\makeatother
\usepackage{booktabs}
\usepackage[margin=2.25cm]{geometry}

\newtheorem{theorem}{Theorem}[section]
\newtheorem{proposition}[theorem]{Proposition}
\newtheorem{corollary}[theorem]{Corollary}
\newtheorem{lemma}[theorem]{Lemma}
\newtheorem{remark}[theorem]{Remark}
\newtheorem{assumption}[theorem]{Assumption}
\theoremstyle{definition}

\theoremstyle{plain}

\renewcommand{\Pr}{\mathbb{P}}
\renewcommand{\leq}{\leqslant}
\renewcommand{\geq}{\geqslant}

\newcommand{\diag}{\operatorname{diag}}

\newcommand{\dif}{\mathrm{d}}

\title{Bandwidth-Free Inference for Recursive Nonlinear Impulse Response Functions}
\author{Guilherme Vianna\footnote{University of S\~ao Paulo/Columbia University; \texttt{guilherme.dias.vianna@usp.br}.}}
\date{}

\begin{document}

\maketitle

\begin{abstract}
Recursive nonlinear impulse responses require an estimated innovation law whenever the impact shock is normalized by innovation ranks and future innovations are integrated out. The closest semiparametric recursive construction in the literature estimates the relevant innovation quantile functions smoothly and discusses a direct empirical-residual implementation without developing its complete first-order inference theory. We tackle this gap in a finite-dimensional nonlinear structural autoregression with unrestricted continuous marginal innovation distributions and a fixed normal-rank shock. Our estimator replaces each innovation quantile function with the empirical quantile of generated structural residuals and iterates the same structural transition. For any fixed collection of responses, we establish a joint \(\sqrt{T}\) asymptotic linear representation with four components: direct transition estimation, the effect of transition estimation on residual order statistics, ordinary innovation-quantile estimation, and the shifted impact quantile. After projection through the recursion, the quantile terms admit a residual-rank-and-spacing representation, yielding feasible inference without innovation-density estimation or quantile smoothing. We then characterize the propagated bias from smoothing, establish validity of a full recursive residual bootstrap, and derive the additional covariance contribution from a finite number of simulated paths, providing bandwidth-free inference for the empirical-residual version of the same normal-rank response used in the smooth recursive construction.
\end{abstract}

\noindent\textbf{Keywords:} recursive nonlinear impulse responses; bandwidth-free inference; empirical quantiles; generated residuals; residual bootstrap; simulation error.

\medskip
\noindent\textbf{JEL classification:} C14; C22; C32.

\section{Introduction}
\label{sec:introduction}

\subsection{The unresolved empirical-residual problem}
\label{sec:introduction-problem}

Impulse response functions summarize how a structural disturbance changes the future path of an economic system. In a nonlinear model, the response is commonly computed by iterating an estimated structural transition along shocked and unshocked paths and averaging their difference over the innovations that arrive after impact. When the structural innovation distributions are unknown, those distributions are part of the response calculation and their quantile functions determine both the impact shock in structural units and the future innovations through which the response propagates.

The closest semiparametric recursive analysis represents structural innovations by marginal quantile functions evaluated at Gaussian ranks and computes the response by iterating the nonlinear transition \citep{GourierouxLee2025NonlinearIRF}. Its direct implementation estimates the innovation quantile functions smoothly and even though it discusses drawing innovations from estimated residuals, it does not develop the corresponding first-order or bootstrap theory. We address this and propose a simple method: replace each unknown innovation quantile function with the empirical quantile of estimated structural residuals and use those order statistics in the recursive response.

The empirical-residual implementation is not covered by a standard parametric delta method. The residuals depend on estimated transition parameters; their order statistics are nonsmooth functions of those parameters; the impact shock evaluates one quantile function at a transformed rank; and each perturbation is carried through a nonlinear recursion. Treating the estimated residuals as observed innovations would omit a first-order channel. Treating the impact shock as fixed in structural units would instead change the population response.

We study a finite-dimensional nonlinear structural autoregression with unrestricted continuous marginal innovation distributions and an identified residual map. A selected innovation percentile is mapped to a standard-normal rank, shifted by a fixed amount, and mapped back through the same innovation quantile function. The shocked and unshocked paths start from the same state and receive the same future innovation ranks. The initial state, response variable, shocked component, shock size, future innovation law, and path coupling are fixed before any estimator is introduced. Accordingly, the empirical-quantile and smoothed-quantile procedures estimate one population response and differ only in how they estimate the unknown quantile functions.

The principal estimator first estimates the transition parameter and recovers the structural residuals. It then supplies the componentwise residual order statistics to the recursive response map. The central technical observation is that a fixed-horizon response does not require an unrestricted first-order approximation of the entire quantile process. After the recursive map is differentiated, quantile perturbations enter through a finite collection of weighted integrated-quantile functionals. These weights measure how innovations at different dates and ranks affect the terminal response. This projection permits an observation-level expansion and, after a change of variables, removes the innovation densities from feasible inference.

\subsection{Contributions}
\label{sec:introduction-contributions}

Our main contribution concerns the empirical-residual implementation of a fixed normal-rank recursive response.

First, we establish a joint \(\sqrt{T}\) asymptotic linear representation for the empirical-quantile estimator over any fixed collection of horizons, initial states, response variables, shocked components, and shock sizes. The representation separates direct transition estimation, the effect of transition estimation on generated residual order statistics, estimation of the ordinary impact and future innovation quantiles, and estimation of the shifted impact quantile. Primitive sufficient conditions are given for a scalar nonlinear location--scale autoregression, while the general structural result is stated under the corresponding componentwise generated-quantile expansion. After the quantile effects are projected through the recursive derivative, their influence contributions can be written using residual ranks and adjacent spacings. Feasible covariance estimation therefore requires neither innovation-density estimation nor a quantile-smoothing bandwidth. We are not aware of a prior result that combines these four first-order channels for this empirical-residual recursive estimator.

Next, we derive a same-target comparison with a smoothed-quantile estimator. The leading difference is the smoothing error weighted by the recursive sensitivity of the reported response to innovation quantiles at different dates and ranks. If the propagated bias is \(o\left(T^{-1/2}\right)\), the empirical and smoothed estimators have the same first-order distribution. A bias of order \(T^{-1/2}\) shifts the limiting distribution, while a larger bias dominates sampling uncertainty. This result isolates the effect of smoothing without changing the shock definition, the future innovation law, or the response being estimated.

Finally, we establish conditional validity of a full recursive residual bootstrap. Each replication regenerates a complete time series, re-estimates the transition, reconstructs the residual quantile functions, and recomputes the response. This re-estimation reproduces all four components of the influence representation. We separately characterize numerical integration error. When the number of simulated paths is proportional to the sample size, it contributes an additional covariance term; when it grows faster, the term is asymptotically negligible. Path-only resampling is shown to estimate numerical integration uncertainty conditional on the fitted model rather than sampling uncertainty in the recursive estimator.

\subsection{Relation to the literature}
\label{sec:introduction-literature}

Our object of study belongs to the literature on nonlinear structural impulse responses defined by comparing recursively simulated shocked and unshocked paths \citep{KoopPesaranPotter1996,GourierouxLee2025NonlinearIRF,Ballarin2025}. That literature establishes the role of the initial state, shock size, future innovations, and nonlinear transition. We take structural identification as maintained and focus on inference when the innovation distributions used in the recursion are estimated from generated residuals. Relative to the closest normal-rank construction, the main change is the empirical-residual estimator and its first-order, smoothing, bootstrap, and finite-simulation theory.

Nonparametric and semiparametric local projections estimate nonlinear responses through horizon-specific conditional-mean relations rather than by iterating a structural transition \citep{Jorda2005,PlagborgMollerWolf2021,JordaTaylor2025,GoncalvesHerreraKilianPesavento2024,GoncalvesHerreraKilianPesaventoHolban2026}. These methods address related questions through a different estimation route. The present results neither require nor imply a general ranking between recursive and local-projection procedures; they provide inference for the innovation-distribution component of a recursively defined structural response.

\subsection{Organization}
\label{sec:introduction-organization}

Section~2 defines the population response and the empirical-quantile estimator, with the smoothed estimator introduced as a paired comparison. Section~3 develops the generated-residual expansion, the recursive influence representation, and density-free feasible inference. Section~4 characterizes the effect of quantile smoothing under the same target. Section~5 studies the full recursive residual bootstrap and numerical integration error. Section~6 concludes. The appendices provide primitive conditions, derivations, proofs, and implementation conventions.

\section{Population response and empirical-quantile estimation}
\label{sec:model-estimators}

This section fixes the population impulse-response function before estimation. A scalar model first displays the two roles of the innovation quantile function. The vector formulation then defines the normal-rank structural response. The empirical-quantile estimator is introduced as the principal procedure, followed by a smoothed estimator that changes only the quantile estimate supplied to the same recursive map.

\subsection{A scalar model that displays the estimation problem}
\label{sec:scalar-example}

Consider the nonlinear location-scale autoregression
\begin{equation}
Y_t
=
\mu\left(Y_{t-1};\boldsymbol{\beta}_0\right)
+
\sigma\left(Y_{t-1};\boldsymbol{\beta}_0\right)U_t,
\qquad
\sigma\left(y;\boldsymbol{\beta}_0\right)>0,
\label{eq:scalar-model}
\end{equation}
where \(\boldsymbol{\beta}_0\) is finite-dimensional. The innovations are independent and identically distributed with continuous distribution function \(F_0\) and quantile function \(Q_0=F_0^{-1}\). The positivity of \(\sigma\) yields the residual map
\[
G\left(y_t,y_{t-1};\boldsymbol{\beta}\right)
=
\frac{y_t-\mu\left(y_{t-1};\boldsymbol{\beta}\right)}{\sigma\left(y_{t-1};\boldsymbol{\beta}\right)},
\]
so that \(U_t=G\left(Y_t,Y_{t-1};\boldsymbol{\beta}_0\right)\).

Let \(P_1=F_0\left(U_1\right)\), and define a shock of size \(\delta\in\mathbb{R}\) through
\begin{equation}
\tau_{\delta}\left(p\right)
=
\Phi\left(\Phi^{-1}\left(p\right)+\delta\right),
\qquad
p\in\left(0,1\right).
\label{eq:rank-shift}
\end{equation}
The impact innovations are \(Q_0\left(P_1\right)\) and \(Q_0\left(\tau_{\delta}\left(P_1\right)\right)\). Thus, \(\delta\) is measured on the standard-normal rank scale, while the change in structural innovation units depends on \(Q_0\) and the realized rank.

Fix an initial value \(y\) and let \(P_1,\ldots,P_h\) be independent uniform random variables. For a generic parameter \(\boldsymbol{\beta}\) and quantile function \(Q\), construct the unshocked and shocked paths from the same future ranks. Their difference satisfies
\begin{equation}
\begin{aligned}
D_1
&=
\sigma\left(y;\boldsymbol{\beta}\right)
\left[
Q\left(\tau_{\delta}\left(P_1\right)\right)-Q\left(P_1\right)
\right],\\
D_j
&=
\mu\left(Y_{j-1}^{\delta};\boldsymbol{\beta}\right)
-
\mu\left(Y_{j-1}^{0};\boldsymbol{\beta}\right)
+
\left[
\sigma\left(Y_{j-1}^{\delta};\boldsymbol{\beta}\right)
-
\sigma\left(Y_{j-1}^{0};\boldsymbol{\beta}\right)
\right]Q\left(P_j\right),
\qquad
j=2,\ldots,h.
\end{aligned}
\label{eq:scalar-response-recursion}
\end{equation}
Equation~\eqref{eq:scalar-response-recursion} displays the two uses of the unknown quantile function. It determines the impact difference and the future innovations through which that difference propagates. The scalar impulse-response function is \(\psi_h\left(y,\delta\right)=\mathbb{E}\left[D_h\right]\).

\subsection{Structural response in the general model}
\label{sec:general-response}

Let \(\boldsymbol{Y}_t\in\mathbb{R}^{d}\) satisfy
\begin{equation}
\boldsymbol{Y}_t
=
g\left(\boldsymbol{Y}_{t-1},\boldsymbol{U}_t;\boldsymbol{\beta}_0\right),
\label{eq:structural-transition}
\end{equation}
where \(\boldsymbol{U}_t\in\mathbb{R}^{n}\) is the vector of structural innovations. Higher-order dynamics are represented by augmenting the state. We assume that the structural representation is identified and that a known residual map \(G\) satisfies
\[
\boldsymbol{U}_t
=
G\left(\boldsymbol{Y}_t,\boldsymbol{Y}_{t-1};\boldsymbol{\beta}_0\right).
\]

The innovation vectors are independent and identically distributed over time. Their components are mutually independent, with continuous marginal distribution functions \(F_{j0}\) and quantile functions \(Q_{j0}\), for \(j=1,\ldots,n\). Define
\[
\boldsymbol{Q}_0\left(\boldsymbol{p}\right)
=
\left(
Q_{10}\left(p_1\right),
\ldots,
Q_{n0}\left(p_n\right)
\right)^\prime.
\]
Then \(\boldsymbol{U}_t=\boldsymbol{Q}_0\left(\boldsymbol{P}_t\right)\), where the components of \(\boldsymbol{P}_t\) are independent uniform random variables. For a selected component \(k\), let \(\boldsymbol{\tau}_{k,\delta}\left(\boldsymbol{p}\right)\) apply the map in equation~\eqref{eq:rank-shift} to \(p_k\) and leave the remaining coordinates unchanged.

Fix an initial state \(\boldsymbol{y}\), a response vector \(\boldsymbol{a}\), and a horizon \(h\). For \(r\in\left\{0,\delta\right\}\), set \(\boldsymbol{P}_1^0=\boldsymbol{P}_1\), \(\boldsymbol{P}_1^{\delta}=\boldsymbol{\tau}_{k,\delta}\left(\boldsymbol{P}_1\right)\), and \(\boldsymbol{P}_j^0=\boldsymbol{P}_j^{\delta}=\boldsymbol{P}_j\) for \(j\geq2\). For a generic \(\boldsymbol{\beta}\) and \(\boldsymbol{Q}\), define
\begin{equation}
\boldsymbol{Y}_0^r=\boldsymbol{y},
\qquad
\boldsymbol{Y}_j^r
=
g\left(\boldsymbol{Y}_{j-1}^r,\boldsymbol{Q}\left(\boldsymbol{P}_j^r\right);\boldsymbol{\beta}\right),
\qquad
j=1,\ldots,h.
\label{eq:paired-paths}
\end{equation}
The path response is
\[
D_h\left(\boldsymbol{P}_{1:h};\boldsymbol{\beta},\boldsymbol{Q},\boldsymbol{y},\boldsymbol{a},k,\delta\right)
=
\boldsymbol{a}^\prime\left(\boldsymbol{Y}_h^{\delta}-\boldsymbol{Y}_h^0\right),
\]
and the population impulse-response function is
\begin{equation}
\psi_h\left(\boldsymbol{y},\boldsymbol{a},k,\delta\right)
=
\mathbb{E}\left[
D_h\left(\boldsymbol{P}_{1:h};\boldsymbol{\beta}_0,\boldsymbol{Q}_0,\boldsymbol{y},\boldsymbol{a},k,\delta\right)
\right].
\label{eq:population-irf}
\end{equation}
The initial state, response variable, shocked component, shock map, and innovation law in equation~\eqref{eq:population-irf} remain fixed throughout the estimator comparison.

\subsection{Empirical-quantile estimator and smoothing comparator}
\label{sec:estimators}

Let \(\boldsymbol{\hat{\beta}}\) estimate \(\boldsymbol{\beta}_0\), and recover the structural residuals by
\[
\boldsymbol{\hat{U}}_t
=
G\left(\boldsymbol{Y}_t,\boldsymbol{Y}_{t-1};\boldsymbol{\hat{\beta}}\right),
\qquad
t=1,\ldots,T.
\]
For component \(j\), let \(\hat{U}_{j,\left(1\right)}\leq\cdots\leq\hat{U}_{j,\left(T\right)}\) denote the residual order statistics.

\subsubsection{Empirical-quantile estimator}

The empirical innovation quantile is
\[
\hat{Q}_{j}^{\mathrm{E}}\left(p\right)
=
\hat{U}_{j,\left(\lceil Tp\rceil\right)},
\qquad
p\in\left(0,1\right).
\]
This estimate uses the generated residual ranks directly and requires no smoothing parameter. Collect the componentwise estimates in \(\boldsymbol{\hat{Q}}^{\mathrm{E}}\).

\subsubsection{Paired smoothed estimator}

For comparison, apply a Gaussian rank smoother \(\mathcal{S}_{b_T}\) to the same residual order statistics,
\[
\hat{Q}_{j,b_T}^{\mathrm{S}}
=
\mathcal{S}_{b_T}\hat{Q}_{j}^{\mathrm{E}},
\]
where \(b_T\) is the bandwidth. Section~\ref{sec:smoothing} states the conditions on this operator. No smoothing choice enters the empirical estimator.

Let \(\boldsymbol{P}_{s,1:h}\), for \(s=1,\ldots,S\), be simulation draws independent of the sample. For \(m\in\left\{\mathrm{E},\mathrm{S}\right\}\), collect the componentwise quantile estimates in \(\boldsymbol{\hat{Q}}^{m}\) and define
\begin{equation}
\hat{\psi}_{h,S}^{m}\left(\boldsymbol{y},\boldsymbol{a},k,\delta\right)
=
\frac{1}{S}
\sum_{s=1}^{S}
D_h\left(
\boldsymbol{P}_{s,1:h};
\boldsymbol{\hat{\beta}},
\boldsymbol{\hat{Q}}^{m},
\boldsymbol{y},
\boldsymbol{a},
k,
\delta
\right).
\label{eq:unified-estimator}
\end{equation}
Both estimators in equation~\eqref{eq:unified-estimator} use the same transition estimate, response specification, and simulation ranks. Their difference is therefore attributable to the quantile estimate supplied to the recursive response map. Common simulation ranks reduce numerical variation in this paired comparison; they do not replace sampling inference.

\section{First-order inference with empirical residual quantiles}
\label{sec:main-theory}

For a fixed response specification, write
\[
\Psi_h\left(\boldsymbol{\beta},\boldsymbol{Q}\right)
=
\mathbb{E}\left[
D_h\left(
\boldsymbol{P}_{1:h};
\boldsymbol{\beta},
\boldsymbol{Q},
\boldsymbol{y},
\boldsymbol{a},
k,
\delta
\right)
\right]
\]
and let \(\Psi_{h,S}\left(\boldsymbol{\beta},\boldsymbol{Q}\right)\) denote the corresponding average over the \(S\) simulated rank paths. Thus, \(\psi_h=\Psi_h\left(\boldsymbol{\beta}_0,\boldsymbol{Q}_0\right)\) and \(\hat{\psi}_{h,S}^{\mathrm{E}}=\Psi_{h,S}\left(\boldsymbol{\hat{\beta}},\boldsymbol{\hat{Q}}^{\mathrm{E}}\right)\). The arguments \(\left(\boldsymbol{y},\boldsymbol{a},k,\delta\right)\) are suppressed in this section.

\begin{assumption}[First-order regularity]
\label{ass:first-order}
The following conditions hold for every response specification considered below.
\begin{enumerate}[label=(\roman*),leftmargin=2.2em]
\item The transition in equation~\eqref{eq:structural-transition} has a unique strictly stationary and ergodic solution. Its dependence and moments are sufficient for laws of large numbers and a joint central limit theorem for the influence sequences defined below. The maps \(g\) and \(G\) are continuously differentiable in the state, innovation, and parameter arguments used in the analysis. For any fixed maximum horizon, the derivatives of the paired paths are bounded by a random variable with a finite \(2+\eta\) moment for some \(\eta>0\).

\item The transition estimator is consistent and asymptotically linear:
\begin{equation}
\sqrt{T}\left(\boldsymbol{\hat{\beta}}-\boldsymbol{\beta}_0\right)
=
\frac{1}{\sqrt{T}}
\sum_{t=1}^{T}
\boldsymbol{L}_t
+
o_{\mathbb{P}}\left(1\right),
\qquad
\mathbb{E}\left[\boldsymbol{L}_t\right]
=
\boldsymbol{0},
\label{eq:beta-linearization}
\end{equation}
where \(\left\{\boldsymbol{L}_t\right\}\) is stationary, has a finite \(2+\eta\) moment, and satisfies the joint central limit theorem in part~\textup{(i)}.

\item Each marginal innovation distribution \(F_{j0}\) has a positive, continuously differentiable density \(f_{j0}\) on the interior of its support. The tails of \(Q_{j0}\), the reciprocal densities, and the path derivatives are controlled so that the derivative and influence terms defined below are square integrable. These conditions also make \(\Psi_h\) continuous and first-order differentiable in a weighted uniform norm that controls the ranks generated by \(\tau_{\delta}\).

\item For each innovation component, the empirical quantile function computed from generated residuals satisfies
\begin{equation}
\left\|
\sqrt{T}\left(\hat{Q}_{j}^{\mathrm{E}}-Q_{j0}\right)
-
\frac{1}{\sqrt{T}}
\sum_{t=1}^{T}
\left[
\Xi_{j,t}
+
\boldsymbol{r}_j\left(\cdot\right)^{\prime}\boldsymbol{L}_t
\right]
\right\|_j
=
o_{\mathbb{P}}\left(1\right),
\label{eq:generated-quantile-expansion}
\end{equation}
where \(\|\cdot\|_j\) is the norm in part~\textup{(iii)}, \(\boldsymbol{r}_j\) is a deterministic vector-valued function, and
\[
\Xi_{j,t}\left(p\right)
=
\frac{
p-
\mathbf{1}\left\{
U_{jt}\leq Q_{j0}\left(p\right)
\right\}
}{
f_{j0}\left(Q_{j0}\left(p\right)\right)
},
\qquad
p\in\left(0,1\right).
\]
For the smooth residual maps considered here, the generated-residual correction can be written as
\[
\boldsymbol{r}_j\left(p\right)
=
-
\frac{
\boldsymbol{\Gamma}_j\left(Q_{j0}\left(p\right)\right)
}{
f_{j0}\left(Q_{j0}\left(p\right)\right)
},
\qquad
\boldsymbol{\Gamma}_j\left(u\right)
=
\left.
\frac{\partial}{\partial\boldsymbol{\beta}}
\mathbb{E}\left[
\mathbf{1}\left\{
G_j\left(
\boldsymbol{Y}_t,
\boldsymbol{Y}_{t-1};
\boldsymbol{\beta}
\right)
\leq u
\right\}
\right]
\right|_{\boldsymbol{\beta}=\boldsymbol{\beta}_0}.
\]
Thus, \(\boldsymbol{r}_j\left(p\right)^{\prime}\boldsymbol{L}_t\) is the first-order effect of estimating \(\boldsymbol{\beta}_0\) before forming residual order statistics.

\item The number of response specifications and their horizons are fixed as \(T\) increases. The simulation draws are independent of the sample. Consistency requires \(S\rightarrow\infty\); the first-order results below impose \(S/T\rightarrow\infty\).
\end{enumerate}
\end{assumption}

Assumption~\ref{ass:first-order} separates the requirements imposed by the dynamic model, the transition estimator, the innovation distributions, and the generated residuals. The vector theorem below uses equation~\eqref{eq:generated-quantile-expansion} as a componentwise high-level condition. The next proposition verifies that condition under the primitive scalar assumptions and records the corresponding vector extension.

\begin{proposition}[Generated-residual empirical quantiles]
\label{prop:generated-residual-quantile-expansion}
Under Assumptions~\ref{ass:scalar-dynamics}--\ref{ass:scalar-estimator} and~\ref{ass:response-tail-continuity}, equation~\eqref{eq:generated-quantile-expansion} holds for the scalar model. In the vector model, the same conclusion holds jointly across components under the residual-process conditions in Lemma~\ref{lem:generated-residual-empirical-process}, the componentwise density conditions in Assumption~\ref{ass:first-order}\textup{(iii)}, and Assumption~\ref{ass:response-tail-continuity}.
\end{proposition}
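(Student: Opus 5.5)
The plan is to go through the generated-residual empirical process and then a Bahadur-type inversion in the weighted norm. For the scalar model, define the residual empirical distribution
\[
\hat F(u)=\frac{1}{T}\sum_{t=1}^{T}\mathbf{1}\{G(Y_t,Y_{t-1};\boldsymbol{\hat\beta})\leq u\}.
\]
Since \(\sigma>0\), the event \(\{G(Y_t,Y_{t-1};\boldsymbol\beta)\leq u\}\) equals \(\{U_t\leq \kappa_t(u;\boldsymbol\beta)\}\), with
\[
\kappa_t(u;\boldsymbol\beta)=\frac{\mu(Y_{t-1};\boldsymbol\beta)-\mu(Y_{t-1};\boldsymbol\beta_0)+\sigma(Y_{t-1};\boldsymbol\beta)u}{\sigma(Y_{t-1};\boldsymbol\beta_0)}.
\]
Because \(U_t\) is independent of \(Y_{t-1}\), this gives a martingale-difference structure. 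The first step is the uniform expansion
\[
\sqrt T\bigl(\hat F(u)-F_0(u)\bigr)=\frac{1}{\sqrt T}\sum_{t}\bigl[\mathbf 1\{U_t\leq u\}-F_0(u)\bigr]+\boldsymbol\Gamma(u)^\prime\sqrt T(\boldsymbol{\hat\beta}-\boldsymbol\beta_0)+o_{\mathbb P}(1),
\]
uniformly in \(u\), with \(\boldsymbol\Gamma(u)=\mathbb E[f_0(\kappa_t(u;\boldsymbol\beta_0))\partial_{\boldsymbol\beta}\kappa_t(u;\boldsymbol\beta_0)]\), which agrees with the definition in Assumption~\ref{ass:first-order}(iv).

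To get this, I will split the difference into a stochastic-equicontinuity term and a compensator term. The stochastic-equicontinuity term is \(T^{-1/2}\sum_t[\mathbf 1\{U_t\leq\kappa_t(u;\boldsymbol\beta)\}-F_0(\kappa_t(u;\boldsymbol\beta))]\) minus its value at \(\boldsymbol\beta_0\), taken over \(\|\boldsymbol\beta-\boldsymbol\beta_0\|\leq CT^{-1/2}\). I will handle it with a bracketing or chaining argument for martingale-difference empirical processes indexed by \((u,\boldsymbol\beta)\), using Lipschitz continuity of \(F_0\) and the moment bounds on \(\partial_{\boldsymbol\beta}\mu\) and \(\partial_{\boldsymbol\beta}\sigma\) from Assumptions~\ref{ass:scalar-dynamics}--\ref{ass:scalar-estimator}. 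In the vector model this is precisely what Lemma~\ref{lem:generated-residual-empirical-process} supplies, so only this step changes between the two cases. The compensator term is \(T^{-1/2}\sum_t[F_0(\kappa_t(u;\boldsymbol{\hat\beta}))-F_0(u)]\). A mean-value expansion, continuity of \(f_0\), and the ergodic theorem turn it into \(\boldsymbol\Gamma(u)^\prime\sqrt T(\boldsymbol{\hat\beta}-\boldsymbol\beta_0)\). Substituting the linearization \eqref{eq:beta-linearization} then yields the \(\boldsymbol L_t\) term.

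The second step inverts this expansion. Since \(\hat Q^{\mathrm E}\) is the left-continuous inverse of \(\hat F\), for \(p\) in compact subsets of \((0,1)\) the standard Vervaat/Bahadur argument applies. It uses \(f_0>0\), \(f_0\) continuously differentiable, and \(\sup_p|\hat F(\hat Q^{\mathrm E}(p))-p|\leq 1/T\), and gives
\[
\sqrt T(\hat Q^{\mathrm E}-Q_0)(p)=-\frac{\sqrt T(\hat F-F_0)(Q_0(p))}{f_0(Q_0(p))}+o_{\mathbb P}(1).
\]
This is exactly \(\Xi_t(p)+\boldsymbol r(p)^\prime\boldsymbol L_t\) with \(\boldsymbol r=-\boldsymbol\Gamma\circ Q_0/f_0\circ Q_0\). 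The vector case applies this componentwise, and joint validity follows because the remainders are controlled simultaneously and the leading terms share \(\boldsymbol L_t\).

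The main obstacle is upgrading this inversion from compact subintervals to the weighted norm \(\|\cdot\|_j\) on all of \((0,1)\). That extension is needed because \(\tau_\delta\) pushes ranks toward the tails and \(1/f_0\) blows up there. My first attempt will use Assumption~\ref{ass:response-tail-continuity}, which I expect to dominate the weight by a tail-integrable function. I will split \((0,1)\) into \([\epsilon_T,1-\epsilon_T]\) and its complement. On the middle piece I will use the Csörgő--Révész-type condition \(\sup_p p(1-p)|f_0^\prime(Q_0(p))|/f_0(Q_0(p))^2<\infty\) to control the Bahadur remainder. On the extreme pieces I will bound each of \(\hat Q^{\mathrm E}\), \(Q_0\), and the linear terms separately, showing that their weighted contributions vanish as \(\epsilon_T\to0\). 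For the generated-residual shift in the tails, I will use \(\max_t|\hat U_t-U_t|\leq O_{\mathbb P}(T^{-1/2})\max_t(1+|U_t|)\,M(Y_{t-1})\) together with the moment conditions.
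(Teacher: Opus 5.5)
Your architecture matches the paper's. You expand the generated-residual empirical process (Lemma~\ref{lem:generated-residual-empirical-process}), invert on central ranks (Lemma~\ref{lem:generated-quantile-interior}), complete at the tails (Lemma~\ref{lem:generated-quantile-full-range}), and substitute \eqref{eq:beta-linearization}. Your conditional-compensator split, which uses \(U_t\perp Y_{t-1}\), is a harmless variant of the paper's \(\mathbb{P}_T-\mathbb{P}_0\) split and yields the same \(\boldsymbol\Gamma\) and \(\boldsymbol r\).

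The gap is in where the weight enters. Your first step is only uniform in \(u\). After inversion, however, the target norm multiplies every distribution-level remainder by \(\vartheta(p)/f_0(Q_0(p))=[p(1-p)]^{-(1/2-\zeta)}\), where \(\vartheta\) is the weight in \eqref{eq:weighted-quantile-norm}. On \([\epsilon_T,1-\epsilon_T]\) this factor reaches \(\epsilon_T^{-(1/2-\zeta)}\rightarrow\infty\). An unweighted \(o_{\mathbb{P}}(1)\) bound therefore does not give your middle-piece claim. This applies to the stochastic-equicontinuity term, to the compensator remainder, and to the oscillation of the residual empirical process over \([Q_0(p),\hat Q^{\mathrm{E}}(p)]\).

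The quantile-density condition \eqref{eq:quantile-density-tail-condition} does not repair this. It controls only the deterministic part of the inversion, namely how fast \(f_0\circ Q_0\) varies. It says nothing about how far the generated order statistics move relative to the weight. What you need is the expansion \eqref{eq:residual-empirical-process-expansion} in the weighted distribution norm \eqref{eq:weighted-distribution-norm}. That is exactly how the paper states Lemma~\ref{lem:generated-residual-empirical-process}, and it is what the later lemmas invert. Inside your chaining argument, you must compare increment variances of order \(f_0(u)(1+|u|)M(Y_{t-1})\|\boldsymbol\beta-\boldsymbol\beta_0\|\) with \([F_0(u)\{1-F_0(u)\}]^{1-2\zeta}\). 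That comparison is a tail condition linking density, quantile growth and weight, and it is where \(\sup_p\vartheta(p)\|\boldsymbol r(p)\|<\infty\) is used. Alternatively, you need an explicit rate for the step-one remainder that beats \(\epsilon_T^{-(1/2-\zeta)}\).

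Your tail step shows why this cannot be sidestepped. The max-shift bound gives \(\sqrt T\max_t|\hat U_t-U_t|=O_{\mathbb{P}}\bigl(\max_t(1+|U_t|)M(Y_{t-1})\bigr)\), which diverges. It is therefore useful only where \(\vartheta\) is already tiny, which forces \(\epsilon_T\) small and makes the middle-piece problem worse. A fixed \(\epsilon\) would save the unweighted middle piece, but then \([\epsilon_T,\epsilon]\) is covered by neither argument. Only a weighted step one, valid down to ranks with \(T\epsilon_T/\log T\rightarrow\infty\), closes the range.

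Two smaller points. First, bounding the \(\boldsymbol r(p)^{\prime}\boldsymbol L_t\) term separately on the tails requires \(\vartheta(p)\|\boldsymbol r(p)\|\rightarrow0\) at the endpoints. That is stronger than the assumed finite supremum, so keep this term inside the comparison. Second, Assumption~\ref{ass:response-tail-continuity} does not dominate the quantile weight. Its integrability conditions concern the propagation weights and serve continuity of the derivative maps (Lemma~\ref{lem:projected-generated-quantiles}). Here only its clause \(\sup_p\vartheta_j(p)\|\boldsymbol r_j(p)\|<\infty\) matters.
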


Proposition~\ref{prop:generated-residual-quantile-expansion} is the bridge between residual empirical-process theory and the recursive response. Its proof is given in Appendix~\ref{app:residual-quantiles}; Appendix~\ref{app:assumptions} states the primitive scalar conditions in full.

We next introduce the derivatives needed for the first-order result. Let
\[
\Lambda_{h,s,j}^{r}
=
\left.
\frac{
\partial\left(
\boldsymbol{a}^{\prime}\boldsymbol{Y}_h^{r}
\right)
}{
\partial u_{j,s}^{r}
}
\right|_{\left(\boldsymbol{\beta}_0,\boldsymbol{Q}_0\right)},
\qquad
r\in\left\{0,\delta\right\},
\]
be the sensitivity of the date-\(h\) outcome along path \(r\) to structural innovation \(j\) at date \(s\). These sensitivities are obtained by differentiating the finite recursion in equation~\eqref{eq:paired-paths}. Also define the direct transition derivative
\[
\boldsymbol{A}_h
=
\mathbb{E}\left[
\left.
\frac{\partial}{\partial\boldsymbol{\beta}}
D_h\left(
\boldsymbol{P}_{1:h};
\boldsymbol{\beta},
\boldsymbol{Q}_0,
\boldsymbol{y},
\boldsymbol{a},
k,
\delta
\right)
\right|_{\boldsymbol{\beta}=\boldsymbol{\beta}_0}
\right],
\]
which holds the innovation quantile functions fixed.

For a collection of quantile perturbations \(\boldsymbol{q}=\left(q_1,\ldots,q_n\right)^{\prime}\), split the derivative with respect to \(\boldsymbol{Q}_0\) into
\begin{equation}
\begin{aligned}
\dot{\Psi}_{h}^{\mathrm{dist}}\left[\boldsymbol{q}\right]
&=
\mathbb{E}\left[
\sum_{j=1}^{n}
\left(
\mathbf{1}\left\{j\neq k\right\}
\Lambda_{h,1,j}^{\delta}
-
\Lambda_{h,1,j}^{0}
\right)
q_j\left(P_{j1}\right)
+
\sum_{s=2}^{h}
\sum_{j=1}^{n}
\left(
\Lambda_{h,s,j}^{\delta}
-
\Lambda_{h,s,j}^{0}
\right)
q_j\left(P_{js}\right)
\right],\\
\dot{\Psi}_{h}^{\mathrm{imp}}\left[\boldsymbol{q}\right]
&=
\mathbb{E}\left[
\Lambda_{h,1,k}^{\delta}
q_k\left(
\tau_{\delta}\left(P_{k1}\right)
\right)
\right].
\end{aligned}
\label{eq:response-quantile-derivatives}
\end{equation}
The first map covers the unshifted impact innovations and the common future innovations. The second isolates the shifted innovation in the shocked path at impact.

For later use, collect the empirical-quantile influence functions in
\[
\boldsymbol{\Xi}_t
=
\left(
\Xi_{1,t},
\ldots,
\Xi_{n,t}
\right)^{\prime},
\]
and define the generated-residual perturbation
\[
\boldsymbol{R}_t
=
\left(
p\mapsto
\boldsymbol{r}_1\left(p\right)^{\prime}\boldsymbol{L}_t,
\ldots,
p\mapsto
\boldsymbol{r}_n\left(p\right)^{\prime}\boldsymbol{L}_t
\right)^{\prime}.
\]

Before turning to the first-order result, consistency follows from the decomposition
\[
\begin{aligned}
\hat{\psi}_{h,S}^{\mathrm{E}}-\psi_h
&=
\left[
\Psi_{h,S}\left(
\boldsymbol{\hat{\beta}},
\boldsymbol{\hat{Q}}^{\mathrm{E}}
\right)
-
\Psi_h\left(
\boldsymbol{\hat{\beta}},
\boldsymbol{\hat{Q}}^{\mathrm{E}}
\right)
\right]\\
&\quad+
\left[
\Psi_h\left(
\boldsymbol{\hat{\beta}},
\boldsymbol{\hat{Q}}^{\mathrm{E}}
\right)
-
\Psi_h\left(
\boldsymbol{\beta}_0,
\boldsymbol{\hat{Q}}^{\mathrm{E}}
\right)
\right]\\
&\quad+
\left[
\Psi_h\left(
\boldsymbol{\beta}_0,
\boldsymbol{\hat{Q}}^{\mathrm{E}}
\right)
-
\Psi_h\left(
\boldsymbol{\beta}_0,
\boldsymbol{Q}_0
\right)
\right].
\end{aligned}
\]
The three terms are, respectively, Monte Carlo integration error, transition-estimation error, and residual-quantile estimation error.

\begin{proposition}[Consistency]
\label{prop:consistency}

Under Assumption~\ref{ass:first-order}\textup{(i)--(iv)}, if \(S\rightarrow\infty\), then for any fixed finite collection of response specifications,
\[
\max_{1\leq m\leq M}
\left|
\hat{\psi}_{m,S}^{\mathrm{E}}
-
\psi_m
\right|
\xrightarrow{\mathbb{P}}
0.
\]
If the smoother satisfies
\[
\left\|
\hat{Q}_{j,b_T}^{\mathrm{S}}
-
Q_{j0}
\right\|_j
\xrightarrow{\mathbb{P}}
0
\]
for every \(j\), the same conclusion holds for \(\hat{\psi}_{m,S}^{\mathrm{S}}\).
\end{proposition}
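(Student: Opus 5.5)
Since the collection of response specifications is fixed and finite, it suffices to prove \(\hat{\psi}_{m,S}^{\mathrm{E}}-\psi_m\xrightarrow{\mathbb{P}}0\) for a single specification \(m\); the maximum over \(m\leq M\) then follows by a union bound. I will use the three-term decomposition displayed before the proposition and show that each term is \(o_{\mathbb{P}}(1)\).

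\textbf{Step 1: quantile convergence.} First I will show \(\|\hat{Q}_j^{\mathrm{E}}-Q_{j0}\|_j=O_{\mathbb{P}}(T^{-1/2})\) for each component \(j\). This comes from the expansion~\eqref{eq:generated-quantile-expansion} in Assumption~\ref{ass:first-order}(iv) combined with the joint central limit theorem in part~(i). Those results make \(T^{-1/2}\sum_t[\Xi_{j,t}+\boldsymbol{r}_j'\boldsymbol{L}_t]\) tight in \(\|\cdot\|_j\). If the CLT is only finite-dimensional, I would instead appeal to the weighted-norm tightness implicit in part~(iii), which makes the influence terms square integrable. Likewise, \(\boldsymbol{\hat{\beta}}\xrightarrow{\mathbb{P}}\boldsymbol{\beta}_0\) by part~(ii).

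\textbf{Step 2: the third and second terms.} The third term, \(\Psi_h(\boldsymbol{\beta}_0,\boldsymbol{\hat{Q}}^{\mathrm{E}})-\Psi_h(\boldsymbol{\beta}_0,\boldsymbol{Q}_0)\), vanishes in probability because part~(iii) makes \(\Psi_h\) continuous in the weighted uniform norm. For the second term I need joint continuity of \(\Psi_h\) at \((\boldsymbol{\beta}_0,\boldsymbol{Q}_0)\) over a neighbourhood in \(\mathbb{R}^{\dim\boldsymbol{\beta}}\times\prod_j\|\cdot\|_j\). Part~(iii) already gives continuity in the product norm. It then suffices to show that \(\Psi_h(\boldsymbol{\beta},\boldsymbol{Q})-\Psi_h(\boldsymbol{\beta}_0,\boldsymbol{Q})\to0\) uniformly over \(\boldsymbol{Q}\) in a small ball. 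To do this I will write the difference as \(\int_0^1\mathbb{E}[\partial_{\boldsymbol{\beta}}D_h]\,\dif\lambda\cdot(\boldsymbol{\beta}-\boldsymbol{\beta}_0)\). The path derivatives are dominated by a variable with finite \(2+\eta\) moment by part~(i), so this difference is bounded by \(C\|\boldsymbol{\beta}-\boldsymbol{\beta}_0\|\). Evaluating at \(\boldsymbol{\beta}=\boldsymbol{\hat{\beta}}\) gives \(o_{\mathbb{P}}(1)\).

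\textbf{Step 3: Monte Carlo error (main obstacle).} The first term is the main obstacle, because \((\boldsymbol{\hat{\beta}},\boldsymbol{\hat{Q}}^{\mathrm{E}})\) is random. However, the simulation ranks are independent of the sample, so I will condition on the data. Given the sample, \(\Psi_{h,S}-\Psi_h\) is an average of \(S\) i.i.d. mean-zero terms, so its conditional variance is \(S^{-1}\operatorname{Var}(D_h\mid\boldsymbol{\hat{\beta}},\boldsymbol{\hat{Q}}^{\mathrm{E}})\). Conditional Chebyshev then bounds the conditional probability of a deviation exceeding \(\varepsilon\) by \(\mathbb{E}[D_h^2\mid\text{data}]/(S\varepsilon^2)\). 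It remains to show \(\mathbb{E}[D_h^2\mid\text{data}]=O_{\mathbb{P}}(1)\). On the event where \((\boldsymbol{\hat{\beta}},\boldsymbol{\hat{Q}}^{\mathrm{E}})\) lies in the neighbourhood from Step 2, I will bound \(|D_h|\) by linearizing around \(\boldsymbol{Q}_0\). The mean-value bound gives \(|D_h(\boldsymbol{\beta},\boldsymbol{Q})|\leq|D_h(\boldsymbol{\beta},\boldsymbol{Q}_0)|+\sup|\Lambda|\sum_{s,j}|q_j(P_{js})|\), with \(q_j\) the quantile perturbation; for the shifted impact rank this uses \(|q_k(\tau_\delta(P_{k1}))|\). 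The weighted norm is designed to control exactly these terms, and together with the square-integrability in part~(iii) this yields a uniform second-moment bound. Dominated convergence then lets me take unconditional probabilities, and \(S\to\infty\) gives \(o_{\mathbb{P}}(1)\). The empirical quantile is bounded by the extreme residual order statistics, so only finitely many ranks near \(0\) and \(1\) could cause trouble; I expect the weighted norm to absorb them.

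\textbf{Smoothed estimator.} Replace \(\boldsymbol{\hat{Q}}^{\mathrm{E}}\) by \(\boldsymbol{\hat{Q}}^{\mathrm{S}}\). The assumed condition \(\|\hat{Q}^{\mathrm{S}}_{j,b_T}-Q_{j0}\|_j\xrightarrow{\mathbb{P}}0\) replaces Step 1, and Steps 2–3 use only this norm convergence. The same argument therefore gives consistency of \(\hat{\psi}_{m,S}^{\mathrm{S}}\).
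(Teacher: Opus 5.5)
Your proposal is correct and follows the paper's proof essentially step for step: the same three-term decomposition, $O_{\mathbb{P}}(T^{-1/2})$ rates for $\boldsymbol{\hat{\beta}}$ and $\boldsymbol{\hat{Q}}^{\mathrm{E}}$, a mean-value bound in $\boldsymbol{\beta}$ via the derivative envelope, continuity of $\Psi_h$ in the weighted norm for the quantile term, conditional Chebyshev for the Monte Carlo term, and a union bound over the fixed $M$. One caveat concerns your justification in Step 3. A ball in the weighted norm $\|\cdot\|_j$ is not bounded in $L^2$ of the rank: with Gaussian innovations, $\int_0^1\vartheta_j(p)^{-2}\,\dif p=\infty$. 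The conditional second moment of $D_h$ should therefore be bounded directly for the actual quantile estimate, for example through residual sample moments and the path envelope, rather than uniformly over the ball; the paper simply asserts this $O_{\mathbb{P}}(1)$ bound.
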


\begin{theorem}[Asymptotic linear representation]
\label{thm:main-linearization}

Suppose Assumption~\ref{ass:first-order} holds and \(S/T\rightarrow\infty\). For each fixed response specification,
\begin{equation}
\sqrt{T}\left(
\hat{\psi}_{h,S}^{\mathrm{E}}
-
\psi_h
\right)
=
\frac{1}{\sqrt{T}}
\sum_{t=1}^{T}
Z_{h,t}
+
o_{\mathbb{P}}\left(1\right),
\label{eq:empirical-irf-linearization}
\end{equation}
where the observation-level influence contribution is
\begin{equation}
\begin{aligned}
Z_{h,t}
&=
Z_{h,t}^{\mathrm{tr}}
+
Z_{h,t}^{\mathrm{res}}
+
Z_{h,t}^{\mathrm{dist}}
+
Z_{h,t}^{\mathrm{imp}},\\
Z_{h,t}^{\mathrm{tr}}
&=
\boldsymbol{A}_h^{\prime}\boldsymbol{L}_t,\\
Z_{h,t}^{\mathrm{res}}
&=
\dot{\Psi}_{h}^{\mathrm{dist}}\left[
\boldsymbol{R}_t
\right]
+
\dot{\Psi}_{h}^{\mathrm{imp}}\left[
\boldsymbol{R}_t
\right],\\
Z_{h,t}^{\mathrm{dist}}
&=
\dot{\Psi}_{h}^{\mathrm{dist}}\left[
\boldsymbol{\Xi}_t
\right],\\
Z_{h,t}^{\mathrm{imp}}
&=
\dot{\Psi}_{h}^{\mathrm{imp}}\left[
\boldsymbol{\Xi}_t
\right].
\end{aligned}
\label{eq:influence-decomposition}
\end{equation}

More generally, let \(m=1,\ldots,M\) index any fixed collection of horizons, initial states, response vectors, shocked components, and shock sizes. Stack the corresponding estimators, population responses, and influence contributions in \(\boldsymbol{\hat{\psi}}_{S}^{\mathrm{E}}\), \(\boldsymbol{\psi}\), and \(\boldsymbol{Z}_t\). Then
\begin{equation}
\sqrt{T}\left(
\boldsymbol{\hat{\psi}}_{S}^{\mathrm{E}}
-
\boldsymbol{\psi}
\right)
=
\frac{1}{\sqrt{T}}
\sum_{t=1}^{T}
\boldsymbol{Z}_t
+
o_{\mathbb{P}}\left(1\right).
\label{eq:joint-empirical-irf-linearization}
\end{equation}
\end{theorem}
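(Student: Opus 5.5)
We decompose \(\hat{\psi}_{h,S}^{\mathrm{E}}-\psi_h\) into the three terms displayed before Proposition~\ref{prop:consistency} and treat each in turn.

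\emph{Monte Carlo term.} Conditional on the sample, \(\Psi_{h,S}(\boldsymbol{\hat{\beta}},\boldsymbol{\hat{Q}}^{\mathrm{E}})-\Psi_h(\boldsymbol{\hat{\beta}},\boldsymbol{\hat{Q}}^{\mathrm{E}})\) is an average of \(S\) i.i.d.\ mean-zero draws. Their conditional variance is bounded by the \(2+\eta\) moment envelope in Assumption~\ref{ass:first-order}(i), evaluated at the consistent plug-ins supplied by Proposition~\ref{prop:consistency}. Hence this term is \(O_{\mathbb{P}}(S^{-1/2})\), and it is \(o_{\mathbb{P}}(T^{-1/2})\) because \(S/T\rightarrow\infty\).

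\emph{Transition and quantile terms.} For the remaining two terms we use a mean-value expansion in \(\boldsymbol{\beta}\) and the weighted-norm differentiability of \(\Psi_h\) in \(\boldsymbol{Q}\) from Assumption~\ref{ass:first-order}(iii). The transition term becomes \(\boldsymbol{A}_h^{\prime}(\boldsymbol{\hat{\beta}}-\boldsymbol{\beta}_0)+o_{\mathbb{P}}(T^{-1/2})\), where \(\boldsymbol{A}_h\) is evaluated at \(\boldsymbol{Q}_0\). Replacing the derivative at \(\boldsymbol{\hat{Q}}^{\mathrm{E}}\) by the one at \(\boldsymbol{Q}_0\) uses the continuity of \(\partial_{\boldsymbol{\beta}}\Psi_h\) in \(\boldsymbol{Q}\), dominated convergence under the path-derivative envelope, and consistency of \(\boldsymbol{\hat{Q}}^{\mathrm{E}}\) in \(\|\cdot\|_j\), which follows from \eqref{eq:generated-quantile-expansion}. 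Substituting \eqref{eq:beta-linearization} then gives \(Z_{h,t}^{\mathrm{tr}}\).

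For the quantile term, differentiate the recursion \eqref{eq:paired-paths} by the chain rule. A perturbation \(\boldsymbol{q}\) changes the innovation \(Q_j(P_{js}^r)\) on path \(r\) by \(q_j(P_{js}^r)\), and this change propagates to \(\boldsymbol{a}^{\prime}\boldsymbol{Y}_h^r\) with weight \(\Lambda^r_{h,s,j}\). The Gateaux derivative is therefore \(\mathbb{E}[\sum_{s,j}(\Lambda^\delta_{h,s,j}q_j(P^\delta_{js})-\Lambda^0_{h,s,j}q_j(P_{js}))]\). The only coordinate where the two paths evaluate \(q\) at different ranks is \((s,j)=(1,k)\). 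Splitting that coordinate off gives exactly \(\dot{\Psi}_h^{\mathrm{dist}}+\dot{\Psi}_h^{\mathrm{imp}}\) in \eqref{eq:response-quantile-derivatives}.

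We need Hadamard-type differentiability along \(\boldsymbol{\hat{Q}}^{\mathrm{E}}-\boldsymbol{Q}_0\) with an \(o_{\mathbb{P}}(T^{-1/2})\) remainder, i.e.
\[
\Psi_h(\boldsymbol{\beta}_0,\boldsymbol{\hat{Q}}^{\mathrm{E}})-\Psi_h(\boldsymbol{\beta}_0,\boldsymbol{Q}_0)=\dot{\Psi}_h[\boldsymbol{\hat{Q}}^{\mathrm{E}}-\boldsymbol{Q}_0]+o_{\mathbb{P}}(T^{-1/2}).
\]
This is precisely what (iii) asserts for the norm \(\|\cdot\|_j\), given that \(\sqrt{T}\|\hat{Q}^{\mathrm{E}}_j-Q_{j0}\|_j=O_{\mathbb{P}}(1)\) by \eqref{eq:generated-quantile-expansion}. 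Then apply the linear map \(\dot{\Psi}_h\) to \eqref{eq:generated-quantile-expansion}. Continuity of \(\dot{\Psi}_h\) in \(\|\cdot\|_j\) transfers the \(o_{\mathbb{P}}(1)\) remainder. Linearity lets the sum over \(t\) pass inside, which produces \(T^{-1/2}\sum_t(\dot{\Psi}_h[\boldsymbol{\Xi}_t]+\dot{\Psi}_h[\boldsymbol{R}_t])\). Regrouping gives \(Z^{\mathrm{dist}}_{h,t}+Z^{\mathrm{imp}}_{h,t}+Z^{\mathrm{res}}_{h,t}\).

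\emph{Main obstacle.} The delicate step is the impact term \(\dot{\Psi}^{\mathrm{imp}}_h\). Here \(q_k\) is evaluated at \(\tau_\delta(P_{k1})\), which pushes ranks toward the tails: for \(\delta>0\), ranks near \(1\) are compressed further toward \(1\). The empirical quantile process and \(1/f_{k0}\) are poorly behaved there. We must check two things. First, the weighted norm in (iii) dominates \(\sup_p w(p)\,|q(\tau_\delta(p))|\) after the change of variables \(v=\tau_\delta(p)\). This has Jacobian \(\phi(\Phi^{-1}(p)+\delta)/\phi(\Phi^{-1}(p))\), which grows like \(e^{-\delta\Phi^{-1}(p)}\) in one tail, so the weights must absorb this factor together with \(\Lambda^\delta_{h,1,k}\). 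Second, \(\mathbb{E}|\Lambda^\delta_{h,1,k}\,\Xi_{k,t}(\tau_\delta(P_{k1}))|^2<\infty\), so that \(Z^{\mathrm{imp}}_{h,t}\) is a well-defined square-integrable influence term. We would handle both by a Hölder argument, using the \(2+\eta\) moment of the path derivatives together with the tail control on \(Q_{k0}\) and \(1/f_{k0}\) in (iii).

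\emph{Joint statement.} Equation~\eqref{eq:joint-empirical-irf-linearization} follows by stacking. Each remainder is \(o_{\mathbb{P}}(1)\), \(M\) is fixed, and the same \(\boldsymbol{\hat{\beta}}\) and \(\boldsymbol{\hat{Q}}^{\mathrm{E}}\) enter every coordinate. The influence contributions \(\boldsymbol{Z}_t\) are therefore linear functionals of the common \((\boldsymbol{L}_t,\boldsymbol{\Xi}_t)\), and the joint central limit theorem in (i) applies to them without modification.
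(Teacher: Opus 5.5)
Your proposal is correct and follows the paper's proof essentially step for step. You remove the Monte Carlo error using \(S/T\rightarrow\infty\) (Lemma~\ref{lem:negligible-simulation-error}), expand the exactly integrated response to first order, substitute \eqref{eq:beta-linearization} and \eqref{eq:generated-quantile-expansion}, push the latter through the continuous linear maps \(\dot{\Psi}_h^{\mathrm{dist}}\) and \(\dot{\Psi}_h^{\mathrm{imp}}\) (Lemma~\ref{lem:projected-generated-quantiles}), and stack. The only difference is that you expand sequentially: first a mean-value step in \(\boldsymbol{\beta}\) at \(\boldsymbol{\hat{Q}}^{\mathrm{E}}\), then a step in \(\boldsymbol{Q}\) at \(\boldsymbol{\beta}_0\). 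That costs you a continuity argument for \(\partial_{\boldsymbol{\beta}}\Psi_h\) in \(\boldsymbol{Q}\). The paper instead uses the joint expansion \eqref{eq:recursive-plug-in-expansion} from Lemma~\ref{lem:recursive-response-differentiability}, which absorbs the cross term.

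One caution about your ``main obstacle'' paragraph. Nothing there has to be proved for this theorem. Continuity of \(\dot{\Psi}_h^{\mathrm{imp}}\) under the weighted norm, and square integrability of the influence terms, are already part of Assumption~\ref{ass:first-order}(iii), made explicit in Assumption~\ref{ass:response-tail-continuity}. Square integrability is not even needed for the linear representation, only for the Gaussian limit.

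Moreover, the sufficient condition you propose, \(\mathbb{E}|\Lambda_{h,1,k}^{\delta}\Xi_{k,t}(\tau_{\delta}(P_{k1}))|^{2}<\infty\), fails in standard cases. Take the affine model, where \(\Lambda_{h,1,k}^{\delta}\neq0\) is constant, with Gaussian innovations. Averaging over \(U_{kt}\) and the rank gives \(\int_{\mathbb{R}}F_{k0}(u)\{1-F_{k0}(u)\}\rho_{\delta}'(F_{k0}(u))/f_{k0}(u)\,\mathrm{d}u=\infty\), although \(Z_{h,t}^{\mathrm{imp}}\) is square integrable there. Your H\"older split only makes this worse, because it demands higher moments of the same pointwise object.

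Moments must be taken after projection. First, \(|\dot{\Psi}_h^{\mathrm{imp}}[\boldsymbol{\Xi}_t]|\leq C\|\Xi_{k,t}\|_k\), where \(C\) is the second integral in \eqref{eq:response-tail-integrability}. Second, \(\|\Xi_{k,t}\|_k\leq\max\{V^{-(1/2-\zeta_k)},(1-V)^{-(1/2-\zeta_k)}\}\) for the uniform variable \(V=F_{k0}(U_{kt})\). This bound has a finite \(2+\eta\) moment for small \(\eta>0\).
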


Theorem~\ref{thm:main-linearization} preserves the covariance among all four sources of uncertainty because they are assembled at the observation level before the long-run covariance is computed. In particular, the direct and generated-residual terms share the transition influence \(\boldsymbol{L}_t\), while the distribution and impact terms are formed from the same residual observation.

\begin{corollary}[Gaussian limit]
\label{cor:main-gaussian-limit}

Under the conditions of Theorem~\ref{thm:main-linearization},
\begin{equation}
\sqrt{T}\left(
\boldsymbol{\hat{\psi}}_{S}^{\mathrm{E}}
-
\boldsymbol{\psi}
\right)
\xrightarrow{\mathrm{d}}
\mathcal{N}\left(
\boldsymbol{0},
\boldsymbol{\Omega}
\right),
\qquad
\boldsymbol{\Omega}
=
\sum_{\ell=-\infty}^{\infty}
\operatorname{Cov}\left(
\boldsymbol{Z}_0,
\boldsymbol{Z}_{\ell}
\right).
\label{eq:empirical-irf-limit}
\end{equation}
For one response, the asymptotic variance is the corresponding diagonal element of \(\boldsymbol{\Omega}\). If \(\left\{\boldsymbol{Z}_t\right\}\) is a martingale difference sequence, the sum reduces to
\[
\mathbb{E}\left[
\boldsymbol{Z}_t
\boldsymbol{Z}_t^{\prime}
\right].
\]
\end{corollary}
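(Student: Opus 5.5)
The plan is to combine the joint linear representation in Theorem~\ref{thm:main-linearization} with a central limit theorem for the stacked influence sequence \(\left\{\boldsymbol{Z}_t\right\}\), and then close with Slutsky's lemma. By \eqref{eq:joint-empirical-irf-linearization}, \(\sqrt{T}\left(\boldsymbol{\hat{\psi}}_{S}^{\mathrm{E}}-\boldsymbol{\psi}\right)\) differs from \(T^{-1/2}\sum_{t=1}^{T}\boldsymbol{Z}_t\) by \(o_{\mathbb{P}}(1)\). Hence it is enough to show that this normalized sum converges in distribution to \(\mathcal{N}\left(\boldsymbol{0},\boldsymbol{\Omega}\right)\).

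The first step is to record the properties of \(\boldsymbol{Z}_t\). Each coordinate is a fixed linear functional of \(\boldsymbol{L}_t\), of the process \(\boldsymbol{R}_t\) (which is linear in \(\boldsymbol{L}_t\) through the deterministic weights \(\boldsymbol{r}_j\)), and of \(\boldsymbol{\Xi}_t\). The maps \(\dot{\Psi}_h^{\mathrm{dist}}\) and \(\dot{\Psi}_h^{\mathrm{imp}}\) are expectations over the independent simulation ranks, taken with the sample held fixed. As a result, \(\boldsymbol{Z}_t\) is a measurable function of \(\left(\boldsymbol{L}_t,\boldsymbol{U}_t\right)\), so it is stationary and ergodic by Assumption~\ref{ass:first-order}(i)--(ii).

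Mean zero follows from two facts: \(\mathbb{E}[\boldsymbol{L}_t]=\boldsymbol{0}\), and \(\mathbb{E}[\Xi_{j,t}(p)]=0\) for every \(p\) because \(\Pr\left(U_{jt}\leq Q_{j0}(p)\right)=p\). Fubini's theorem lets the expectation be exchanged with the integration over the ranks, and this exchange is justified by the square-integrability conditions in Assumption~\ref{ass:first-order}(iii). The same conditions, together with the \(2+\eta\) moments of \(\boldsymbol{L}_t\) and the boundedness of the indicator, give \(\mathbb{E}\|\boldsymbol{Z}_t\|^{2}<\infty\), and in fact \(2+\eta\) moments through the path-derivative envelope.

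The second step invokes the joint central limit theorem that Assumption~\ref{ass:first-order}(i) explicitly posits for the influence sequences. By the Cramér--Wold device, it suffices to treat scalar combinations \(\boldsymbol{c}^{\prime}\boldsymbol{Z}_t\). The limiting variance is the long-run variance \(\sum_{\ell}\operatorname{Cov}\left(\boldsymbol{Z}_0,\boldsymbol{Z}_\ell\right)\), and its absolute summability comes from the dependence conditions paired with the \(2+\eta\) moments, for example through a mixing covariance inequality. This is the only step that needs care. The joint CLT is assumed at a high level, so the work is to check that \(\boldsymbol{Z}_t\) belongs to the class it covers, namely fixed linear transformations of \(\left(\boldsymbol{L}_t,\boldsymbol{\Xi}_t\right)\) with finite second moments. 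Slutsky's lemma then yields \eqref{eq:empirical-irf-limit}. The single-response statement is the marginal of this Gaussian vector.

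Finally, suppose \(\left\{\boldsymbol{Z}_t\right\}\) is a martingale difference sequence with respect to its natural filtration. For \(\ell\geq1\), the law of iterated expectations gives \(\mathbb{E}\left[\boldsymbol{Z}_0\boldsymbol{Z}_\ell^{\prime}\right]=\mathbb{E}\left[\boldsymbol{Z}_0\,\mathbb{E}\left[\boldsymbol{Z}_\ell^{\prime}\mid\mathcal{F}_{\ell-1}\right]\right]=\boldsymbol{0}\), and the case \(\ell\leq -1\) follows symmetrically. Only the \(\ell=0\) term survives, so \(\boldsymbol{\Omega}=\mathbb{E}\left[\boldsymbol{Z}_t\boldsymbol{Z}_t^{\prime}\right]\).
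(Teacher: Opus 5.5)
Your proposal is correct and takes essentially the same route as the paper. The paper's proof also combines the joint linearization \eqref{eq:joint-empirical-irf-linearization}, a Cram\'er--Wold and strong-mixing central limit theorem with absolutely summable covariances for the stacked influence sequence (packaged as Proposition~\ref{prop:recursive-joint-clt}), Slutsky's lemma, and iterated expectations for the martingale-difference case. One small correction, which does not affect the argument: \(\dot{\Psi}_h^{\mathrm{dist}}\) and \(\dot{\Psi}_h^{\mathrm{imp}}\) are deterministic population functionals, defined by expectations over a generic rank path at \(\left(\boldsymbol{\beta}_0,\boldsymbol{Q}_0\right)\), not averages over the simulation draws, and this is exactly why \(\boldsymbol{Z}_t\) is a fixed measurable function of \(\left(\boldsymbol{L}_t,\boldsymbol{U}_t\right)\).
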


\subsection{Interpretation of the influence decomposition}
\label{sec:main-theory-interpretation}

Equation~\eqref{eq:influence-decomposition} keeps four uses of estimated objects separate while preserving their covariance at the observation level. Table~\ref{tab:influence-components} summarizes the source and scope of each component.

\begin{table}[t]
\centering
\caption{Sources of first-order uncertainty}
\label{tab:influence-components}
\begin{tabular}{p{2.4cm}p{7.0cm}p{5.0cm}}
\toprule
Component & Source & Vanishes when \\
\midrule
\(Z_{h,t}^{\mathrm{tr}}\) & The transition parameter changes the shocked and unshocked recursive paths directly. & \(\boldsymbol{\beta}_0\) is known. \\
\addlinespace
\(Z_{h,t}^{\mathrm{res}}\) & The transition estimate changes the generated residual values and their order statistics before the response is simulated. & \(\boldsymbol{\beta}_0\) is known, or \(\boldsymbol{Q}_0\) is treated as known. \\
\addlinespace
\(Z_{h,t}^{\mathrm{dist}}\) & The unshifted impact innovations and the common future innovations use estimated quantiles. & \(\boldsymbol{Q}_0\) is known. \\
\addlinespace
\(Z_{h,t}^{\mathrm{imp}}\) & The shocked impact innovation uses the estimated quantile evaluated at the shifted rank. & \(\boldsymbol{Q}_0\) is known, or the impact intervention is fixed in structural units. \\
\bottomrule
\end{tabular}
\end{table}

The direct and generated-residual components share the transition influence \(\boldsymbol{L}_t\), while the distribution and shifted-impact components are formed from the same residual observation. Computing their covariance after separate aggregation would therefore lose first-order cross terms. The decomposition also clarifies the nested cases: known transition parameters remove the first two components; known innovation distributions leave only direct transition uncertainty; and a fixed additive structural shock removes the separate shifted-impact component without generally removing distribution uncertainty from future innovations.

\subsection{Density-free feasible inference}
\label{sec:main-theory-feasible}

Theorem~\ref{thm:main-linearization} yields feasible inference once its observation-level contributions are estimated. A direct use of the empirical-quantile influence function in equation~\eqref{eq:generated-quantile-expansion} appears to require the innovation densities. Here those densities cancel after the quantile effects are integrated over the ranks entering the impulse-response function.

Let \(\rho_{\delta}=\tau_{\delta}^{-1}=\tau_{-\delta}\). If \(\phi\) denotes the standard normal density, then
\[
\rho_{\delta}\left(p\right)
=
\Phi\left(\Phi^{-1}\left(p\right)-\delta\right),
\qquad
\rho_{\delta}'\left(p\right)
=
\frac{\phi\left(\Phi^{-1}\left(p\right)-\delta\right)}
{\phi\left(\Phi^{-1}\left(p\right)\right)}.
\]
For each innovation component, define
\begin{equation}
\begin{aligned}
\omega_{h,j}^{\mathrm{dist}}\left(p\right)
&=
\mathbb{E}\left[
\left(
\mathbf{1}\left\{j\neq k\right\}\Lambda_{h,1,j}^{\delta}
-
\Lambda_{h,1,j}^{0}
\right)
\mathrel{\big|}
P_{j1}=p
\right]
+
\sum_{s=2}^{h}
\mathbb{E}\left[
\Lambda_{h,s,j}^{\delta}-\Lambda_{h,s,j}^{0}
\mathrel{\big|}
P_{js}=p
\right],\\
\omega_{h,k}^{\mathrm{imp}}\left(p\right)
&=
\mathbb{E}\left[
\Lambda_{h,1,k}^{\delta}
\mathrel{\big|}
P_{k1}=p
\right].
\end{aligned}
\label{eq:propagation-weights}
\end{equation}
The first weight collects the effects of the unshifted impact innovations and the common future innovations. The second collects the effect of the shifted impact innovation. Consequently,
\[
\dot{\Psi}_{h}^{\mathrm{dist}}\left[\boldsymbol{q}\right]
=
\sum_{j=1}^{n}
\int_{0}^{1}
\omega_{h,j}^{\mathrm{dist}}\left(p\right)
q_j\left(p\right)
\,\mathrm{d}p,
\qquad
\dot{\Psi}_{h}^{\mathrm{imp}}\left[\boldsymbol{q}\right]
=
\int_{0}^{1}
\omega_{h,k}^{\mathrm{imp}}\left(p\right)
q_k\left(\tau_{\delta}\left(p\right)\right)
\,\mathrm{d}p.
\]

Let
\[
\boldsymbol{J}_{j,t}
=
\left.
\frac{\partial}{\partial\boldsymbol{\beta}}
G_j\left(
\boldsymbol{Y}_t,
\boldsymbol{Y}_{t-1};
\boldsymbol{\beta}
\right)
\right|_{\boldsymbol{\beta}=\boldsymbol{\beta}_0}
\]
denote the derivative of structural residual \(j\) with respect to the transition parameter.

\begin{proposition}[Influence contributions without density estimation]
\label{prop:density-free-influence}

Suppose Assumption~\ref{ass:first-order} holds and the generated-residual correction is induced by the differentiable residual map \(G\). Then the residual, innovation-distribution, and shifted-impact terms in equation~\eqref{eq:influence-decomposition} satisfy
\begin{equation}
\begin{aligned}
\boldsymbol{B}_{h}^{\mathrm{res}}
&=
\sum_{j=1}^{n}
\mathbb{E}\left[
\omega_{h,j}^{\mathrm{dist}}\left(P_{jt}\right)
\boldsymbol{J}_{j,t}
\right]
+
\mathbb{E}\left[
\omega_{h,k}^{\mathrm{imp}}\left(
\rho_{\delta}\left(P_{kt}\right)
\right)
\rho_{\delta}'\left(P_{kt}\right)
\boldsymbol{J}_{k,t}
\right],\\
Z_{h,t}^{\mathrm{res}}
&=
\left(
\boldsymbol{B}_{h}^{\mathrm{res}}
\right)^{\prime}
\boldsymbol{L}_t,\\
Z_{h,t}^{\mathrm{dist}}
&=
\sum_{j=1}^{n}
\int_{\mathbb{R}}
\omega_{h,j}^{\mathrm{dist}}\left(
F_{j0}\left(u\right)
\right)
\left[
F_{j0}\left(u\right)
-
\mathbf{1}\left\{U_{jt}\leq u\right\}
\right]
\,\mathrm{d}u,\\
Z_{h,t}^{\mathrm{imp}}
&=
\int_{\mathbb{R}}
\omega_{h,k}^{\mathrm{imp}}\left(
\rho_{\delta}\left(
F_{k0}\left(u\right)
\right)
\right)
\rho_{\delta}'\left(
F_{k0}\left(u\right)
\right)
\left[
F_{k0}\left(u\right)
-
\mathbf{1}\left\{U_{kt}\leq u\right\}
\right]
\,\mathrm{d}u.
\end{aligned}
\label{eq:density-free-influence}
\end{equation}
\end{proposition}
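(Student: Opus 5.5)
The plan is to start from the integral representations of \(\dot{\Psi}_{h}^{\mathrm{dist}}\) and \(\dot{\Psi}_{h}^{\mathrm{imp}}\) in terms of the propagation weights in equation~\eqref{eq:propagation-weights}. We then substitute the two kinds of perturbation appearing in Theorem~\ref{thm:main-linearization}: \(\boldsymbol{\Xi}_t\) and \(\boldsymbol{R}_t\). In each case a change of variables \(p=F_{j0}(u)\), equivalently \(u=Q_{j0}(p)\) with \(\mathrm{d}p=f_{j0}(u)\,\mathrm{d}u\), will cancel the reciprocal density in \(\Xi_{j,t}\) and in \(\boldsymbol{r}_j\). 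The square-integrability and tail conditions in Assumption~\ref{ass:first-order}(iii) justify each substitution and each interchange of expectation and integral. Linearity of both derivative maps in \(\boldsymbol{q}\) lets us treat components separately.

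\emph{Distribution term.} For component \(j\) we compute
\[
\int_0^1\omega_{h,j}^{\mathrm{dist}}(p)\,\frac{p-\mathbf{1}\{U_{jt}\leq Q_{j0}(p)\}}{f_{j0}(Q_{j0}(p))}\,\mathrm{d}p .
\]
With \(p=F_{j0}(u)\), the factor \(\mathrm{d}p/f_{j0}(Q_{j0}(p))\) becomes \(\mathrm{d}u\), and the indicator becomes \(\mathbf{1}\{U_{jt}\leq u\}\) because \(F_{j0}\) is continuous and strictly increasing on the support. Off the support the bracket \(F_{j0}(u)-\mathbf{1}\{U_{jt}\leq u\}\) vanishes, so the range can be taken to be \(\mathbb{R}\). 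This gives the stated \(Z_{h,t}^{\mathrm{dist}}\).

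\emph{Impact term.} Here the integrand is \(\omega_{h,k}^{\mathrm{imp}}(p)\,\Xi_{k,t}(\tau_\delta(p))\). We first substitute \(v=\tau_\delta(p)\), so that \(p=\rho_\delta(v)\) and \(\mathrm{d}p=\rho_\delta'(v)\,\mathrm{d}v\); the map \(\tau_\delta\) is a smooth increasing bijection of \((0,1)\). Then \(v=F_{k0}(u)\) removes the density exactly as in the distribution term, producing the weight \(\omega_{h,k}^{\mathrm{imp}}(\rho_\delta(F_{k0}(u)))\rho_\delta'(F_{k0}(u))\).

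\emph{Residual term.} We write \(\boldsymbol{r}_j(p)=-\boldsymbol{\Gamma}_j(Q_{j0}(p))/f_{j0}(Q_{j0}(p))\) and apply the same substitutions, which reduces \(\dot{\Psi}_h^{\mathrm{dist}}[\boldsymbol{R}_t]\) to \(-\sum_j\int\omega_{h,j}^{\mathrm{dist}}(F_{j0}(u))\boldsymbol{\Gamma}_j(u)'\boldsymbol{L}_t\,\mathrm{d}u\), and similarly for the impact term with the extra factor \(\rho_\delta'\). The key step is to identify \(\boldsymbol{\Gamma}_j\). We write \(G_j(\boldsymbol{Y}_t,\boldsymbol{Y}_{t-1};\boldsymbol{\beta})\approx U_{jt}+\boldsymbol{J}_{j,t}'(\boldsymbol{\beta}-\boldsymbol{\beta}_0)\) and differentiate \(\Pr(U_{jt}+\boldsymbol{J}_{j,t}'\boldsymbol{b}\leq u)\) at \(\boldsymbol{b}=0\). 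Provided the conditional density of \(U_{jt}\) given \(\boldsymbol{J}_{j,t}\) is continuous, this gives
\[
\boldsymbol{\Gamma}_j(u)=-\mathbb{E}\bigl[\boldsymbol{J}_{j,t}\,f_{U_{jt}\mid\boldsymbol{J}_{j,t}}(u)\bigr].
\]
In the structural model \(\boldsymbol{J}_{j,t}\) depends on \((\boldsymbol{Y}_t,\boldsymbol{Y}_{t-1})\), hence possibly on \(U_{jt}\) itself. The correct general form is therefore \(\boldsymbol{\Gamma}_j(u)=-\mathbb{E}[\boldsymbol{J}_{j,t}\mid U_{jt}=u]f_{j0}(u)\), obtained by differentiating under the integral and dominating with the \(2+\eta\) moment bounds.

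Substituting this expression, the \(\mathrm{d}u\)-integral of \(\omega_{h,j}^{\mathrm{dist}}(F_{j0}(u))\,\mathbb{E}[\boldsymbol{J}_{j,t}\mid U_{jt}=u]f_{j0}(u)\) equals \(\mathbb{E}[\omega_{h,j}^{\mathrm{dist}}(P_{jt})\boldsymbol{J}_{j,t}]\) by the tower property with \(P_{jt}=F_{j0}(U_{jt})\). The impact part gives \(\mathbb{E}[\omega_{h,k}^{\mathrm{imp}}(\rho_\delta(P_{kt}))\rho_\delta'(P_{kt})\boldsymbol{J}_{k,t}]\) in the same way, which yields \(\boldsymbol{B}_h^{\mathrm{res}}\). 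I expect the main obstacle to be this identification of \(\boldsymbol{\Gamma}_j\), in particular signs and justifying differentiation under the expectation when the residual derivative is correlated with the innovation. If that differentiation is delicate, I would instead read \(\boldsymbol{\Gamma}_j\) off the residual empirical-process expansion in Lemma~\ref{lem:generated-residual-empirical-process}, whose drift term is of this form. A secondary issue is integrability near \(p\in\{0,1\}\), where \(\rho_\delta'\) is unbounded; the weighted-norm conditions in Assumption~\ref{ass:first-order}(iii) are what control it.
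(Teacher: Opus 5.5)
Your proposal is correct and follows essentially the same route as the paper. The paper's argument (Appendix~\ref{app:residual-quantiles-density-cancellation}) uses the same substitutions: \(p=F_{j0}(u)\) for the ordinary ranks, and \(r=\tau_{\delta}(p)\) followed by \(r=F_{k0}(u)\) for the shifted rank, which cancel the density in \(\Xi_{j,t}\). For the generated-residual term it takes \(\boldsymbol{r}_j(p)=\mathbb{E}[\boldsymbol{J}_{j,t}\mid U_{jt}=Q_{j0}(p)]\) from \(\boldsymbol{\Gamma}_j(u)=-f_{j0}(u)\mathbb{E}[\boldsymbol{J}_{j,t}\mid U_{jt}=u]\) in Lemma~\ref{lem:generated-residual-empirical-process}, which is your stated fallback, and then integrates using \(P_{jt}=F_{j0}(U_{jt})\). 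One small point: your two expressions \(-\mathbb{E}[\boldsymbol{J}_{j,t}f_{U_{jt}\mid\boldsymbol{J}_{j,t}}(u)]\) and \(-f_{j0}(u)\mathbb{E}[\boldsymbol{J}_{j,t}\mid U_{jt}=u]\) coincide by Bayes' rule whenever the first exists. The second is therefore not a correction of the first; it is simply the form that stays meaningful when \(\boldsymbol{J}_{j,t}\) is a function of \(U_{jt}\).
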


The result follows by changing variables from ranks to innovation units in the two quantile derivatives. For the generated-residual term, differentiability of \(G\) gives
\[
\boldsymbol{r}_j\left(p\right)
=
\mathbb{E}\left[
\boldsymbol{J}_{j,t}
\mathrel{\big|}
U_{jt}=Q_{j0}\left(p\right)
\right].
\]
Appendix~\ref{app:residual-quantiles} gives the formal argument.

We now construct sample analogues. For a transition estimator defined by the estimating equation
\[
\frac{1}{T}
\sum_{t=1}^{T}
\boldsymbol{S}_t\left(
\boldsymbol{\hat{\beta}}
\right)
=
\boldsymbol{0},
\]
let
\[
\boldsymbol{\hat{H}}
=
\frac{1}{T}
\sum_{t=1}^{T}
\frac{
\partial
\boldsymbol{S}_t\left(
\boldsymbol{\hat{\beta}}
\right)
}{
\partial\boldsymbol{\beta}^{\prime}
},
\qquad
\boldsymbol{\hat{L}}_t
=
-
\boldsymbol{\hat{H}}^{-1}
\boldsymbol{S}_t\left(
\boldsymbol{\hat{\beta}}
\right).
\]
Other regular transition estimators enter through their estimated influence contributions. Holding \(\boldsymbol{\hat{Q}}^{\mathrm{E}}\) fixed, estimate the direct transition derivative by
\[
\boldsymbol{\hat{A}}_h
=
\frac{1}{S}
\sum_{s=1}^{S}
\left.
\frac{\partial}{\partial\boldsymbol{\beta}}
D_h\left(
\boldsymbol{P}_{s,1:h};
\boldsymbol{\beta},
\boldsymbol{\hat{Q}}^{\mathrm{E}},
\boldsymbol{y},
\boldsymbol{a},
k,
\delta
\right)
\right|_{\boldsymbol{\beta}=\boldsymbol{\hat{\beta}}}.
\]
The derivative and the path sensitivities in equation~\eqref{eq:propagation-weights} are obtained by automatic differentiation through equation~\eqref{eq:paired-paths}, or by the equivalent finite recursive derivatives in Appendix~\ref{app:recursive-derivatives}. To estimate a propagation weight at rank \(p\), fix the relevant simulated rank at \(p\) and average over the remaining ranks. Thus, the conditional expectations in equation~\eqref{eq:propagation-weights} are numerical integrals over inputs controlled by the researcher rather than nonparametric regressions on observed data.

Let \(\hat{R}_{jt}\) be the rank of \(\hat{U}_{jt}\) among the component-\(j\) residuals, set
\[
\hat{P}_{jt}
=
\frac{\hat{R}_{jt}-1/2}{T},
\]
and define
\[
\Delta\hat{U}_{j,r}
=
\hat{U}_{j,\left(r+1\right)}
-
\hat{U}_{j,\left(r\right)}.
\]
Also let
\[
\boldsymbol{\hat{J}}_{j,t}
=
\left.
\frac{\partial}{\partial\boldsymbol{\beta}}
G_j\left(
\boldsymbol{Y}_t,
\boldsymbol{Y}_{t-1};
\boldsymbol{\beta}
\right)
\right|_{\boldsymbol{\beta}=\boldsymbol{\hat{\beta}}}.
\]
Estimate the generated-residual coefficient by
\[
\boldsymbol{\hat{B}}_{h}^{\mathrm{res}}
=
\frac{1}{T}
\sum_{t=1}^{T}
\left[
\sum_{j=1}^{n}
\hat{\omega}_{h,j}^{\mathrm{dist}}\left(
\hat{P}_{jt}
\right)
\boldsymbol{\hat{J}}_{j,t}
+
\hat{\omega}_{h,k}^{\mathrm{imp}}\left(
\rho_{\delta}\left(
\hat{P}_{kt}
\right)
\right)
\rho_{\delta}'\left(
\hat{P}_{kt}
\right)
\boldsymbol{\hat{J}}_{k,t}
\right].
\]
The four estimated contributions are
\begin{equation}
\begin{aligned}
\hat{Z}_{h,t}^{\mathrm{tr}}
&=
\boldsymbol{\hat{A}}_h^{\prime}
\boldsymbol{\hat{L}}_t,
\qquad
\hat{Z}_{h,t}^{\mathrm{res}}
=
\left(
\boldsymbol{\hat{B}}_{h}^{\mathrm{res}}
\right)^{\prime}
\boldsymbol{\hat{L}}_t,\\
\hat{Z}_{h,t}^{\mathrm{dist}}
&=
\sum_{j=1}^{n}
\sum_{r=1}^{T-1}
\Delta\hat{U}_{j,r}
\hat{\omega}_{h,j}^{\mathrm{dist}}\left(
\frac{r}{T}
\right)
\left[
\frac{r}{T}
-
\mathbf{1}\left\{
\hat{R}_{jt}\leq r
\right\}
\right],\\
\hat{Z}_{h,t}^{\mathrm{imp}}
&=
\sum_{r=1}^{T-1}
\Delta\hat{U}_{k,r}
\hat{\omega}_{h,k}^{\mathrm{imp}}\left(
\rho_{\delta}\left(
\frac{r}{T}
\right)
\right)
\rho_{\delta}'\left(
\frac{r}{T}
\right)
\left[
\frac{r}{T}
-
\mathbf{1}\left\{
\hat{R}_{kt}\leq r
\right\}
\right],\\
\hat{Z}_{h,t}
&=
\hat{Z}_{h,t}^{\mathrm{tr}}
+
\hat{Z}_{h,t}^{\mathrm{res}}
+
\hat{Z}_{h,t}^{\mathrm{dist}}
+
\hat{Z}_{h,t}^{\mathrm{imp}}.
\end{aligned}
\label{eq:estimated-influence-contributions}
\end{equation}
The two spacing sums evaluate the integrals in equation~\eqref{eq:density-free-influence} over the intervals between adjacent residual order statistics.

For a fixed collection of \(M\) horizons or response variables, apply equation~\eqref{eq:estimated-influence-contributions} to each response and stack the results in
\[
\boldsymbol{\hat{Z}}_t
=
\left(
\hat{Z}_{1,t},
\ldots,
\hat{Z}_{M,t}
\right)^{\prime}.
\]
Let
\[
\overline{\boldsymbol{\hat{Z}}}
=
\frac{1}{T}
\sum_{t=1}^{T}
\boldsymbol{\hat{Z}}_t,
\qquad
\boldsymbol{\tilde{Z}}_t
=
\boldsymbol{\hat{Z}}_t
-
\overline{\boldsymbol{\hat{Z}}},
\]
and define
\[
\boldsymbol{\hat{\Gamma}}_{\ell}
=
\frac{1}{T}
\sum_{t=\ell+1}^{T}
\boldsymbol{\tilde{Z}}_t
\boldsymbol{\tilde{Z}}_{t-\ell}^{\prime}.
\]
A feasible long-run covariance estimator is
\begin{equation}
\boldsymbol{\hat{\Omega}}
=
\boldsymbol{\hat{\Gamma}}_0
+
\sum_{\ell=1}^{q_T}
\mathcal{K}\left(
\frac{\ell}{q_T+1}
\right)
\left(
\boldsymbol{\hat{\Gamma}}_{\ell}
+
\boldsymbol{\hat{\Gamma}}_{\ell}^{\prime}
\right),
\label{eq:feasible-long-run-covariance}
\end{equation}
where \(\mathcal{K}\) is a standard HAC weighting function and \(q_T\) satisfies the corresponding lag-truncation conditions. When \(\left\{\boldsymbol{Z}_t\right\}\) is a martingale difference sequence, set \(q_T=0\), and equation~\eqref{eq:feasible-long-run-covariance} reduces to the sample covariance.

\begin{corollary}[Feasible Wald inference]
\label{cor:feasible-wald-inference}

Suppose the conditions of Corollary~\ref{cor:main-gaussian-limit} hold, the nuisance estimates above are consistent, and the numerical integration error is \(o_{\mathbb{P}}\left(1\right)\). Then
\[
\boldsymbol{\hat{\Omega}}
\xrightarrow{\mathbb{P}}
\boldsymbol{\Omega}.
\]
For response \(m\), an asymptotic pointwise \(1-\alpha\) confidence interval is
\begin{equation}
\mathcal{I}_{m,1-\alpha}^{\mathrm{pt}}
=
\left[
\hat{\psi}_{m,S}^{\mathrm{E}}
-
z_{1-\alpha/2}
\sqrt{
\frac{\hat{\Omega}_{mm}}{T}
},
\quad
\hat{\psi}_{m,S}^{\mathrm{E}}
+
z_{1-\alpha/2}
\sqrt{
\frac{\hat{\Omega}_{mm}}{T}
}
\right].
\label{eq:pointwise-wald-interval}
\end{equation}

For simultaneous inference over the fixed collection, let
\[
\boldsymbol{\hat{V}}
=
\diag\left(
\hat{\Omega}_{11},
\ldots,
\hat{\Omega}_{MM}
\right),
\qquad
\boldsymbol{\hat{C}}
=
\boldsymbol{\hat{V}}^{-1/2}
\boldsymbol{\hat{\Omega}}
\boldsymbol{\hat{V}}^{-1/2},
\]
and let \(c_{1-\alpha}\left(\boldsymbol{\hat{C}}\right)\) be the \(1-\alpha\) quantile of
\[
\max_{1\leq m\leq M}
\left|G_m\right|
\]
for
\[
\boldsymbol{G}
\sim
\mathcal{N}\left(
\boldsymbol{0},
\boldsymbol{\hat{C}}
\right).
\]
Simultaneous intervals are
\begin{equation}
\mathcal{I}_{m,1-\alpha}^{\mathrm{sim}}
=
\left[
\hat{\psi}_{m,S}^{\mathrm{E}}
-
c_{1-\alpha}\left(
\boldsymbol{\hat{C}}
\right)
\sqrt{
\frac{\hat{\Omega}_{mm}}{T}
},
\quad
\hat{\psi}_{m,S}^{\mathrm{E}}
+
c_{1-\alpha}\left(
\boldsymbol{\hat{C}}
\right)
\sqrt{
\frac{\hat{\Omega}_{mm}}{T}
}
\right],
\qquad
m=1,\ldots,M.
\label{eq:simultaneous-wald-intervals}
\end{equation}
\end{corollary}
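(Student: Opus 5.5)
The plan is the standard HAC-consistency argument: first show that the feasible contributions \(\boldsymbol{\hat{Z}}_t\) approximate the infeasible \(\boldsymbol{Z}_t\) of Theorem~\ref{thm:main-linearization} in an averaged quadratic sense, then use the usual long-run covariance consistency for \(\{\boldsymbol{Z}_t\}\), and finally read off the interval statements from Corollary~\ref{cor:main-gaussian-limit} by Slutsky and the continuous mapping theorem. Concretely, I will aim to prove
\[
\frac{1}{T}\sum_{t=1}^{T}\left\|\boldsymbol{\hat{Z}}_t-\boldsymbol{Z}_t\right\|^{2}=o_{\mathbb{P}}\left(q_T^{-2}\right)
\quad\text{or at least}\quad
o_{\mathbb{P}}\left(1\right)
\]
when \(q_T=0\). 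A Cauchy--Schwarz bound on each \(\boldsymbol{\hat{\Gamma}}_\ell-\boldsymbol{\Gamma}_\ell^{Z}\) then shows that \(\boldsymbol{\hat{\Omega}}\) differs from the infeasible HAC estimator built from \(\boldsymbol{Z}_t\) by \(o_{\mathbb{P}}(1)\). Here \(\boldsymbol{\Gamma}_\ell^{Z}\) denotes the sample autocovariance of the \(\boldsymbol{Z}_t\). The infeasible estimator converges to \(\boldsymbol{\Omega}\) under the moment and dependence conditions of Assumption~\ref{ass:first-order}(i) and the stated lag-truncation conditions; this is a standard HAC result that I will cite rather than reprove. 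Centering by \(\overline{\boldsymbol{\hat{Z}}}\) is harmless because \(\overline{\boldsymbol{Z}}=O_{\mathbb{P}}(T^{-1/2})\).

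The approximation step is done term by term. For \(\hat{Z}^{\mathrm{tr}}_{h,t}\) and \(\hat{Z}^{\mathrm{res}}_{h,t}\) I will write
\[
\boldsymbol{\hat{A}}_h^{\prime}\boldsymbol{\hat{L}}_t-\boldsymbol{A}_h^{\prime}\boldsymbol{L}_t
=
\left(\boldsymbol{\hat{A}}_h-\boldsymbol{A}_h\right)^{\prime}\boldsymbol{\hat{L}}_t
+
\boldsymbol{A}_h^{\prime}\left(\boldsymbol{\hat{L}}_t-\boldsymbol{L}_t\right).
\]
The first piece is handled by consistency of \(\boldsymbol{\hat{A}}_h\). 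This uses the assumed nuisance consistency, together with continuity of \(\Psi_h\)'s \(\boldsymbol{\beta}\)-derivative in \((\boldsymbol{\beta},\boldsymbol{Q})\) under Assumption~\ref{ass:first-order}(iii), Proposition~\ref{prop:consistency}, and the negligibility of numerical integration error. The second piece is handled by \(\boldsymbol{\hat{H}}\to\boldsymbol{H}\) and a mean-value expansion of \(\boldsymbol{S}_t\), dominated by the \(2+\eta\) envelope. The same decomposition applies to \(\boldsymbol{\hat{B}}^{\mathrm{res}}_h\). That term is a sample average of \(\hat\omega(\hat P_{jt})\boldsymbol{\hat J}_{j,t}\). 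I replace \(\hat P_{jt}\) by \(P_{jt}\), using \(\sup_t|\hat P_{jt}-F_{j0}(U_{jt})|\to0\), which follows from the Glivenko--Cantelli behavior of generated residuals implied by \eqref{eq:generated-quantile-expansion}. I then apply the ergodic theorem to reach the population expression in Proposition~\ref{prop:density-free-influence}.

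For the distribution and impact terms, I note that each spacing sum is exactly a Riemann--Stieltjes type integral,
\[
\int \hat\omega\left(\hat F_j(u)\right)\left[\hat F_j(u)-\mathbf{1}\{\hat U_{jt}\le u\}\right]\mathrm{d}u,
\]
where \(\hat F_j\) is the empirical distribution of the residuals, because the integrand is constant between adjacent order statistics. The comparison with \eqref{eq:density-free-influence} therefore reduces to three uniform approximations: \(\hat F_j\to F_{j0}\), \(\hat\omega\to\omega\) (together with the smooth factors \(\rho_\delta,\rho_\delta'\)), and \(\hat U_{jt}\) versus \(U_{jt}\). The last is \(O_{\mathbb{P}}(T^{-1/2})\) times \(\|\boldsymbol{J}_{j,t}\|\).

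\textbf{Main obstacle.} The hardest step is the tails: the integral runs over all of \(\mathbb{R}\), and \(\rho_\delta'(p)\) blows up like \(\exp(\delta\Phi^{-1}(p))\) near the endpoint in the direction of the shift. My first attempt will be to truncate at \(u\in[Q_{j0}(\epsilon),Q_{j0}(1-\epsilon)]\) and control the remainder uniformly in \(T\) through the square-integrability and tail conditions of Assumption~\ref{ass:first-order}(iii). Those conditions make \(\int\omega(F)\rho'(F)\sqrt{F(1-F)}\,\mathrm{d}u\) finite, and the extreme order statistics can be bounded by the \(2+\eta\) moments. On the compact middle region, uniform convergence yields \(\frac1T\sum_t|\hat Z_{h,t}^{\mathrm{dist}}-Z^{\mathrm{dist}}_{h,t}|^2\to0\); letting \(\epsilon\downarrow0\) completes the argument.

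With \(\boldsymbol{\hat{\Omega}}\to\boldsymbol{\Omega}\) established and \(\Omega_{mm}>0\), the pointwise interval follows from Corollary~\ref{cor:main-gaussian-limit} by Slutsky. For the simultaneous intervals, \(\boldsymbol{\hat C}\to\boldsymbol C\), and \(c_{1-\alpha}(\cdot)\) is continuous at \(\boldsymbol C\) because the law of \(\max_m|G_m|\) is continuous. Hence \(\max_m|\sqrt T(\hat\psi_m-\psi_m)|/\hat\Omega_{mm}^{1/2}\) converges in law to \(\max_m|G_m|\), and the simultaneous coverage tends to \(1-\alpha\).
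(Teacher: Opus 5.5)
The paper never writes out a proof of this corollary. It is later invoked, in the smoothing and bootstrap proofs, as an immediate consequence of the assumed nuisance consistency, a standard HAC result, and Corollary~\ref{cor:main-gaussian-limit}. Your route is therefore the intended one: approximate \(\boldsymbol{\hat Z}_t\) by \(\boldsymbol{Z}_t\), cite HAC consistency for the infeasible estimator, then apply Slutsky and continuity of \(c_{1-\alpha}(\cdot)\). Your treatment of both intervals is fine, including the requirement \(\Omega_{mm}>0\) that the statement leaves implicit. (Continuity of \(c_{1-\alpha}\) at \(\boldsymbol{C}\) also uses that the distribution function of \(\max_m|G_m|\) is strictly increasing on \((0,\infty)\), which holds.)

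The step that does not close is the target you set yourself. Let \(a_T=T^{-1}\sum_t\|\boldsymbol{\hat Z}_t-\boldsymbol{Z}_t\|^2\). Your Cauchy--Schwarz reduction bounds the gap between \(\boldsymbol{\hat\Omega}\) and the infeasible HAC estimator by a multiple of \(q_T a_T^{1/2}\), so when \(q_T\to\infty\) you need \(a_T=o_{\mathbb{P}}(q_T^{-2})\). Most of your ingredients, however, are bare consistency statements: \(\boldsymbol{\hat H}\to\boldsymbol{H}_0\), \(\boldsymbol{\hat A}_h\to\boldsymbol{A}_h\), uniform convergence of \(\hat\omega\) and \(\hat F_j\), and \(\sup_t|\hat P_{jt}-P_{jt}|\to0\). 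Moreover, a tail argument that fixes \(\epsilon\), lets \(T\to\infty\), and only then sends \(\epsilon\downarrow0\) can never produce a rate. As written, you have covered the martingale-difference case \(q_T=0\), not the HAC case. To close it, you have two options:
\begin{itemize}
\item carry rates through every piece, namely \(O_{\mathbb{P}}(T^{-1/2})\) from \(\sqrt{T}\)-consistency of \(\boldsymbol{\hat\beta}\) together with a rate for \(\boldsymbol{\hat H}\), and \(O_{\mathbb{P}}(T^{-1/2}+S^{-1/2})\) for \(\hat\omega\) in a suitably weighted sup norm, and then impose a lag condition such as \(q_T^2/T\to0\); or
\item replace the crude bound by a bilinear expansion of \(\boldsymbol{\hat\Omega}\) in the nuisance errors, so that each error multiplies a kernel-weighted autocovariance sum that is \(O_{\mathbb{P}}(1)\).
\end{itemize}

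For the spacing terms, the weighted norm does directly what your truncation cannot. Split \(\hat Z_{h,t}^{\mathrm{dist}}-Z_{h,t}^{\mathrm{dist}}\) into three pieces.
\begin{itemize}
\item A \(t\)-invariant piece \(\int\omega(F_{j0})(\hat F_j-F_{j0})\,\mathrm{d}u\). It is bounded by \(\|\hat F_j-F_{j0}\|_{F_j,\zeta_j}=O_{\mathbb{P}}(T^{-1/2})\) (Lemma~\ref{lem:generated-residual-empirical-process}) times the corresponding integral in \eqref{eq:response-tail-integrability}, and it is removed by centering anyway.
\item An indicator swap \(\int_{U_{jt}}^{\hat U_{jt}}\omega(F_{j0}(u))\,\mathrm{d}u\), which is roughly \(T^{-1/2}|\omega(P_{jt})|\,\|\boldsymbol{J}_{j,t}\|\).
\item A weight-error piece with \(g=\hat\omega(\hat F_j)-\omega(F_{j0})\). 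For it, the empirical covariance identity gives \(T^{-1}\sum_t\bigl(\int g\,[\hat F_j-\mathbf{1}\{\hat U_{jt}\le u\}]\,\mathrm{d}u\bigr)^2\le\bigl(\int|g|\,\{\hat F_j(1-\hat F_j)\}^{1/2}\,\mathrm{d}u\bigr)^2\) over the whole line.
\end{itemize}
That last bound also shows what your tail sketch misses even for \(a_T=o_{\mathbb{P}}(1)\). The feasible tail remainder is not controlled by the population conditions on \(\omega\) and \(F_{j0}\) alone. You also need \(\hat\omega\) to be dominated in the tails (weighted, not merely locally uniform, consistency), and \(\hat F_j(1-\hat F_j)\) to be comparable to \(F_{j0}(1-F_{j0})\) out to the extreme order statistics; the \(2+\eta\) moments do not give this by themselves.

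The same issue affects \(\boldsymbol{\hat B}_h^{\mathrm{res}}\). Since \(\rho_\delta'(p)=\exp\{\delta\Phi^{-1}(p)-\delta^2/2\}\) is unbounded, replacing \(\hat P_{kt}\) by \(P_{kt}\) inside \(\rho_\delta'\) near the endpoint needs this weighted control, not just \(\sup_t|\hat P_{kt}-P_{kt}|\to0\) followed by the ergodic theorem.
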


The first-stage influence contributions and residual-map Jacobians are computed analytically from the transition estimator and structural model. Automatic differentiation is applied only to the smooth recursive paths; the order statistics are handled by equation~\eqref{eq:estimated-influence-contributions}. Expectations over rank paths are numerical integrals. Thus, the influence-based procedure requires no innovation-density estimator or quantile-smoothing bandwidth. Under the general weak-dependence formulation, the HAC lag choice concerns serial covariance in the observation-level contributions.

The joint result applies to a fixed number of horizons and response variables. Growing-horizon inference is outside the present claims. The intervals above also impose \(S/T\rightarrow\infty\); Section~\ref{sec:simulation-error} adds simulation uncertainty when \(S\) is proportional to \(T\).

\section{What smoothing changes under the same target}
\label{sec:smoothing}

The comparison in this section keeps the transition estimate and the simulated rank paths fixed. Thus, the paired difference between the two estimators is generated only by replacing \(\boldsymbol{\hat{Q}}^{\mathrm{E}}\) with \(\boldsymbol{\hat{Q}}_{b_T}^{\mathrm{S}}=\mathcal{S}_{b_T}\boldsymbol{\hat{Q}}^{\mathrm{E}}\), applied componentwise, in equation~\eqref{eq:unified-estimator}. For each innovation component,
\begin{equation}
\hat{Q}_{j,b_T}^{\mathrm{S}}-\hat{Q}_{j}^{\mathrm{E}}
=
\left(\mathcal{S}_{b_T}-\operatorname{Id}\right)Q_{j0}
+
\left(\mathcal{S}_{b_T}-\operatorname{Id}\right)
\left(\hat{Q}_{j}^{\mathrm{E}}-Q_{j0}\right).
\label{eq:quantile-smoothing-decomposition}
\end{equation}
The first term is the deterministic approximation error introduced by smoothing. The second records how smoothing changes the sampling error of the empirical quantile function.

\begin{assumption}[Rank smoothing]
\label{ass:smoothing}

The operator \(\mathcal{S}_{b}\) is a componentwise boundary-corrected Gaussian rank smoother with \(b_T\rightarrow0\). On interior ranks, its population version satisfies
\[
\left(\mathcal{S}_{b}q\right)\left(p\right)
=
\int_{\mathbb{R}}
\phi\left(v\right)
q\left(p+bv\right)
\,\mathrm{d}v,
\]
where \(\phi\) is the standard normal density. For each innovation component, \(Q_{j0}\) is twice continuously differentiable over the ranks used by the response, and
\[
\left\|
\mathcal{S}_{b}Q_{j0}
-
Q_{j0}
-
\frac{b^2}{2}Q_{j0}^{\prime\prime}
\right\|_j
=
o\left(b^2\right)
\]
under the weighted norm in Assumption~\ref{ass:first-order}. Moreover,
\[
\sqrt{T}
\left\|
\left(\mathcal{S}_{b_T}-\operatorname{Id}\right)
\left(\hat{Q}_{j}^{\mathrm{E}}-Q_{j0}\right)
\right\|_j
=
o_{\mathbb{P}}\left(1\right),
\qquad
j=1,\ldots,n.
\]
The boundary correction and tail conditions make the same expansion valid for the ranks reached by \(\tau_{\delta}\).
\end{assumption}

Assumption~\ref{ass:smoothing} separates the local smoothing bias from the first-order empirical-quantile fluctuation. To express its effect on the impulse-response function, use the propagation weights in equation~\eqref{eq:propagation-weights} and define
\begin{equation}
B_h^{\mathrm{S}}
=
\frac{1}{2}
\left[
\sum_{j=1}^{n}
\int_{0}^{1}
\omega_{h,j}^{\mathrm{dist}}\left(p\right)
Q_{j0}^{\prime\prime}\left(p\right)
\,\mathrm{d}p
+
\int_{0}^{1}
\omega_{h,k}^{\mathrm{imp}}\left(p\right)
Q_{k0}^{\prime\prime}\left(\tau_{\delta}\left(p\right)\right)
\,\mathrm{d}p
\right].
\label{eq:smoothing-bias-coefficient}
\end{equation}
The first integral covers the unshifted impact innovations and the common future innovations. The second covers the rank-shifted impact innovation.

\begin{theorem}[Paired smoothing expansion]
\label{thm:smoothing-expansion}

Suppose Assumptions~\ref{ass:first-order} and~\ref{ass:smoothing} hold, and let \(S\rightarrow\infty\). At any fixed horizon,
\begin{equation}
\hat{\psi}_{h,S}^{\mathrm{S}}
-
\hat{\psi}_{h,S}^{\mathrm{E}}
=
b_T^2B_h^{\mathrm{S}}
+
R_{h,T,S}^{\mathrm{S}},
\label{eq:smoothing-expansion}
\end{equation}
where
\begin{equation}
R_{h,T,S}^{\mathrm{S}}
=
o_{\mathbb{P}}\left(T^{-1/2}+b_T^2\right)
+
O_{\mathbb{P}}\left(
S^{-1/2}
\left(T^{-1/2}+b_T^2\right)
\right).
\label{eq:smoothing-remainder}
\end{equation}
The result holds jointly for any fixed collection of horizons, initial states, response variables, shocked components, and shock sizes after stacking the corresponding coefficients \(B_h^{\mathrm{S}}\).
\end{theorem}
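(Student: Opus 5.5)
The plan is to write the paired difference as a difference of the simulated functional evaluated at two quantile arguments, with the transition estimate and rank paths held fixed:
\[
\hat{\psi}_{h,S}^{\mathrm{S}}-\hat{\psi}_{h,S}^{\mathrm{E}}
=
\Psi_{h,S}\left(\boldsymbol{\hat{\beta}},\boldsymbol{\hat{Q}}_{b_T}^{\mathrm{S}}\right)
-
\Psi_{h,S}\left(\boldsymbol{\hat{\beta}},\boldsymbol{\hat{Q}}^{\mathrm{E}}\right).
\]
I would split this into a population part and a simulation part:
\[
\left[\Psi_{h}\left(\boldsymbol{\hat{\beta}},\boldsymbol{\hat{Q}}_{b_T}^{\mathrm{S}}\right)-\Psi_{h}\left(\boldsymbol{\hat{\beta}},\boldsymbol{\hat{Q}}^{\mathrm{E}}\right)\right]
+
\left[(\Psi_{h,S}-\Psi_h)\left(\boldsymbol{\hat{\beta}},\boldsymbol{\hat{Q}}_{b_T}^{\mathrm{S}}\right)-(\Psi_{h,S}-\Psi_h)\left(\boldsymbol{\hat{\beta}},\boldsymbol{\hat{Q}}^{\mathrm{E}}\right)\right].
\]
By Assumption~\ref{ass:smoothing} and equation~\eqref{eq:quantile-smoothing-decomposition}, the quantity
\[
\Delta_j=\hat{Q}_{j,b_T}^{\mathrm{S}}-\hat{Q}_{j}^{\mathrm{E}}
\]
satisfies
\[
\|\Delta_j-\tfrac{b_T^2}{2}Q_{j0}''\|_j=o(b_T^2)+o_{\mathbb{P}}(T^{-1/2}).
\]
In particular,
\[
\|\Delta_j\|_j=O_{\mathbb{P}}(b_T^2+T^{-1/2}).
\]
All later remainders will be measured against this size.

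For the population part, I would use the first-order differentiability of \(\Psi_h\) in the weighted norm from Assumption~\ref{ass:first-order}(iii). The derivative should be taken at the random base point \((\boldsymbol{\hat{\beta}},\boldsymbol{\hat{Q}}^{\mathrm{E}})\) rather than at \((\boldsymbol{\beta}_0,\boldsymbol{Q}_0)\). I would use a mean-value form along the segment \(\boldsymbol{\hat{Q}}^{\mathrm{E}}+\lambda\boldsymbol{\Delta}\), \(\lambda\in[0,1]\). Consistency of \(\boldsymbol{\hat{\beta}}\), together with \(\|\hat{Q}_j^{\mathrm{E}}-Q_{j0}\|_j=O_{\mathbb{P}}(T^{-1/2})\) from equation~\eqref{eq:generated-quantile-expansion}, puts every point of this segment in a shrinking neighbourhood of \((\boldsymbol{\beta}_0,\boldsymbol{Q}_0)\). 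Continuity of the derivative then gives
\[
\Psi_{h}(\boldsymbol{\hat{\beta}},\boldsymbol{\hat{Q}}_{b_T}^{\mathrm{S}})-\Psi_{h}(\boldsymbol{\hat{\beta}},\boldsymbol{\hat{Q}}^{\mathrm{E}})
=\dot{\Psi}_h^{\mathrm{dist}}[\boldsymbol{\Delta}]+\dot{\Psi}_h^{\mathrm{imp}}[\boldsymbol{\Delta}]+o_{\mathbb{P}}(\|\boldsymbol{\Delta}\|).
\]
The derivative maps are bounded linear functionals in the weighted norm. Substituting the expansion of \(\Delta_j\) and using the integral representations built from the propagation weights \eqref{eq:propagation-weights} yields
\[
\tfrac{b_T^2}{2}\left[\sum_j\int\omega_{h,j}^{\mathrm{dist}}Q_{j0}''+\int\omega_{h,k}^{\mathrm{imp}}(p)Q_{k0}''(\tau_\delta(p))\,\mathrm{d}p\right]=b_T^2B_h^{\mathrm{S}},
\]
with error \(o_{\mathbb{P}}(T^{-1/2}+b_T^2)\). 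The impact term needs the weighted norm to control evaluation at \(\tau_\delta(p)\), and that is exactly what the assumption's boundary-correction clause provides.

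For the simulation part, I would write the difference as \(S^{-1}\sum_s(\xi_s-\mathbb{E}^*\xi_s)\). Here \(\xi_s\) is the difference of \(D_h\) evaluated at the two quantile functions on rank path \(s\), and \(\mathbb{E}^*\) denotes the expectation over the simulation draws given the sample. The simulation draws are independent of the sample, so conditionally on the sample these are i.i.d.\ mean-zero terms. A pathwise mean-value bound gives \(|\xi_s|\le M_s\,\|\boldsymbol{\Delta}\|\), where \(M_s\) is the path-derivative envelope of Assumption~\ref{ass:first-order}(i),(iii), which has a finite second moment. Chebyshev's inequality applied conditionally, followed by integrating out, then gives \(O_{\mathbb{P}}(S^{-1/2}\|\boldsymbol{\Delta}\|)=O_{\mathbb{P}}(S^{-1/2}(T^{-1/2}+b_T^2))\).

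Joint validity over a fixed collection follows componentwise, since \(M\) is finite. I expect the main obstacle to be two places where the envelope must be uniform in the random arguments: the derivative continuity along the random segment in the population part, and the envelope bound in the simulation part, which must hold when the quantile argument is the step function \(\hat Q^{\mathrm{E}}\). First I would try bounding \(M_s\) by a supremum of path derivatives over a fixed neighbourhood of \((\boldsymbol{\beta}_0,\boldsymbol{Q}_0)\) in the weighted norm, with probability approaching one, invoking the \(2+\eta\) moment condition.
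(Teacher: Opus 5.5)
Your proposal is correct and follows the paper's proof essentially step for step. It uses the same split into an exactly integrated difference plus a common-random-number Monte Carlo term. It uses the same smoothing decomposition to get $\|\boldsymbol{\Delta}_T-\tfrac{b_T^2}{2}\boldsymbol{Q}_0''\|=o_{\mathbb{P}}(T^{-1/2}+b_T^2)$. It uses continuity of $\dot{\Psi}_h^{\mathrm{dist}}$ and $\dot{\Psi}_h^{\mathrm{imp}}$ in the weighted norm to produce $b_T^2B_h^{\mathrm{S}}$. It also uses the same conditional Chebyshev bound of order $S^{-1/2}\|\boldsymbol{\Delta}_T\|$ for the paired simulation error.

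The only variation is in linearizing the integrated difference. You use a mean-value expansion along the segment from $\boldsymbol{\hat{Q}}^{\mathrm{E}}$ to $\boldsymbol{\hat{Q}}_{b_T}^{\mathrm{S}}$ at the random point $\boldsymbol{\hat{\beta}}$, which needs continuity of the derivative near $(\boldsymbol{\beta}_0,\boldsymbol{Q}_0)$. The paper instead expands both plug-in responses around $(\boldsymbol{\beta}_0,\boldsymbol{Q}_0)$ and subtracts, so the $\boldsymbol{A}_h$ term cancels and only the remainder bound at the true point is used.
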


The first term in equation~\eqref{eq:smoothing-remainder} contains the stochastic part of equation~\eqref{eq:quantile-smoothing-decomposition} and the higher-order terms from the nonlinear recursion. The second is the conditional Monte Carlo error in the paired estimator difference. Common simulation draws make this error proportional to the distance between the two quantile estimates. Consequently, \(S\rightarrow\infty\) is sufficient for the paired expansion, although inference for either estimator without a simulation correction continues to require \(S/T\rightarrow\infty\).

Because both procedures use the same \(\boldsymbol{\hat{\beta}}\) and the same generated residuals, there is no separate direct first-stage term in equation~\eqref{eq:smoothing-expansion}. The common transition and residual-quantile uncertainty remains in the sampling variance of each estimator; smoothing changes their leading difference through \(b_T^2B_h^{\mathrm{S}}\).

Equation~\eqref{eq:smoothing-bias-coefficient} also distinguishes pointwise quantile approximation from its effect after recursive propagation. The local error is weighted by the sensitivity of the date-\(h\) response to an innovation at that rank. When the relevant norms are finite,
\[
\left|B_h^{\mathrm{S}}\right|
\leq
\frac{1}{2}
\left[
\sum_{j=1}^{n}
\left\|\omega_{h,j}^{\mathrm{dist}}\right\|_1
\left\|Q_{j0}^{\prime\prime}\right\|_{\infty}
+
\left\|\omega_{h,k}^{\mathrm{imp}}\right\|_1
\left\|Q_{k0}^{\prime\prime}\right\|_{\infty}
\right].
\]
Thus, a small local error can matter when the recursion is persistent or strongly state dependent, while errors at different ranks may offset one another. If \(B_h^{\mathrm{S}}=0\), the second-order bias cancels after propagation and the next nonzero term determines the relevant bandwidth condition.

\begin{corollary}[Bandwidth regimes]
\label{cor:smoothing-regimes}

Let \(\lambda_T=\sqrt{T}b_T^2\), let \(\boldsymbol{B}^{\mathrm{S}}\) stack the coefficients in equation~\eqref{eq:smoothing-bias-coefficient} for a fixed collection of responses, and suppose \(S/T\rightarrow\infty\).
\begin{enumerate}[label=(\roman*),leftmargin=2.2em]
\item If \(\lambda_T\rightarrow0\), then
\[
\sqrt{T}
\left(
\boldsymbol{\hat{\psi}}_{S}^{\mathrm{S}}
-
\boldsymbol{\hat{\psi}}_{S}^{\mathrm{E}}
\right)
\xrightarrow{\mathbb{P}}
\boldsymbol{0}.
\]
The empirical and smoothed estimators have the same first-order distribution and the same feasible covariance matrix.

\item If \(\lambda_T\rightarrow\lambda\in\left(0,\infty\right)\), then
\begin{equation}
\sqrt{T}
\left(
\boldsymbol{\hat{\psi}}_{S}^{\mathrm{S}}
-
\boldsymbol{\psi}
\right)
\xrightarrow{\mathrm{d}}
\mathcal{N}
\left(
\lambda\boldsymbol{B}^{\mathrm{S}},
\boldsymbol{\Omega}
\right).
\label{eq:smoothed-local-bias-limit}
\end{equation}
The smoothed estimator has the same first-order covariance as the empirical estimator, but its limiting distribution is shifted by the propagated smoothing bias.

\item If \(\lambda_T\rightarrow\infty\) and \(B_h^{\mathrm{S}}\neq0\), then
\begin{equation}
b_T^{-2}
\left(
\hat{\psi}_{h,S}^{\mathrm{S}}
-
\psi_h
\right)
\xrightarrow{\mathbb{P}}
B_h^{\mathrm{S}}.
\label{eq:smoothing-bias-dominates}
\end{equation}
The smoothed estimator remains consistent because \(b_T\rightarrow0\), but smoothing bias dominates its \(T^{-1/2}\) sampling error.
\end{enumerate}
\end{corollary}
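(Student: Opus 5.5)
The plan is to read all three regimes off the paired expansion in Theorem~\ref{thm:smoothing-expansion}, applied jointly to the stacked responses, together with the Gaussian limit in Corollary~\ref{cor:main-gaussian-limit}. Stacking equation~\eqref{eq:smoothing-expansion} gives
\[
\sqrt{T}\left(\boldsymbol{\hat{\psi}}_{S}^{\mathrm{S}}-\boldsymbol{\hat{\psi}}_{S}^{\mathrm{E}}\right)
=
\lambda_T\boldsymbol{B}^{\mathrm{S}}
+
\sqrt{T}\,\boldsymbol{R}_{T,S}^{\mathrm{S}}.
\]
The first step is to bound the scaled remainder. From equation~\eqref{eq:smoothing-remainder},
\[
\sqrt{T}\,\boldsymbol{R}_{T,S}^{\mathrm{S}}
=
o_{\mathbb{P}}\left(1+\lambda_T\right)
+
O_{\mathbb{P}}\left(S^{-1/2}\left(1+\lambda_T\right)\right).
\]
Because \(S/T\rightarrow\infty\) implies \(S\rightarrow\infty\), the second term is also \(o_{\mathbb{P}}\left(1+\lambda_T\right)\). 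The whole scaled remainder is therefore \(o_{\mathbb{P}}\left(1+\lambda_T\right)\).

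\emph{Regime (i).} If \(\lambda_T\rightarrow0\), the right-hand side above is \(o_{\mathbb{P}}(1)\), which is the displayed claim. By Slutsky's lemma, \(\sqrt{T}\left(\boldsymbol{\hat{\psi}}_{S}^{\mathrm{S}}-\boldsymbol{\psi}\right)\) then inherits the limit \(\mathcal{N}\left(\boldsymbol{0},\boldsymbol{\Omega}\right)\) from Corollary~\ref{cor:main-gaussian-limit}. The same-covariance statement needs a separate check. I would verify that the feasible contributions in equation~\eqref{eq:estimated-influence-contributions} computed from \(\boldsymbol{\hat{Q}}^{\mathrm{S}}_{b_T}\) remain consistent, using \(\|\hat{Q}_{j,b_T}^{\mathrm{S}}-Q_{j0}\|_j\xrightarrow{\mathbb{P}}0\). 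That convergence follows from Assumption~\ref{ass:smoothing} and the generated-quantile expansion, since \(O(b_T^2)+O_{\mathbb{P}}(T^{-1/2})\rightarrow0\). Corollary~\ref{cor:feasible-wald-inference} then applies verbatim.

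\emph{Regime (ii).} Write
\[
\sqrt{T}\left(\boldsymbol{\hat{\psi}}_{S}^{\mathrm{S}}-\boldsymbol{\psi}\right)
=
\sqrt{T}\left(\boldsymbol{\hat{\psi}}_{S}^{\mathrm{E}}-\boldsymbol{\psi}\right)
+
\lambda_T\boldsymbol{B}^{\mathrm{S}}
+
o_{\mathbb{P}}(1).
\]
Here \(\lambda_T\) is bounded, so the remainder is \(o_{\mathbb{P}}(1)\). Since \(\lambda_T\boldsymbol{B}^{\mathrm{S}}\rightarrow\lambda\boldsymbol{B}^{\mathrm{S}}\) deterministically, Slutsky's lemma combined with Corollary~\ref{cor:main-gaussian-limit} gives equation~\eqref{eq:smoothed-local-bias-limit}. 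The covariance is unchanged because the shift is nonrandom.

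\emph{Regime (iii).} Divide the scalar expansion by \(b_T^2\):
\[
b_T^{-2}\left(\hat{\psi}_{h,S}^{\mathrm{S}}-\psi_h\right)
=
b_T^{-2}\left(\hat{\psi}_{h,S}^{\mathrm{E}}-\psi_h\right)
+
B_h^{\mathrm{S}}
+
b_T^{-2}R_{h,T,S}^{\mathrm{S}}.
\]
The first term equals \(\lambda_T^{-1}\sqrt{T}\left(\hat{\psi}_{h,S}^{\mathrm{E}}-\psi_h\right)=\lambda_T^{-1}O_{\mathbb{P}}(1)\rightarrow0\). For the remainder, \(b_T^{-2}\left(T^{-1/2}+b_T^2\right)=\lambda_T^{-1}+1\) is bounded, so \(b_T^{-2}R_{h,T,S}^{\mathrm{S}}=o_{\mathbb{P}}(1)+O_{\mathbb{P}}(S^{-1/2})=o_{\mathbb{P}}(1)\). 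This yields equation~\eqref{eq:smoothing-bias-dominates}. Consistency of the smoothed estimator itself is Proposition~\ref{prop:consistency}.

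The only delicate point is the remainder bookkeeping in regime (iii). The little-o in equation~\eqref{eq:smoothing-remainder} must be read relative to \(T^{-1/2}+b_T^2\), uniformly in how the two rates compare, so that dividing by \(b_T^2\) still leaves an \(o_{\mathbb{P}}(1)\) term. I would make this explicit by treating the remainder as \(o_{\mathbb{P}}\left(\max\{T^{-1/2},b_T^2\}\right)\) directly from Theorem~\ref{thm:smoothing-expansion}. The covariance-consistency claim in regime (i) is the only step that goes beyond the expansion itself, and it relies on the nuisance-consistency hypotheses of Corollary~\ref{cor:feasible-wald-inference}.
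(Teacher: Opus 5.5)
Your proposal is correct and follows the paper's proof essentially step for step. Both arguments multiply the stacked expansion of Theorem~\ref{thm:smoothing-expansion} by $\sqrt{T}$, bound the scaled remainder by $o_{\mathbb{P}}(1+\lambda_T)$ using $S\rightarrow\infty$, obtain regimes (i)--(ii) from Slutsky's lemma and Corollary~\ref{cor:main-gaussian-limit}, and obtain regime (iii) by dividing by $b_T^2$ with $b_T^{-2}(\hat{\psi}_{h,S}^{\mathrm{E}}-\psi_h)=\lambda_T^{-1}O_{\mathbb{P}}(1)$.

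The one difference is your extra check in regime (i), where you recompute the feasible contributions with the smoothed quantiles. That step is not needed: the paper reads ``same feasible covariance matrix'' as the same $\boldsymbol{\hat{\Omega}}$ from Corollary~\ref{cor:feasible-wald-inference}, which remains consistent for the common $\boldsymbol{\Omega}$.
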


For the second-order smoother used here, first-order equivalence requires \(b_T=o\left(T^{-1/4}\right)\). A bandwidth satisfying \(b_T\sim cT^{-1/4}\) produces the mean shift in equation~\eqref{eq:smoothed-local-bias-limit} with \(\lambda=c^2\), while \(T^{-1/4}=o\left(b_T\right)\) produces the bias-dominated behavior in equation~\eqref{eq:smoothing-bias-dominates}. For an order-\(r\) smoother, the same argument replaces \(b_T^2\) with \(b_T^r\), so the boundary between negligible and first-order bias is \(T^{-1/(2r)}\).

For one response, let \(\sigma_h^2\) be the corresponding diagonal element of \(\boldsymbol{\Omega}\). Under the local-bias regime, a nominal \(1-\alpha\) Wald interval centered at \(\hat{\psi}_{h,S}^{\mathrm{S}}\) and using the standard error from Section~\ref{sec:main-theory-feasible} has limiting coverage
\begin{equation}
\Phi\left(
z_{1-\alpha/2}
-
\frac{\lambda B_h^{\mathrm{S}}}{\sigma_h}
\right)
-
\Phi\left(
-z_{1-\alpha/2}
-
\frac{\lambda B_h^{\mathrm{S}}}{\sigma_h}
\right).
\label{eq:smoothed-wald-coverage}
\end{equation}
This equals \(1-\alpha\) when the propagated bias vanishes and is smaller otherwise. In the bias-dominated regime, the same uncorrected interval has limiting coverage zero. Bias correction is possible if \(b_T^2B_h^{\mathrm{S}}\) can be estimated with error \(o_{\mathbb{P}}\left(T^{-1/2}\right)\), but it requires additional smoothness estimation that the empirical estimator does not use.

\begin{corollary}[Additive shocks and affine transitions]
\label{cor:smoothing-special-cases}

The following cases verify the two channels in equation~\eqref{eq:smoothing-bias-coefficient}.
\begin{enumerate}[label=(\roman*),leftmargin=2.2em]
\item Suppose the impact intervention adds a fixed \(\xi\boldsymbol{e}_k\) to the structural innovation. Write \(\Lambda_{h,s,j}^{\xi}\) for the path derivative along the shocked path. The leading smoothing coefficient becomes
\[
\frac{1}{2}
\sum_{j=1}^{n}
\int_{0}^{1}
\left[
\mathbb{E}\left[
\Lambda_{h,1,j}^{\xi}-\Lambda_{h,1,j}^{0}
\mathrel{\big|}
P_{j1}=p
\right]
+
\sum_{s=2}^{h}
\mathbb{E}\left[
\Lambda_{h,s,j}^{\xi}-\Lambda_{h,s,j}^{0}
\mathrel{\big|}
P_{js}=p
\right]
\right]
Q_{j0}^{\prime\prime}\left(p\right)
\,\mathrm{d}p.
\]
There is no separate shifted-impact term because the amount added in structural innovation units does not depend on an estimated shifted quantile.

\item Suppose
\[
g\left(\boldsymbol{y},\boldsymbol{u};\boldsymbol{\beta}_0\right)
=
\boldsymbol{c}_0
+
\boldsymbol{A}_0\boldsymbol{y}
+
\boldsymbol{B}_0\boldsymbol{u}.
\]
Under the rank-normalized shock,
\begin{equation}
B_h^{\mathrm{S}}
=
\frac{1}{2}
\boldsymbol{a}^{\prime}
\boldsymbol{A}_0^{h-1}
\boldsymbol{B}_0
\boldsymbol{e}_k
\int_{0}^{1}
\left[
Q_{k0}^{\prime\prime}\left(\tau_{\delta}\left(p\right)\right)
-
Q_{k0}^{\prime\prime}\left(p\right)
\right]
\,\mathrm{d}p.
\label{eq:affine-smoothing-bias}
\end{equation}
Common future innovations cancel from the path difference, so smoothing matters only through the impact innovation. Under a fixed additive shock, the affine response does not depend on \(\boldsymbol{Q}_0\), and the empirical and smoothed estimators coincide exactly when they use the same transition estimate.
\end{enumerate}
\end{corollary}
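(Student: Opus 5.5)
Both parts follow by evaluating the propagation weights in equation~\eqref{eq:propagation-weights} for the two special structures and substituting them into the derivation of Theorem~\ref{thm:smoothing-expansion}. The first-order smoothing difference is \(\dot{\Psi}_h[\mathcal{S}_{b_T}\boldsymbol{Q}_0-\boldsymbol{Q}_0]\). By Assumption~\ref{ass:smoothing}, each component \(\mathcal{S}_{b_T}Q_{j0}-Q_{j0}\) equals \(\frac{b_T^2}{2}Q_{j0}''\) up to \(o(b_T^2)\) in the weighted norm. It therefore suffices to identify the quantile derivative of \(\Psi_h\) in each case and read off the coefficient of \(b_T^2\).

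For part~(i), I redefine the response with shocked impact innovation \(\boldsymbol{Q}(\boldsymbol{P}_1)+\xi\boldsymbol{e}_k\) in place of \(\boldsymbol{Q}(\boldsymbol{\tau}_{k,\delta}(\boldsymbol{P}_1))\). Differentiating the paired recursion~\eqref{eq:paired-paths} in \(\boldsymbol{Q}\), every innovation entry on both paths, including the \(k\)th impact component of the shocked path, is now \(Q_j(P_{js})\) evaluated at the \emph{unshifted} rank. The indicator \(\mathbf{1}\{j\neq k\}\) and the separate shifted-impact map in~\eqref{eq:response-quantile-derivatives} therefore disappear. What remains is a single weight \(\mathbb{E}[\Lambda^{\xi}_{h,1,j}-\Lambda^0_{h,1,j}\mid P_{j1}=p]+\sum_{s\ge2}\mathbb{E}[\Lambda^{\xi}_{h,s,j}-\Lambda^0_{h,s,j}\mid P_{js}=p]\), obtained by conditioning and the rank-integral representation used before Proposition~\ref{prop:density-free-influence}. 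Inserting \(q_j=\frac12Q_{j0}''\) gives the stated coefficient. The remainder bounds from Theorem~\ref{thm:smoothing-expansion} carry over unchanged, because the additive shift is a fixed constant and does not affect the regularity conditions.

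For part~(ii), I solve the affine recursion explicitly:
\[
\boldsymbol{Y}_h^r=\sum_{i=0}^{h-1}\boldsymbol{A}_0^i\boldsymbol{c}_0+\boldsymbol{A}_0^h\boldsymbol{y}+\sum_{s=1}^h\boldsymbol{A}_0^{h-s}\boldsymbol{B}_0\boldsymbol{Q}(\boldsymbol{P}_s^r).
\]
The path sensitivities \(\Lambda^r_{h,s,j}=\boldsymbol{a}'\boldsymbol{A}_0^{h-s}\boldsymbol{B}_0\boldsymbol{e}_j\) are nonrandom and identical across \(r\). Hence \(\Lambda^\delta-\Lambda^0=0\) for \(s\ge2\), and at \(s=1\) for \(j\neq k\), leaving \(\omega^{\mathrm{dist}}_{h,k}(p)=-\boldsymbol{a}'\boldsymbol{A}_0^{h-1}\boldsymbol{B}_0\boldsymbol{e}_k\) and \(\omega^{\mathrm{imp}}_{h,k}(p)=\boldsymbol{a}'\boldsymbol{A}_0^{h-1}\boldsymbol{B}_0\boldsymbol{e}_k\), with all other weights zero. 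Substituting into~\eqref{eq:smoothing-bias-coefficient} yields~\eqref{eq:affine-smoothing-bias}. For the additive shock, \(\boldsymbol{Y}_h^\xi-\boldsymbol{Y}_h^0=\xi\boldsymbol{A}_0^{h-1}\boldsymbol{B}_0\boldsymbol{e}_k\) holds pathwise, so \(D_h\) does not depend on \(\boldsymbol{Q}\) at all. With a common \(\boldsymbol{\hat\beta}\), the two estimators in~\eqref{eq:unified-estimator} then coincide exactly for every \(S\).

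The calculations are routine. The step I expect to need the most care is justifying, in part~(i), that the expansion of Theorem~\ref{thm:smoothing-expansion} remains valid when the shocked impact innovation is additive rather than rank-shifted. I would handle it by noting that the proof of that theorem uses only Fréchet differentiability of \(\Psi_h\) in the weighted norm together with the smoothing expansion. Differentiability is easier here, since no shifted ranks are involved and the norm need not control \(\tau_\delta\). A minor point in part~(ii) is that the affine model must be embedded through \(G(\boldsymbol{y}_t,\boldsymbol{y}_{t-1})=\boldsymbol{B}_0^{-1}(\boldsymbol{y}_t-\boldsymbol{c}_0-\boldsymbol{A}_0\boldsymbol{y}_{t-1})\), which requires \(\boldsymbol{B}_0\) to be invertible.
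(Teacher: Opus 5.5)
Your proposal is correct and follows essentially the paper's argument. In part~(i) you differentiate the paired recursion with the additive impact, so that every quantile perturbation is evaluated at an unshifted rank, and then insert \(q_j=\tfrac12 Q_{j0}''\). In part~(ii) the only difference is cosmetic: the paper iterates the path difference to \(D_h=\boldsymbol{a}'\boldsymbol{A}_0^{h-1}\boldsymbol{B}_0\boldsymbol{e}_k[Q_{k0}(\tau_\delta(P_{k1}))-Q_{k0}(P_{k1})]\) and differentiates it directly, whereas you read off the constant sensitivities and substitute \(\omega_{h,k}^{\mathrm{dist}}=-\omega_{h,k}^{\mathrm{imp}}\) into the general bias coefficient, which yields the same formula.
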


Alternative rank maps and counterfactual future innovation distributions define different population responses rather than alternative estimators of equation~\eqref{eq:population-irf}. They are therefore outside the same-target smoothing comparison.

\section{Full recursive bootstrap and numerical integration}
\label{sec:inference}

\subsection{Full-model re-estimation}
\label{sec:bootstrap}

The residual bootstrap must reproduce every estimated object that enters the recursive impulse response. Throughout this section, \(\boldsymbol{\hat{Q}}^{\mathrm{E}}\) includes any location or scale normalization used to identify the structural innovations. When no finite-sample adjustment is required, it is the empirical quantile function defined in Section~\ref{sec:estimators}.

\begin{algorithm}[t]
\small
\caption{Full recursive residual bootstrap}
\label{alg:recursive-residual-bootstrap}
\KwIn{The sample \(\left\{\boldsymbol{Y}_t\right\}_{t=0}^{T}\), the fitted transition \(\boldsymbol{\hat{\beta}}\), the residual quantiles \(\boldsymbol{\hat{Q}}^{\mathrm{E}}\), the response specification \(\left(\boldsymbol{y},\boldsymbol{a},k,\delta,h\right)\), the response-simulation ranks \(\left\{\boldsymbol{P}_{s,1:h}\right\}_{s=1}^{S}\), a burn-in length \(\ell_T\), and \(B\) bootstrap replications.}
\For{\(b=1,\ldots,B\)}{
Draw \(W_{jt,b}^{*}\stackrel{\mathrm{iid}}{\sim}\operatorname{Unif}\left(0,1\right)\) independently over \(j=1,\ldots,n\) and \(t=1-\ell_T,\ldots,T\), and set \(\boldsymbol{U}_{t,b}^{*}=\boldsymbol{\hat{Q}}^{\mathrm{E}}\left(\boldsymbol{W}_{t,b}^{*}\right)\).\;
Set \(\boldsymbol{Y}_{-\ell_T,b}^{*}=\boldsymbol{Y}_0\) and generate
\[
\boldsymbol{Y}_{t,b}^{*}
=
g\left(
\boldsymbol{Y}_{t-1,b}^{*},
\boldsymbol{U}_{t,b}^{*};
\boldsymbol{\hat{\beta}}
\right)
\]
recursively for \(t=1-\ell_T,\ldots,T\); retain \(\left\{\boldsymbol{Y}_{t,b}^{*}\right\}_{t=0}^{T}\).\;
Re-estimate the transition by the original procedure to obtain \(\boldsymbol{\hat{\beta}}_{b}^{*}\).\;
Compute
\[
\boldsymbol{\hat{U}}_{t,b}^{*}
=
G\left(
\boldsymbol{Y}_{t,b}^{*},
\boldsymbol{Y}_{t-1,b}^{*};
\boldsymbol{\hat{\beta}}_{b}^{*}
\right)
\]
and reconstruct \(\boldsymbol{\hat{Q}}_{b}^{\mathrm{E},*}\) from the bootstrap residual order statistics.\;
Using the original ranks \(\left\{\boldsymbol{P}_{s,1:h}\right\}_{s=1}^{S}\), compute \(\hat{\psi}_{h,S,b}^{\mathrm{E},*}\) from equation~\eqref{eq:unified-estimator}; when studentization is used, recompute \(\boldsymbol{\hat{\Omega}}_{b}^{*}\) by the procedure in Section~\ref{sec:main-theory-feasible}.\;
}
\KwOut{The bootstrap estimates \(\left\{\hat{\psi}_{h,S,b}^{\mathrm{E},*},\boldsymbol{\hat{\Omega}}_{b}^{*}\right\}_{b=1}^{B}\).}
\end{algorithm}

The ranks used to generate each bootstrap time series are redrawn in every replication. In contrast, the response-simulation ranks are held fixed across the original and bootstrap estimates. Holding them fixed removes avoidable numerical noise from the comparison when \(S/T\rightarrow\infty\). Because the maintained model assumes independence across structural components, the algorithm draws the component ranks independently. Contemporaneously dependent innovations would instead require resampling from an estimated joint distribution.

Let \(\mathcal{F}_T\) contain the observed sample and the fixed response-simulation ranks, and let \(\Pr^*\) and \(\mathbb{E}^*\) denote probability and expectation conditional on \(\mathcal{F}_T\).

\begin{assumption}[Bootstrap regularity]
\label{ass:bootstrap}

The following conditions hold.

\begin{enumerate}[label=(\roman*),leftmargin=2.2em]
\item The stability, differentiability, and moment conditions in Assumption~\ref{ass:first-order} hold uniformly over a neighborhood of \(\boldsymbol{\beta}_0\) and over marginal innovation laws in a neighborhood of \(\boldsymbol{Q}_0\). The same location and scale normalizations are imposed in the original and bootstrap samples. The initialization and burn-in convention in Algorithm~\ref{alg:recursive-residual-bootstrap} affect the bootstrap estimator by \(o_{\mathbb{P}^*}\left(T^{-1/2}\right)\) in probability.

\item Conditional on \(\mathcal{F}_T\), the re-estimated transition parameter satisfies
\[
\sqrt{T}
\left(
\boldsymbol{\hat{\beta}}^*
-
\boldsymbol{\hat{\beta}}
\right)
=
\frac{1}{\sqrt{T}}
\sum_{t=1}^{T}
\boldsymbol{L}_t^*
+
o_{\mathbb{P}^*}\left(1\right),
\]
and its conditional law converges in probability to the same Gaussian limit as the right-hand side of equation~\eqref{eq:beta-linearization}.

\item Jointly with part~\textup{(ii)}, the empirical quantile process formed from the re-estimated bootstrap residuals satisfies the conditional counterpart of equation~\eqref{eq:generated-quantile-expansion}:
\[
\left\|
\sqrt{T}
\left(
\hat{Q}_{j}^{\mathrm{E},*}
-
\hat{Q}_{j}^{\mathrm{E}}
\right)
-
\frac{1}{\sqrt{T}}
\sum_{t=1}^{T}
\left[
\Xi_{j,t}^*
+
\boldsymbol{r}_j^*\left(\cdot\right)^{\prime}
\boldsymbol{L}_t^*
\right]
\right\|_j
=
o_{\mathbb{P}^*}\left(1\right)
\]
for \(j=1,\ldots,n\). The stacked bootstrap influence array satisfies the corresponding conditional central limit theorem, and its long-run covariance converges to \(\boldsymbol{\Omega}\).

\item The number of response specifications is fixed and \(S/T\rightarrow\infty\).
\end{enumerate}
\end{assumption}

Primitive sufficient conditions are stated in Appendix~\ref{app:assumptions}, and the proof is given in Appendix~\ref{app:proofs-inference}. The key requirement is joint reproduction of the first-stage expansion and the residual empirical process.

\begin{theorem}[Validity of the full recursive residual bootstrap]
\label{thm:bootstrap-validity}

Suppose Assumptions~\ref{ass:first-order} and~\ref{ass:bootstrap} hold. For any fixed collection of response specifications,
\begin{equation}
\begin{aligned}
\sqrt{T}
\left(
\boldsymbol{\hat{\psi}}_{S}^{\mathrm{E},*}
-
\boldsymbol{\hat{\psi}}_{S}^{\mathrm{E}}
\right)
&=
\frac{1}{\sqrt{T}}
\sum_{t=1}^{T}
\boldsymbol{Z}_t^*
+
o_{\mathbb{P}^*}\left(1\right),\\
\boldsymbol{Z}_t^*
&=
\boldsymbol{Z}_t^{\mathrm{tr},*}
+
\boldsymbol{Z}_t^{\mathrm{res},*}
+
\boldsymbol{Z}_t^{\mathrm{dist},*}
+
\boldsymbol{Z}_t^{\mathrm{imp},*},
\end{aligned}
\label{eq:bootstrap-linearization}
\end{equation}
where the four terms are the bootstrap analogues of the contributions in equation~\eqref{eq:influence-decomposition}. Conditional on \(\mathcal{F}_T\),
\begin{equation}
\sqrt{T}
\left(
\boldsymbol{\hat{\psi}}_{S}^{\mathrm{E},*}
-
\boldsymbol{\hat{\psi}}_{S}^{\mathrm{E}}
\right)
\xrightarrow{\mathrm{d}^*}
\mathcal{N}\left(
\boldsymbol{0},
\boldsymbol{\Omega}
\right)
\quad
\text{in probability}.
\label{eq:bootstrap-conditional-limit}
\end{equation}

For one response with \(\Omega_{hh}>0\), let
\[
\hat{\sigma}_h^2
=
\hat{\Omega}_{hh},
\qquad
\left(
\hat{\sigma}_h^*
\right)^2
=
\hat{\Omega}_{hh}^*,
\]
and define
\begin{equation}
R_{h,T}
=
\frac{
\sqrt{T}
\left(
\hat{\psi}_{h,S}^{\mathrm{E}}
-
\psi_h
\right)
}{
\hat{\sigma}_h
},
\qquad
R_{h,T}^*
=
\frac{
\sqrt{T}
\left(
\hat{\psi}_{h,S}^{\mathrm{E},*}
-
\hat{\psi}_{h,S}^{\mathrm{E}}
\right)
}{
\hat{\sigma}_h^*
}.
\label{eq:bootstrap-studentized-root}
\end{equation}
Then
\[
\sup_{x\in\mathbb{R}}
\left|
\Pr^*\left(
R_{h,T}^*
\leq x
\right)
-
\Pr\left(
R_{h,T}
\leq x
\right)
\right|
\xrightarrow{\mathbb{P}}
0.
\]
If every marginal variance is positive, the same conclusion holds for the maximum absolute studentized statistic over any fixed collection of responses.
\end{theorem}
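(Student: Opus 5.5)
The proof mirrors the argument for Theorem~\ref{thm:main-linearization} in the bootstrap world, with \((\boldsymbol{\hat{\beta}},\boldsymbol{\hat{Q}}^{\mathrm{E}})\) as the centering point. Since the response-simulation ranks are fixed and common, I would write
\[
\hat{\psi}_{h,S}^{\mathrm{E},*}-\hat{\psi}_{h,S}^{\mathrm{E}}
=\Psi_{h,S}\left(\boldsymbol{\hat{\beta}}^*,\boldsymbol{\hat{Q}}^{\mathrm{E},*}\right)-\Psi_{h,S}\left(\boldsymbol{\hat{\beta}},\boldsymbol{\hat{Q}}^{\mathrm{E}}\right).
\]
The difference \(\Psi_{h,S}-\Psi_h\), evaluated at two points at distance \(O_{\mathbb{P}^*}(T^{-1/2})\), is \(O_{\mathbb{P}^*}(S^{-1/2}T^{-1/2})\). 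This uses the same common-draw argument as in the remainder of Theorem~\ref{thm:smoothing-expansion}. Under Assumption~\ref{ass:bootstrap}(iv) it is therefore \(o_{\mathbb{P}^*}(T^{-1/2})\), and the problem reduces to the population map \(\Psi_h\).

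Next I would linearize \(\Psi_h\) around \((\boldsymbol{\hat{\beta}},\boldsymbol{\hat{Q}}^{\mathrm{E}})\) rather than around \((\boldsymbol{\beta}_0,\boldsymbol{Q}_0)\). By Assumption~\ref{ass:bootstrap}(i), the differentiability of \(\Psi_h\) in the weighted norm of Assumption~\ref{ass:first-order}(iii) holds uniformly on a neighborhood of the truth. Proposition~\ref{prop:consistency} places \((\boldsymbol{\hat{\beta}},\boldsymbol{\hat{Q}}^{\mathrm{E}})\) in that neighborhood with probability tending to one. Hence
\[
\Psi_h\left(\boldsymbol{\hat{\beta}}^*,\boldsymbol{\hat{Q}}^{\mathrm{E},*}\right)-\Psi_h\left(\boldsymbol{\hat{\beta}},\boldsymbol{\hat{Q}}^{\mathrm{E}}\right)
=\hat{\boldsymbol{A}}_h'\left(\boldsymbol{\hat{\beta}}^*-\boldsymbol{\hat{\beta}}\right)+\hat{\dot{\Psi}}_h^{\mathrm{dist}}\left[\boldsymbol{\hat{Q}}^{\mathrm{E},*}-\boldsymbol{\hat{Q}}^{\mathrm{E}}\right]+\hat{\dot{\Psi}}_h^{\mathrm{imp}}\left[\boldsymbol{\hat{Q}}^{\mathrm{E},*}-\boldsymbol{\hat{Q}}^{\mathrm{E}}\right]+o_{\mathbb{P}^*}\left(T^{-1/2}\right).
\]
Hats denote derivatives evaluated at the fitted point. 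Substituting Assumption~\ref{ass:bootstrap}(ii) and (iii) gives the four terms in equation~\eqref{eq:bootstrap-linearization}:
\begin{itemize}
\item \(\hat{\boldsymbol{A}}_h'\boldsymbol{L}_t^*\), the bootstrap transition term;
\item \(\hat{\dot{\Psi}}_h^{\mathrm{dist}}[\boldsymbol{R}_t^*]+\hat{\dot{\Psi}}_h^{\mathrm{imp}}[\boldsymbol{R}_t^*]\), the bootstrap generated-residual term;
\item \(\hat{\dot{\Psi}}_h^{\mathrm{dist}}[\boldsymbol{\Xi}_t^*]\) and \(\hat{\dot{\Psi}}_h^{\mathrm{imp}}[\boldsymbol{\Xi}_t^*]\), the bootstrap distribution and shifted-impact terms.
\end{itemize}
The derivative maps are continuous linear functionals in \(\|\cdot\|_j\), so they carry the \(o_{\mathbb{P}^*}(1)\) quantile remainders into \(o_{\mathbb{P}^*}(1)\) remainders. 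Stacking over the finite collection gives the joint statement.

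The conditional Gaussian limit in equation~\eqref{eq:bootstrap-conditional-limit} then follows from the conditional CLT and the long-run covariance convergence in Assumption~\ref{ass:bootstrap}(iii). One step is needed here: the fitted derivatives \(\hat{\boldsymbol{A}}_h\) and \(\hat{\dot{\Psi}}_h\) must converge to \(\boldsymbol{A}_h\) and \(\dot{\Psi}_h\), so that the covariance of the bootstrap influence array is \(\boldsymbol{\Omega}\) in the limit. I expect this to be the main obstacle. The functional \(\dot{\Psi}_h^{\mathrm{imp}}\) evaluates quantiles at the shifted ranks \(\tau_\delta(p)\), which push mass into the tails. Continuity of the derivative in the quantile argument must therefore be checked in the weighted norm, uniformly over the neighborhood in Assumption~\ref{ass:bootstrap}(i). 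I would try dominated convergence, using the \(2+\eta\) envelope on path derivatives together with the tail controls of Assumption~\ref{ass:first-order}(iii).

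For the studentized statement, Corollary~\ref{cor:main-gaussian-limit} and Corollary~\ref{cor:feasible-wald-inference} give \(R_{h,T}\xrightarrow{\mathrm{d}}\mathcal{N}(0,1)\). The bootstrap version requires \(\hat{\Omega}^*_{hh}\to\Omega_{hh}\) in \(\Pr^*\)-probability. This follows by repeating the consistency argument of Corollary~\ref{cor:feasible-wald-inference} conditionally, since Assumption~\ref{ass:bootstrap}(i) gives the required uniformity over nearby parameters and laws. Slutsky's lemma then yields \(R_{h,T}^*\xrightarrow{\mathrm{d}^*}\mathcal{N}(0,1)\) in probability. Pólya's theorem, which applies because the limit distribution function is continuous, upgrades both convergences to uniform convergence of distribution functions, and the triangle inequality gives the sup bound. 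When all marginal variances are positive, the same reasoning applied to the vector of studentized roots shows it converges to \(\mathcal{N}(\boldsymbol{0},\boldsymbol{C})\), with \(\boldsymbol{C}\) the correlation matrix of \(\boldsymbol{\Omega}\). The continuous mapping theorem applied to \(\max_m|\cdot|\), whose limit law is continuous, together with Pólya's theorem, gives the max-statistic claim.
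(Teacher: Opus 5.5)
Your proof is correct and has the same skeleton as the paper's: remove the Monte Carlo error, linearize $\Psi_h$ at $(\boldsymbol{\hat{\beta}},\boldsymbol{\hat{Q}}^{\mathrm{E}})$, substitute Assumption~\ref{ass:bootstrap}\textup{(ii)--(iii)}, apply the conditional CLT, then use Slutsky, P\'olya and the triangle inequality, and continuous mapping for the maximum. The one genuinely different step is the treatment of simulation error.

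\textbf{Simulation error.} The paper does not pair the draws. It removes $\sqrt{T}(\Psi_{h,S}-\Psi_h)$ separately at $(\boldsymbol{\hat{\beta}}^*,\boldsymbol{\hat{Q}}^{\mathrm{E},*})$ and at $(\boldsymbol{\hat{\beta}},\boldsymbol{\hat{Q}}^{\mathrm{E}})$ via Lemma~\ref{lem:negligible-simulation-error}. Each term is $O(\sqrt{T/S})$, so this needs only bounded conditional second moments of the path response but uses up $S/T\rightarrow\infty$. Your common-random-numbers bound $O_{\mathbb{P}^*}(S^{-1/2}T^{-1/2})$ needs only $S\rightarrow\infty$ at this step. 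It therefore actually supports the remark at the end of Section~\ref{sec:simulation-error}, which the paper states without a separate argument: the bootstrap still reproduces $\boldsymbol{\Omega}$ when $S/T\rightarrow\kappa$. In your route, $S/T\rightarrow\infty$ is then needed essentially only for the original root $R_{h,T}$.

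The price has two parts. First, you need a mean-square Lipschitz bound on the path response in $(\boldsymbol{\beta},\boldsymbol{Q})$, uniformly near the truth; this is the device from the smoothing proof, not a plain second-moment bound. Second, the ranks belong to $\mathcal{F}_T$, so the $S^{-1/2}$ factor cannot come from $\Pr^*$ itself. You must condition on the data and the bootstrap draws, average over the independent ranks, and then transfer to $\Pr^*$ by Markov's inequality.

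\textbf{Linearization.} The paper invokes the conditional functional delta method of \citet[Theorem~3.10.11]{VanDerVaartWellner2023}. It then asserts in one line that the fitted derivative objects converge. Your direct expansion under the uniform differentiability in Assumption~\ref{ass:bootstrap}\textup{(i)} is equivalent. The obstacle you flag is not a separate hurdle: a first-order remainder that is uniform over base points in a neighborhood already forces the derivative to be continuous in the base point. In addition, the paper takes the covariance convergence of the bootstrap influence array directly from Assumption~\ref{ass:bootstrap}\textup{(iii)}.

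\textbf{Citation fix.} Proposition~\ref{prop:consistency} is about $\hat{\psi}$, not the inputs. That $(\boldsymbol{\hat{\beta}},\boldsymbol{\hat{Q}}^{\mathrm{E}})$ lies in the neighborhood follows from the $O_{\mathbb{P}}(T^{-1/2})$ rates implied by Assumption~\ref{ass:first-order}\textup{(ii)} and \textup{(iv)}.
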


Re-estimation of the transition generates the direct parameter term and the generated-residual term in equation~\eqref{eq:bootstrap-linearization}. Reconstructing the residual quantiles generates the future-distribution and shifted-impact terms. Consequently, the proof establishes a conditional version of the complete expansion in Theorem~\ref{thm:main-linearization}, followed by the bootstrap functional delta method for the finite-horizon recursive map \citep{VanDerVaartWellner2023,BeutnerZaehle2016}. Bootstrap quantiles of \(R_{h,T}^*\) therefore yield valid percentile-\(t\) intervals, and the fixed-dimensional maximum statistic yields simultaneous intervals.

We next formalize what is learned when only the response-simulation paths are redrawn. Define the fitted response with exact integration by
\[
\hat{\psi}_h^{\infty}
=
\Psi_h\left(
\boldsymbol{\hat{\beta}},
\boldsymbol{\hat{Q}}^{\mathrm{E}}
\right),
\]
and let
\[
\hat{\psi}_{h,S}^{\dagger}
=
\frac{1}{S}
\sum_{s=1}^{S}
D_h\left(
\boldsymbol{P}_{s,1:h}^{\dagger};
\boldsymbol{\hat{\beta}},
\boldsymbol{\hat{Q}}^{\mathrm{E}},
\boldsymbol{y},
\boldsymbol{a},
k,
\delta
\right),
\]
where the ranks \(\boldsymbol{P}_{s,1:h}^{\dagger}\) are newly drawn while the fitted transition and residual quantiles remain fixed. Let \(\Pr^{\dagger}\) denote probability over these ranks conditional on \(\mathcal{F}_T\).

\begin{proposition}[Path-only resampling]
\label{prop:path-only-resampling}

Suppose Assumption~\ref{ass:first-order} holds and the fitted path response has a finite conditional second moment. Conditional on \(\mathcal{F}_T\),
\begin{equation}
\sqrt{S}
\left(
\hat{\psi}_{h,S}^{\dagger}
-
\hat{\psi}_h^{\infty}
\right)
\xrightarrow{\mathrm{d}^{\dagger}}
\mathcal{N}\left(
0,
\mathcal{V}_h
\right)
\quad
\text{in probability},
\label{eq:path-only-limit}
\end{equation}
where
\[
\mathcal{V}_h
=
\operatorname{Var}
\left[
D_h\left(
\boldsymbol{P}_{1:h};
\boldsymbol{\beta}_0,
\boldsymbol{Q}_0,
\boldsymbol{y},
\boldsymbol{a},
k,
\delta
\right)
\right].
\]
Consequently, if \(S/T\rightarrow\infty\), then
\[
\sqrt{T}
\left(
\hat{\psi}_{h,S}^{\dagger}
-
\hat{\psi}_h^{\infty}
\right)
\xrightarrow{\mathbb{P}^{\dagger}}
0.
\]
If \(S/T\rightarrow\kappa\in\left(0,\infty\right)\), then the same quantity converges conditionally to
\[
\mathcal{N}\left(
0,
\frac{\mathcal{V}_h}{\kappa}
\right).
\]
Neither limit contains the sampling variance \(\Omega_{hh}\).
\end{proposition}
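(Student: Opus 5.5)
Conditional on \(\mathcal{F}_T\), the fitted pair \(\left(\boldsymbol{\hat{\beta}},\boldsymbol{\hat{Q}}^{\mathrm{E}}\right)\) is fixed. The newly drawn rank paths \(\boldsymbol{P}_{s,1:h}^{\dagger}\) are independent of the sample and identically distributed across \(s\). Hence the summands
\[
X_{s,T}=D_h\left(\boldsymbol{P}_{s,1:h}^{\dagger};\boldsymbol{\hat{\beta}},\boldsymbol{\hat{Q}}^{\mathrm{E}},\boldsymbol{y},\boldsymbol{a},k,\delta\right)
\]
are conditionally i.i.d. with conditional mean exactly \(\hat{\psi}_h^{\infty}=\Psi_h\left(\boldsymbol{\hat{\beta}},\boldsymbol{\hat{Q}}^{\mathrm{E}}\right)\). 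Write \(\hat{\mathcal{V}}_h=\operatorname{Var}^{\dagger}\left(X_{s,T}\right)\). The plan is a triangular-array CLT, carried out in three steps.

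\textbf{Step 1 (conditional CLT with a data-dependent variance).} For each \(T\), the row \(\left\{X_{s,T}\right\}_{s\leq S}\) is conditionally i.i.d. The Lindeberg--Feller theorem then gives \(\sqrt{S}\left(\hat{\psi}_{h,S}^{\dagger}-\hat{\psi}_h^{\infty}\right)/\hat{\mathcal{V}}_h^{1/2}\to\mathcal{N}(0,1)\), provided the Lindeberg condition \(\mathbb{E}^{\dagger}\left[X_{s,T}^2\mathbf{1}\left\{|X_{s,T}|>\epsilon\sqrt{S}\right\}\right]\to0\) holds in probability. To verify it, I would use the \(2+\eta\) envelope from Assumption~\ref{ass:first-order}(i), evaluated at the fitted pair. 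This yields a conditional \(2+\eta\) moment that is \(O_{\mathbb{P}}(1)\), and Lyapunov's condition follows. Since Lyapunov holds in probability, a subsequence argument converts this into convergence in distribution in probability.

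\textbf{Step 2 (consistency of the conditional variance).} I will show \(\hat{\mathcal{V}}_h\xrightarrow{\mathbb{P}}\mathcal{V}_h\). Write
\[
\hat{\mathcal{V}}_h=\mathbb{E}\left[D_h^2\left(\cdot;\boldsymbol{\hat{\beta}},\boldsymbol{\hat{Q}}^{\mathrm{E}}\right)\right]-\left(\hat{\psi}_h^{\infty}\right)^2.
\]
The second term converges to \(\psi_h^2\) by the consistency argument behind Proposition~\ref{prop:consistency}: continuity of \(\Psi_h\) in the weighted norm, consistency of \(\boldsymbol{\hat{\beta}}\), and \(\|\hat{Q}_j^{\mathrm{E}}-Q_{j0}\|_j\to0\), which follows from equation~\eqref{eq:generated-quantile-expansion}. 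For the first term I would repeat the argument for the functional \(\left(\boldsymbol{\beta},\boldsymbol{Q}\right)\mapsto\mathbb{E}\left[D_h^2\right]\). Pathwise continuity comes from the differentiability of \(g\) and the recursion \eqref{eq:paired-paths}. Uniform integrability of \(D_h^2\) over a neighborhood comes from the \(2+\eta\) moment bound, and a dominated-convergence argument then applies.

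This step is the main obstacle. Pointwise convergence of \(\hat{Q}^{\mathrm{E}}\) to \(Q_0\) in the tails, including at the shifted ranks \(\tau_{\delta}(p)\), must be controlled by the weighted norm. The envelope in Assumption~\ref{ass:first-order}(i)--(iii) must also remain valid along the estimated pair, not only at \(\left(\boldsymbol{\beta}_0,\boldsymbol{Q}_0\right)\). If the assumption as stated does not deliver this, I would invoke the stated finite conditional second moment together with the tail controls in Assumption~\ref{ass:first-order}(iii) to obtain uniform integrability in probability. Slutsky's lemma then gives the limit \eqref{eq:path-only-limit}.

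\textbf{Step 3 (scaling by \(\sqrt{T}\)).} Write \(\sqrt{T}\left(\hat{\psi}_{h,S}^{\dagger}-\hat{\psi}_h^{\infty}\right)=\sqrt{T/S}\cdot\sqrt{S}\left(\hat{\psi}_{h,S}^{\dagger}-\hat{\psi}_h^{\infty}\right)\), where the second factor is conditionally \(O_{\mathbb{P}^{\dagger}}(1)\) in probability. If \(S/T\to\infty\), then \(\sqrt{T/S}\to0\), and the product converges to zero in conditional probability; Chebyshev's inequality with \(\hat{\mathcal{V}}_h=O_{\mathbb{P}}(1)\) already suffices here. If \(S/T\to\kappa\), Slutsky gives the limit \(\mathcal{N}\left(0,\mathcal{V}_h/\kappa\right)\). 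Finally, neither limit contains \(\Omega_{hh}\). The reason is that all randomness under \(\Pr^{\dagger}\) comes from the simulation ranks, while \(\boldsymbol{\hat{\beta}}\) and \(\boldsymbol{\hat{Q}}^{\mathrm{E}}\) are held fixed. The sample enters only through the centering \(\hat{\psi}_h^{\infty}\) and through \(\hat{\mathcal{V}}_h\to\mathcal{V}_h\), so the transition and quantile influence terms \(\boldsymbol{Z}_t\) never appear.
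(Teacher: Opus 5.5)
Your skeleton matches the paper's, and Step~3 is fine. The shared elements are conditionally i.i.d.\ summands with conditional mean exactly $\hat{\psi}_h^{\infty}$, a conditional Lindeberg--Feller argument, and the factorization $\sqrt{T/S}\cdot\sqrt{S}(\cdot)$.

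The gap is in how you verify the Lindeberg condition and variance consistency. Write $D_{h,T}^{\dagger}$ for your $X_{s,T}$. Step~1 needs $\mathbb{E}^{\dagger}\bigl[|D_{h,T}^{\dagger}|^{2+\eta}\mid\mathcal{F}_T\bigr]=O_{\mathbb{P}}(1)$, and the stated hypotheses do not deliver this:
\begin{itemize}
\item The proposition grants only a finite conditional second moment of the fitted path response.
\item Assumption~\ref{ass:first-order}\textup{(i)} gives a $2+\eta$ envelope for the path \emph{derivatives}, not a moment bound for $D_h$ itself at $(\boldsymbol{\hat{\beta}},\boldsymbol{\hat{Q}}^{\mathrm{E}})$.
\item Uniformity over a neighborhood of $(\boldsymbol{\beta}_0,\boldsymbol{Q}_0)$, which would contain the step function $\boldsymbol{\hat{Q}}^{\mathrm{E}}$, is imposed only for the bootstrap, in Assumption~\ref{ass:bootstrap}\textup{(i)}.
\end{itemize}
Step~2 rests on the same missing uniform integrability of $D_h^2$ along the fitted pair. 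Your fallback does not supply it, because a finite conditional second moment does not imply uniform integrability. Your route is valid if you add a uniform $(2+\eta)$-moment condition, but that is strictly stronger than what the statement assumes. Normalizing by $\hat{\mathcal{V}}_h^{1/2}$ also excludes the degenerate case $\mathcal{V}_h=0$, which the statement covers.

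The missing idea is the paper's coupling on common ranks. Evaluate the same newly drawn rank path at the truth:
\[
D_{h,0}^{\dagger}=D_h\left(\boldsymbol{P}_{1:h}^{\dagger};\boldsymbol{\beta}_0,\boldsymbol{Q}_0,\boldsymbol{y},\boldsymbol{a},k,\delta\right).
\]
Its law does not depend on $T$, and its variance is $\mathcal{V}_h$. The $L^2$ part of Lemma~\ref{lem:recursive-response-differentiability}, together with consistency of $(\boldsymbol{\hat{\beta}},\boldsymbol{\hat{Q}}^{\mathrm{E}})$, gives $\mathbb{E}^{\dagger}\bigl[|D_{h,T}^{\dagger}-D_{h,0}^{\dagger}|^2\mid\mathcal{F}_T\bigr]\xrightarrow{\mathbb{P}}0$. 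This yields $\hat{\psi}_h^{\infty}\xrightarrow{\mathbb{P}}\psi_h$ and $\hat{\mathcal{V}}_h\xrightarrow{\mathbb{P}}\mathcal{V}_h$ at once, since the standard deviation is a seminorm on $L^2$.

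For the centered variables $X_{h,T}^{\dagger}=D_{h,T}^{\dagger}-\hat{\psi}_h^{\infty}$ and $X_{h,0}^{\dagger}=D_{h,0}^{\dagger}-\psi_h$, Cauchy--Schwarz gives $\mathbb{E}^{\dagger}\bigl[|(X_{h,T}^{\dagger})^2-(X_{h,0}^{\dagger})^2|\mid\mathcal{F}_T\bigr]\xrightarrow{\mathbb{P}}0$. Because $(X_{h,0}^{\dagger})^2$ is a single integrable variable with a fixed law, the conditional squares are uniformly integrable in probability. The Lindeberg condition at truncation level $\varepsilon\sqrt{S}\rightarrow\infty$ then holds with no moment beyond the second, including when $\mathcal{V}_h=0$. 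This is why the proposition assumes only a finite conditional second moment. With this replacement, your Steps~1 and~2 collapse into one argument, and the rest of your proposal, including the remark about $\Omega_{hh}$, goes through as in the paper.
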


Accordingly, path-only intervals quantify numerical integration error conditional on the fitted transition and residual distribution. They omit transition estimation, generated residuals, and estimation of both uses of the innovation quantiles. When \(S/T\rightarrow\infty\), such intervals collapse even though the estimator retains sampling uncertainty of order \(T^{-1/2}\). When \(S/T\) has a finite limit, they recover only the additional simulation component derived in Section~\ref{sec:simulation-error}. Centering path-only draws at the original finite-\(S\) estimate does not restore sampling variation; it combines the numerical errors from two simulated averages. Path-only resampling is therefore a Monte Carlo diagnostic, while sampling intervals require the influence-function procedure or Algorithm~\ref{alg:recursive-residual-bootstrap}.

\subsection{A finite number of simulated paths}
\label{sec:simulation-error}

The estimator in equation~\eqref{eq:unified-estimator} combines sampling uncertainty with numerical error from approximating the population expectation by \(S\) simulated paths. The preceding results impose \(S/T\rightarrow\infty\), which makes the second component negligible. We now allow \(S\) to grow at the same rate as \(T\).

Let \(m=1,\ldots,M\) index a fixed collection of response specifications, let \(H=\max_{1\leq m\leq M}h_m\), and use the same master rank path \(\boldsymbol{P}_{s,1:H}\) to evaluate all responses in simulation draw \(s\). Collect the corresponding path responses in
\[
\boldsymbol{D}\left(
\boldsymbol{P}_{1:H};
\boldsymbol{\beta},
\boldsymbol{Q}
\right)
=
\left(
D_1\left(
\boldsymbol{P}_{1:h_1};
\boldsymbol{\beta},
\boldsymbol{Q}
\right),
\ldots,
D_M\left(
\boldsymbol{P}_{1:h_M};
\boldsymbol{\beta},
\boldsymbol{Q}
\right)
\right)^{\prime},
\]
where the fixed initial states, response variables, shocked components, and shock sizes are suppressed. Define
\begin{equation}
\boldsymbol{\Omega}_{\mathrm{MC}}
=
\operatorname{Var}\left[
\boldsymbol{D}\left(
\boldsymbol{P}_{1:H};
\boldsymbol{\beta}_0,
\boldsymbol{Q}_0
\right)
\right].
\label{eq:mc-covariance}
\end{equation}
The off-diagonal elements of \(\boldsymbol{\Omega}_{\mathrm{MC}}\) record the covariance across horizons and response variables induced by the common rank paths.

\begin{theorem}[Sampling and simulation uncertainty]
\label{thm:finite-simulation-limit}

Suppose Assumption~\ref{ass:first-order}\textup{(i)--(iv)} holds, the collection of responses is fixed, and
\[
\mathbb{E}\left[
\left\|
\boldsymbol{D}\left(
\boldsymbol{P}_{1:H};
\boldsymbol{\beta}_0,
\boldsymbol{Q}_0
\right)
\right\|^{2+\eta}
\right]
<
\infty
\]
for some \(\eta>0\). If the simulation draws are independent of the observed sample and
\[
\frac{S}{T}
\rightarrow
\kappa
\in
\left(0,\infty\right),
\]
then
\begin{equation}
\begin{aligned}
\sqrt{T}\left(
\boldsymbol{\hat{\psi}}_{S}^{\mathrm{E}}
-
\boldsymbol{\psi}
\right)
&=
\frac{1}{\sqrt{T}}
\sum_{t=1}^{T}
\boldsymbol{Z}_t\\
&\quad+
\sqrt{\frac{T}{S}}
\frac{1}{\sqrt{S}}
\sum_{s=1}^{S}
\left[
\boldsymbol{D}\left(
\boldsymbol{P}_{s,1:H};
\boldsymbol{\beta}_0,
\boldsymbol{Q}_0
\right)
-
\boldsymbol{\psi}
\right]
+
o_{\mathbb{P}}\left(1\right).
\end{aligned}
\label{eq:finite-simulation-expansion}
\end{equation}
Consequently,
\begin{equation}
\sqrt{T}\left(
\boldsymbol{\hat{\psi}}_{S}^{\mathrm{E}}
-
\boldsymbol{\psi}
\right)
\xrightarrow{\mathrm{d}}
\mathcal{N}\left(
\boldsymbol{0},
\boldsymbol{\Omega}
+
\kappa^{-1}
\boldsymbol{\Omega}_{\mathrm{MC}}
\right).
\label{eq:finite-simulation-limit}
\end{equation}
The sampling and simulation components in equation~\eqref{eq:finite-simulation-expansion} are asymptotically independent.
\end{theorem}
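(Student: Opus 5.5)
The plan is to split the estimation error into a pure Monte Carlo term evaluated at the true parameters and a sampling term evaluated under exact integration, and then handle the cross term. Write
\[
\boldsymbol{\hat{\psi}}_{S}^{\mathrm{E}}-\boldsymbol{\psi}
=
\left[\boldsymbol{\Psi}_{S}(\boldsymbol{\beta}_0,\boldsymbol{Q}_0)-\boldsymbol{\psi}\right]
+
\left[\boldsymbol{\Psi}(\boldsymbol{\hat{\beta}},\boldsymbol{\hat{Q}}^{\mathrm{E}})-\boldsymbol{\psi}\right]
+
\Delta_{T,S},
\]
where
\[
\Delta_{T,S}
=
\left[\boldsymbol{\Psi}_{S}-\boldsymbol{\Psi}\right](\boldsymbol{\hat{\beta}},\boldsymbol{\hat{Q}}^{\mathrm{E}})
-
\left[\boldsymbol{\Psi}_{S}-\boldsymbol{\Psi}\right](\boldsymbol{\beta}_0,\boldsymbol{Q}_0).
\]
Here \(\boldsymbol{\Psi}\) and \(\boldsymbol{\Psi}_S\) are the stacked population and simulated-average maps, with all responses evaluated on the master rank path \(\boldsymbol{P}_{s,1:H}\).

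The first bracket, multiplied by \(\sqrt{T}\), is exactly \(\sqrt{T/S}\,S^{-1/2}\sum_s[\boldsymbol{D}(\boldsymbol{P}_{s,1:H};\boldsymbol{\beta}_0,\boldsymbol{Q}_0)-\boldsymbol{\psi}]\). For the second bracket, I would reuse the exact-integration part of the proof of Theorem~\ref{thm:main-linearization}. That argument is a functional delta method for \(\Psi_h\) in the weighted norm of Assumption~\ref{ass:first-order}(iii), combined with \eqref{eq:beta-linearization} and \eqref{eq:generated-quantile-expansion}. It gives \(\sqrt{T}T^{-1}\sum_t\boldsymbol{Z}_t+o_{\mathbb{P}}(1)\) and does not involve \(S\), so it applies unchanged.

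The main obstacle is showing \(\sqrt{T}\Delta_{T,S}=o_{\mathbb{P}}(1)\). This is a stochastic equicontinuity statement for the centered simulation process \((\boldsymbol{\beta},\boldsymbol{Q})\mapsto\sqrt{S}[\boldsymbol{\Psi}_S-\boldsymbol{\Psi}](\boldsymbol{\beta},\boldsymbol{Q})\), evaluated at a random point that converges to \((\boldsymbol{\beta}_0,\boldsymbol{Q}_0)\). Since \(\sqrt{T}/\sqrt{S}\to\kappa^{-1/2}\), it suffices that this process has increment \(o_{\mathbb{P}}(1)\).

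The key device is that the simulation draws are independent of the sample. I would condition on the sample (i.e., on \((\boldsymbol{\hat{\beta}},\boldsymbol{\hat{Q}}^{\mathrm{E}})\)). Conditionally, \(\sqrt{S}\Delta_{T,S}\) is a normalized sum of i.i.d.\ mean-zero terms \(\boldsymbol{D}(\boldsymbol{P}_s;\boldsymbol{\hat{\beta}},\boldsymbol{\hat{Q}}^{\mathrm{E}})-\boldsymbol{D}(\boldsymbol{P}_s;\boldsymbol{\beta}_0,\boldsymbol{Q}_0)\) minus its conditional mean. Its conditional second moment is therefore bounded by
\[
\mathbb{E}\left[\left\|\boldsymbol{D}(\boldsymbol{P};\boldsymbol{\hat{\beta}},\boldsymbol{\hat{Q}}^{\mathrm{E}})-\boldsymbol{D}(\boldsymbol{P};\boldsymbol{\beta}_0,\boldsymbol{Q}_0)\right\|^2 \,\middle|\, \text{sample}\right].
\]
This avoids any uniform empirical-process argument over the nonsmooth class of empirical quantile functions.

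To bound this conditional second moment, I would use the mean-value theorem along the recursion \eqref{eq:paired-paths}. Together with the path-derivative envelope with finite \(2+\eta\) moment in Assumption~\ref{ass:first-order}(i), it gives
\[
\left\|\Delta\boldsymbol{D}\right\|\le \mathcal{E}(\boldsymbol{P})\left(\|\boldsymbol{\hat{\beta}}-\boldsymbol{\beta}_0\|+\sum_{s,j}|\hat{Q}_j^{\mathrm{E}}(P^r_{js})-Q_{j0}(P^r_{js})|\right).
\]
Squaring and integrating over the ranks, the quantile part is a weighted \(L^2\) distance. This distance is controlled by the weighted uniform norm, including at the shifted ranks \(\tau_\delta(P_{k1})\) covered by part (iii). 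By Proposition~\ref{prop:consistency}'s inputs it is \(O_{\mathbb{P}}(T^{-1/2})\), and in particular \(o_{\mathbb{P}}(1)\). Hölder's inequality with the \(2+\eta\) envelope handles the product.

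Conditional Chebyshev then gives \(\sqrt{S}\Delta_{T,S}=o_{\mathbb{P}}(1)\) conditionally, and hence unconditionally by dominated convergence. Some care is needed in tail ranks where \(\hat{Q}^{\mathrm{E}}\) jumps. If needed, I would truncate ranks to \([\epsilon,1-\epsilon]\) and use the tail conditions of part (iii) and the moment bound to make the tail contribution uniformly small. This establishes \eqref{eq:finite-simulation-expansion}.

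For the limit \eqref{eq:finite-simulation-limit}, the two leading terms are functions of independent sources: the sample and the simulation ranks. The first term converges to \(\mathcal{N}(\boldsymbol{0},\boldsymbol{\Omega})\) by Corollary~\ref{cor:main-gaussian-limit}. The second converges to \(\mathcal{N}(\boldsymbol{0},\kappa^{-1}\boldsymbol{\Omega}_{\mathrm{MC}})\) by the Lyapunov CLT, using the assumed \(2+\eta\) moment and \(T/S\to\kappa^{-1}\). Independence makes the joint characteristic function factor, which gives joint convergence to independent normals, asymptotic independence, and the summed covariance \(\boldsymbol{\Omega}+\kappa^{-1}\boldsymbol{\Omega}_{\mathrm{MC}}\). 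The off-diagonal structure of \(\boldsymbol{\Omega}_{\mathrm{MC}}\) comes automatically from stacking all responses on the common master path. The Cramér--Wold device handles the vector case.
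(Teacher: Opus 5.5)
Your architecture is the paper's. You use the same split into the exactly integrated term (handled by the $S$-free part of the proof of Theorem~\ref{thm:main-linearization}), the simulation average at $(\boldsymbol{\beta}_0,\boldsymbol{Q}_0)$, and the cross term $\sqrt{T}\Delta_{T,S}$. The cross term is removed by conditioning on the sample and applying conditional Chebyshev, and the argument ends with a CLT for the i.i.d.\ simulation sum and factorization of the joint characteristic function.

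The problem is the one step you justify differently from the paper, namely
\[
\mathbb{E}\bigl[\|\boldsymbol{D}(\boldsymbol{P}_{1:H};\boldsymbol{\hat{\beta}},\boldsymbol{\hat{Q}}^{\mathrm{E}})-\boldsymbol{D}(\boldsymbol{P}_{1:H};\boldsymbol{\beta}_0,\boldsymbol{Q}_0)\|^2\mid\boldsymbol{Y}_0,\ldots,\boldsymbol{Y}_T\bigr]\xrightarrow{\mathbb{P}}0 .
\]
You claim that the squared quantile distance over ranks is controlled by the weighted uniform norm. That norm only gives $|\hat{Q}_j^{\mathrm{E}}(p)-Q_{j0}(p)|\leq\|\hat{Q}_j^{\mathrm{E}}-Q_{j0}\|_j/\vartheta_j(p)$. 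Moreover, $1/\vartheta_j$ is not square integrable whenever the innovation support is unbounded. If $Q_{j0}(0+)=-\infty$, then for small $\epsilon$ Cauchy--Schwarz gives
\[
\infty=\int_0^{\epsilon}\frac{\mathrm{d}p}{f_{j0}(Q_{j0}(p))}\leq C\Bigl(\int_0^{\epsilon}\vartheta_j(p)^{-2}\,\mathrm{d}p\Bigr)^{1/2}\Bigl(\int_0^{\epsilon}p^{2\zeta_j-1}\,\mathrm{d}p\Bigr)^{1/2},
\]
and the last factor is finite.

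Your H\"older step makes this worse: pairing with a $2+\eta$ envelope requires $1/\vartheta_j\in L^{2(2+\eta)/\eta}(0,1)$. The norm in equation~\eqref{eq:weighted-quantile-norm} is calibrated to make the linear maps $q\mapsto\int\omega q$ continuous, which involves only first-power integrals of $|\omega|/\vartheta$. The tail conditions in Assumption~\ref{ass:first-order}\textup{(iii)} and Assumption~\ref{ass:response-tail-continuity} are of the same first-power type. So neither they nor the sup-norm rate make the squared tail contribution small. Unbounded support is exactly the case the weighted norm is built for, so the step fails as written.

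The repair is local, because you only need $o_{\mathbb{P}}(1)$ here, not $O_{\mathbb{P}}(T^{-1/2})$. One option is to invoke the $L^2$ clause of Lemma~\ref{lem:recursive-response-differentiability}, as the paper does. The other is to make your truncation precise:
\begin{enumerate}[label=(\alph*)]
\item On the event that every ordinary and shifted rank lies in $[\epsilon,1-\epsilon]$, $1/\vartheta_j$ is bounded. The sup-norm bound then gives a contribution $O_{\mathbb{P}}(C_\epsilon T^{-1})$, provided the derivative envelope holds on a neighborhood of $(\boldsymbol{\beta}_0,\boldsymbol{Q}_0)$. The mean-value theorem needs it at intermediate points, not only at the truth.
\item On the complement, use uniform integrability. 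This needs the assumed $2+\eta$ moment of the path response at $(\boldsymbol{\beta}_0,\boldsymbol{Q}_0)$, together with an $O_{\mathbb{P}}(1)$ conditional $2+\eta$ moment bound for the fitted path responses. In the scalar model the latter follows from empirical residual moments, since $\int_0^1|\hat{Q}^{\mathrm{E}}(p)|^{r}\,\mathrm{d}p=T^{-1}\sum_t|\hat{U}_t|^{r}$, reweighted by $\rho_\delta'$ at the shifted impact rank.
\item Then let $\epsilon\downarrow0$.
\end{enumerate}
This is a different ingredient from the ``tail conditions of part (iii)'' you name. With that replacement, the rest of your argument coincides with the paper's proof.
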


When \(S/T\rightarrow\infty\), the simulation term vanishes and equation~\eqref{eq:finite-simulation-limit} reduces to Corollary~\ref{cor:main-gaussian-limit}. When \(S/T\rightarrow\kappa\in\left(0,\infty\right)\), simulation error contributes at the same order as estimation error. If \(S/T\rightarrow0\) and \(\boldsymbol{\Omega}_{\mathrm{MC}}\) is nonzero, simulation error dominates on the \(\sqrt{T}\) scale, and the leading rate becomes \(\sqrt{S}\).

For each simulated path, define the fitted response vector
\[
\boldsymbol{\hat{D}}_{s}^{\mathrm{E}}
=
\left(
D_m\left(
\boldsymbol{P}_{s,1:h_m};
\boldsymbol{\hat{\beta}},
\boldsymbol{\hat{Q}}^{\mathrm{E}}
\right)
\right)_{m=1}^{M},
\qquad
\boldsymbol{\hat{\psi}}_{S}^{\mathrm{E}}
=
\frac{1}{S}
\sum_{s=1}^{S}
\boldsymbol{\hat{D}}_{s}^{\mathrm{E}}.
\]
The simulation covariance is estimated by
\[
\boldsymbol{\hat{\Omega}}_{\mathrm{MC}}^{\mathrm{E}}
=
\frac{1}{S-1}
\sum_{s=1}^{S}
\left(
\boldsymbol{\hat{D}}_{s}^{\mathrm{E}}
-
\boldsymbol{\hat{\psi}}_{S}^{\mathrm{E}}
\right)
\left(
\boldsymbol{\hat{D}}_{s}^{\mathrm{E}}
-
\boldsymbol{\hat{\psi}}_{S}^{\mathrm{E}}
\right)^{\prime}.
\]
Combining this matrix with the sampling covariance estimator from equation~\eqref{eq:feasible-long-run-covariance} gives
\begin{equation}
\boldsymbol{\hat{\Omega}}_{T,S}^{\mathrm{tot}}
=
\boldsymbol{\hat{\Omega}}
+
\frac{T}{S}
\boldsymbol{\hat{\Omega}}_{\mathrm{MC}}^{\mathrm{E}},
\qquad
\operatorname{Var}\left(
\boldsymbol{\hat{\psi}}_{S}^{\mathrm{E}}
\right)
\approx
\frac{\boldsymbol{\hat{\Omega}}}{T}
+
\frac{\boldsymbol{\hat{\Omega}}_{\mathrm{MC}}^{\mathrm{E}}}{S}.
\label{eq:finite-simulation-correction}
\end{equation}
Thus, the pointwise standard error for response \(m\) is
\[
\operatorname{se}_{m,T,S}
=
\left(
\frac{\hat{\Omega}_{mm}}{T}
+
\frac{\hat{\Omega}_{\mathrm{MC},mm}^{\mathrm{E}}}{S}
\right)^{1/2}.
\]
Pointwise and simultaneous intervals are obtained by replacing \(\boldsymbol{\hat{\Omega}}\) with \(\boldsymbol{\hat{\Omega}}_{T,S}^{\mathrm{tot}}\) in equations~\eqref{eq:pointwise-wald-interval} and~\eqref{eq:simultaneous-wald-intervals}.

Common random numbers are particularly useful when comparing the empirical and smoothed estimators. For \(r,q\in\left\{\mathrm{E},\mathrm{S}\right\}\), define the cross-covariance
\[
\boldsymbol{\hat{\Omega}}_{\mathrm{MC}}^{rq}
=
\frac{1}{S-1}
\sum_{s=1}^{S}
\left(
\boldsymbol{\hat{D}}_{s}^{r}
-
\boldsymbol{\hat{\psi}}_{S}^{r}
\right)
\left(
\boldsymbol{\hat{D}}_{s}^{q}
-
\boldsymbol{\hat{\psi}}_{S}^{q}
\right)^{\prime}.
\]
Let
\[
\boldsymbol{\hat{\Delta}}_{s}
=
\boldsymbol{\hat{D}}_{s}^{\mathrm{S}}
-
\boldsymbol{\hat{D}}_{s}^{\mathrm{E}},
\qquad
\boldsymbol{\hat{\Delta}}_{S}
=
\boldsymbol{\hat{\psi}}_{S}^{\mathrm{S}}
-
\boldsymbol{\hat{\psi}}_{S}^{\mathrm{E}}.
\]
The simulation covariance of the paired estimator difference is
\begin{equation}
\begin{aligned}
\boldsymbol{\hat{\Omega}}_{\mathrm{MC}}^{\Delta}
&=
\boldsymbol{\hat{\Omega}}_{\mathrm{MC}}^{\mathrm{SS}}
+
\boldsymbol{\hat{\Omega}}_{\mathrm{MC}}^{\mathrm{EE}}
-
\boldsymbol{\hat{\Omega}}_{\mathrm{MC}}^{\mathrm{SE}}
-
\boldsymbol{\hat{\Omega}}_{\mathrm{MC}}^{\mathrm{ES}}\\
&=
\frac{1}{S-1}
\sum_{s=1}^{S}
\left(
\boldsymbol{\hat{\Delta}}_{s}
-
\boldsymbol{\hat{\Delta}}_{S}
\right)
\left(
\boldsymbol{\hat{\Delta}}_{s}
-
\boldsymbol{\hat{\Delta}}_{S}
\right)^{\prime}.
\end{aligned}
\label{eq:paired-mc-covariance}
\end{equation}
Its contribution on the \(\sqrt{T}\) scale is
\[
\frac{T}{S}
\boldsymbol{\hat{\Omega}}_{\mathrm{MC}}^{\Delta}.
\]

Under Assumption~\ref{ass:smoothing}, both path-response functions converge to the same population function as \(b_T\rightarrow0\). With common rank paths,
\[
\boldsymbol{\hat{\Omega}}_{\mathrm{MC}}^{\mathrm{EE}},
\boldsymbol{\hat{\Omega}}_{\mathrm{MC}}^{\mathrm{SS}},
\boldsymbol{\hat{\Omega}}_{\mathrm{MC}}^{\mathrm{ES}},
\boldsymbol{\hat{\Omega}}_{\mathrm{MC}}^{\mathrm{SE}}
\xrightarrow{\mathbb{P}}
\boldsymbol{\Omega}_{\mathrm{MC}},
\qquad
\boldsymbol{\hat{\Omega}}_{\mathrm{MC}}^{\Delta}
\xrightarrow{\mathbb{P}}
\boldsymbol{0}.
\]
Consequently, common random numbers remove first-order simulation noise from the smoothing comparison. With independent rank paths, the cross-covariances are zero and the simulation covariance of the difference converges to \(2\boldsymbol{\Omega}_{\mathrm{MC}}\).

For response \(m\), a useful numerical diagnostic is
\[
\hat{\pi}_{m,\mathrm{MC}}
=
\frac{
\hat{\Omega}_{\mathrm{MC},mm}^{\mathrm{E}}/S
}{
\hat{\Omega}_{mm}/T
+
\hat{\Omega}_{\mathrm{MC},mm}^{\mathrm{E}}/S
}.
\]
This quantity reports the fraction of the estimated variance due to numerical integration. In practice, \(S\) can be increased until this fraction is small for the reported responses and the estimates and standard errors are stable. The sufficient asymptotic condition is \(S/T\rightarrow\infty\), but no fixed path count or fixed ratio applies uniformly because simulation variance depends on the horizon, persistence, nonlinearity, initial state, and shock size.

Theorem~\ref{thm:bootstrap-validity} conditions on fixed response-simulation ranks and imposes \(S/T\rightarrow\infty\). When \(S/T\rightarrow\kappa\in\left(0,\infty\right)\), that bootstrap continues to reproduce the sampling component \(\boldsymbol{\Omega}\), while equation~\eqref{eq:finite-simulation-correction} supplies the additional simulation component.

Appendix~\ref{app:implementation} summarizes implementation and reporting conventions supported by the preceding results.

\section{Conclusion}
\label{sec:conclusion}

This paper studies inference for a recursively defined nonlinear impulse response when the structural innovation distributions are estimated from generated residuals. The population response is fixed by a normal-rank impact shock, a common future innovation law, and paired shocked and unshocked paths. The empirical-quantile estimator then changes only the estimate of the innovation quantile functions. This common target is necessary for interpreting differences between empirical and smoothed procedures as estimation effects.

First, the empirical-residual estimator is \(\sqrt{T}\)-asymptotically linear at fixed horizons. Its observation-level influence contribution separates direct transition estimation, the effect of transition estimation on residual order statistics, estimation of ordinary innovation quantiles, and estimation of the shifted impact quantile. After the quantile terms are projected through the recursive response, they can be evaluated from residual ranks and spacings. Feasible inference therefore does not require estimating the innovation density or selecting a smoothing bandwidth.

Next, the same-target smoothing comparison shows that the relevant bias is the quantile-smoothing error after it has been weighted by recursive propagation. Smoothing is first-order irrelevant only when that propagated bias is smaller than \(T^{-1/2}\). At the boundary it shifts the Gaussian limit, and above the boundary it dominates sampling uncertainty. The result distinguishes a local approximation property of the quantile estimator from its effect on the reported nonlinear response.

Finally, valid residual-bootstrap inference requires regeneration and re-estimation of the complete recursive model. Resampling response paths while holding the fitted transition and residual distribution fixed measures numerical integration error, not sampling uncertainty. When the number of simulated paths is proportional to the sample size, numerical integration contributes a separate covariance component; when it grows faster, this component vanishes at the \(\sqrt{T}\) scale. The analysis is deliberately fixed-horizon and maintains independent structural innovation components. Growing horizons, dependent components, and high-dimensional transition estimates require additional arguments beyond the present results.

\appendix

\section{Assumptions and primitive sufficient conditions}
\label{app:assumptions}

This appendix gives primitive conditions for the scalar location--scale model in equation~\eqref{eq:scalar-model} and a high-level formulation for the vector model in equation~\eqref{eq:structural-transition}. The conditions are stated for a fixed finite collection of response specifications. Let \(M\) denote the number of reported responses, let \(H\) be their largest horizon, and let \(\mathcal{D}\) collect their shock sizes. Assumption~\ref{ass:first-order} is a compact version of the conditions below.

\subsection{Weighted control of the innovation quantiles}
\label{app:assumptions-weighted-quantiles}

The rank shift requires quantile estimates over the full interval \(\left(0,1\right)\). For an innovation component \(j\), choose \(\zeta_j\in\left(0,1/2\right)\) and define
\begin{equation}
\vartheta_j\left(p\right)
=
\frac{f_{j0}\left(Q_{j0}\left(p\right)\right)}
{\left[p\left(1-p\right)\right]^{1/2-\zeta_j}},
\qquad
\left\|q\right\|_j
=
\sup_{0<p<1}
\vartheta_j\left(p\right)
\left|q\left(p\right)\right|.
\label{eq:weighted-quantile-norm}
\end{equation}
For a vector of perturbations, set
\[
\left\|\boldsymbol{q}\right\|
=
\max_{1\leq j\leq n}
\left\|q_j\right\|_j.
\]
The weight in equation~\eqref{eq:weighted-quantile-norm} allows quantile errors to increase near the endpoints while controlling the empirical-quantile influence function, since
\[
\vartheta_j\left(p\right)
\Xi_{j,t}\left(p\right)
=
\frac{
p-\mathbf{1}\left\{
U_{jt}\leq Q_{j0}\left(p\right)
\right\}
}{
\left[p\left(1-p\right)\right]^{1/2-\zeta_j}
}.
\]
Weighted empirical-process and quantile-process results under norms of this form are standard; see \citet{VanDerVaart1998}, \citet{Kosorok2008}, and \citet{VanDerVaartWellner2023}.

\begin{assumption}[Continuity of the response in quantile directions]
\label{ass:response-tail-continuity}

For every response specification
\(\left(h,\boldsymbol{y},\boldsymbol{a},k,\delta\right)\),
the propagation weights in equation~\eqref{eq:propagation-weights} satisfy
\begin{equation}
\sum_{j=1}^{n}
\int_{0}^{1}
\frac{
\left|
\omega_{h,j}^{\mathrm{dist}}\left(p\right)
\right|
}{
\vartheta_j\left(p\right)
}
\,\dif p
+
\int_{0}^{1}
\frac{
\left|
\omega_{h,k}^{\mathrm{imp}}
\left(
\rho_{\delta}\left(p\right)
\right)
\right|
\rho_{\delta}'\left(p\right)
}{
\vartheta_k\left(p\right)
}
\,\dif p
<
\infty.
\label{eq:response-tail-integrability}
\end{equation}
The same integrals remain finite when the absolute propagation weights are replaced by the corresponding conditional \(L^{2+\eta}\) norms. Moreover, the generated-residual correction satisfies
\[
\sup_{0<p<1}
\vartheta_j\left(p\right)
\left\|
\boldsymbol{r}_j\left(p\right)
\right\|
<
\infty,
\qquad
j=1,\ldots,n.
\]
\end{assumption}

Assumption~\ref{ass:response-tail-continuity} makes the two derivative maps in equation~\eqref{eq:response-quantile-derivatives} continuous under the norm in equation~\eqref{eq:weighted-quantile-norm}. The second integral includes the Jacobian of the inverse rank shift and therefore imposes the stronger tail requirement associated with the shocked impact innovation.

\subsection{Primitive conditions for the scalar location--scale model}
\label{app:assumptions-scalar}

Write
\[
g\left(y,u;\boldsymbol{\beta}\right)
=
\mu\left(y;\boldsymbol{\beta}\right)
+
\sigma\left(y;\boldsymbol{\beta}\right)u.
\]
The following conditions are sufficient for the stability, differentiability, and first-stage parts of Assumption~\ref{ass:first-order}.

\begin{assumption}[Scalar dynamics]
\label{ass:scalar-dynamics}

There is a compact neighborhood \(\mathcal{B}_0\) of \(\boldsymbol{\beta}_0\) and constants \(\underline{\sigma}>0\), \(C<\infty\), \(m\geq0\), and \(q>2+\eta\) such that the following conditions hold.

\begin{enumerate}[label=(\roman*),leftmargin=2.2em]

\item The functions \(\mu\) and \(\sigma\) have continuous mixed derivatives in \(y\) and \(\boldsymbol{\beta}\) through order two on \(\mathbb{R}\times\mathcal{B}_0\). Their derivatives through order two are bounded by
\(C\left(1+\left|y\right|^m\right)\).
For the fixed maximum horizon \(H\), the finite products of these derivative envelopes along the shocked and unshocked paths have finite \(2+\eta\) moments, and
\[
\inf_{y\in\mathbb{R}}
\inf_{\boldsymbol{\beta}\in\mathcal{B}_0}
\sigma\left(y;\boldsymbol{\beta}\right)
\geq
\underline{\sigma}.
\]

\item The innovations are independent and identically distributed, independent of the history generated by \(\left\{Y_s:s<t\right\}\), and normalized by
\[
\mathbb{E}\left[U_t\right]
=
0,
\qquad
\mathbb{E}\left[U_t^2\right]
=
1.
\]
For every \(\delta\in\mathcal{D}\), if
\(P\sim\operatorname{Unif}\left(0,1\right)\), then
\[
\mathbb{E}\left[
\left|U_t\right|^q
\right]
+
\mathbb{E}\left[
\left|
Q_0\left(
\tau_{\delta}\left(P\right)
\right)
\right|^q
\right]
<
\infty.
\]

\item Define the random Lipschitz coefficient
\[
L\left(u\right)
=
\sup_{\boldsymbol{\beta}\in\mathcal{B}_0}
\sup_{y\neq\tilde{y}}
\frac{
\left|
g\left(y,u;\boldsymbol{\beta}\right)
-
g\left(\tilde{y},u;\boldsymbol{\beta}\right)
\right|
}{
\left|y-\tilde{y}\right|
}.
\]
The contraction and intercept moments satisfy
\begin{equation}
\mathbb{E}\left[
L\left(U_t\right)^q
\right]
<
1,
\qquad
\mathbb{E}\left[
\sup_{\boldsymbol{\beta}\in\mathcal{B}_0}
\left|
g\left(0,U_t;\boldsymbol{\beta}\right)
\right|^q
\right]
<
\infty.
\label{eq:scalar-contraction}
\end{equation}

\item The parameter \(\boldsymbol{\beta}_0\) is the unique value in \(\mathcal{B}_0\) satisfying the population restrictions used by the transition estimator. The location and scale normalizations in part~\textup{(ii)} are imposed by those restrictions; no additional residual recentering or rescaling is applied after estimation.

\end{enumerate}
\end{assumption}

The contraction condition is stronger than necessary, but it is transparent and covers the smooth-transition specifications used in the Monte Carlo analysis. It yields a unique causal stationary solution, finite \(q\)th moments, and geometrically decaying dependence on remote innovations. These properties provide laws of large numbers and central limit theorems for smooth functions of the state; see \citet{Rio2017} for general weak-dependence results. When the state derivatives of \(\mu\) and \(\sigma\) are bounded, the path-derivative requirement in part~\textup{(i)} follows from a \(2+\eta\) moment for the ordinary and shifted innovations. Polynomially growing derivatives require correspondingly higher state and innovation moments. If residuals are separately recentered or rescaled in finite samples, the corresponding location and scale estimates must be added to the first-stage parameter vector and to equation~\eqref{eq:beta-linearization}.

\begin{assumption}[Scalar innovation distribution]
\label{ass:scalar-innovations}

The innovation distribution has a positive continuously differentiable density \(f_0\) on the interior of its support. The density is eventually monotone in each unbounded tail and satisfies
\begin{equation}
\sup_{0<p<1}
p\left(1-p\right)
\frac{
\left|
f_0'\left(
Q_0\left(p\right)
\right)
\right|
}{
f_0\left(
Q_0\left(p\right)
\right)^2
}
<
\infty.
\label{eq:quantile-density-tail-condition}
\end{equation}
For some \(\zeta\in\left(0,1/2\right)\), equation~\eqref{eq:response-tail-integrability} holds with \(n=1\), and the residual Jacobian defined below satisfies
\[
\sup_{0<p<1}
\vartheta_1\left(p\right)
\left\|
\mathbb{E}\left[
\boldsymbol{J}_t
\mathrel{\big|}
U_t=Q_0\left(p\right)
\right]
\right\|
<
\infty.
\]
\end{assumption}

Condition~\eqref{eq:quantile-density-tail-condition} controls the change in the quantile density as the probability rank approaches zero or one. The admissible value of \(\zeta\) depends on both the innovation tails and the growth of the propagation weights. For example, when the propagation weights remain bounded near the endpoints, a Gaussian innovation permits any \(\zeta<1/2\), whereas a Student-\(t\) innovation with \(\nu\) degrees of freedom requires
\(\zeta<1/2-1/\nu\).
If a propagation weight grows at rate
\(\left|Q_0\left(p\right)\right|^m\),
the corresponding Student-\(t\) condition becomes
\[
\zeta
<
\frac{1}{2}
-
\frac{m+1}{\nu}.
\]
A fixed normal-rank shift changes these bounds only by a sub-polynomial tail factor, so the strict inequalities provide the required margin.

\begin{assumption}[Scalar transition estimator]
\label{ass:scalar-estimator}

The estimator \(\boldsymbol{\hat{\beta}}\) satisfies
\[
\frac{1}{T}
\sum_{t=1}^{T}
\boldsymbol{S}_t\left(
\boldsymbol{\hat{\beta}}
\right)
=
o_{\mathbb{P}}\left(
T^{-1/2}
\right).
\]
The population equation
\[
\mathbb{E}\left[
\boldsymbol{S}_t\left(
\boldsymbol{\beta}
\right)
\right]
=
\boldsymbol{0}
\]
identifies \(\boldsymbol{\beta}_0\). The map
\(\boldsymbol{\beta}\mapsto
\boldsymbol{S}_t\left(\boldsymbol{\beta}\right)\)
is continuously differentiable on \(\mathcal{B}_0\), a uniform law of large numbers applies to its derivative, and
\[
\boldsymbol{H}_0
=
\mathbb{E}\left[
\frac{
\partial
\boldsymbol{S}_t\left(
\boldsymbol{\beta}_0
\right)
}{
\partial\boldsymbol{\beta}^{\prime}
}
\right]
\]
is nonsingular. The sequence
\(\left\{
\boldsymbol{S}_t\left(
\boldsymbol{\beta}_0
\right)
\right\}\)
has a finite \(2+\eta\) moment and satisfies a central limit theorem jointly with the residual empirical process used in Appendix~\ref{app:residual-quantiles}.
\end{assumption}

Under Assumption~\ref{ass:scalar-estimator}, equation~\eqref{eq:beta-linearization} holds with
\[
\boldsymbol{L}_t
=
-
\boldsymbol{H}_0^{-1}
\boldsymbol{S}_t\left(
\boldsymbol{\beta}_0
\right).
\]
This formulation covers smooth likelihood, nonlinear least-squares, and regular estimating-equation procedures. Standard arguments for asymptotically linear estimators are given by \citet{VanDerVaart1998}.

\begin{proposition}[Consequences of the scalar primitive conditions]
\label{prop:scalar-primitive-conditions}

Under Assumptions~\ref{ass:scalar-dynamics}--\ref{ass:scalar-estimator}, the process in equation~\eqref{eq:scalar-model} has a unique causal strictly stationary and ergodic solution with a finite \(q\)th moment. The transition and residual maps satisfy the differentiability and moment conditions in Assumption~\ref{ass:first-order}\textup{(i)}, and the transition estimator satisfies Assumption~\ref{ass:first-order}\textup{(ii)}.

At the true parameter, the derivative of the generated residual is
\begin{equation}
\boldsymbol{J}_t
=
\left.
\frac{
\partial
G\left(
Y_t,
Y_{t-1};
\boldsymbol{\beta}
\right)
}{
\partial\boldsymbol{\beta}
}
\right|_{\boldsymbol{\beta}=\boldsymbol{\beta}_0}
=
-
\frac{
\partial
\mu\left(
Y_{t-1};
\boldsymbol{\beta}_0
\right)
/\partial\boldsymbol{\beta}
}{
\sigma\left(
Y_{t-1};
\boldsymbol{\beta}_0
\right)
}
-
U_t
\frac{
\partial
\sigma\left(
Y_{t-1};
\boldsymbol{\beta}_0
\right)
/\partial\boldsymbol{\beta}
}{
\sigma\left(
Y_{t-1};
\boldsymbol{\beta}_0
\right)
}.
\label{eq:scalar-residual-jacobian}
\end{equation}
Because \(U_t\) is independent of \(Y_{t-1}\), the generated-residual correction is
\begin{equation}
\begin{aligned}
\boldsymbol{r}\left(p\right)
&=
\mathbb{E}\left[
\boldsymbol{J}_t
\mathrel{\big|}
U_t=Q_0\left(p\right)
\right]\\
&=
-
\mathbb{E}\left[
\frac{
\partial
\mu\left(
Y_{t-1};
\boldsymbol{\beta}_0
\right)
/\partial\boldsymbol{\beta}
}{
\sigma\left(
Y_{t-1};
\boldsymbol{\beta}_0
\right)
}
\right]\\
&\quad
-
Q_0\left(p\right)
\mathbb{E}\left[
\frac{
\partial
\sigma\left(
Y_{t-1};
\boldsymbol{\beta}_0
\right)
/\partial\boldsymbol{\beta}
}{
\sigma\left(
Y_{t-1};
\boldsymbol{\beta}_0
\right)
}
\right].
\end{aligned}
\label{eq:scalar-generated-residual-correction}
\end{equation}
Assumption~\ref{ass:scalar-innovations} therefore provides the tail control required for this term. Appendix~\ref{app:residual-quantiles} uses equations~\eqref{eq:scalar-residual-jacobian} and~\eqref{eq:scalar-generated-residual-correction} to establish Assumption~\ref{ass:first-order}\textup{(iv)} for this model.
\end{proposition}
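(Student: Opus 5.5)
The proof has three parts: the stationarity and moment claims, the differentiability and first-stage claims, and the residual Jacobian and generated-residual correction.

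\emph{Stationarity and moments.} Under Assumption~\ref{ass:scalar-dynamics}(iii) we are in the iterated random function setting $Y_t=g(Y_{t-1},U_t;\boldsymbol{\beta}_0)$ with i.i.d.\ $U_t$ and random Lipschitz coefficient $L(U_t)$ satisfying $\mathbb{E}[L(U_t)^q]<1$. We will apply the standard contraction argument. Define the backward iterates $Y_t^{(m)}$ started from $0$ at time $t-m$. By Minkowski's inequality and independence,
\[
\|Y_t^{(m+1)}-Y_t^{(m)}\|_q\leq (\mathbb{E}[L^q])^{m/q}\,\|g(0,U;\boldsymbol{\beta}_0)\|_q,
\]
so the iterates form a Cauchy sequence in $L^q$ and almost surely. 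The limit is a causal, measurable function of $(U_s)_{s\leq t}$ with finite $q$th moment. It is strictly stationary and ergodic because it is a Bernoulli shift of an i.i.d.\ sequence.

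Uniqueness follows because any two stationary solutions driven by the same innovations differ by at most $\prod L(U_s)\,|Y-\tilde Y|$, and this product tends to $0$ in $L^{q}$. The same bound shows geometric decay of dependence on remote innovations, measured by physical dependence coefficients. This gives the laws of large numbers and central limit theorems for functions of the state with polynomial envelopes, citing \citet{Rio2017}.

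\emph{Differentiability and the first-stage part.} Part (i) of Assumption~\ref{ass:first-order} follows from two facts. First, $\sigma\geq\underline{\sigma}$ makes $G=(y_t-\mu)/\sigma$ continuously differentiable. Second, the path derivatives along the paired recursion (equation~\eqref{eq:paired-paths}) are finite sums of products of derivative envelopes $C(1+|y|^m)$ evaluated at $Y_j^r$ and at innovations $Q_0(P_j)$ and $Q_0(\tau_\delta(P_1))$. These products are assumed to have $2+\eta$ moments in Assumption~\ref{ass:scalar-dynamics}(i), with the shifted innovation moment supplied by part (ii).

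For the first-stage part (ii), we use a standard Z-estimator argument \citep{VanDerVaart1998}. Consistency follows from identification together with the uniform law of large numbers, applied to $\boldsymbol{S}_t(\boldsymbol{\beta})$ on the compact set $\mathcal{B}_0$. A mean-value expansion then gives
\[
0=T^{-1}\sum\boldsymbol{S}_t(\boldsymbol{\beta}_0)+\bar{\boldsymbol{H}}(\boldsymbol{\hat\beta}-\boldsymbol{\beta}_0)+o_{\mathbb{P}}(T^{-1/2}),
\]
with $\bar{\boldsymbol{H}}\to\boldsymbol{H}_0$ by the uniform law of large numbers for the derivative. Inverting yields $\boldsymbol{L}_t=-\boldsymbol{H}_0^{-1}\boldsymbol{S}_t(\boldsymbol{\beta}_0)$, which has a $2+\eta$ moment and mean zero. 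The mean-zero property holds because $\mathbb{E}[\boldsymbol{S}_t(\boldsymbol{\beta}_0)]=0$. The joint central limit theorem is assumed.

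\emph{Residual Jacobian and generated-residual correction.} Differentiating $G$ at $\boldsymbol{\beta}_0$ and substituting $Y_t-\mu=\sigma U_t$ gives equation~\eqref{eq:scalar-residual-jacobian} directly. For equation~\eqref{eq:scalar-generated-residual-correction}, we compute $\boldsymbol{\Gamma}(u)$. Writing the event $\{G(Y_t,Y_{t-1};\boldsymbol{\beta})\leq u\}$ as
\[
\bigl\{U_t\leq [\mu(Y_{t-1};\boldsymbol{\beta})-\mu(Y_{t-1};\boldsymbol{\beta}_0)+\sigma(Y_{t-1};\boldsymbol{\beta})u]/\sigma(Y_{t-1};\boldsymbol{\beta}_0)\bigr\},
\]
independence of $U_t$ and $Y_{t-1}$ gives $\mathbb{E}\,F_0(c(Y_{t-1},\boldsymbol{\beta},u))$. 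We differentiate under the expectation by dominated convergence, using the envelopes and boundedness of $f_0$ on compacts together with the tail monotonicity. This yields $\boldsymbol{\Gamma}(u)=f_0(u)\,\mathbb{E}[(\partial\mu+u\,\partial\sigma)/\sigma]$. Hence $\boldsymbol{r}(p)=-\boldsymbol{\Gamma}(Q_0(p))/f_0(Q_0(p))$, which matches equation~\eqref{eq:scalar-generated-residual-correction}. It equals $\mathbb{E}[\boldsymbol{J}_t\mid U_t=Q_0(p)]$ because $\boldsymbol{J}_t$ is affine in $U_t$ with coefficients depending only on $Y_{t-1}$, which is independent of $U_t$.

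\emph{Main obstacle.} The delicate step is justifying differentiation under the expectation uniformly in $u$ over the unbounded support. This is needed so that $\boldsymbol{r}$ is controlled in the weighted norm. We will handle it by bounding the difference quotient by $\sup f_0\cdot(1+|u|)\,C(1+|Y_{t-1}|^m)$ and using the $q$th moment of the state. The weighted tail requirement is then exactly the Jacobian condition in Assumption~\ref{ass:scalar-innovations}.
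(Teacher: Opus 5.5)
Your proposal is correct and follows essentially the paper's route: a contraction argument for the stationary solution, the standard Z-estimator expansion giving $\boldsymbol{L}_t=-\boldsymbol{H}_0^{-1}\boldsymbol{S}_t(\boldsymbol{\beta}_0)$, direct differentiation of $G$, and independence of $U_t$ and $Y_{t-1}$ to evaluate $\boldsymbol{r}$. Your direct computation of $\boldsymbol{\Gamma}(u)$ by conditioning on $Y_{t-1}$ is the scalar specialization of the general identity $\boldsymbol{\Gamma}_j(u)=-f_{j0}(u)\mathbb{E}[\boldsymbol{J}_{j,t}\mid U_{jt}=u]$ in Lemma~\ref{lem:generated-residual-empirical-process}; two small tidy-ups are that pointwise dominated convergence needs only a bound on $f_0$ near the interior point $u$ rather than $\sup f_0$, and that the uniform-in-$u$ control you call the main obstacle belongs to Lemma~\ref{lem:generated-residual-empirical-process} rather than to this proposition, since the weighted bound on $\boldsymbol{r}$ is assumed directly in Assumption~\ref{ass:scalar-innovations}.
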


\subsection{Central ranks, tails, and optional clipping}
\label{app:assumptions-tails}

The weighted norm permits an exact population response with unbounded innovation support. The proofs nevertheless separate a central rank region from the tails. Let
\[
\pi_{\epsilon}\left(p\right)
=
\min\left\{
1-\epsilon,
\max\left\{
\epsilon,
p
\right\}
\right\},
\qquad
\tau_{\delta,\epsilon}\left(p\right)
=
\pi_{\epsilon}\left(
\tau_{\delta}\left(
\pi_{\epsilon}\left(p\right)
\right)
\right).
\]
Let \(D_h^{\epsilon}\) denote the path response obtained by applying these maps to every ordinary and shifted rank in equation~\eqref{eq:paired-paths}. A sequence \(\epsilon_T\downarrow0\) is admissible when
\begin{equation}
\frac{
T\epsilon_T
}{
\log T
}
\rightarrow
\infty,
\qquad
\sqrt{T}
\max_{1\leq m\leq M}
\mathbb{E}\left[
\left|
D^{\left(m\right),\epsilon_T}
-
D^{\left(m\right)}
\right|
\right]
\rightarrow
0.
\label{eq:tail-negligibility}
\end{equation}
The first condition leaves enough order statistics in the central region for a uniform quantile expansion. The second makes the omitted tail contribution negligible at the scale of the main theorem. If clipping is used in computation, equation~\eqref{eq:tail-negligibility} also ensures that the clipped and exact population responses are first-order equivalent.

For a practical sufficient condition, define
\[
\chi_{\mathcal{D}}\left(
\epsilon
\right)
=
\max_{\delta\in\mathcal{D}}
\left[
\rho_{\delta}\left(
\epsilon
\right)
+
1
-
\rho_{\delta}\left(
1-\epsilon
\right)
\right].
\]
This quantity is the largest probability, over the reported shock sizes, that a shifted rank lies outside
\(\left[\epsilon,1-\epsilon\right]\).
Suppose that, for each reported response, there are \(\kappa>0\) and a random variable \(W_h\) with
\(\mathbb{E}\left[W_h^{1+\kappa}\right]<\infty\)
such that
\[
\left|
D_h^{\epsilon}
-
D_h
\right|
\leq
W_h
\mathbf{1}\left\{
\text{an ordinary or shifted rank leaves }
\left[\epsilon,1-\epsilon\right]
\right\}.
\]
Then H\"older's inequality shows that the second condition in equation~\eqref{eq:tail-negligibility} follows from
\begin{equation}
\sqrt{T}
\left[
\epsilon_T
+
\chi_{\mathcal{D}}\left(
\epsilon_T
\right)
\right]^{\kappa/\left(1+\kappa\right)}
\rightarrow
0.
\label{eq:sufficient-tail-rate}
\end{equation}
For compactly supported innovations with bounded path derivatives, tail control reduces to the probability of the omitted ranks. With unbounded support, the moment of \(W_h\), the quantile growth, and the factor \(\rho_{\delta}'\) determine whether equations~\eqref{eq:tail-negligibility} and~\eqref{eq:sufficient-tail-rate} hold. The shifted-impact tail is generally the more demanding one.

\subsection{High-level conditions for the vector model}
\label{app:assumptions-vector}

The vector theorem does not impose one particular structural transition or first-stage estimator. It uses the following conditions.

\begin{assumption}[General structural model]
\label{ass:vector-high-level}

For the finite response collection introduced above, the following conditions hold.

\begin{enumerate}[label=(\roman*),leftmargin=2.2em]

\item The representation
\(\left(g,G,\boldsymbol{\beta}_0\right)\)
is identified under the stated economic or statistical restrictions. The structural innovations are independent over time and across components, and each component satisfies the normalization used by the structural model.

\item The process
\(\left\{\boldsymbol{Y}_t\right\}\)
is strictly stationary and ergodic. It is strongly mixing with coefficients
\(\alpha\left(\ell\right)\)
satisfying
\[
\sum_{\ell=1}^{\infty}
\alpha\left(\ell\right)^{\eta/\left(2+\eta\right)}
<
\infty,
\]
or it satisfies an alternative dependence condition that yields the same laws of large numbers, empirical-process expansion, and joint central limit theorem.

\item The maps \(g\) and \(G\) have continuous derivatives through order two on neighborhoods of the states, innovations, and parameter values reached by the reported responses. The derivatives of the observed process and of every paired path through horizon \(H\) are bounded by envelopes with finite \(2+\eta\) moments.

\item The transition estimator satisfies equation~\eqref{eq:beta-linearization}. The generated residuals satisfy equation~\eqref{eq:generated-quantile-expansion} under the norm in equation~\eqref{eq:weighted-quantile-norm}, jointly across components and jointly with the transition-estimator expansion.

\item The marginal innovation densities satisfy the smoothness condition in Assumption~\ref{ass:first-order}\textup{(iii)}, Assumption~\ref{ass:response-tail-continuity} holds, and the tail decomposition in equation~\eqref{eq:tail-negligibility} is valid whenever the support is unbounded or clipping is used.

\item The number of response specifications and \(H\) are fixed. The response-simulation draws are independent of the sample. The growth condition on \(S\) is the one stated in the result being applied.

\end{enumerate}
\end{assumption}

Under Assumption~\ref{ass:vector-high-level}, the finite-horizon response map is continuously differentiable in \(\boldsymbol{\beta}\) and in the component quantile functions under the product norm induced by equation~\eqref{eq:weighted-quantile-norm}. Consequently, Assumption~\ref{ass:first-order} follows directly. The high-level generated-residual condition is retained because its primitive form depends on the structural identification and transition estimator. Residual empirical-process expansions for autoregressive and heteroskedastic models provide the relevant model-specific ingredients \citep{Bai1994,KoulStute1999,Koul2002,KoulLing2006}.

\subsection{Why the horizon is fixed}
\label{app:assumptions-fixed-horizon}

For the scalar model, let
\(r\in\left\{0,\delta\right\}\)
index an unshocked or shocked path. Differentiating one step of the recursion gives
\begin{equation}
\begin{aligned}
\frac{
\partial Y_h^r
}{
\partial U_s^r
}
&=
\sigma\left(
Y_{s-1}^r;
\boldsymbol{\beta}_0
\right)
\prod_{\ell=s+1}^{h}
\left[
\frac{
\partial
\mu\left(
Y_{\ell-1}^r;
\boldsymbol{\beta}_0
\right)
}{
\partial y
}
+
\frac{
\partial
\sigma\left(
Y_{\ell-1}^r;
\boldsymbol{\beta}_0
\right)
}{
\partial y
}
U_{\ell}^r
\right],\\
\frac{
\partial Y_j^r
}{
\partial\boldsymbol{\beta}
}
&=
\left[
\frac{
\partial
\mu\left(
Y_{j-1}^r;
\boldsymbol{\beta}_0
\right)
}{
\partial y
}
+
\frac{
\partial
\sigma\left(
Y_{j-1}^r;
\boldsymbol{\beta}_0
\right)
}{
\partial y
}
U_j^r
\right]
\frac{
\partial Y_{j-1}^r
}{
\partial\boldsymbol{\beta}
}\\
&\quad
+
\frac{
\partial
\mu\left(
Y_{j-1}^r;
\boldsymbol{\beta}_0
\right)
}{
\partial\boldsymbol{\beta}
}
+
\frac{
\partial
\sigma\left(
Y_{j-1}^r;
\boldsymbol{\beta}_0
\right)
}{
\partial\boldsymbol{\beta}
}
U_j^r.
\end{aligned}
\label{eq:scalar-recursive-derivatives}
\end{equation}
The empty product in the first line equals one. Because \(h\leq H\) and \(H\) is fixed, equation~\eqref{eq:scalar-recursive-derivatives} contains only finitely many products and sums. The moment requirement in Assumption~\ref{ass:scalar-dynamics} can therefore be chosen once for the largest reported horizon. Allowing \(H\) to increase with \(T\) would require uniform control of these derivative products and of the persistence of the transition, which is outside the results in this paper.

\begin{remark}[Conditions used only by the smoothing and bootstrap results]
\label{rem:additional-theorem-conditions}

The second-order smoothing formula in equation~\eqref{eq:smoothing-bias-coefficient} additionally requires the two propagated integrals involving \(Q_{j0}^{\prime\prime}\) to be finite under the exact boundary correction used by \(\mathcal{S}_{b_T}\). This requirement is automatic on a compact interior rank region but need not follow from interior smoothness when the innovation support is unbounded. If the separate second-derivative integrals are not finite, the smoothing comparison remains valid with the operator-level bias
\[
\dot{\Psi}_h^{\mathrm{dist}}\left[
\left(
\mathcal{S}_{b_T}
-
\operatorname{Id}
\right)
\boldsymbol{Q}_0
\right]
+
\dot{\Psi}_h^{\mathrm{imp}}\left[
\left(
\mathcal{S}_{b_T}
-
\operatorname{Id}
\right)
\boldsymbol{Q}_0
\right]
\]
in place of \(b_T^2B_h^{\mathrm{S}}\), unless a distribution-specific boundary calculation establishes the latter expansion.

The recursive bootstrap requires the scalar contraction, moment, and differentiability conditions uniformly over a neighborhood of \(\boldsymbol{\beta}_0\) and over empirical innovation laws converging to \(F_0\). If
\(\varrho\in\left(0,1\right)\)
is a uniform contraction rate and \(\ell_T\) is the burn-in length, a sufficient initialization condition is
\[
\ell_T
\rightarrow
\infty,
\qquad
\sqrt{T}
\varrho^{\ell_T}
\rightarrow
0.
\]
The original and bootstrap calculations use the same residual normalization, endpoint convention, and clipping rule. The finite-simulation theorem additionally requires a finite \(2+\eta\) moment for the stacked path response, which follows from Assumption~\ref{ass:scalar-dynamics} in the scalar model.
\end{remark}

\section{Generated residuals and empirical quantiles}
\label{app:residual-quantiles}

This appendix derives the expansion in equation~\eqref{eq:generated-quantile-expansion}. The argument has three steps. First, estimating the transition parameter changes the residual empirical distribution. Next, inversion of that distribution gives the empirical-quantile expansion on a central rank region. Finally, the tail conditions in Appendix~\ref{app:assumptions-tails} extend the result to the ordinary and shifted ranks entering the impulse-response function. All statements hold jointly across the fixed number of innovation components.

\subsection{Residual empirical-process expansion}
\label{app:residual-quantiles-empirical-process}

For component \(j\), define the infeasible and residual empirical distribution functions by
\[
F_{j,T}\left(u\right)
=
\frac{1}{T}
\sum_{t=1}^{T}
\mathbf{1}\left\{U_{jt}\leq u\right\},
\qquad
\hat{F}_{j,T}\left(u\right)
=
\frac{1}{T}
\sum_{t=1}^{T}
\mathbf{1}\left\{\hat{U}_{jt}\leq u\right\}.
\]
Let
\[
\boldsymbol{\Delta}_T
=
\sqrt{T}
\left(
\boldsymbol{\hat{\beta}}-\boldsymbol{\beta}_0
\right),
\qquad
\boldsymbol{J}_{j,t}
=
\left.
\frac{\partial G_j\left(\boldsymbol{Y}_t,\boldsymbol{Y}_{t-1};\boldsymbol{\beta}\right)}{\partial\boldsymbol{\beta}}
\right|_{\boldsymbol{\beta}=\boldsymbol{\beta}_0}.
\]
For a function \(a:\mathbb{R}\rightarrow\mathbb{R}\), define the weighted distribution norm
\begin{equation}
\left\|a\right\|_{F_j,\zeta_j}
=
\sup_{u:\,0<F_{j0}\left(u\right)<1}
\frac{\left|a\left(u\right)\right|}{\left[F_{j0}\left(u\right)\left\{1-F_{j0}\left(u\right)\right\}\right]^{1/2-\zeta_j}}.
\label{eq:weighted-distribution-norm}
\end{equation}
The exponent is the same as in equation~\eqref{eq:weighted-quantile-norm}.

\begin{lemma}[Generated-residual empirical process]
\label{lem:generated-residual-empirical-process}
Suppose Assumptions~\ref{ass:scalar-dynamics}--\ref{ass:scalar-estimator} hold in the scalar model. In the vector model, suppose the componentwise residual-indicator classes are stochastically equicontinuous under \(T^{-1/2}\)-local parameter perturbations in the norm in equation~\eqref{eq:weighted-distribution-norm}, and suppose their population means are differentiable at \(\boldsymbol{\beta}_0\). Then, for every component \(j\),
\begin{equation}
\left\|
\sqrt{T}
\left(
\hat{F}_{j,T}-F_{j0}
\right)
-
\frac{1}{\sqrt{T}}
\sum_{t=1}^{T}
\left[
\mathbf{1}\left\{U_{jt}\leq\cdot\right\}
-
F_{j0}\left(\cdot\right)
\right]
-
\boldsymbol{\Gamma}_j\left(\cdot\right)^{\prime}
\boldsymbol{\Delta}_T
\right\|_{F_j,\zeta_j}
=
o_{\mathbb{P}}\left(1\right),
\label{eq:residual-empirical-process-expansion}
\end{equation}
where \(\boldsymbol{\Gamma}_j\) is defined in Assumption~\ref{ass:first-order}\textup{(iv)}. If the conditional expectation below is continuous in \(u\), then
\begin{equation}
\boldsymbol{\Gamma}_j\left(u\right)
=
-f_{j0}\left(u\right)
\mathbb{E}\left[
\boldsymbol{J}_{j,t}
\mathrel{\big|}
U_{jt}=u
\right].
\label{eq:residual-distribution-derivative}
\end{equation}
\end{lemma}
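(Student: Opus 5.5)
The plan is the standard residual empirical-process argument: compare \(\hat{F}_{j,T}\) with the infeasible \(F_{j,T}\) through a random-parameter indicator process. Write
\[
\hat{F}_{j,T}(u)=\frac{1}{T}\sum_{t=1}^{T}\mathbf{1}\{G_j(\boldsymbol{Y}_t,\boldsymbol{Y}_{t-1};\boldsymbol{\beta}_0+T^{-1/2}\boldsymbol{\Delta}_T)\leq u\},
\]
and define, for \(\boldsymbol{d}\) in a compact set \(K\), the process
\[
\mathbb{V}_{j,T}(u,\boldsymbol{d})=T^{-1/2}\sum_{t=1}^{T}\Bigl[\mathbf{1}\{G_j(\cdot;\boldsymbol{\beta}_0+T^{-1/2}\boldsymbol{d})\leq u\}-\bar{F}_j(u,\boldsymbol{\beta}_0+T^{-1/2}\boldsymbol{d})\Bigr],
\]
where \(\bar{F}_j(u,\boldsymbol{\beta})=\mathbb{E}[\mathbf{1}\{G_j(\boldsymbol{Y}_t,\boldsymbol{Y}_{t-1};\boldsymbol{\beta})\leq u\}]\). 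Since \(\boldsymbol{\Delta}_T=O_{\mathbb{P}}(1)\) by Assumption~\ref{ass:first-order}(ii), it suffices to work uniformly over \(\boldsymbol{d}\in K\). The decomposition is then
\[
\sqrt{T}(\hat{F}_{j,T}-F_{j0})=\mathbb{V}_{j,T}(\cdot,\boldsymbol{\Delta}_T)+\sqrt{T}\bigl[\bar{F}_j(\cdot,\boldsymbol{\beta}_0+T^{-1/2}\boldsymbol{\Delta}_T)-F_{j0}\bigr].
\]
The deterministic term equals \(\boldsymbol{\Gamma}_j(\cdot)^{\prime}\boldsymbol{\Delta}_T+o_{\mathbb{P}}(1)\) by differentiability of the population mean. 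The stochastic term equals \(\mathbb{V}_{j,T}(\cdot,\boldsymbol{0})+o_{\mathbb{P}}(1)\) by stochastic equicontinuity in \(\boldsymbol{d}\). Both statements must hold in the weighted norm \eqref{eq:weighted-distribution-norm}. In the vector case these two properties are exactly the hypotheses, so the proof there is just this decomposition. In the scalar case they must be verified.

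For the scalar verification, the residual event is
\[
\{U_t\leq u\,\sigma(Y_{t-1};\boldsymbol{\beta})/\sigma(Y_{t-1};\boldsymbol{\beta}_0)+[\mu(Y_{t-1};\boldsymbol{\beta})-\mu(Y_{t-1};\boldsymbol{\beta}_0)]/\sigma(Y_{t-1};\boldsymbol{\beta}_0)\}.
\]
Conditioning on \(Y_{t-1}\), which is independent of \(U_t\), gives
\(\bar{F}(u,\boldsymbol{\beta})=\mathbb{E}[F_0(u+\text{shift}_t(u,\boldsymbol{\beta}))]\). A mean-value expansion with the derivative bounds of Assumption~\ref{ass:scalar-dynamics}(i) yields
\[
\boldsymbol{\Gamma}(u)=-f_0(u)\,\mathbb{E}[\boldsymbol{J}_t\mid U_t=u],
\]
which is \eqref{eq:residual-distribution-derivative}. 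The Jacobian \(\boldsymbol{J}_t\) is linear in \(U_t\) as in \eqref{eq:scalar-residual-jacobian}. Because of this, the expansion remainder is of order \(T^{-1/2}(1+u^2)\sup|f_0'|\) times derivative envelopes. After dividing by the weight \([F_0(1-F_0)]^{1/2-\zeta}\), the remainder is controlled by the tail condition \eqref{eq:quantile-density-tail-condition} and the \(q\)-th moments.

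For stochastic equicontinuity, I would use a martingale-difference argument. Each summand of \(\mathbb{V}_T(u,\boldsymbol{d})-\mathbb{V}_T(u,\boldsymbol{0})\), once recentred by its conditional mean given \(Y_{t-1}\), is a bounded martingale difference. Its conditional variance is
\[
|F_0(u+\text{shift}_t)-F_0(u)|\lesssim T^{-1/2}(1+|u|)f_0(\tilde{u})\times\text{envelope}.
\]
A chaining argument over a grid in \((u,\boldsymbol{d})\), with bracketing supplied by monotonicity of indicators in \(u\), then gives uniform \(o_{\mathbb{P}}(1)\) on the central region \(\{\epsilon_T\leq F_0(u)\leq1-\epsilon_T\}\). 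This uses Freedman's inequality with \(T\epsilon_T/\log T\to\infty\). The difference between conditional and unconditional means is absorbed using the geometric ergodicity from the contraction in Assumption~\ref{ass:scalar-dynamics}(iii).

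The main obstacle is the tails, where the weight \([F_0(1-F_0)]^{1/2-\zeta}\) is small. There I would bound both processes directly. For the infeasible part, I would use the weighted Chibisov--O'Reilly type bound for the empirical process, which applies since \(\zeta>0\). For the residual part, \(\hat{F}\) is sandwiched between \(F_{T}\) evaluated at \(u\pm T^{-1/2}\|\boldsymbol{\Delta}_T\|\,\text{envelope}_t(1+|u|)\). This sandwich needs eventual monotonicity of \(f_0\) and condition \eqref{eq:quantile-density-tail-condition}, so that \(F_0\) at the shifted point remains comparable to \(F_0(u)\) in the weighted scale. Finally, the tail contribution of \(\boldsymbol{\Gamma}^{\prime}\boldsymbol{\Delta}_T\) vanishes by the bound on \(\vartheta\|\boldsymbol{r}\|\) in Assumption~\ref{ass:scalar-innovations}.
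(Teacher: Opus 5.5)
Your decomposition is the paper's own. It has three pieces: the infeasible empirical process, the increment \(\mathbb{V}_{j,T}(\cdot,\boldsymbol{\Delta}_T)-\mathbb{V}_{j,T}(\cdot,\boldsymbol{0})\) removed by stochastic equicontinuity, and the drift linearized by differentiability of \(\bar{F}_j\). In the vector case this is the whole proof, and your scalar computation of \(\boldsymbol{\Gamma}\) by conditioning on \(Y_{t-1}\) is correct.

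The lemma also asserts \eqref{eq:residual-distribution-derivative} in the vector model, which you derive only in the scalar case. It follows from the general identity that the \(s\)-derivative at \(s=0\) of \(\Pr(U_{jt}+s\boldsymbol{J}_{j,t}^{\prime}\boldsymbol{d}\leq u)\) is \(-f_{j0}(u)\mathbb{E}[\boldsymbol{J}_{j,t}^{\prime}\boldsymbol{d}\mid U_{jt}=u]\), which is how the paper argues.

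You depart from the paper on the scalar weighted equicontinuity. The paper imports it from \citet{Bai1994}, \citet{Koul2002} and \citet{KoulLing2006}; you prove it directly, and the tail step of your proof does not work as written. Assumption~\ref{ass:scalar-innovations} gives only \(\sup_p\vartheta(p)\|\boldsymbol{r}(p)\|<\infty\), not that this quantity vanishes at the endpoints. Since \(\boldsymbol{\Gamma}(Q_0(p))/[p(1-p)]^{1/2-\zeta}=-\vartheta(p)\boldsymbol{r}(p)\), the weighted size of \(\boldsymbol{\Gamma}^{\prime}\boldsymbol{\Delta}_T\) on \(\{F_0\leq\epsilon_T\}\) is \(\sup_{p\leq\epsilon_T}\vartheta(p)\|\boldsymbol{r}(p)\|\cdot O_{\mathbb{P}}(1)\). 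That is bounded, but not \(o_{\mathbb{P}}(1)\). The drift \(\sqrt{T}[F_0(u\pm\mathrm{shift})-F_0(u)]\) in your sandwich has the same weighted size, so bounding the pieces separately on the tail region yields only \(O_{\mathbb{P}}(1)\).

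This is not an artifact of your method, and no sharper argument avoids it. For \(u<\min\{U_{(1)},\hat{U}_{(1)}\}\) both \(F_T\) and \(\hat{F}_T\) are zero. The weighted remainder in \eqref{eq:residual-empirical-process-expansion} then equals exactly \(\vartheta(p)\boldsymbol{r}(p)^{\prime}\boldsymbol{\Delta}_T\) with \(p=F_0(u)\), for every rank in a nonempty interval \((0,O_{\mathbb{P}}(T^{-1}))\). The same happens above the largest order statistics. The conclusion therefore forces \(\vartheta(p)\|\boldsymbol{r}(p)\|\rightarrow0\) as \(p\rightarrow0,1\), in the directions charged by the limit of \(\boldsymbol{\Delta}_T\).

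You need to add this endpoint condition, for example by requiring the supremum bound with some exponent smaller than \(\zeta\). It holds for Gaussian and Student-\(t\) innovations, where \(\vartheta(p)(1+|Q_0(p)|)\) is of order \([p(1-p)]^{1/2+\zeta}\) up to logarithms. It is not implied by boundedness, however, and the paper's citation-based proof is silent on it.

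Once imposed, it is the missing ingredient for your tail bounds. There, keep the \(t\)-dependent shifts and use the same conditional-mean and martingale split as on the central ranks. Replacing \(\mathrm{envelope}_t\) by \(\max_t\mathrm{envelope}_t=O_{\mathbb{P}}(T^{1/q})\) would inflate the drift by a factor \(T^{1/q}\).
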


\begin{proof}
For \(u\in\mathbb{R}\) and a generic parameter value \(\boldsymbol{\beta}\), let
\[
m_{j,u,\boldsymbol{\beta},t}
=
\mathbf{1}\left\{
G_j\left(
\boldsymbol{Y}_t,
\boldsymbol{Y}_{t-1};
\boldsymbol{\beta}
\right)
\leq u
\right\}.
\]
Writing \(\mathbb{P}_T\) for the sample average and \(\mathbb{P}_0\) for expectation, decompose
\begin{align*}
\sqrt{T}
\left[
\hat{F}_{j,T}\left(u\right)-F_{j0}\left(u\right)
\right]
&=
\sqrt{T}
\left(
\mathbb{P}_T-\mathbb{P}_0
\right)
m_{j,u,\boldsymbol{\beta}_0,t}\\
&\quad+
\sqrt{T}
\left(
\mathbb{P}_T-\mathbb{P}_0
\right)
\left(
m_{j,u,\boldsymbol{\hat{\beta}},t}
-
m_{j,u,\boldsymbol{\beta}_0,t}
\right)\\
&\quad+
\sqrt{T}
\mathbb{P}_0
\left(
m_{j,u,\boldsymbol{\hat{\beta}},t}
-
m_{j,u,\boldsymbol{\beta}_0,t}
\right).
\end{align*}
The first term is the infeasible innovation empirical process. The weighted stochastic equicontinuity condition stated in the lemma, or its scalar consequence under Assumptions~\ref{ass:scalar-dynamics}--\ref{ass:scalar-estimator}, makes the second term \(o_{\mathbb{P}}\left(1\right)\) uniformly under the norm in equation~\eqref{eq:weighted-distribution-norm}. This is the step supplied by residual empirical-process results for dynamic models; see \citet[Theorem~1]{Bai1994} for ARMA residuals, \citet[Section~7.2]{Koul2002} for autoregressions, and \citet[Theorem~4.1 and Lemma~4.1]{KoulLing2006} for heteroskedastic models. \citet{KoulStute1999} develops related residual-process tools for nonlinear time-series specification.

For the last term, differentiability of the population residual distribution gives, uniformly in \(u\),
\[
\sqrt{T}
\mathbb{P}_0
\left(
m_{j,u,\boldsymbol{\hat{\beta}},t}
-
m_{j,u,\boldsymbol{\beta}_0,t}
\right)
=
\boldsymbol{\Gamma}_j\left(u\right)^{\prime}
\boldsymbol{\Delta}_T
+
o_{\mathbb{P}}\left(1\right).
\]
Combining the three terms proves equation~\eqref{eq:residual-empirical-process-expansion}.

To establish equation~\eqref{eq:residual-distribution-derivative}, fix a deterministic direction \(\boldsymbol{d}\). A first-order expansion of the residual map gives
\[
G_j\left(
\boldsymbol{Y}_t,
\boldsymbol{Y}_{t-1};
\boldsymbol{\beta}_0+s\boldsymbol{d}
\right)
=
U_{jt}
+
s\boldsymbol{J}_{j,t}^{\prime}\boldsymbol{d}
+
o\left(s\right).
\]
Differentiating the probability that this quantity is no larger than \(u\), conditional on \(\boldsymbol{J}_{j,t}\), yields
\[
\left.
\frac{\partial}{\partial s}
\Pr\left[
G_j\left(
\boldsymbol{Y}_t,
\boldsymbol{Y}_{t-1};
\boldsymbol{\beta}_0+s\boldsymbol{d}
\right)
\leq u
\right]
\right|_{s=0}
=
-f_{j0}\left(u\right)
\mathbb{E}\left[
\boldsymbol{J}_{j,t}^{\prime}\boldsymbol{d}
\mathrel{\big|}
U_{jt}=u
\right].
\]
Since the equality holds for every \(\boldsymbol{d}\), equation~\eqref{eq:residual-distribution-derivative} follows.
\end{proof}

Equation~\eqref{eq:residual-empirical-process-expansion} separates the ordinary innovation empirical process from the effect of estimating the transition before forming residual ranks. Substituting equation~\eqref{eq:beta-linearization} gives
\[
\sqrt{T}
\left[
\hat{F}_{j,T}\left(u\right)-F_{j0}\left(u\right)
\right]
=
\frac{1}{\sqrt{T}}
\sum_{t=1}^{T}
\left[
\mathbf{1}\left\{U_{jt}\leq u\right\}
-
F_{j0}\left(u\right)
+
\boldsymbol{\Gamma}_j\left(u\right)^{\prime}
\boldsymbol{L}_t
\right]
+
o_{\mathbb{P}}\left(1\right).
\]
The terms in the sum need not be serially independent because \(\boldsymbol{L}_t\) may depend on lagged states. The joint central limit theorem in Assumption~\ref{ass:first-order} therefore concerns the complete observation-level contribution.

\subsection{Empirical quantiles on central ranks}
\label{app:residual-quantiles-interior}

Let \(\epsilon_T\downarrow0\) satisfy the first condition in equation~\eqref{eq:tail-negligibility}, and define
\[
\mathcal{I}_T
=
\left[
\epsilon_T,
1-\epsilon_T
\right].
\]
The same interval can be used for every component; component-specific sequences give an equivalent result. Recall
\[
\Xi_{j,t}\left(p\right)
=
\frac{
p-
\mathbf{1}\left\{
U_{jt}
\leq
Q_{j0}\left(p\right)
\right\}
}{
f_{j0}\left(Q_{j0}\left(p\right)\right)
},
\qquad
\boldsymbol{r}_j\left(p\right)
=
-
\frac{
\boldsymbol{\Gamma}_j\left(Q_{j0}\left(p\right)\right)
}{
f_{j0}\left(Q_{j0}\left(p\right)\right)
}.
\]

\begin{lemma}[Uniform empirical-quantile expansion]
\label{lem:generated-quantile-interior}
Under the conditions of Lemma~\ref{lem:generated-residual-empirical-process} and the density conditions in Assumption~\ref{ass:first-order}\textup{(iii)},
\begin{equation}
\sup_{p\in\mathcal{I}_T}
\vartheta_j\left(p\right)
\left|
\sqrt{T}
\left[
\hat{Q}_{j}^{\mathrm{E}}\left(p\right)
-
Q_{j0}\left(p\right)
\right]
-
\frac{1}{\sqrt{T}}
\sum_{t=1}^{T}
\left[
\Xi_{j,t}\left(p\right)
+
\boldsymbol{r}_j\left(p\right)^{\prime}
\boldsymbol{L}_t
\right]
\right|
=
o_{\mathbb{P}}\left(1\right).
\label{eq:generated-quantile-interior-expansion}
\end{equation}
Moreover, equation~\eqref{eq:residual-distribution-derivative} implies
\begin{equation}
\boldsymbol{r}_j\left(p\right)
=
\mathbb{E}\left[
\boldsymbol{J}_{j,t}
\mathrel{\big|}
U_{jt}=Q_{j0}\left(p\right)
\right].
\label{eq:generated-quantile-residual-jacobian}
\end{equation}
\end{lemma}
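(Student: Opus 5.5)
The plan is to invert the residual empirical-process expansion of Lemma~\ref{lem:generated-residual-empirical-process} by a Bahadur--Kiefer argument on the central region \(\mathcal{I}_T\). Write
\[
\mathbb{G}_{j,T}(u)=\sqrt{T}\bigl(\hat F_{j,T}(u)-F_{j0}(u)\bigr),
\qquad
\mathbb{W}_{j,T}(u)=\frac{1}{\sqrt{T}}\sum_{t=1}^{T}\bigl[\mathbf{1}\{U_{jt}\leq u\}-F_{j0}(u)\bigr]+\boldsymbol{\Gamma}_j(u)^{\prime}\boldsymbol{\Delta}_T .
\]

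\textbf{Step 1: preliminary rate.} I first show that
\[
\sup_{p\in\mathcal{I}_T}\vartheta_j(p)\bigl|\hat Q_j^{\mathrm{E}}(p)-Q_{j0}(p)\bigr|=O_{\mathbb{P}}(T^{-1/2}).
\]
Since \(\hat F_{j,T}\bigl(\hat Q_j^{\mathrm{E}}(p)\bigr)=\lceil Tp\rceil/T=p+O(T^{-1})\), the weighted expansion in equation~\eqref{eq:residual-empirical-process-expansion} applies. Tightness of \(\mathbb{W}_{j,T}\) in \(\|\cdot\|_{F_j,\zeta_j}\) follows from the joint CLT, \(\boldsymbol{\Delta}_T=O_{\mathbb{P}}(1)\), and the bound \(\sup\vartheta_j\|\boldsymbol{r}_j\|<\infty\) in Assumption~\ref{ass:response-tail-continuity}. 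Together these give
\[
F_{j0}\bigl(\hat Q_j^{\mathrm{E}}(p)\bigr)-p=O_{\mathbb{P}}\bigl(T^{-1/2}[p(1-p)]^{1/2-\zeta_j}\bigr)
\]
uniformly on \(\mathcal{I}_T\). The condition \(T\epsilon_T/\log T\to\infty\) ensures this error is \(o_{\mathbb{P}}(p(1-p))\). The shifted rank therefore stays within a constant multiple of \(p\) and of \(1-p\), which is the usual ratio control for uniform empirical quantiles on \([\epsilon_T,1-\epsilon_T]\).

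\textbf{Step 2: linearization and inversion.} Apply the mean-value theorem to \(Q_{j0}\) between \(p\) and \(F_{j0}(\hat Q_j^{\mathrm{E}}(p))\):
\[
\hat Q_j^{\mathrm{E}}(p)-Q_{j0}(p)=\frac{F_{j0}\bigl(\hat Q_j^{\mathrm{E}}(p)\bigr)-p}{f_{j0}\bigl(Q_{j0}(\tilde p)\bigr)} .
\]
The tail condition~\eqref{eq:quantile-density-tail-condition} bounds the log-derivative of the quantile density. This gives \(f_{j0}(Q_{j0}(\tilde p))/f_{j0}(Q_{j0}(p))\to1\) uniformly, because \(|\tilde p-p|=o(p(1-p))\). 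Next I use
\[
F_{j0}\bigl(\hat Q_j^{\mathrm{E}}(p)\bigr)-p=-T^{-1/2}\,\mathbb{W}_{j,T}\bigl(\hat Q_j^{\mathrm{E}}(p)\bigr)+\text{remainder}.
\]
I then replace \(\mathbb{W}_{j,T}(\hat Q_j^{\mathrm{E}}(p))\) by \(\mathbb{W}_{j,T}(Q_{j0}(p))\), which is the Bahadur--Kiefer step. For the indicator part this follows from weighted asymptotic equicontinuity of the innovation empirical process over shrinking neighbourhoods in the rank scale. For the \(\boldsymbol{\Gamma}_j\) part it follows from continuity of \(\boldsymbol{\Gamma}_j\circ Q_{j0}\) after the density weight is taken out. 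Multiplying by \(\vartheta_j(p)\), whose factor \(f_{j0}(Q_{j0}(p))\) cancels the \(1/f_{j0}\), gives the weighted sup statement. Finally, substituting equation~\eqref{eq:beta-linearization} for \(\boldsymbol{\Delta}_T\) produces \(\Xi_{j,t}+\boldsymbol{r}_j^{\prime}\boldsymbol{L}_t\) with \(\boldsymbol{r}_j=-\boldsymbol{\Gamma}_j\circ Q_{j0}/f_{j0}\circ Q_{j0}\). The sign is right: the \(-\mathbb{W}\) term contributes \(p-\mathbf{1}\{\cdot\}\).

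\textbf{Step 3: the identity for \(\boldsymbol{r}_j\).} Equation~\eqref{eq:generated-quantile-residual-jacobian} is immediate. Plugging equation~\eqref{eq:residual-distribution-derivative} into the definition of \(\boldsymbol{r}_j\), the \(f_{j0}\) factors cancel.

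\textbf{Main obstacle.} I expect the uniformity near the edges of \(\mathcal{I}_T\) to be the hard part. Two things must hold there. The weighted equicontinuity must survive at ranks of order \(\log T/T\), and the random argument \(\hat Q_j^{\mathrm{E}}(p)\) must not leave the region where the weight \([F(1-F)]^{1/2-\zeta_j}\) is comparable to its value at \(Q_{j0}(p)\). I would handle this with the Step 1 ratio bound. I would combine it with the standard Csörgő--Révész weighted quantile approximation for the infeasible part. For the generated part I would use the slack \(\zeta_j>0\), which turns the \(O_{\mathbb{P}}(1)\) term \(\boldsymbol{\Gamma}_j^{\prime}\boldsymbol{\Delta}_T\) into \(o_{\mathbb{P}}(1)\) increments over neighbourhoods of relative size \(o(1)\).
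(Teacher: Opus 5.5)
Your architecture matches the paper's: generalized-inverse inequalities, a mean-value step (you expand \(Q_{j0}\), the paper expands \(F_{j0}\)), replacement of \(\hat Q_j^{\mathrm{E}}(p)\) by \(Q_{j0}(p)\) inside the residual empirical process by local equicontinuity, substitution of \eqref{eq:beta-linearization}, and a Step~3 that is exactly the paper's derivation of \eqref{eq:generated-quantile-residual-jacobian}. The gap is the ratio control in Step~1. Both uses in Step~2 and your treatment of the edges depend on it.

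Equation \eqref{eq:residual-empirical-process-expansion} controls errors only up to \(T^{-1/2}[F_{j0}(1-F_{j0})]^{1/2-\zeta_j}\). At \(u=\hat Q_j^{\mathrm{E}}(p)\) this weight sits at \(\tilde p=F_{j0}(\hat Q_j^{\mathrm{E}}(p))\), not at \(p\). Writing the bound with \([p(1-p)]^{1/2-\zeta_j}\) therefore already presupposes \(\tilde p\asymp p\).

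More seriously, even in that form the arithmetic fails. The requirement \(T^{-1/2}p^{1/2-\zeta_j}=o(p)\) uniformly over \(\epsilon_T\leq p\leq 1/2\) is equivalent to \(T\epsilon_T^{1+2\zeta_j}\to\infty\), and \(T\epsilon_T/\log T\to\infty\) does not imply this. For \(\epsilon_T=(\log T)^2/T\), the relative error at \(p=\epsilon_T\) is \(T^{\zeta_j}(\log T)^{-1-2\zeta_j}\to\infty\). Solving the implicit inequality \(|\tilde p-p|\lesssim T^{-1/2}\tilde p^{1/2-\zeta_j}\) gives \(\tilde p\asymp p\) only for \(p\gg T^{-1/(1+2\zeta_j)}\).

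So on \([\epsilon_T,T^{-1/(1+2\zeta_j)}]\), and on its mirror image near one, you lack even a two-sided ratio bound. Neither \(f_{j0}(Q_{j0}(\tilde p))/f_{j0}(Q_{j0}(p))\to1\) nor the comparability of the weights at \(\tilde p\) and \(p\) (which your weighted equicontinuity replacement needs) is justified there. Your appeal to the slack \(\zeta_j>0\) presupposes neighbourhoods of relative size \(o(1)\), i.e.\ the same control. In fact the slack is exactly what makes the weighted bound too coarse relative to \(p\). For the infeasible innovations the conclusion \(\tilde p\asymp p\) is true, but not for the reason you give. For the generated perturbation, the weighted expansion says nothing at that scale.

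The paper's sketch leaves this edge region to the cited inverse-map results, so you have to supply the ratio control from another source. Two routes work:
\begin{enumerate}
\item Split the two parts. For the infeasible part, ratio limit theorems for the uniform empirical quantile function of \(F_{j0}(U_{j1}),\ldots,F_{j0}(U_{jT})\) give uniform ratio consistency on \([\epsilon_T,1-\epsilon_T]\) already when \(T\epsilon_T/\log\log T\to\infty\) (Wellner, 1978; Shorack and Wellner, 1986, Ch.~10). For the generated part you need a separate bound \(\hat F_{j,T}-F_{j,T}=o_{\mathbb{P}}(F_{j0})\), uniformly on \(\{u:\epsilon_T/2\leq F_{j0}(u)\leq 1/2\}\), plus the analogue with \(1-F_{j0}\) in the upper tail. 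In the location--scale model you can get this by sandwiching \(\hat U_t\) between \(U_t\pm\eta_T(1+|U_t|)\), with \(\eta_T\to0\) fast enough, and using \eqref{eq:quantile-density-tail-condition}.
\item Strengthen admissibility to \(T\epsilon_T^{1+2\zeta_j}\to\infty\). This stays compatible with \eqref{eq:sufficient-tail-rate} when \(\kappa(1-2\zeta_j)>1+2\zeta_j\). The implicit inequality then gives \(\sup_{p\in\mathcal{I}_T}|\tilde p-p|/[p(1-p)]\to0\) in probability without circularity, and the rest of your Steps~1--2 goes through.
\end{enumerate}
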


\begin{proof}
The generalized inverse satisfies
\[
\hat{F}_{j,T}\left(
\hat{Q}_{j}^{\mathrm{E}}\left(p\right)-
\right)
\leq
p
\leq
\hat{F}_{j,T}\left(
\hat{Q}_{j}^{\mathrm{E}}\left(p\right)
\right),
\]
and the jump containing \(p\) has size at most \(T^{-1}\). Uniform consistency of \(\hat{F}_{j,T}\) and positivity of \(f_{j0}\) on the central region imply uniform consistency of \(\hat{Q}_{j}^{\mathrm{E}}\) there. A Taylor expansion of \(F_{j0}\) around \(Q_{j0}\left(p\right)\), together with the local stochastic equicontinuity of the residual empirical process, gives the weighted inverse relation
\[
\sup_{p\in\mathcal{I}_T}
\frac{
\vartheta_j\left(p\right)
}{
f_{j0}\left(Q_{j0}\left(p\right)\right)
}
\left|
\sqrt{T}
f_{j0}\left(Q_{j0}\left(p\right)\right)
\left[
\hat{Q}_{j}^{\mathrm{E}}\left(p\right)
-
Q_{j0}\left(p\right)
\right]
+
\sqrt{T}
\left[
\hat{F}_{j,T}\left(Q_{j0}\left(p\right)\right)-p
\right]
\right|
=
o_{\mathbb{P}}\left(1\right).
\]
The condition \(T\epsilon_T/\log T\rightarrow\infty\) makes the quantile discretization error negligible uniformly on \(\mathcal{I}_T\). Substituting equation~\eqref{eq:residual-empirical-process-expansion} at \(u=Q_{j0}\left(p\right)\) gives equation~\eqref{eq:generated-quantile-interior-expansion}. The signs follow from inversion: a positive residual-distribution error moves the estimated quantile downward. Finally, equations~\eqref{eq:residual-distribution-derivative} and the definition of \(\boldsymbol{r}_j\) give equation~\eqref{eq:generated-quantile-residual-jacobian}. The inverse-map argument is formalized in \citet[Example~20.5 and Chapter~21]{VanDerVaart1998}, \citet[Section~2.2.4 and Theorem~2.8]{Kosorok2008}, and \citet[Lemmas~3.10.21 and~3.10.24, and Example~3.10.25]{VanDerVaartWellner2023}.
\end{proof}

In the scalar location--scale model, equation~\eqref{eq:generated-quantile-residual-jacobian} reduces to equation~\eqref{eq:scalar-generated-residual-correction}. Thus, estimation of the conditional mean produces a rank-independent correction, whereas estimation of the conditional scale produces a correction that varies with \(Q_0\left(p\right)\). This distinction becomes relevant when the shock moves the impact rank into a tail of the innovation distribution.

\subsection{Weighted completion at the tails}
\label{app:residual-quantiles-weighted-tails}

The central-rank expansion must be extended before it can be applied to the exact population response. The weight in equation~\eqref{eq:weighted-quantile-norm} is chosen so that the empirical-quantile fluctuation remains bounded while the response derivatives remain integrable.

\begin{lemma}[Full weighted empirical-quantile expansion]
\label{lem:generated-quantile-full-range}
Under the conditions of Lemma~\ref{lem:generated-quantile-interior}, Assumption~\ref{ass:scalar-innovations} in the scalar model, and the corresponding componentwise tail conditions in the vector model,
\begin{equation}
\left\|
\sqrt{T}
\left(
\hat{Q}_{j}^{\mathrm{E}}-Q_{j0}
\right)
-
\frac{1}{\sqrt{T}}
\sum_{t=1}^{T}
\left[
\Xi_{j,t}
+
\boldsymbol{r}_j\left(\cdot\right)^{\prime}\boldsymbol{L}_t
\right]
\right\|_j
=
o_{\mathbb{P}}\left(1\right).
\label{eq:generated-quantile-full-range}
\end{equation}
The conclusion holds jointly over the fixed number of innovation components.
\end{lemma}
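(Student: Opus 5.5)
The plan is to split the supremum defining \(\|\cdot\|_j\) into the central region \(\mathcal{I}_T=[\epsilon_T,1-\epsilon_T]\) and the two tail regions \((0,\epsilon_T)\) and \((1-\epsilon_T,1)\). On \(\mathcal{I}_T\), Lemma~\ref{lem:generated-quantile-interior} already gives the weighted expansion, so only the tails remain. On each tail I would show separately that the three terms
\[
\vartheta_j(p)\sqrt{T}\bigl|\hat{Q}_j^{\mathrm{E}}(p)-Q_{j0}(p)\bigr|,\qquad
\vartheta_j(p)\Bigl|T^{-1/2}\sum_{t}\Xi_{j,t}(p)\Bigr|,\qquad
\vartheta_j(p)\bigl\|\boldsymbol{r}_j(p)\bigr\|\,\bigl|T^{-1/2}\sum_t\boldsymbol{L}_t\bigr|
\]
have suprema over \(p<\epsilon_T\) (and symmetrically \(p>1-\epsilon_T\)) that are \(o_{\mathbb{P}}(1)\). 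The triangle inequality then gives the full-range statement.

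\textbf{Step 1 (the generated-residual term).} The third term is handled directly. Assumption~\ref{ass:response-tail-continuity} (or Assumption~\ref{ass:scalar-innovations} in the scalar model) bounds \(\vartheta_j\|\boldsymbol{r}_j\|\). This bound alone does not give vanishing on the tails. I would therefore use the explicit form in equation~\eqref{eq:scalar-generated-residual-correction}, where \(\boldsymbol{r}\) is affine in \(Q_0(p)\), together with the fact that \(\vartheta_j(p)(1+|Q_{j0}(p)|)\to0\) at the endpoints whenever the strict inequality on \(\zeta_j\) leaves slack. Multiplying by \(T^{-1/2}\sum_t\boldsymbol{L}_t=O_{\mathbb{P}}(1)\) gives \(o_{\mathbb{P}}(1)\).

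\textbf{Step 2 (the infeasible quantile term).} For the second term, \(\vartheta_j\Xi_{j,t}\) equals \((p-\mathbf{1}\{P_{jt}\le p\})/[p(1-p)]^{1/2-\zeta_j}\). This is a weighted uniform empirical process with exponent below \(1/2\). By the Chibisov--O'Reilly theorem, or its weakly dependent analogue supplied by Assumption~\ref{ass:first-order}(i), the process is tight in this weighted norm, and its modulus near the endpoints tends to zero. Hence the supremum over \(p<\epsilon_T\) is \(o_{\mathbb{P}}(1)\).

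\textbf{Step 3 (the empirical quantile itself, the main obstacle).} Here one must control \(\vartheta_j(p)\sqrt{T}|\hat{Q}_j^{\mathrm{E}}(p)-Q_{j0}(p)|\) uniformly in the extreme ranks, including \(p<1/T\), where the estimate is a residual minimum.

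I would first replace generated residuals by true innovations. Using \(\hat{U}_{jt}=U_{jt}+T^{-1/2}\boldsymbol{J}_{j,t}^{\prime}\boldsymbol{\Delta}_T+o_{\mathbb{P}}(T^{-1/2}(1+|U_{jt}|))\), uniformly in \(t\) by the envelope moments, the order statistics move by at most \(T^{-1/2}O_{\mathbb{P}}(1+|U_{j,(r)}|)\). In the tail this perturbation is absorbed by the same argument as Step 1.

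For the infeasible empirical quantile \(Q_{j0}(\mathbb{G}_T^{-1}(p))\), where \(\mathbb{G}_T\) is the uniform empirical distribution function, I would apply the Csörgő--Révész weighted quantile approximation. Condition~\eqref{eq:quantile-density-tail-condition} is exactly the Csörgő--Révész condition. With eventual tail monotonicity of \(f_0\), it makes
\[
f_{j0}(Q_{j0}(p))\,\sqrt{T}\,\bigl[\hat{Q}_j(p)-Q_{j0}(p)\bigr]
\]
comparable to the uniform quantile process, with weights \([p(1-p)]^{-(1/2-\zeta_j)}\) that vanish in the extreme region. Following Csörgő and Horváth, I would first handle \(p\in[1/T,\epsilon_T]\) by this comparison and then handle \(p<1/T\) by the tail bound on the minimum order statistic. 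The second part uses \(\vartheta_j(p)\to0\) faster than the growth of \(|Q_{j0}|\), which follows from the strict margin in \(\zeta_j\).

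\textbf{Joint statement.} Joint validity over the finitely many components follows because each bound is componentwise and finite maxima of \(o_{\mathbb{P}}(1)\) terms are \(o_{\mathbb{P}}(1)\).
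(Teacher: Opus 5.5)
Your route differs from the paper's, and the difference is where the gap lies.

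\textbf{How the two routes differ.} The paper does not make the three pieces vanish separately below \(\epsilon_T\). It carries the linearization itself into the tails by inverting the weighted generated-residual expansion of Lemma~\ref{lem:generated-residual-empirical-process}. For that inversion it uses the generalized-inverse inequalities and the comparison of \(f_{j0}(Q_{j0}(\cdot))\) over rank windows of width \(c\,p(1-p)\) implied by \eqref{eq:quantile-density-tail-condition}. In that scheme \(\boldsymbol{r}_j(p)^{\prime}\boldsymbol{L}_t\) stays inside the approximation, so only \(\sup_p\vartheta_j(p)\|\boldsymbol{r}_j(p)\|<\infty\) is invoked. Your route has the merit of reducing the tails to classical i.i.d.\ quantile-process results.

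\textbf{The gap.} Your term-by-term strategy needs \(\vartheta_j(p)\|\boldsymbol{r}_j(p)\|\to0\) and \(\vartheta_j(p)(1+|Q_{j0}(p)|)\to0\) at the endpoints, and no hypothesis supplies this.
\begin{enumerate}[label=(\alph*),leftmargin=2.2em]
\item The assumptions fix one \(\zeta_j\) and give only a bound there. Decay at \(\zeta_j\) would follow from the same bound at some \(\zeta'<\zeta_j\), which is strictly stronger and not assumed.
\item Condition \eqref{eq:response-tail-integrability} survives lowering \(\zeta_j\) but not raising it. The two tail conditions therefore pull in opposite directions, and there is no automatic margin.
\item The affine form of \(\boldsymbol{r}\) is specific to the scalar location--scale model. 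In the vector model, \(\boldsymbol{r}_j(p)=\mathbb{E}[\boldsymbol{J}_{j,t}\mid U_{jt}=Q_{j0}(p)]\) has no such structure, so Step~1 does not cover the vector claim.
\item Step~3 reuses the same unproved decay to absorb the generated-residual shift of the order statistics. With derivative envelopes of order \(1+|y|^m\), \(\max_{t\leq T}\) of the envelope is \(O_{\mathbb{P}}(T^{m/q})\), not \(O_{\mathbb{P}}(1)\), so a rate is needed rather than mere decay.
\item Step~3 uses it again for \(p<1/T\). There the requirement is the quantitative statement \(\sup_{p\leq 1/T}\sqrt{T}\,\vartheta_j(p)(1+|Q_{j0}(p)|)\to0\).
\end{enumerate}

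\textbf{Why the decay is unavoidable.} Your remark that boundedness of \(\vartheta_j\|\boldsymbol{r}_j\|\) cannot suffice is correct, and it bears on the conclusion itself, not only on your route.
\begin{enumerate}[label=(\alph*),leftmargin=2.2em]
\item For the smallest ranks, \(\hat{Q}_j^{\mathrm{E}}(p)=\hat{U}_{j,(1)}\) is frozen while \(Q_{j0}(p)\to-\infty\).
\item The weighted \(\Xi\)-term is then only \(\sqrt{T}\,p\,[p(1-p)]^{\zeta_j-1/2}\), which is negligible.
\item The generated shift of that single residual is \(\boldsymbol{J}_{j,t^*}^{\prime}\boldsymbol{\Delta}_T/\sqrt{T}\), not \(\boldsymbol{r}_j(p)^{\prime}\boldsymbol{\Delta}_T/\sqrt{T}\).
\end{enumerate}
Hence the norm in \eqref{eq:generated-quantile-full-range} can be \(o_{\mathbb{P}}(1)\) at an unbounded endpoint only if \(\vartheta_j(p)(1+|Q_{j0}(p)|)\to0\) there. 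The paper's sketch handles the extreme order statistics by appeal to endpoint moments and the weighted bounds, which give boundedness but not this decay.

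\textbf{Repair.} State the decay as an explicit tail hypothesis rather than derive it from a margin in \(\zeta_j\). For instance, \(f_{j0}(Q_{j0}(p))(1+|Q_{j0}(p)|)=o([p(1-p)]^{1-\zeta_j})\) holds for Gaussian, Laplace and Student-\(t\) tails. It gives \(\vartheta_j(p)(1+|Q_{j0}(p)|)=o([p(1-p)]^{1/2})\), which covers Step~1 and the range \(p<1/T\). With that hypothesis, plus bounded envelopes or a matching choice of \(\epsilon_T\), the rest of your plan is sound:
\begin{enumerate}[label=(\alph*),leftmargin=2.2em]
\item Chibisov--O'Reilly applies directly, because the \(U_{jt}\) are i.i.d.
\item The Cs\"org\H{o}--R\'ev\'esz comparison, with weighted uniform quantile-process bounds, handles \([1/T,\epsilon_T]\). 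There it is the weighted process, not the weight \([p(1-p)]^{-(1/2-\zeta_j)}\), that becomes small.
\end{enumerate}
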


\begin{proof}
Lemma~\ref{lem:generated-quantile-interior} proves the result on \(\mathcal{I}_T\). It remains to control ranks below \(\epsilon_T\) and above \(1-\epsilon_T\). Condition~\eqref{eq:quantile-density-tail-condition} implies that the quantile density changes by at most a fixed factor over probability intervals whose width is proportional to \(p\left(1-p\right)\). Specifically, for every fixed \(c\in\left(0,1\right)\),
\[
\sup_{0<p<1}
\sup_{\substack{0<v<1\\
\left|v-p\right|\leq c p\left(1-p\right)}}
\frac{
f_{j0}\left(Q_{j0}\left(v\right)\right)
}{
f_{j0}\left(Q_{j0}\left(p\right)\right)
}
<
\infty,
\]
and the same bound holds after interchanging \(p\) and \(v\). This follows by integrating
\[
\frac{\mathrm{d}}{\mathrm{d}p}
\log f_{j0}\left(Q_{j0}\left(p\right)\right)
=
\frac{
f_{j0}^{\prime}\left(Q_{j0}\left(p\right)\right)
}{
f_{j0}\left(Q_{j0}\left(p\right)\right)^2
}.
\]

The weighted residual empirical-process expansion in equation~\eqref{eq:residual-empirical-process-expansion}, the preceding local density comparison, and the generalized-inverse inequalities used in Lemma~\ref{lem:generated-quantile-interior} give the same linearization on the two tail regions. The extreme order statistics are controlled by the endpoint moments and the weighted bounds in Assumption~\ref{ass:scalar-innovations}; the condition \(T\epsilon_T/\log T\rightarrow\infty\) makes the empirical-quantile discretization error negligible where the central and tail arguments meet. The generated-residual term is controlled by
\[
\sup_{0<p<1}
\vartheta_j\left(p\right)
\left\|
\boldsymbol{r}_j\left(p\right)
\right\|
<
\infty.
\]
Combining the lower tail, central region, and upper tail proves equation~\eqref{eq:generated-quantile-full-range}. This is the weighted inverse-map argument: the ordinary inverse map is treated in \citet[Lemmas~3.10.21 and~3.10.24]{VanDerVaartWellner2023}, while the extension to nonuniform sup-norms follows the quasi-Hadamard delta-method formulation in \citet{BeutnerZaehle2016}. Weighted empirical-process bounds for dynamic models are developed in \citet[Chapters~2 and~7]{Koul2002}.
\end{proof}

\begin{lemma}[Continuity at ordinary and shifted tails]
\label{lem:quantile-tail-control}
Suppose Assumption~\ref{ass:response-tail-continuity} holds. For \(\epsilon\in\left(0,1/2\right)\), let \(\dot{\Psi}_{h,\epsilon}^{\mathrm{dist}}\) denote \(\dot{\Psi}_{h}^{\mathrm{dist}}\) with every ordinary-rank integral restricted to \(\left[\epsilon,1-\epsilon\right]\). Let \(\dot{\Psi}_{h,\epsilon}^{\mathrm{imp}}\) denote the shifted-impact map after the change of variables \(r=\tau_{\delta}\left(p\right)\), with the integral in \(r\) restricted to the same interval. Then
\begin{equation}
\sup_{\left\|\boldsymbol{q}\right\|\leq1}
\left|
\left(
\dot{\Psi}_{h}^{\mathrm{dist}}
-
\dot{\Psi}_{h,\epsilon}^{\mathrm{dist}}
\right)
\left[\boldsymbol{q}\right]
\right|
+
\sup_{\left\|\boldsymbol{q}\right\|\leq1}
\left|
\left(
\dot{\Psi}_{h}^{\mathrm{imp}}
-
\dot{\Psi}_{h,\epsilon}^{\mathrm{imp}}
\right)
\left[\boldsymbol{q}\right]
\right|
\rightarrow
0
\label{eq:quantile-tail-operator-control}
\end{equation}
as \(\epsilon\downarrow0\).
\end{lemma}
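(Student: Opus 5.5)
The plan is to reduce both operator differences to tail integrals of the propagation weights against the weight function \(1/\vartheta_j\), and then to use the integrability condition~\eqref{eq:response-tail-integrability} together with dominated convergence. First I write the two derivative maps in their integrated form from Section~\ref{sec:main-theory-feasible}:
\[
\dot{\Psi}_{h}^{\mathrm{dist}}\left[\boldsymbol{q}\right]
=
\sum_{j=1}^{n}\int_{0}^{1}\omega_{h,j}^{\mathrm{dist}}\left(p\right)q_j\left(p\right)\,\mathrm{d}p,
\qquad
\dot{\Psi}_{h}^{\mathrm{imp}}\left[\boldsymbol{q}\right]
=
\int_{0}^{1}\omega_{h,k}^{\mathrm{imp}}\left(\rho_{\delta}\left(r\right)\right)\rho_{\delta}'\left(r\right)q_k\left(r\right)\,\mathrm{d}r.
\]
The second identity follows from the change of variables \(r=\tau_{\delta}\left(p\right)\), \(p=\rho_{\delta}\left(r\right)\), \(\mathrm{d}p=\rho_{\delta}'\left(r\right)\mathrm{d}r\). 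Since \(\tau_\delta\) is a strictly increasing bijection of \(\left(0,1\right)\), the image of \(\left(0,1\right)\) is again \(\left(0,1\right)\), so this is exactly the form in which \(\dot{\Psi}_{h,\epsilon}^{\mathrm{imp}}\) is defined, with the \(r\)-integral restricted to \(\left[\epsilon,1-\epsilon\right]\). The interchange of the conditional expectation defining \(\omega\) with the integral is justified by Fubini, using the conditional \(L^{2+\eta}\) version of Assumption~\ref{ass:response-tail-continuity}.

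Next, for \(\left\|\boldsymbol{q}\right\|\leq1\), the definition of the weighted norm~\eqref{eq:weighted-quantile-norm} gives the pointwise bound \(\left|q_j\left(p\right)\right|\leq1/\vartheta_j\left(p\right)\) for every \(p\in\left(0,1\right)\). Writing \(E_\epsilon=\left(0,\epsilon\right)\cup\left(1-\epsilon,1\right)\), the two differences are bounded, uniformly over the unit ball, by
\[
\sum_{j=1}^{n}\int_{E_\epsilon}\frac{\left|\omega_{h,j}^{\mathrm{dist}}\left(p\right)\right|}{\vartheta_j\left(p\right)}\,\mathrm{d}p
+
\int_{E_\epsilon}\frac{\left|\omega_{h,k}^{\mathrm{imp}}\left(\rho_{\delta}\left(r\right)\right)\right|\rho_{\delta}'\left(r\right)}{\vartheta_k\left(r\right)}\,\mathrm{d}r .
\]
This bound does not depend on \(\boldsymbol{q}\), so the suprema in~\eqref{eq:quantile-tail-operator-control} are controlled by it directly.

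Finally, by~\eqref{eq:response-tail-integrability} each integrand is in \(L^1\left(0,1\right)\). Because \(\mathbf{1}_{E_\epsilon}\to0\) pointwise as \(\epsilon\downarrow0\), dominated convergence (equivalently, absolute continuity of the Lebesgue integral) sends each term to zero. Since there are finitely many components, the sum also tends to zero, which proves~\eqref{eq:quantile-tail-operator-control}.

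The only delicate step is the first one: the impact map must be restricted in the \(r=\tau_\delta(p)\) variable rather than in \(p\). Only in that form does the integrand carry the Jacobian \(\rho_\delta'\) that appears in Assumption~\ref{ass:response-tail-continuity}. Truncating in \(p\) would instead call for \(\vartheta_k\left(\tau_\delta(p)\right)\)-weighted integrability, which is not assumed. Measurability and finiteness of the weights \(\omega\) almost everywhere follow from the moment envelopes in Assumption~\ref{ass:first-order}(i). Beyond that, the argument is routine.
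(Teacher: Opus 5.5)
Your proposal is correct and matches the paper's proof: you bound \(|q_j|\) by \(1/\vartheta_j\) on the unit ball and write the shifted-impact map in the \(r=\tau_\delta(p)\) variable so that the Jacobian \(\rho_\delta'\) appears; both tail integrals then vanish by the integrability condition~\eqref{eq:response-tail-integrability}. One small correction to your closing remark: truncating the impact integral in \(p\) would also work and needs no extra assumption, because \(\int_0^1|\omega_{h,k}^{\mathrm{imp}}(p)|/\vartheta_k(\tau_\delta(p))\,\mathrm{d}p\) becomes exactly the assumed integral after the substitution \(p=\rho_\delta(r)\).
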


\begin{proof}
For an ordinary-rank map and \(\left\|\boldsymbol{q}\right\|\leq1\), the absolute tail contribution is bounded by
\[
\sum_{j=1}^{n}
\left(
\int_{0}^{\epsilon}
+
\int_{1-\epsilon}^{1}
\right)
\frac{
\left|
\omega_{h,j}^{\mathrm{dist}}\left(p\right)
\right|
}{
\vartheta_j\left(p\right)
}
\,\dif p.
\]
For the shifted-impact map, changing variables from the original impact rank to the shifted rank gives the bound
\[
\left(
\int_{0}^{\epsilon}
+
\int_{1-\epsilon}^{1}
\right)
\frac{
\left|
\omega_{h,k}^{\mathrm{imp}}
\left(
\rho_{\delta}\left(p\right)
\right)
\right|
\rho_{\delta}^{\prime}\left(p\right)
}{
\vartheta_k\left(p\right)
}
\,\dif p.
\]
Both expressions converge to zero by equation~\eqref{eq:response-tail-integrability}.
\end{proof}

\subsection{Ordinary, future, and shifted impact ranks}
\label{app:residual-quantiles-response-parts}

The response derivative uses the empirical quantiles in three places. At impact, the unshocked path and the unchanged components of the shocked path use ordinary ranks. At later dates, both paths use common future ranks. The selected shocked component at impact uses the shifted rank. To display these contributions separately, define
\begin{equation}
\begin{aligned}
\omega_{h,j}^{\mathrm{unshift}}\left(p\right)
&=
\mathbb{E}\left[
\left(
\mathbf{1}\left\{j\neq k\right\}
\Lambda_{h,1,j}^{\delta}
-
\Lambda_{h,1,j}^{0}
\right)
\mathrel{\big|}
P_{j1}=p
\right],\\
\omega_{h,j}^{\mathrm{future}}\left(p\right)
&=
\sum_{s=2}^{h}
\mathbb{E}\left[
\Lambda_{h,s,j}^{\delta}
-
\Lambda_{h,s,j}^{0}
\mathrel{\big|}
P_{js}=p
\right].
\end{aligned}
\label{eq:unshifted-future-propagation-weights}
\end{equation}
Then
\[
\omega_{h,j}^{\mathrm{dist}}
=
\omega_{h,j}^{\mathrm{unshift}}
+
\omega_{h,j}^{\mathrm{future}}.
\]
For a vector of quantile perturbations \(\boldsymbol{q}\), let
\[
\dot{\Psi}_{h}^{\mathrm{unshift}}\left[\boldsymbol{q}\right]
=
\sum_{j=1}^{n}
\int_{0}^{1}
\omega_{h,j}^{\mathrm{unshift}}\left(p\right)
q_j\left(p\right)
\,\dif p,
\qquad
\dot{\Psi}_{h}^{\mathrm{future}}\left[\boldsymbol{q}\right]
=
\sum_{j=1}^{n}
\int_{0}^{1}
\omega_{h,j}^{\mathrm{future}}\left(p\right)
q_j\left(p\right)
\,\dif p.
\]
Consequently,
\[
\dot{\Psi}_{h}^{\mathrm{dist}}
=
\dot{\Psi}_{h}^{\mathrm{unshift}}
+
\dot{\Psi}_{h}^{\mathrm{future}}.
\]
The shifted-impact map remains \(\dot{\Psi}_{h}^{\mathrm{imp}}\) from equation~\eqref{eq:response-quantile-derivatives}.

\begin{lemma}[Projection through the response derivatives]
\label{lem:projected-generated-quantiles}
Suppose Assumption~\ref{ass:response-tail-continuity} holds. Let \(\boldsymbol{\Xi}_t\) and \(\boldsymbol{R}_t\) be defined as in Section~\ref{sec:main-theory}. Then
\begin{equation}
\begin{aligned}
\dot{\Psi}_{h}^{\mathrm{unshift}}
\left[
\sqrt{T}
\left(
\boldsymbol{\hat{Q}}^{\mathrm{E}}
-
\boldsymbol{Q}_0
\right)
\right]
&=
\frac{1}{\sqrt{T}}
\sum_{t=1}^{T}
\left\{
\dot{\Psi}_{h}^{\mathrm{unshift}}
\left[\boldsymbol{\Xi}_t\right]
+
\dot{\Psi}_{h}^{\mathrm{unshift}}
\left[\boldsymbol{R}_t\right]
\right\}
+
o_{\mathbb{P}}\left(1\right),\\
\dot{\Psi}_{h}^{\mathrm{future}}
\left[
\sqrt{T}
\left(
\boldsymbol{\hat{Q}}^{\mathrm{E}}
-
\boldsymbol{Q}_0
\right)
\right]
&=
\frac{1}{\sqrt{T}}
\sum_{t=1}^{T}
\left\{
\dot{\Psi}_{h}^{\mathrm{future}}
\left[\boldsymbol{\Xi}_t\right]
+
\dot{\Psi}_{h}^{\mathrm{future}}
\left[\boldsymbol{R}_t\right]
\right\}
+
o_{\mathbb{P}}\left(1\right),\\
\dot{\Psi}_{h}^{\mathrm{imp}}
\left[
\sqrt{T}
\left(
\boldsymbol{\hat{Q}}^{\mathrm{E}}
-
\boldsymbol{Q}_0
\right)
\right]
&=
\frac{1}{\sqrt{T}}
\sum_{t=1}^{T}
\left\{
\dot{\Psi}_{h}^{\mathrm{imp}}
\left[\boldsymbol{\Xi}_t\right]
+
\dot{\Psi}_{h}^{\mathrm{imp}}
\left[\boldsymbol{R}_t\right]
\right\}
+
o_{\mathbb{P}}\left(1\right).
\end{aligned}
\label{eq:projected-generated-quantile-expansions}
\end{equation}
\end{lemma}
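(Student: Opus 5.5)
The three maps are linear in \(\boldsymbol{q}\), so the plan is to apply each of them to the weighted remainder in Lemma~\ref{lem:generated-quantile-full-range} and show that the image is \(o_{\mathbb{P}}(1)\). This reduces the lemma to two facts. First, each of \(\dot{\Psi}_{h}^{\mathrm{unshift}}\), \(\dot{\Psi}_{h}^{\mathrm{future}}\) and \(\dot{\Psi}_{h}^{\mathrm{imp}}\) is a bounded linear functional under the product norm \(\|\cdot\|\) from equation~\eqref{eq:weighted-quantile-norm}. Second, each of them commutes with the finite sum over \(t\).

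Write
\[
\boldsymbol{\mathcal{R}}_T=\sqrt{T}\left(\boldsymbol{\hat{Q}}^{\mathrm{E}}-\boldsymbol{Q}_0\right)-T^{-1/2}\sum_{t=1}^{T}\left(\boldsymbol{\Xi}_t+\boldsymbol{R}_t\right).
\]
Lemma~\ref{lem:generated-quantile-full-range} gives \(\|\boldsymbol{\mathcal{R}}_T\|=o_{\mathbb{P}}(1)\), jointly over components.

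For boundedness, take \(\|\boldsymbol{q}\|\leq1\), so that \(|q_j(p)|\leq 1/\vartheta_j(p)\) for every \(j\) and \(p\). Then
\[
\bigl|\dot{\Psi}_{h}^{\mathrm{unshift}}[\boldsymbol{q}]\bigr|\leq\sum_{j=1}^{n}\int_0^1\frac{|\omega_{h,j}^{\mathrm{unshift}}(p)|}{\vartheta_j(p)}\,\dif p,
\]
and the same bound holds for the future map. These integrals are finite by equation~\eqref{eq:response-tail-integrability}. That condition is stated for \(\omega^{\mathrm{dist}}=\omega^{\mathrm{unshift}}+\omega^{\mathrm{future}}\), so I would use its conditional \(L^{2+\eta}\) version. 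That version dominates each conditional expectation separately and therefore controls the two pieces individually, not just their sum.

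For the impact map, change variables \(r=\tau_\delta(p)\), so that \(\dif p=\rho_\delta'(r)\,\dif r\). This gives
\[
\dot{\Psi}_{h}^{\mathrm{imp}}[\boldsymbol{q}]=\int_0^1\omega_{h,k}^{\mathrm{imp}}\bigl(\rho_\delta(r)\bigr)\rho_\delta'(r)\,q_k(r)\,\dif r,
\]
whose absolute value is bounded by the second integral in equation~\eqref{eq:response-tail-integrability}. Consequently each map applied to \(\boldsymbol{\mathcal{R}}_T\) has absolute value at most \(C\|\boldsymbol{\mathcal{R}}_T\|=o_{\mathbb{P}}(1)\).

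Linearity then splits \(\dot{\Psi}[T^{-1/2}\sum_t(\boldsymbol{\Xi}_t+\boldsymbol{R}_t)]\) into \(T^{-1/2}\sum_t\{\dot{\Psi}[\boldsymbol{\Xi}_t]+\dot{\Psi}[\boldsymbol{R}_t]\}\). This step is legitimate only if each summand is itself a well-defined, finite integral. I would check that directly. The weighted influence function \(\vartheta_j\Xi_{j,t}\) is bounded by \([p(1-p)]^{\zeta_j-1/2}\cdot|p-\mathbf{1}\{\cdot\}|\leq[p(1-p)]^{\zeta_j}\leq1\), so \(\|\boldsymbol{\Xi}_t\|\leq1\). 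For the generated-residual term, \(\|\boldsymbol{R}_t\|\leq\max_j\sup_p\vartheta_j\|\boldsymbol{r}_j\|\cdot\|\boldsymbol{L}_t\|\). This is finite almost surely by the last condition in Assumption~\ref{ass:response-tail-continuity} together with the moment condition on \(\boldsymbol{L}_t\).

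The step I expect to be the main obstacle is the measurability and interchange issue around \(\hat{Q}^{\mathrm{E}}\). The operator is applied to a random, piecewise-constant function, and \(\|\boldsymbol{\mathcal{R}}_T\|\) involves a supremum over \((0,1)\). Both points should be handled by the outer-probability convention implicit in Lemma~\ref{lem:generated-quantile-full-range}. The deterministic bound \(|\dot{\Psi}[\boldsymbol{q}]|\leq C\|\boldsymbol{q}\|\) holds pathwise, so no further stochastic argument is needed. If one prefers to avoid relying on tail control of \(\boldsymbol{\mathcal{R}}_T\) near the endpoints, an alternative route splits each map into a part on \([\epsilon,1-\epsilon]\) and a tail part, following Lemma~\ref{lem:quantile-tail-control}. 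The central part is then handled by Lemma~\ref{lem:generated-quantile-interior}. The tail part is bounded by \(\sup_{\|\boldsymbol{q}\|\leq1}|(\dot{\Psi}-\dot{\Psi}_{\epsilon})[\boldsymbol{q}]|\cdot O_{\mathbb{P}}(1)\), which is made small by letting \(\epsilon\downarrow0\) after \(T\to\infty\).
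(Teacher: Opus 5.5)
Your proposal is correct and follows the paper's proof. Like the paper, you apply the full-range weighted expansion of Lemma~\ref{lem:generated-quantile-full-range} and push it through the three maps, which are bounded linear functionals in the weighted norm by equation~\eqref{eq:response-tail-integrability}, using the change of variables \(r=\tau_{\delta}(p)\) for the shifted-impact map; your remark that the unshifted and future pieces need separate control, obtained from the conditional \(L^{2+\eta}\) version, addresses a point the paper passes over.

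One small slip: the bound \(\|\boldsymbol{\Xi}_t\|\leq 1\) is false. For \(p\) just above \(V=F_{j0}(U_{jt})\), the weighted influence \(\vartheta_j(p)\Xi_{j,t}(p)\) has magnitude of order \(V^{\zeta_j-1/2}\), which is unbounded as \(V\downarrow0\). However, \(\|\boldsymbol{\Xi}_t\|<\infty\) almost surely, and that is all your well-definedness check requires.
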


\begin{proof}
Apply Lemma~\ref{lem:generated-quantile-full-range} componentwise. The ordinary-impact and future maps are continuous linear maps under equation~\eqref{eq:response-tail-integrability}. For the shifted-impact map, use \(r=\tau_{\delta}\left(p\right)\) to write
\[
\dot{\Psi}_{h}^{\mathrm{imp}}\left[\boldsymbol{q}\right]
=
\int_{0}^{1}
\omega_{h,k}^{\mathrm{imp}}
\left(
\rho_{\delta}\left(r\right)
\right)
\rho_{\delta}'\left(r\right)
q_k\left(r\right)
\,\dif r.
\]
The second integral in equation~\eqref{eq:response-tail-integrability} is the continuity bound for this map, and Lemma~\ref{lem:quantile-tail-control} records the corresponding tail approximation. Applying the three continuous maps to equation~\eqref{eq:generated-quantile-full-range} proves equation~\eqref{eq:projected-generated-quantile-expansions}.
\end{proof}

The empirical-process terms in the first two lines of equation~\eqref{eq:projected-generated-quantile-expansions} sum to \(Z_{h,t}^{\mathrm{dist}}\) in equation~\eqref{eq:influence-decomposition}. The empirical-process term in the third line is \(Z_{h,t}^{\mathrm{imp}}\). The three terms involving \(\boldsymbol{R}_t\) sum to \(Z_{h,t}^{\mathrm{res}}\). Thus, the main decomposition distinguishes the shifted impact innovation from the remaining uses of the innovation distribution, while retaining all effects of generated residuals in one term.

\subsection{Density cancellation}
\label{app:residual-quantiles-density-cancellation}

The feasible representation in Proposition~\ref{prop:density-free-influence} follows from two changes of variables. For any integrable ordinary-rank weight \(\omega\),
\begin{equation}
\begin{aligned}
\int_{0}^{1}
\omega\left(p\right)
\Xi_{j,t}\left(p\right)
\,\dif p
&=
\int_{\mathbb{R}}
\omega\left(F_{j0}\left(u\right)\right)
\left[
F_{j0}\left(u\right)
-
\mathbf{1}\left\{U_{jt}\leq u\right\}
\right]
\,\dif u,\\
\int_{0}^{1}
\omega\left(p\right)
\Xi_{k,t}\left(\tau_{\delta}\left(p\right)\right)
\,\dif p
&=
\int_{\mathbb{R}}
\omega\left(
\rho_{\delta}\left(F_{k0}\left(u\right)\right)
\right)
\rho_{\delta}'\left(F_{k0}\left(u\right)\right)
\left[
F_{k0}\left(u\right)
-
\mathbf{1}\left\{U_{kt}\leq u\right\}
\right]
\,\dif u.
\end{aligned}
\label{eq:quantile-density-cancellation}
\end{equation}
The first equality uses \(p=F_{j0}\left(u\right)\). The second first uses \(r=\tau_{\delta}\left(p\right)\), followed by \(r=F_{k0}\left(u\right)\). In both cases, the innovation density in \(\Xi_{j,t}\) cancels with the Jacobian of the quantile transformation.

Equation~\eqref{eq:generated-quantile-residual-jacobian} similarly gives
\begin{equation}
\begin{aligned}
\int_{0}^{1}
\omega\left(p\right)
\boldsymbol{r}_j\left(p\right)
\,\dif p
&=
\mathbb{E}\left[
\omega\left(P_{jt}\right)
\boldsymbol{J}_{j,t}
\right],\\
\int_{0}^{1}
\omega\left(p\right)
\boldsymbol{r}_k\left(\tau_{\delta}\left(p\right)\right)
\,\dif p
&=
\mathbb{E}\left[
\omega\left(
\rho_{\delta}\left(P_{kt}\right)
\right)
\rho_{\delta}'\left(P_{kt}\right)
\boldsymbol{J}_{k,t}
\right].
\end{aligned}
\label{eq:generated-residual-change-of-variables}
\end{equation}
Applying equations~\eqref{eq:quantile-density-cancellation} and~\eqref{eq:generated-residual-change-of-variables} with the propagation weights in equation~\eqref{eq:propagation-weights} proves equation~\eqref{eq:density-free-influence}.

\begin{proof}[Proof of Proposition~\ref{prop:generated-residual-quantile-expansion}]
For the scalar model, Lemma~\ref{lem:generated-quantile-full-range} and equation~\eqref{eq:beta-linearization} give equation~\eqref{eq:generated-quantile-expansion}. The fixed number of innovation components permits joint stacking. For the vector model, apply the same argument componentwise under the conditions stated in the proposition.
\end{proof}

\begin{remark}[Clipped implementation]
\label{rem:residual-quantile-clipping}
Suppose the numerical implementation uses the clipping sequence \(\epsilon_T\) in Appendix~\ref{app:assumptions-tails}. Let \(\hat{\psi}_{h,S}^{\mathrm{E},\epsilon_T}\) and \(\psi_h^{\epsilon_T}\) denote the resulting simulated estimator and population response. Lemma~\ref{lem:generated-quantile-interior} then applies directly to every quantile evaluation used by the clipped path simulator. Under equation~\eqref{eq:tail-negligibility} and the same path envelope uniformly over a neighborhood of \(\left(\boldsymbol{\beta}_0,\boldsymbol{Q}_0\right)\),
\[
\sqrt{T}
\left[
\left(
\hat{\psi}_{h,S}^{\mathrm{E}}
-
\psi_h
\right)
-
\left(
\hat{\psi}_{h,S}^{\mathrm{E},\epsilon_T}
-
\psi_h^{\epsilon_T}
\right)
\right]
=
o_{\mathbb{P}}\left(1\right)
\]
when simulation error is negligible at the \(\sqrt{T}\) scale. Thus, clipping may be used to stabilize endpoint calculations without changing the first-order distribution, provided its effect is verified through equation~\eqref{eq:tail-negligibility}.
\end{remark}

\section{Differentiating the recursive response}
\label{app:recursive-derivatives}

This appendix differentiates the finite-horizon path simulator and combines those derivatives with the generated-quantile expansion from Appendix~\ref{app:residual-quantiles}. The first part gives forward and reverse recursions that can be evaluated analytically or by automatic differentiation. The second part establishes the first-order expansion of the population response. The final part assembles the observation-level influence contributions and proves their joint Gaussian limit for a fixed collection of responses.

\subsection{Pathwise derivatives}
\label{app:recursive-derivatives-pathwise}

Fix one response specification \(\left(h,\boldsymbol{y},\boldsymbol{a},k,\delta\right)\). For \(r\in\left\{0,\delta\right\}\), write
\[
\boldsymbol{U}_j^r
=
\boldsymbol{Q}_0\left(\boldsymbol{P}_j^r\right),
\qquad
\boldsymbol{Y}_j^r
=
g\left(\boldsymbol{Y}_{j-1}^r,\boldsymbol{U}_j^r;\boldsymbol{\beta}_0\right),
\qquad
j=1,\ldots,h.
\]
At the arguments reached by this path, define the transition Jacobians
\[
\begin{aligned}
\boldsymbol{\mathcal{G}}_{y,j}^r
&=
\left.
\frac{\partial g\left(\boldsymbol{y},\boldsymbol{u};\boldsymbol{\beta}\right)}{\partial\boldsymbol{y}^{\prime}}
\right|_{\left(\boldsymbol{Y}_{j-1}^r,\boldsymbol{U}_j^r,\boldsymbol{\beta}_0\right)},
\\
\boldsymbol{\mathcal{G}}_{u,j}^r
&=
\left.
\frac{\partial g\left(\boldsymbol{y},\boldsymbol{u};\boldsymbol{\beta}\right)}{\partial\boldsymbol{u}^{\prime}}
\right|_{\left(\boldsymbol{Y}_{j-1}^r,\boldsymbol{U}_j^r,\boldsymbol{\beta}_0\right)},
\\
\boldsymbol{\mathcal{G}}_{\beta,j}^r
&=
\left.
\frac{\partial g\left(\boldsymbol{y},\boldsymbol{u};\boldsymbol{\beta}\right)}{\partial\boldsymbol{\beta}^{\prime}}
\right|_{\left(\boldsymbol{Y}_{j-1}^r,\boldsymbol{U}_j^r,\boldsymbol{\beta}_0\right)}.
\end{aligned}
\]
Their dimensions are \(d\times d\), \(d\times n\), and \(d\times p\), respectively.

Let
\[
\boldsymbol{H}_{\beta,j}^r
=
\left.
\frac{\partial\boldsymbol{Y}_j^r}{\partial\boldsymbol{\beta}^{\prime}}
\right|_{\left(\boldsymbol{\beta}_0,\boldsymbol{Q}_0\right)},
\qquad
\boldsymbol{H}_{u,j,s}^r
=
\left.
\frac{\partial\boldsymbol{Y}_j^r}{\partial\left(\boldsymbol{U}_s^r\right)^{\prime}}
\right|_{\left(\boldsymbol{\beta}_0,\boldsymbol{Q}_0\right)},
\qquad
1\leq s\leq j\leq h.
\]
Holding the innovation inputs fixed when differentiating with respect to \(\boldsymbol{\beta}\), the chain rule gives
\begin{equation}
\begin{aligned}
\boldsymbol{H}_{\beta,0}^r
&=
\boldsymbol{0}_{d\times p},
&
\boldsymbol{H}_{\beta,j}^r
&=
\boldsymbol{\mathcal{G}}_{y,j}^r
\boldsymbol{H}_{\beta,j-1}^r
+
\boldsymbol{\mathcal{G}}_{\beta,j}^r,
\\
\boldsymbol{H}_{u,s,s}^r
&=
\boldsymbol{\mathcal{G}}_{u,s}^r,
&
\boldsymbol{H}_{u,j,s}^r
&=
\boldsymbol{\mathcal{G}}_{y,j}^r
\boldsymbol{H}_{u,j-1,s}^r,
\qquad
j=s+1,\ldots,h.
\end{aligned}
\label{eq:forward-recursive-derivatives}
\end{equation}
Consequently,
\[
\boldsymbol{H}_{u,h,s}^r
=
\boldsymbol{\mathcal{G}}_{y,h}^r
\cdots
\boldsymbol{\mathcal{G}}_{y,s+1}^r
\boldsymbol{\mathcal{G}}_{u,s}^r,
\]
where the product preceding \(\boldsymbol{\mathcal{G}}_{u,s}^r\) is the identity matrix when \(s=h\). The innovation sensitivity used in the main text is therefore
\begin{equation}
\Lambda_{h,s,j}^r
=
\boldsymbol{a}^{\prime}
\boldsymbol{H}_{u,h,s}^r
\boldsymbol{e}_j,
\qquad
s=1,\ldots,h,
\qquad
j=1,\ldots,n,
\label{eq:path-innovation-sensitivity}
\end{equation}
where \(\boldsymbol{e}_j\) is the \(j\)th coordinate vector in \(\mathbb{R}^n\). The direct parameter derivative is
\begin{equation}
\boldsymbol{A}_h
=
\mathbb{E}\left[
\left(
\boldsymbol{H}_{\beta,h}^{\delta}
-
\boldsymbol{H}_{\beta,h}^{0}
\right)^{\prime}
\boldsymbol{a}
\right].
\label{eq:direct-transition-derivative-recursion}
\end{equation}
Equations~\eqref{eq:path-innovation-sensitivity} and~\eqref{eq:direct-transition-derivative-recursion} coincide with the derivative objects introduced before Theorem~\ref{thm:main-linearization}.

A reverse recursion computes the same quantities more efficiently when the response is scalar. Define the adjoint vectors by
\begin{equation}
\boldsymbol{C}_h^r
=
\boldsymbol{a},
\qquad
\boldsymbol{C}_{j-1}^r
=
\left(\boldsymbol{\mathcal{G}}_{y,j}^r\right)^{\prime}
\boldsymbol{C}_j^r,
\qquad
j=h,\ldots,1.
\label{eq:recursive-adjoint}
\end{equation}
Then
\begin{equation}
\boldsymbol{\Lambda}_{h,s}^r
=
\left(\boldsymbol{\mathcal{G}}_{u,s}^r\right)^{\prime}
\boldsymbol{C}_s^r,
\qquad
\frac{\partial\left(\boldsymbol{a}^{\prime}\boldsymbol{Y}_h^r\right)}{\partial\boldsymbol{\beta}}
=
\sum_{j=1}^{h}
\left(\boldsymbol{\mathcal{G}}_{\beta,j}^r\right)^{\prime}
\boldsymbol{C}_j^r,
\qquad
\Lambda_{h,s,j}^r
=
\boldsymbol{e}_j^{\prime}
\boldsymbol{\Lambda}_{h,s}^r.
\label{eq:reverse-recursive-derivatives}
\end{equation}
Thus, one forward evaluation of the path followed by one reverse pass returns the gradient with respect to \(\boldsymbol{\beta}\) and all innovation inputs. Reverse-mode automatic differentiation applied to equation~\eqref{eq:paired-paths} performs the recursion in equations~\eqref{eq:recursive-adjoint}--\eqref{eq:reverse-recursive-derivatives}. In the empirical implementation, the quantile outputs are treated as numerical innovation inputs during this pass. Sorting and evaluation of the empirical quantile function are not differentiated; their sampling effect is supplied by Appendix~\ref{app:residual-quantiles}. Replacing the population inputs in these recursions with \(\boldsymbol{\hat{\beta}}\), \(\boldsymbol{\hat{Q}}^{\mathrm{E}}\), and the simulated paths gives the derivatives used in Section~\ref{sec:main-theory-feasible}.

For the scalar location--scale model, equation~\eqref{eq:forward-recursive-derivatives} reduces to equation~\eqref{eq:scalar-recursive-derivatives}.

\subsection{Derivative of the response functional}
\label{app:recursive-derivatives-functional}

For a vector of quantile perturbations \(\boldsymbol{q}=\left(q_1,\ldots,q_n\right)^{\prime}\), write
\[
\boldsymbol{q}\left(\boldsymbol{p}\right)
=
\left(
q_1\left(p_1\right),
\ldots,
q_n\left(p_n\right)
\right)^{\prime}.
\]
Let \(\boldsymbol{b}\in\mathbb{R}^p\) be a parameter perturbation. The corresponding first-order change in path \(r\) satisfies
\begin{equation}
\dot{\boldsymbol{Y}}_0^r\left[\boldsymbol{b},\boldsymbol{q}\right]
=
\boldsymbol{0},
\qquad
\dot{\boldsymbol{Y}}_j^r\left[\boldsymbol{b},\boldsymbol{q}\right]
=
\boldsymbol{\mathcal{G}}_{y,j}^r
\dot{\boldsymbol{Y}}_{j-1}^r\left[\boldsymbol{b},\boldsymbol{q}\right]
+
\boldsymbol{\mathcal{G}}_{\beta,j}^r
\boldsymbol{b}
+
\boldsymbol{\mathcal{G}}_{u,j}^r
\boldsymbol{q}\left(\boldsymbol{P}_j^r\right),
\qquad
j=1,\ldots,h.
\label{eq:directional-state-recursion}
\end{equation}
The rank vectors in equation~\eqref{eq:directional-state-recursion} remain fixed. In particular, the selected component of \(\boldsymbol{q}\left(\boldsymbol{P}_1^{\delta}\right)\) is evaluated at \(\tau_{\delta}\left(P_{k1}\right)\).

Iterating equation~\eqref{eq:directional-state-recursion} gives the path-response derivative
\begin{equation}
\begin{aligned}
\dot{D}_h\left[\boldsymbol{b},\boldsymbol{q}\right]
&=
\boldsymbol{a}^{\prime}
\left(
\boldsymbol{H}_{\beta,h}^{\delta}
-
\boldsymbol{H}_{\beta,h}^{0}
\right)
\boldsymbol{b}
+
\sum_{j=1}^{n}
\left(
\mathbf{1}\left\{j\neq k\right\}
\Lambda_{h,1,j}^{\delta}
-
\Lambda_{h,1,j}^{0}
\right)
q_j\left(P_{j1}\right)
\\
&\quad+
\Lambda_{h,1,k}^{\delta}
q_k\left(
\tau_{\delta}\left(P_{k1}\right)
\right)
+
\sum_{s=2}^{h}
\sum_{j=1}^{n}
\left(
\Lambda_{h,s,j}^{\delta}
-
\Lambda_{h,s,j}^{0}
\right)
q_j\left(P_{js}\right).
\end{aligned}
\label{eq:path-response-directional-derivative}
\end{equation}
The terms on the first line after the parameter derivative cover the ordinary impact ranks. The second line contains the shifted impact rank and the future ranks.

\begin{lemma}[First-order differentiability of the finite-horizon response]
\label{lem:recursive-response-differentiability}
Suppose the twice-differentiability and moment-envelope conditions in Assumption~\ref{ass:scalar-dynamics}\textup{(i)--(ii)} or Assumption~\ref{ass:vector-high-level}\textup{(iii)} hold, together with Assumption~\ref{ass:response-tail-continuity}. Let \(t_{\ell}\downarrow0\), let \(\boldsymbol{b}_{\ell}\rightarrow\boldsymbol{b}\), and let \(\boldsymbol{Q}_{\ell}\) be admissible collections of quantile functions satisfying
\[
\left\|
\frac{\boldsymbol{Q}_{\ell}-\boldsymbol{Q}_0}{t_{\ell}}
-
\boldsymbol{q}
\right\|
\rightarrow
0.
\]
Then the difference quotient of the path response converges to equation~\eqref{eq:path-response-directional-derivative} in \(L^1\). Under the \(L^{2+\eta}\) envelope in Assumption~\ref{ass:response-tail-continuity}, the convergence also holds in \(L^2\). Consequently,
\begin{equation}
\frac{
\Psi_h\left(
\boldsymbol{\beta}_0+t_{\ell}\boldsymbol{b}_{\ell},
\boldsymbol{Q}_{\ell}
\right)
-
\Psi_h\left(
\boldsymbol{\beta}_0,
\boldsymbol{Q}_0
\right)
}{t_{\ell}}
\rightarrow
\boldsymbol{A}_h^{\prime}\boldsymbol{b}
+
\dot{\Psi}_h^{\mathrm{dist}}\left[\boldsymbol{q}\right]
+
\dot{\Psi}_h^{\mathrm{imp}}\left[\boldsymbol{q}\right].
\label{eq:population-response-derivative}
\end{equation}
The convergence in equation~\eqref{eq:population-response-derivative} holds jointly for any fixed collection of response specifications.
\end{lemma}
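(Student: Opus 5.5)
The proof will be a pathwise Hadamard-type argument. I will first establish pointwise convergence of the path-response difference quotient for almost every rank path, then upgrade it to $L^1$ and $L^2$ convergence by uniform integrability, and finally take expectations to obtain equation~\eqref{eq:population-response-derivative}.

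\textbf{Pathwise convergence.} Fix a rank path $\boldsymbol{P}_{1:h}$ with all ranks in $(0,1)$, and write $\boldsymbol{\beta}_\ell=\boldsymbol{\beta}_0+t_\ell\boldsymbol{b}_\ell$. The perturbed innovation inputs are $\boldsymbol{Q}_\ell(\boldsymbol{P}_j^r)$. At the selected impact coordinate of the shocked path this is $Q_{k\ell}(\tau_\delta(P_{k1}))$. Since $\vartheta_j$ is positive and finite at every interior rank, convergence in $\|\cdot\|_j$ implies pointwise convergence. Hence
\[
\frac{\boldsymbol{Q}_\ell(\boldsymbol{P}_j^r)-\boldsymbol{Q}_0(\boldsymbol{P}_j^r)}{t_\ell}\rightarrow\boldsymbol{q}(\boldsymbol{P}_j^r),
\]
including at the shifted rank. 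I will then argue by induction over $j=1,\ldots,h$ using the mean-value form of the one-step map
\[
\boldsymbol{Y}_j^{r,\ell}-\boldsymbol{Y}_j^r=g(\boldsymbol{Y}_{j-1}^{r,\ell},\boldsymbol{U}_j^{r,\ell};\boldsymbol{\beta}_\ell)-g(\boldsymbol{Y}_{j-1}^r,\boldsymbol{U}_j^r;\boldsymbol{\beta}_0).
\]
The induction shows that $(\boldsymbol{Y}_j^{r,\ell}-\boldsymbol{Y}_j^r)/t_\ell\rightarrow\dot{\boldsymbol{Y}}_j^r[\boldsymbol{b},\boldsymbol{q}]$ as defined in equation~\eqref{eq:directional-state-recursion}. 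Continuity of the first derivatives of $g$ (Assumption~\ref{ass:scalar-dynamics}(i) or~\ref{ass:vector-high-level}(iii)) makes the Jacobians evaluated at intermediate points converge to $\boldsymbol{\mathcal{G}}_{y,j}^r,\boldsymbol{\mathcal{G}}_{u,j}^r,\boldsymbol{\mathcal{G}}_{\beta,j}^r$. Taking the difference of the two paths and applying $\boldsymbol{a}'$ gives equation~\eqref{eq:path-response-directional-derivative}, after unrolling the recursion as in equations~\eqref{eq:forward-recursive-derivatives}--\eqref{eq:path-innovation-sensitivity}. Because $h$ is fixed, this step is purely finite-dimensional calculus.

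\textbf{Domination.} This is the main obstacle. Pointwise convergence does not give $L^1$ convergence without an integrable envelope. The quantile perturbations are controlled only in the weighted norm, so
\[
|Q_{j\ell}(p)-Q_{j0}(p)|\leq t_\ell\,C/\vartheta_j(p)\quad\text{for large }\ell,
\]
which may blow up at the endpoints. I plan to bound the difference quotient by the same telescoped mean-value expansion, with each transition Jacobian replaced by its local envelope. This envelope is uniform over a neighbourhood of the path, as allowed by the second-order derivative bounds $C(1+|y|^m)$ or the envelope in Assumption~\ref{ass:vector-high-level}(iii). The result should be a bound of the form
\[
|\Delta_\ell D_h|\lesssim E_h\Bigl(\|\boldsymbol{b}_\ell\|+\sum_{s,j}\vartheta_j(P_{js})^{-1}+\vartheta_k(\tau_\delta(P_{k1}))^{-1}\Bigr),
\]
where $E_h$ is a product of derivative envelopes along the paths. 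Two issues arise here. First, the intermediate paths stay in a neighbourhood where the envelopes apply, and this requires an a priori bound on the state perturbation, which I would obtain by the same induction. Second, the product of $E_h$ with $\vartheta^{-1}$ must be integrable. For the second issue I plan to condition on the rank being perturbed and use the conditional $L^{2+\eta}$ version of condition~\eqref{eq:response-tail-integrability} in Assumption~\ref{ass:response-tail-continuity}. For the shifted term, the change of variables $r=\tau_\delta(p)$ produces exactly the second integral there. If a direct envelope is awkward, the fallback is to split the ranks at $[\epsilon,1-\epsilon]$. On the central region the norm is equivalent to the sup norm and dominated convergence is immediate. The tails are made uniformly small in $\ell$ by an argument analogous to Lemma~\ref{lem:quantile-tail-control}, applied to the difference quotients rather than to the linear maps.

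\textbf{Conclusion.} Pointwise convergence combined with uniform integrability (the Vitali theorem) gives $L^1$ convergence of the difference quotient. Under the $L^{2+\eta}$ envelope, the $(2+\eta)$-th powers are uniformly integrable, which gives $L^2$ convergence. Taking expectations of equation~\eqref{eq:path-response-directional-derivative} and using Fubini's theorem with the definitions in equations~\eqref{eq:response-quantile-derivatives} and~\eqref{eq:direct-transition-derivative-recursion} yields equation~\eqref{eq:population-response-derivative}. Joint convergence over a fixed collection of specifications follows coordinatewise, since there are finitely many specifications.
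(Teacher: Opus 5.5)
Your proposal is correct at the level of rigor of the paper's own proof, but its analytic core is different.

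\textbf{The two routes.} The paper linearizes each transition step with a Taylor remainder bounded by a second-derivative envelope times the squared state, parameter, and innovation perturbations. It then obtains \(L^1\) and \(L^2\) convergence directly by induction over the fixed horizon, with a one-sentence appeal to Assumption~\ref{ass:response-tail-continuity} at the endpoints. You use only continuity of first derivatives to get almost-sure convergence of the difference quotient, and then upgrade by Vitali. Your passage from weighted-norm to pointwise convergence is correct, because \(\vartheta_j>0\) on \((0,1)\); this includes the shifted rank \(\tau_{\delta}(P_{k1})\).

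\textbf{What each buys.} The paper's route yields an \(O(t_\ell)\) remainder. Read literally, however, it needs second moments of the normalized perturbations \((Q_{j\ell}-Q_{j0})/t_\ell\) at uniform ranks, and \(\|\cdot\|_j\) does not control these near the endpoints: for Gaussian innovations, \(\int_0^1\vartheta_j(p)^{-2}\,\mathrm{d}p=\infty\) for every \(\zeta_j\in(0,1/2)\). Your route needs only first-moment domination, which matches the form of \eqref{eq:response-tail-integrability}.

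\textbf{Where your route still needs work.} Everything rests on an \(\ell\)-free integrable envelope that covers perturbations of size up to \(C\vartheta(P)^{-1}\) at extreme ranks. That envelope is not the true-path sensitivity that \eqref{eq:response-tail-integrability} controls. So conditioning on the perturbed rank and citing that condition does not by itself close the step, and your tail-splitting fallback needs the same envelope on the tail event.

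In the scalar model the step does close:
\begin{itemize}
\item The contraction condition forces \(\mu_y\) and \(\sigma_y\) to be bounded, so \(|\partial_y g(y,u;\boldsymbol{\beta})|\leq M(1+|u|)\) for all \(y\) and \(u\).
\item Split each step so that the \(\boldsymbol{\beta}\)- and innovation increments are evaluated at the true state. This gives a recursive bound that is a sum of products of functions of distinct, independent dates.
\item That bound is integrable under the maintained moments once \(\int_0^1\vartheta^{-1}\,\mathrm{d}p\) and \(\int_0^1\rho_{\delta}'\vartheta^{-1}\,\mathrm{d}p\) are finite.
\end{itemize}

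\textbf{The \(L^2\) step.} Your \(L^2\) argument needs the envelope in \(L^{2+\eta}\). An integral of \(\vartheta^{-1}\) against conditional \(L^{2+\eta}\) norms does not provide this. Consider the affine case with \(\boldsymbol{b}_\ell=\boldsymbol{0}\). Then \(D_h\) is linear in \(Q_k\), and the error is a constant multiple of \(d_\ell(\tau_{\delta}(P_{k1}))-d_\ell(P_{k1})\), where \(d_\ell=(Q_{k\ell}-Q_{k0})/t_\ell-q_k\). The second moment of this need not vanish when only \(\|d_\ell\|_k\rightarrow0\). The paper asserts the same \(L^2\) step without argument, so this is a caveat about the statement rather than a defect peculiar to your route.
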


\begin{proof}
For path \(r\), let \(\boldsymbol{Y}_{j,\ell}^r\) denote the state generated by \(\boldsymbol{\beta}_0+t_{\ell}\boldsymbol{b}_{\ell}\) and \(\boldsymbol{Q}_{\ell}\). A first-order Taylor expansion of one transition step gives
\[
\frac{
\boldsymbol{Y}_{j,\ell}^r
-
\boldsymbol{Y}_j^r
}{t_{\ell}}
=
\boldsymbol{\mathcal{G}}_{y,j}^r
\frac{
\boldsymbol{Y}_{j-1,\ell}^r
-
\boldsymbol{Y}_{j-1}^r
}{t_{\ell}}
+
\boldsymbol{\mathcal{G}}_{\beta,j}^r
\boldsymbol{b}_{\ell}
+
\boldsymbol{\mathcal{G}}_{u,j}^r
\frac{
\boldsymbol{Q}_{\ell}\left(\boldsymbol{P}_j^r\right)
-
\boldsymbol{Q}_0\left(\boldsymbol{P}_j^r\right)
}{t_{\ell}}
+
\boldsymbol{o}_{\ell,j}^r.
\]
The final remainder is bounded by a second-derivative envelope times the square of the state, parameter, and innovation perturbations. Because \(h\) is fixed, induction over \(j=1,\ldots,h\) and the maintained moment conditions imply
\[
\frac{
\boldsymbol{Y}_{j,\ell}^r
-
\boldsymbol{Y}_j^r
}{t_{\ell}}
-
\dot{\boldsymbol{Y}}_j^r\left[\boldsymbol{b},\boldsymbol{q}\right]
\rightarrow
\boldsymbol{0}
\]
in \(L^1\), and in \(L^2\) under the stronger envelope. Assumption~\ref{ass:response-tail-continuity} controls the ordinary and shifted quantile perturbations near the endpoints. Multiplication by \(\boldsymbol{a}^{\prime}\) and subtraction of the two paths yield equation~\eqref{eq:path-response-directional-derivative}. Taking expectations gives equation~\eqref{eq:population-response-derivative}. The fixed number of responses permits componentwise stacking.
\end{proof}

Applying Lemma~\ref{lem:recursive-response-differentiability} to the estimated parameter and empirical quantile functions gives
\begin{equation}
\begin{aligned}
&\sqrt{T}
\left[
\Psi_h\left(
\boldsymbol{\hat{\beta}},
\boldsymbol{\hat{Q}}^{\mathrm{E}}
\right)
-
\Psi_h\left(
\boldsymbol{\beta}_0,
\boldsymbol{Q}_0
\right)
\right]
\\
&\quad=
\boldsymbol{A}_h^{\prime}
\sqrt{T}
\left(
\boldsymbol{\hat{\beta}}
-
\boldsymbol{\beta}_0
\right)
+
\dot{\Psi}_h^{\mathrm{dist}}
\left[
\sqrt{T}
\left(
\boldsymbol{\hat{Q}}^{\mathrm{E}}
-
\boldsymbol{Q}_0
\right)
\right]
\\
&\qquad+
\dot{\Psi}_h^{\mathrm{imp}}
\left[
\sqrt{T}
\left(
\boldsymbol{\hat{Q}}^{\mathrm{E}}
-
\boldsymbol{Q}_0
\right)
\right]
+
o_{\mathbb{P}}\left(1\right).
\end{aligned}
\label{eq:recursive-plug-in-expansion}
\end{equation}
The quantile maps in equation~\eqref{eq:recursive-plug-in-expansion} include the generated-residual effect; it is therefore not added to the direct parameter derivative in equation~\eqref{eq:direct-transition-derivative-recursion}.

\subsection{Observation-level representation and joint limit}
\label{app:recursive-derivatives-limit}

We first remove the numerical integration error under the rate used in Theorem~\ref{thm:main-linearization}.

\begin{lemma}[Negligible simulation error]
\label{lem:negligible-simulation-error}
Suppose the conditions of Lemma~\ref{lem:recursive-response-differentiability} hold, the fitted path responses have a uniformly bounded second moment with probability approaching one, and \(S/T\rightarrow\infty\). Then, jointly for any fixed collection of responses,
\begin{equation}
\sqrt{T}
\left[
\Psi_{h,S}\left(
\boldsymbol{\hat{\beta}},
\boldsymbol{\hat{Q}}^{\mathrm{E}}
\right)
-
\Psi_h\left(
\boldsymbol{\hat{\beta}},
\boldsymbol{\hat{Q}}^{\mathrm{E}}
\right)
\right]
=
o_{\mathbb{P}}\left(1\right).
\label{eq:negligible-simulation-remainder}
\end{equation}
\end{lemma}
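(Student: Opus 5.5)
The plan is to condition on the sample, or more precisely on the \(\sigma\)-field generated by \(\left\{\boldsymbol{Y}_t\right\}_{t=0}^{T}\), which fixes \(\left(\boldsymbol{\hat{\beta}},\boldsymbol{\hat{Q}}^{\mathrm{E}}\right)\). Because the simulation draws \(\boldsymbol{P}_{s,1:h}\) are independent of the sample, conditional on the data the summands
\[
X_s
=
D_h\left(\boldsymbol{P}_{s,1:h};\boldsymbol{\hat{\beta}},\boldsymbol{\hat{Q}}^{\mathrm{E}}\right)
-
\Psi_h\left(\boldsymbol{\hat{\beta}},\boldsymbol{\hat{Q}}^{\mathrm{E}}\right),
\qquad
s=1,\ldots,S,
\]
are independent and identically distributed with conditional mean zero. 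The difference in equation~\eqref{eq:negligible-simulation-remainder} is \(\sqrt{T}\,S^{-1}\sum_s X_s\). Its conditional second moment equals \(T S^{-1}\operatorname{Var}\left(X_1\mid\text{data}\right)\), which is bounded by \(T S^{-1}\,\mathbb{E}\left[D_h^2\mid\text{data}\right]\).

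Let \(E_T\) be the event on which the fitted path responses have conditional second moment at most a constant \(C\). By hypothesis \(\Pr\left(E_T\right)\rightarrow1\). On \(E_T\), the conditional Chebyshev inequality gives, for any \(\varepsilon>0\),
\[
\Pr\left(\left|\sqrt{T}\,S^{-1}\textstyle\sum_s X_s\right|>\varepsilon \,\middle|\, \text{data}\right)\leq \frac{C\,T}{\varepsilon^2 S}.
\]
Taking expectations, and adding \(\Pr\left(E_T^c\right)\), shows that the unconditional probability is at most \(C T/(\varepsilon^2 S)+\Pr\left(E_T^c\right)\rightarrow0\), since \(S/T\rightarrow\infty\). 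This proves equation~\eqref{eq:negligible-simulation-remainder} for one response. Joint validity over a fixed collection follows by applying the bound componentwise, for instance through the trace of the conditional covariance or a union bound over the finitely many \(m\). Using common master rank paths only introduces correlation across components, which Chebyshev per component does not need.

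The only substantive step is justifying the uniform second-moment bound, if one wishes to derive it rather than take it as given. The route would be to use the \(2+\eta\) path-derivative envelope in Assumption~\ref{ass:first-order}(i), together with Assumption~\ref{ass:response-tail-continuity} and Lemma~\ref{lem:recursive-response-differentiability}. Since \(\left\|\boldsymbol{\hat{Q}}^{\mathrm{E}}-\boldsymbol{Q}_0\right\|=O_{\mathbb{P}}(T^{-1/2})\) in the weighted norm by Lemma~\ref{lem:generated-quantile-full-range}, and \(\boldsymbol{\hat{\beta}}\) is consistent, a mean-value bound along the fixed-horizon recursion controls \(D_h(\cdot;\boldsymbol{\hat{\beta}},\boldsymbol{\hat{Q}}^{\mathrm{E}})-D_h(\cdot;\boldsymbol{\beta}_0,\boldsymbol{Q}_0)\) in conditional \(L^2\) with probability approaching one. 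The delicate part is the extreme ranks, where \(\hat{Q}^{\mathrm{E}}\) equals the extreme order statistics and the shifted rank \(\tau_\delta\) amplifies tail mass. I expect this tail control to be the main obstacle. I would handle it by splitting on the central region \([\epsilon_T,1-\epsilon_T]\) and bounding the tail contribution using the endpoint moments in Assumption~\ref{ass:scalar-dynamics}(ii). Since the lemma states the uniform moment bound as a hypothesis, however, the proof itself reduces to the conditional Chebyshev argument above.
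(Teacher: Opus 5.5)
Your proof is correct and follows the same route as the paper. The paper also conditions on the sample, uses independence of the simulation draws to bound the conditional second moment by \(T/S\) times an \(O_{\mathbb{P}}(1)\) conditional variance, and concludes by conditional Chebyshev; your explicit event \(E_T\) and the componentwise union bound spell out what the paper's one-line argument leaves implicit.
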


\begin{proof}
Conditional on the observed sample, the path responses are independent across simulation draws. Their conditional covariance matrix is \(O_{\mathbb{P}}\left(1\right)\) by the maintained moment envelope and consistency of \(\boldsymbol{\hat{\beta}}\) and \(\boldsymbol{\hat{Q}}^{\mathrm{E}}\). The conditional second moment of the left-hand side of equation~\eqref{eq:negligible-simulation-remainder} is therefore \(O_{\mathbb{P}}\left(T/S\right)\), which converges to zero.
\end{proof}

\begin{proposition}[Observation-level representation]
\label{prop:recursive-observation-level-representation}
Suppose Assumption~\ref{ass:first-order}, Assumption~\ref{ass:response-tail-continuity}, and \(S/T\rightarrow\infty\) hold. Then equations~\eqref{eq:empirical-irf-linearization}--\eqref{eq:joint-empirical-irf-linearization} hold with the influence contributions in equation~\eqref{eq:influence-decomposition}.
\end{proposition}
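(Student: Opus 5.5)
The plan is to assemble the proof from three pieces already in place: the simulation-error lemma, the plug-in expansion \eqref{eq:recursive-plug-in-expansion}, and the projection lemma.

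\emph{Step 1 (simulation error).} Split
\[
\sqrt{T}\bigl(\hat{\psi}_{h,S}^{\mathrm{E}}-\psi_h\bigr)
=
\sqrt{T}\bigl[\Psi_{h,S}(\boldsymbol{\hat{\beta}},\boldsymbol{\hat{Q}}^{\mathrm{E}})-\Psi_h(\boldsymbol{\hat{\beta}},\boldsymbol{\hat{Q}}^{\mathrm{E}})\bigr]
+
\sqrt{T}\bigl[\Psi_h(\boldsymbol{\hat{\beta}},\boldsymbol{\hat{Q}}^{\mathrm{E}})-\Psi_h(\boldsymbol{\beta}_0,\boldsymbol{Q}_0)\bigr].
\]
Because $S/T\rightarrow\infty$, Lemma~\ref{lem:negligible-simulation-error} makes the first bracket $o_{\mathbb{P}}(1)$. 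The uniform second-moment condition it needs follows from the $2+\eta$ envelope in Assumption~\ref{ass:first-order}(i), consistency from Proposition~\ref{prop:consistency}, and Assumption~\ref{ass:response-tail-continuity}.

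\emph{Step 2 (delta method for the second bracket).} I would apply Lemma~\ref{lem:recursive-response-differentiability} through the extended continuous mapping theorem, i.e.\ the Hadamard delta method in the weighted product norm. Set $t_T=T^{-1/2}$, $\boldsymbol{b}_T=\sqrt{T}(\boldsymbol{\hat{\beta}}-\boldsymbol{\beta}_0)$ and $\boldsymbol{q}_T=\sqrt{T}(\boldsymbol{\hat{Q}}^{\mathrm{E}}-\boldsymbol{Q}_0)$. The lemma is deterministic and sequence-based, so two inputs are needed:
\begin{enumerate}[label=(\alph*),leftmargin=2.2em]
\item $(\boldsymbol{b}_T,\boldsymbol{q}_T)$ converges jointly in distribution to a tight limit in the weighted space. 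This follows from equations~\eqref{eq:beta-linearization} and~\eqref{eq:generated-quantile-expansion}, the joint CLT, and tightness of the weighted quantile process (Lemma~\ref{lem:generated-quantile-full-range}).
\item A Skorokhod or almost-sure representation (or the standard subsequence argument) converts this into the deterministic form of the lemma.
\end{enumerate}
This yields \eqref{eq:recursive-plug-in-expansion}, with the derivative terms linear in $(\boldsymbol{b}_T,\boldsymbol{q}_T)$.

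\emph{Step 3 (linearizing each term).} The direct term equals $\boldsymbol{A}_h^{\prime}T^{-1/2}\sum_t\boldsymbol{L}_t+o_{\mathbb{P}}(1)$ by \eqref{eq:beta-linearization}, which gives $Z^{\mathrm{tr}}$. For the quantile terms, write $\dot{\Psi}_h^{\mathrm{dist}}=\dot{\Psi}_h^{\mathrm{unshift}}+\dot{\Psi}_h^{\mathrm{future}}$ and invoke Lemma~\ref{lem:projected-generated-quantiles}. Its $\boldsymbol{\Xi}_t$ pieces give $Z^{\mathrm{dist}}$ and $Z^{\mathrm{imp}}$. Its $\boldsymbol{R}_t$ pieces sum to $Z^{\mathrm{res}}$, which is linear in $\boldsymbol{L}_t$, so the sum over $t$ commutes with the linear maps. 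The remark after \eqref{eq:recursive-plug-in-expansion} confirms that the residual effect is not counted twice. The joint statement \eqref{eq:joint-empirical-irf-linearization} follows by stacking the finitely many responses: every $o_{\mathbb{P}}(1)$ term is finite in number, and one common joint limit in (a) is used for all specifications.

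\emph{Main obstacle.} The hardest step is justifying the delta method in the weighted norm $\|\cdot\|$. The derivative map must be continuous there, and the shifted-impact evaluation $q_k\circ\tau_\delta$ carries the Jacobian $\rho_\delta'$, which can blow up in the tails. The approach I would try first is to verify continuity of the linear maps directly from \eqref{eq:response-tail-integrability}. The nonlinear remainder is harder, because the difference quotient must converge uniformly along sequences with $\|\boldsymbol{q}_\ell-\boldsymbol{q}\|\to0$ when the tails are unbounded. For it I would split into the central region $[\epsilon,1-\epsilon]$, where Lemma~\ref{lem:recursive-response-differentiability} applies with bounded envelopes. The tails would then be handled with Lemma~\ref{lem:quantile-tail-control}, letting $\epsilon\downarrow0$ after $T\rightarrow\infty$. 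If needed, I would pass through the clipped response and Remark~\ref{rem:residual-quantile-clipping} under \eqref{eq:tail-negligibility}.
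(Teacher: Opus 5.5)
Your proposal is correct and follows the paper's proof essentially step for step. The paper also removes the Monte Carlo error with Lemma~\ref{lem:negligible-simulation-error}, applies the plug-in expansion~\eqref{eq:recursive-plug-in-expansion}, and substitutes~\eqref{eq:beta-linearization} for the direct term. It then uses Lemma~\ref{lem:projected-generated-quantiles} to split the quantile terms into $Z_{h,t}^{\mathrm{dist}}$, $Z_{h,t}^{\mathrm{imp}}$ and $Z_{h,t}^{\mathrm{res}}$, and stacks over the finitely many responses. Your Step~2 only spells out the Hadamard delta-method argument that the paper compresses into one line. The tail difficulty you flag is already handled by Lemma~\ref{lem:recursive-response-differentiability} (equivalently Assumption~\ref{ass:first-order}\textup{(iii)}), so the central-region/tail-splitting fallback is not needed.
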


\begin{proof}
By Lemma~\ref{lem:negligible-simulation-error}, the simulated estimator can be replaced by the exactly integrated fitted response at the \(\sqrt{T}\) scale. Apply equation~\eqref{eq:recursive-plug-in-expansion}. Equation~\eqref{eq:beta-linearization} turns the direct parameter term into
\[
\frac{1}{\sqrt{T}}
\sum_{t=1}^{T}
\boldsymbol{A}_h^{\prime}
\boldsymbol{L}_t
+
o_{\mathbb{P}}\left(1\right).
\]
Next, Lemma~\ref{lem:projected-generated-quantiles} expands the ordinary-impact, future, and shifted-impact quantile terms. Their empirical-process parts are \(Z_{h,t}^{\mathrm{dist}}\) and \(Z_{h,t}^{\mathrm{imp}}\), while their generated-residual parts sum to \(Z_{h,t}^{\mathrm{res}}\). Adding the direct term gives equation~\eqref{eq:influence-decomposition} and hence equation~\eqref{eq:empirical-irf-linearization}. Applying the same argument componentwise proves the stacked representation in equation~\eqref{eq:joint-empirical-irf-linearization}.
\end{proof}

\begin{proposition}[Joint Gaussian limit for fixed responses]
\label{prop:recursive-joint-clt}
Let \(m=1,\ldots,M\) index a fixed collection of response specifications, and let
\[
\boldsymbol{Z}_t
=
\left(
Z_t^{\left(1\right)},
\ldots,
Z_t^{\left(M\right)}
\right)^{\prime}
\]
stack their observation-level influence contributions. Suppose \(\left\{\boldsymbol{Z}_t\right\}\) is strictly stationary and strongly mixing with coefficients \(\alpha\left(\ell\right)\) satisfying
\[
\sum_{\ell=1}^{\infty}
\alpha\left(\ell\right)^{\eta/\left(2+\eta\right)}
<
\infty,
\qquad
\mathbb{E}\left[
\left\|\boldsymbol{Z}_t\right\|^{2+\eta}
\right]
<
\infty
\]
for some \(\eta>0\). Then the long-run covariance matrix in equation~\eqref{eq:empirical-irf-limit} is finite, and
\begin{equation}
\frac{1}{\sqrt{T}}
\sum_{t=1}^{T}
\boldsymbol{Z}_t
\xrightarrow{\mathrm{d}}
\mathcal{N}\left(
\boldsymbol{0},
\boldsymbol{\Omega}
\right).
\label{eq:recursive-joint-clt}
\end{equation}
The conclusion also holds under any alternative weak-dependence condition that yields the same multivariate central limit theorem and absolute convergence of the covariance series.
\end{proposition}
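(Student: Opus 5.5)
The plan is to reduce the statement to a standard multivariate central limit theorem for strongly mixing sequences through the Cramér--Wold device, and then identify the limiting variance with \(\boldsymbol{\Omega}\).

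\emph{Step 1: finiteness of the long-run covariance.} Fix \(\boldsymbol{c}\in\mathbb{R}^M\) and set \(X_t=\boldsymbol{c}^{\prime}\boldsymbol{Z}_t\). Then \(X_t\) is strictly stationary and strongly mixing, with coefficients no larger than \(\alpha(\ell)\), because it is a measurable function of \(\boldsymbol{Z}_t\). It also satisfies \(\mathbb{E}|X_t|^{2+\eta}<\infty\).

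Davydov's covariance inequality gives
\[
|\operatorname{Cov}(X_0,X_\ell)|\leq 8\,\alpha(\ell)^{\eta/(2+\eta)}\|X_0\|_{2+\eta}^2 .
\]
The summability condition then yields \(\sum_\ell|\operatorname{Cov}(X_0,X_\ell)|<\infty\). Applying the same inequality entrywise, with \(\boldsymbol{c}\) running over coordinate vectors and their sums and then polarizing, shows that each entry of \(\sum_\ell\operatorname{Cov}(\boldsymbol{Z}_0,\boldsymbol{Z}_\ell)\) converges absolutely. Hence \(\boldsymbol{\Omega}\) is finite, and \(\boldsymbol{c}^{\prime}\boldsymbol{\Omega}\boldsymbol{c}=\sum_\ell\operatorname{Cov}(X_0,X_\ell)\).

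\emph{Step 2: scalar central limit theorem.} Each \(\boldsymbol{Z}_t\) is mean zero. The transition term inherits this from \(\mathbb{E}[\boldsymbol{L}_t]=\boldsymbol{0}\). The remaining terms are continuous linear functionals of \(\Xi_{j,t}\) and \(\boldsymbol{R}_t\), which have pointwise mean zero; Fubini's theorem is justified by the integrability in Assumption~\ref{ass:response-tail-continuity}.

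Stationarity and the covariance summability from Step 1 give \(\operatorname{Var}(T^{-1/2}\sum_{t=1}^T X_t)\to\boldsymbol{c}^{\prime}\boldsymbol{\Omega}\boldsymbol{c}\). Two cases follow.
\begin{itemize}
\item If \(\boldsymbol{c}^{\prime}\boldsymbol{\Omega}\boldsymbol{c}=0\), the normalized sum converges to zero in \(L^2\), which is the degenerate normal limit.
\item If \(\boldsymbol{c}^{\prime}\boldsymbol{\Omega}\boldsymbol{c}>0\), invoke the Ibragimov central limit theorem for strongly mixing sequences under a \(2+\eta\) moment with \(\sum_\ell\alpha(\ell)^{\eta/(2+\eta)}<\infty\) (e.g.\ \citet{Rio2017}). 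This gives \(T^{-1/2}\sum_{t=1}^T X_t\xrightarrow{\mathrm{d}}\mathcal{N}(0,\boldsymbol{c}^{\prime}\boldsymbol{\Omega}\boldsymbol{c})\).
\end{itemize}

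\emph{Step 3: conclusion.} Since \(\boldsymbol{c}\) is arbitrary, the Cramér--Wold device delivers equation~\eqref{eq:recursive-joint-clt}. The final sentence of the proposition is immediate: Steps 1 and 2 used the mixing condition only to obtain the scalar central limit theorem and absolute covariance summability, so any weak-dependence condition that supplies these two properties can be substituted without changing the argument.

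The main obstacle is checking that the primitive hypotheses apply to \(\boldsymbol{Z}_t\) itself rather than to \(\boldsymbol{Y}_t\). That is, one must verify that \(\boldsymbol{Z}_t\) is a fixed measurable function of finitely many observations, so that its mixing coefficients are controlled by those of the data, and that its \(2+\eta\) moment follows from the bounds on \(\boldsymbol{L}_t\), \(\boldsymbol{r}_j\), and the weighted propagation integrals. Since the proposition assumes these properties of \(\boldsymbol{Z}_t\) directly, the proof itself only needs the standard steps above.
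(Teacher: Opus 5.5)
Your proposal is correct and follows essentially the same route as the paper: each contribution has mean zero because $\mathbb{E}[\boldsymbol{L}_t]=\boldsymbol{0}$, $\mathbb{E}[\Xi_{j,t}(p)]=0$ and the derivative maps are linear; a covariance inequality gives absolute summability; a scalar strong-mixing central limit theorem applies to each $\boldsymbol{c}^{\prime}\boldsymbol{Z}_t$; and the Cram\'er--Wold device concludes. The differences are in the cited tools (Davydov's inequality and Ibragimov's theorem with the degenerate case treated separately, versus the paper's appeal to \citet[Corollaries~1.2 and~4.1]{Rio2017} via the DMR condition), plus one small point: polarizing $\boldsymbol{c}^{\prime}\operatorname{Cov}(\boldsymbol{Z}_0,\boldsymbol{Z}_\ell)\boldsymbol{c}$ only recovers symmetrized cross-covariances, so for entrywise absolute convergence apply Davydov's inequality directly to each pair $(Z_0^{(i)},Z_\ell^{(j)})$, which works verbatim.
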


\begin{proof}
First, \(\mathbb{E}\left[\boldsymbol{L}_t\right]=\boldsymbol{0}\), and for every component and rank,
\[
\mathbb{E}\left[
\Xi_{j,t}\left(p\right)
\right]
=
0.
\]
The derivative maps are deterministic continuous linear maps. Hence each direct, generated-residual, future-distribution, and shifted-impact contribution has mean zero, so \(\mathbb{E}\left[\boldsymbol{Z}_t\right]=\boldsymbol{0}\). The \(L^{2+\eta}\) part of Assumption~\ref{ass:response-tail-continuity}, together with the moment condition on \(\boldsymbol{L}_t\), gives the stated moment bound for the stacked contribution.

For any \(\boldsymbol{c}\in\mathbb{R}^M\), the scalar sequence \(\left\{\boldsymbol{c}^{\prime}\boldsymbol{Z}_t\right\}\) has the same mixing coefficients and a finite \(2+\eta\) moment. The mixing summability condition implies the tail-and-dependence condition used by the central limit theorem for strongly mixing sequences. Therefore, \(T^{-1/2}\sum_{t=1}^{T}\boldsymbol{c}^{\prime}\boldsymbol{Z}_t\) converges to a centered normal variable; see \citet[Corollary~1.2 and Corollary~4.1]{Rio2017}. The same covariance inequality implies absolute summability of the covariance series. Applying the Cram\'er--Wold device over the fixed dimension \(M\) proves equation~\eqref{eq:recursive-joint-clt}.
\end{proof}

Under the scalar primitive conditions in Appendix~\ref{app:assumptions-scalar}, the estimating-equation contribution and the projected residual empirical process satisfy the joint central limit theorem imposed in Assumption~\ref{ass:scalar-estimator}. Under Assumption~\ref{ass:vector-high-level}, Proposition~\ref{prop:recursive-joint-clt} applies whenever the transition-estimator influence contribution is measurable with respect to the same strongly mixing process, as occurs for the smooth estimating-equation procedures considered here. Combining Propositions~\ref{prop:recursive-observation-level-representation} and~\ref{prop:recursive-joint-clt} proves Corollary~\ref{cor:main-gaussian-limit}. Growing-horizon inference would require uniform control of the derivative products in equation~\eqref{eq:forward-recursive-derivatives}; it is not covered by this argument.

\section{Proofs of the main results}
\label{app:proofs}

\subsection{Consistency and first-order expansion}
\label{app:proofs-main}

\begin{proof}[Proof of Proposition~\ref{prop:consistency}]
Fix a response specification indexed by \(m\), and suppress its fixed arguments. Let
\[
\mathcal{Y}_T
=
\sigma\left(
\boldsymbol{Y}_0,
\ldots,
\boldsymbol{Y}_T
\right).
\]

Equation~\eqref{eq:beta-linearization} and tightness of the transition-estimator influence sum give
\[
\left\|
\boldsymbol{\hat{\beta}}
-
\boldsymbol{\beta}_0
\right\|
=
O_{\mathbb{P}}\left(T^{-1/2}\right)
=
o_{\mathbb{P}}\left(1\right).
\]
Likewise, equation~\eqref{eq:generated-quantile-expansion} and Proposition~\ref{prop:generated-residual-quantile-expansion} imply
\[
\left\|
\boldsymbol{\hat{Q}}^{\mathrm{E}}
-
\boldsymbol{Q}_0
\right\|
=
O_{\mathbb{P}}\left(T^{-1/2}\right)
=
o_{\mathbb{P}}\left(1\right).
\]
For the smoothed estimator, the maintained condition gives directly
\[
\left\|
\boldsymbol{\hat{Q}}_{b_T}^{\mathrm{S}}
-
\boldsymbol{Q}_0
\right\|
=
o_{\mathbb{P}}\left(1\right).
\]

Let \(\boldsymbol{Q}_T\) denote either \(\boldsymbol{\hat{Q}}^{\mathrm{E}}\) or \(\boldsymbol{\hat{Q}}_{b_T}^{\mathrm{S}}\). The finite-horizon derivative in Lemma~\ref{lem:recursive-response-differentiability} and the uniform derivative envelope in Assumption~\ref{ass:first-order}\textup{(i)} yield
\[
\begin{aligned}
&\Psi_m\left(
\boldsymbol{\hat{\beta}},
\boldsymbol{Q}_T
\right)
-
\Psi_m\left(
\boldsymbol{\beta}_0,
\boldsymbol{Q}_T
\right)
\\
&\qquad=
\left\{
\int_{0}^{1}
\frac{\partial}{\partial\boldsymbol{\beta}}
\Psi_m\left(
\boldsymbol{\beta}_0
+
s\left(
\boldsymbol{\hat{\beta}}
-
\boldsymbol{\beta}_0
\right),
\boldsymbol{Q}_T
\right)
\,\dif s
\right\}^{\prime}
\left(
\boldsymbol{\hat{\beta}}
-
\boldsymbol{\beta}_0
\right)
\\
&\qquad=
o_{\mathbb{P}}\left(1\right).
\end{aligned}
\]
The final equality follows because the integrated derivative is \(O_{\mathbb{P}}\left(1\right)\) and the parameter error is \(o_{\mathbb{P}}\left(1\right)\).

The quantile derivative in equation~\eqref{eq:population-response-derivative} is continuous under the weighted norm in equation~\eqref{eq:weighted-quantile-norm}. Assumption~\ref{ass:first-order}\textup{(iii)}, stated explicitly as Assumption~\ref{ass:response-tail-continuity}, therefore gives a finite constant \(C_m\) such that, locally around \(\boldsymbol{Q}_0\),
\[
\left|
\dot{\Psi}_m^{\mathrm{dist}}\left[
\boldsymbol{q}
\right]
\right|
+
\left|
\dot{\Psi}_m^{\mathrm{imp}}\left[
\boldsymbol{q}
\right]
\right|
\leq
C_m
\left\|
\boldsymbol{q}
\right\|.
\]
Applying Lemma~\ref{lem:recursive-response-differentiability} with the parameter fixed at \(\boldsymbol{\beta}_0\) gives
\[
\Psi_m\left(
\boldsymbol{\beta}_0,
\boldsymbol{Q}_T
\right)
-
\Psi_m\left(
\boldsymbol{\beta}_0,
\boldsymbol{Q}_0
\right)
=
O_{\mathbb{P}}\left(
\left\|
\boldsymbol{Q}_T
-
\boldsymbol{Q}_0
\right\|
\right)
=
o_{\mathbb{P}}\left(1\right).
\]

Conditional on the observed sample, the simulated path responses are independent across \(s\) and have conditional mean
\(
\Psi_m\left(
\boldsymbol{\hat{\beta}},
\boldsymbol{Q}_T
\right)
\).
Hence
\[
\begin{aligned}
&\mathbb{E}\left[
\left|
\Psi_{m,S}\left(
\boldsymbol{\hat{\beta}},
\boldsymbol{Q}_T
\right)
-
\Psi_m\left(
\boldsymbol{\hat{\beta}},
\boldsymbol{Q}_T
\right)
\right|^2
\mathrel{\Big|}
\mathcal{Y}_T
\right]
\\
&\qquad=
\frac{1}{S}
\operatorname{Var}\left[
D_m\left(
\boldsymbol{P}_{1:h_m};
\boldsymbol{\hat{\beta}},
\boldsymbol{Q}_T
\right)
\mathrel{\Big|}
\mathcal{Y}_T
\right]
=
O_{\mathbb{P}}\left(S^{-1}\right).
\end{aligned}
\]
Conditional Chebyshev's inequality then gives
\[
\Psi_{m,S}\left(
\boldsymbol{\hat{\beta}},
\boldsymbol{Q}_T
\right)
-
\Psi_m\left(
\boldsymbol{\hat{\beta}},
\boldsymbol{Q}_T
\right)
=
o_{\mathbb{P}}\left(1\right)
\]
when \(S\rightarrow\infty\).

For the final decomposition, write
\[
\boldsymbol{\hat{Q}}^{r}
=
\begin{cases}
\boldsymbol{\hat{Q}}^{\mathrm{E}}, & r=\mathrm{E},\\
\boldsymbol{\hat{Q}}_{b_T}^{\mathrm{S}}, & r=\mathrm{S}.
\end{cases}
\]
Combining the preceding steps,
\[
\begin{aligned}
\hat{\psi}_{m,S}^{r}
-
\psi_m
&=
\left[
\Psi_{m,S}\left(
\boldsymbol{\hat{\beta}},
\boldsymbol{\hat{Q}}^{r}
\right)
-
\Psi_m\left(
\boldsymbol{\hat{\beta}},
\boldsymbol{\hat{Q}}^{r}
\right)
\right]
\\
&\quad+
\left[
\Psi_m\left(
\boldsymbol{\hat{\beta}},
\boldsymbol{\hat{Q}}^{r}
\right)
-
\Psi_m\left(
\boldsymbol{\beta}_0,
\boldsymbol{\hat{Q}}^{r}
\right)
\right]
\\
&\quad+
\left[
\Psi_m\left(
\boldsymbol{\beta}_0,
\boldsymbol{\hat{Q}}^{r}
\right)
-
\Psi_m\left(
\boldsymbol{\beta}_0,
\boldsymbol{Q}_0
\right)
\right]
\\
&=
o_{\mathbb{P}}\left(1\right),
\qquad
r\in\left\{\mathrm{E},\mathrm{S}\right\}.
\end{aligned}
\]
Because \(M\) is fixed,
\[
\Pr\left(
\max_{1\leq m\leq M}
\left|
\hat{\psi}_{m,S}^{r}
-
\psi_m
\right|
>\varepsilon
\right)
\leq
\sum_{m=1}^{M}
\Pr\left(
\left|
\hat{\psi}_{m,S}^{r}
-
\psi_m
\right|
>\varepsilon
\right)
\rightarrow
0.
\]
This proves both conclusions.
\end{proof}

\begin{proof}[Proof of Theorem~\ref{thm:main-linearization}]
Fix one response specification and suppress its fixed arguments.

Add and subtract the exactly integrated fitted response:
\begin{equation}
\begin{aligned}
\sqrt{T}
\left(
\hat{\psi}_{h,S}^{\mathrm{E}}
-
\psi_h
\right)
&=
\sqrt{T}
\left[
\Psi_{h,S}\left(
\boldsymbol{\hat{\beta}},
\boldsymbol{\hat{Q}}^{\mathrm{E}}
\right)
-
\Psi_h\left(
\boldsymbol{\hat{\beta}},
\boldsymbol{\hat{Q}}^{\mathrm{E}}
\right)
\right]
\\
&\quad+
\sqrt{T}
\left[
\Psi_h\left(
\boldsymbol{\hat{\beta}},
\boldsymbol{\hat{Q}}^{\mathrm{E}}
\right)
-
\Psi_h\left(
\boldsymbol{\beta}_0,
\boldsymbol{Q}_0
\right)
\right].
\end{aligned}
\label{eq:proof-main-sampling-simulation-decomposition}
\end{equation}
Lemma~\ref{lem:negligible-simulation-error} applies because \(S/T\rightarrow\infty\). Therefore,
\begin{equation}
\sqrt{T}
\left[
\Psi_{h,S}\left(
\boldsymbol{\hat{\beta}},
\boldsymbol{\hat{Q}}^{\mathrm{E}}
\right)
-
\Psi_h\left(
\boldsymbol{\hat{\beta}},
\boldsymbol{\hat{Q}}^{\mathrm{E}}
\right)
\right]
=
o_{\mathbb{P}}\left(1\right).
\label{eq:proof-main-negligible-simulation}
\end{equation}

Equation~\eqref{eq:recursive-plug-in-expansion}, obtained from Lemma~\ref{lem:recursive-response-differentiability}, gives
\begin{equation}
\begin{aligned}
&\sqrt{T}
\left[
\Psi_h\left(
\boldsymbol{\hat{\beta}},
\boldsymbol{\hat{Q}}^{\mathrm{E}}
\right)
-
\Psi_h\left(
\boldsymbol{\beta}_0,
\boldsymbol{Q}_0
\right)
\right]
\\
&\qquad=
\boldsymbol{A}_h^{\prime}
\sqrt{T}
\left(
\boldsymbol{\hat{\beta}}
-
\boldsymbol{\beta}_0
\right)
+
\dot{\Psi}_h^{\mathrm{dist}}\left[
\sqrt{T}
\left(
\boldsymbol{\hat{Q}}^{\mathrm{E}}
-
\boldsymbol{Q}_0
\right)
\right]
\\
&\qquad\quad+
\dot{\Psi}_h^{\mathrm{imp}}\left[
\sqrt{T}
\left(
\boldsymbol{\hat{Q}}^{\mathrm{E}}
-
\boldsymbol{Q}_0
\right)
\right]
+
o_{\mathbb{P}}\left(1\right).
\end{aligned}
\label{eq:proof-main-functional-expansion}
\end{equation}

Substituting equation~\eqref{eq:beta-linearization} into the first term on the right-hand side of equation~\eqref{eq:proof-main-functional-expansion} gives
\begin{equation}
\boldsymbol{A}_h^{\prime}
\sqrt{T}
\left(
\boldsymbol{\hat{\beta}}
-
\boldsymbol{\beta}_0
\right)
=
\frac{1}{\sqrt{T}}
\sum_{t=1}^{T}
\boldsymbol{A}_h^{\prime}
\boldsymbol{L}_t
+
o_{\mathbb{P}}\left(1\right)
=
\frac{1}{\sqrt{T}}
\sum_{t=1}^{T}
Z_{h,t}^{\mathrm{tr}}
+
o_{\mathbb{P}}\left(1\right).
\label{eq:proof-main-transition-term}
\end{equation}

Because
\(
\dot{\Psi}_h^{\mathrm{dist}}
=
\dot{\Psi}_h^{\mathrm{unshift}}
+
\dot{\Psi}_h^{\mathrm{future}}
\),
Lemma~\ref{lem:projected-generated-quantiles} and equation~\eqref{eq:projected-generated-quantile-expansions} imply
\begin{equation}
\begin{aligned}
\dot{\Psi}_h^{\mathrm{dist}}\left[
\sqrt{T}
\left(
\boldsymbol{\hat{Q}}^{\mathrm{E}}
-
\boldsymbol{Q}_0
\right)
\right]
&=
\frac{1}{\sqrt{T}}
\sum_{t=1}^{T}
\left\{
\dot{\Psi}_h^{\mathrm{dist}}\left[
\boldsymbol{\Xi}_t
\right]
+
\dot{\Psi}_h^{\mathrm{dist}}\left[
\boldsymbol{R}_t
\right]
\right\}
+
o_{\mathbb{P}}\left(1\right),
\\
\dot{\Psi}_h^{\mathrm{imp}}\left[
\sqrt{T}
\left(
\boldsymbol{\hat{Q}}^{\mathrm{E}}
-
\boldsymbol{Q}_0
\right)
\right]
&=
\frac{1}{\sqrt{T}}
\sum_{t=1}^{T}
\left\{
\dot{\Psi}_h^{\mathrm{imp}}\left[
\boldsymbol{\Xi}_t
\right]
+
\dot{\Psi}_h^{\mathrm{imp}}\left[
\boldsymbol{R}_t
\right]
\right\}
+
o_{\mathbb{P}}\left(1\right).
\end{aligned}
\label{eq:proof-main-quantile-terms}
\end{equation}
Using equation~\eqref{eq:influence-decomposition}, the sum of the two lines in equation~\eqref{eq:proof-main-quantile-terms} is
\begin{equation}
\frac{1}{\sqrt{T}}
\sum_{t=1}^{T}
\left(
Z_{h,t}^{\mathrm{res}}
+
Z_{h,t}^{\mathrm{dist}}
+
Z_{h,t}^{\mathrm{imp}}
\right)
+
o_{\mathbb{P}}\left(1\right).
\label{eq:proof-main-quantile-decomposition}
\end{equation}

\medskip\noindent\textit{Step 5: collect the observation-level terms.}\par
Substituting equations~\eqref{eq:proof-main-transition-term} and~\eqref{eq:proof-main-quantile-decomposition} into equation~\eqref{eq:proof-main-functional-expansion}, and then using equations~\eqref{eq:proof-main-sampling-simulation-decomposition}--\eqref{eq:proof-main-negligible-simulation}, gives
\[
\begin{aligned}
\sqrt{T}
\left(
\hat{\psi}_{h,S}^{\mathrm{E}}
-
\psi_h
\right)
&=
\frac{1}{\sqrt{T}}
\sum_{t=1}^{T}
\left(
Z_{h,t}^{\mathrm{tr}}
+
Z_{h,t}^{\mathrm{res}}
+
Z_{h,t}^{\mathrm{dist}}
+
Z_{h,t}^{\mathrm{imp}}
\right)
+
o_{\mathbb{P}}\left(1\right)
\\
&=
\frac{1}{\sqrt{T}}
\sum_{t=1}^{T}
Z_{h,t}
+
o_{\mathbb{P}}\left(1\right).
\end{aligned}
\]
This proves equation~\eqref{eq:empirical-irf-linearization}.

For a fixed collection of \(M\) responses, apply the preceding five steps componentwise. If \(r_{m,T}=o_{\mathbb{P}}\left(1\right)\) is the remainder for response \(m\), then
\[
\max_{1\leq m\leq M}
\left|
r_{m,T}
\right|
=
o_{\mathbb{P}}\left(1\right)
\]
because \(M\) is fixed. Stacking the componentwise expansions gives equation~\eqref{eq:joint-empirical-irf-linearization}.
\end{proof}

\begin{proof}[Proof of Corollary~\ref{cor:main-gaussian-limit}]

Equation~\eqref{eq:joint-empirical-irf-linearization} gives
\begin{equation}
\sqrt{T}
\left(
\boldsymbol{\hat{\psi}}_{S}^{\mathrm{E}}
-
\boldsymbol{\psi}
\right)
=
\frac{1}{\sqrt{T}}
\sum_{t=1}^{T}
\boldsymbol{Z}_t
+
o_{\mathbb{P}}\left(1\right).
\label{eq:proof-main-joint-linearization}
\end{equation}

Under Assumption~\ref{ass:first-order}\textup{(i)}, Proposition~\ref{prop:recursive-joint-clt} applies. Under its primitive strong-mixing conditions, the polynomial-moment calculation following \citet[Corollary~1.2, pp.~11--12]{Rio2017} verifies condition \(\mathrm{DMR}\). Corollary~1.2 then gives absolute convergence of the covariance series, and the multivariate central limit theorem in \citet[Corollary~4.1]{Rio2017} gives
\[
\frac{1}{\sqrt{T}}
\sum_{t=1}^{T}
\boldsymbol{Z}_t
\xrightarrow{\mathrm{d}}
\mathcal{N}\left(
\boldsymbol{0},
\boldsymbol{\Omega}
\right),
\qquad
\boldsymbol{\Omega}
=
\sum_{\ell=-\infty}^{\infty}
\operatorname{Cov}\left(
\boldsymbol{Z}_0,
\boldsymbol{Z}_{\ell}
\right).
\]
Slutsky's lemma, in the form stated by \citet[Lemma~2.8]{VanDerVaart1998}, applied to equation~\eqref{eq:proof-main-joint-linearization} proves equation~\eqref{eq:empirical-irf-limit}.

Suppose additionally that \(\left\{\boldsymbol{Z}_t,\mathcal{F}_t\right\}\) is a martingale difference sequence. For \(\ell\geq1\),
\[
\begin{aligned}
\operatorname{Cov}\left(
\boldsymbol{Z}_0,
\boldsymbol{Z}_{\ell}
\right)
&=
\mathbb{E}\left[
\boldsymbol{Z}_0
\boldsymbol{Z}_{\ell}^{\prime}
\right]
\\
&=
\mathbb{E}\left[
\boldsymbol{Z}_0
\mathbb{E}\left[
\left.
\boldsymbol{Z}_{\ell}^{\prime}
\right|
\mathcal{F}_{\ell-1}
\right]
\right]
=
\boldsymbol{0}.
\end{aligned}
\]
Stationarity gives the same conclusion for negative lags. Therefore,
\[
\boldsymbol{\Omega}
=
\operatorname{Var}\left(
\boldsymbol{Z}_t
\right)
=
\mathbb{E}\left[
\boldsymbol{Z}_t
\boldsymbol{Z}_t^{\prime}
\right].
\]
\end{proof}

\subsection{Smoothing comparison}
\label{app:proofs-smoothing}

\begin{proof}[Proof of Theorem~\ref{thm:smoothing-expansion}]
Fix one response specification and suppress its fixed arguments. Define
\[
a_T
=
T^{-1/2}+b_T^2,
\qquad
\mathcal{Y}_T
=
\sigma\left(
\boldsymbol{Y}_0,
\ldots,
\boldsymbol{Y}_T
\right),
\]
and let
\[
\boldsymbol{\Delta}_T
=
\boldsymbol{\hat{Q}}_{b_T}^{\mathrm{S}}
-
\boldsymbol{\hat{Q}}^{\mathrm{E}}
\]
denote the componentwise difference between the two estimated quantile functions.

Let
\[
\boldsymbol{Q}_0^{\prime\prime}
=
\left(
Q_{10}^{\prime\prime},
\ldots,
Q_{n0}^{\prime\prime}
\right)^{\prime}.
\]
Equation~\eqref{eq:quantile-smoothing-decomposition} gives
\[
\begin{aligned}
\boldsymbol{\Delta}_T
-
\frac{b_T^2}{2}
\boldsymbol{Q}_0^{\prime\prime}
&=
\left[
\mathcal{S}_{b_T}\boldsymbol{Q}_0
-
\boldsymbol{Q}_0
-
\frac{b_T^2}{2}
\boldsymbol{Q}_0^{\prime\prime}
\right] \\
&\quad+
\left(
\mathcal{S}_{b_T}
-
\operatorname{Id}
\right)
\left(
\boldsymbol{\hat{Q}}^{\mathrm{E}}
-
\boldsymbol{Q}_0
\right).
\end{aligned}
\]
Assumption~\ref{ass:smoothing} therefore implies
\[
\left\|
\boldsymbol{\Delta}_T
-
\frac{b_T^2}{2}
\boldsymbol{Q}_0^{\prime\prime}
\right\|
=
o\left(
b_T^2
\right)
+
o_{\mathbb{P}}\left(
T^{-1/2}
\right)
=
o_{\mathbb{P}}\left(
a_T
\right).
\]
Consequently,
\begin{equation}
\left\|
\boldsymbol{\Delta}_T
\right\|
=
O_{\mathbb{P}}\left(
a_T
\right),
\qquad
\left\|
\boldsymbol{\hat{Q}}_{b_T}^{\mathrm{S}}
-
\boldsymbol{Q}_0
\right\|
=
O_{\mathbb{P}}\left(
a_T
\right).
\label{eq:proof-smoothing-quantile-order}
\end{equation}

Lemma~\ref{lem:recursive-response-differentiability}, together with the second-order remainder bound used in its proof, gives
\[
\begin{aligned}
&\Psi_h\left(
\boldsymbol{\hat{\beta}},
\boldsymbol{\hat{Q}}_{b_T}^{\mathrm{S}}
\right)
-
\Psi_h\left(
\boldsymbol{\beta}_0,
\boldsymbol{Q}_0
\right) \\
&\quad=
\boldsymbol{A}_h^{\prime}
\left(
\boldsymbol{\hat{\beta}}
-
\boldsymbol{\beta}_0
\right)
+
\dot{\Psi}_h^{\mathrm{dist}}
\left[
\boldsymbol{\hat{Q}}_{b_T}^{\mathrm{S}}
-
\boldsymbol{Q}_0
\right]
+
\dot{\Psi}_h^{\mathrm{imp}}
\left[
\boldsymbol{\hat{Q}}_{b_T}^{\mathrm{S}}
-
\boldsymbol{Q}_0
\right]
+
o_{\mathbb{P}}\left(
a_T
\right)
\end{aligned}
\]
and
\[
\begin{aligned}
&\Psi_h\left(
\boldsymbol{\hat{\beta}},
\boldsymbol{\hat{Q}}^{\mathrm{E}}
\right)
-
\Psi_h\left(
\boldsymbol{\beta}_0,
\boldsymbol{Q}_0
\right) \\
&\quad=
\boldsymbol{A}_h^{\prime}
\left(
\boldsymbol{\hat{\beta}}
-
\boldsymbol{\beta}_0
\right)
+
\dot{\Psi}_h^{\mathrm{dist}}
\left[
\boldsymbol{\hat{Q}}^{\mathrm{E}}
-
\boldsymbol{Q}_0
\right]
+
\dot{\Psi}_h^{\mathrm{imp}}
\left[
\boldsymbol{\hat{Q}}^{\mathrm{E}}
-
\boldsymbol{Q}_0
\right]
+
o_{\mathbb{P}}\left(
T^{-1/2}
\right).
\end{aligned}
\]
Subtracting the second expansion from the first cancels the direct transition-parameter term:
\begin{equation}
\begin{aligned}
&\Psi_h\left(
\boldsymbol{\hat{\beta}},
\boldsymbol{\hat{Q}}_{b_T}^{\mathrm{S}}
\right)
-
\Psi_h\left(
\boldsymbol{\hat{\beta}},
\boldsymbol{\hat{Q}}^{\mathrm{E}}
\right) \\
&\quad=
\dot{\Psi}_h^{\mathrm{dist}}
\left[
\boldsymbol{\Delta}_T
\right]
+
\dot{\Psi}_h^{\mathrm{imp}}
\left[
\boldsymbol{\Delta}_T
\right]
+
o_{\mathbb{P}}\left(
a_T
\right).
\end{aligned}
\label{eq:proof-smoothing-integrated-linearization}
\end{equation}

The derivative maps are continuous under the weighted norm by Assumption~\ref{ass:response-tail-continuity}. Combining this continuity with the preceding expansion for \(\boldsymbol{\Delta}_T\) gives
\[
\begin{aligned}
&\dot{\Psi}_h^{\mathrm{dist}}
\left[
\boldsymbol{\Delta}_T
\right]
+
\dot{\Psi}_h^{\mathrm{imp}}
\left[
\boldsymbol{\Delta}_T
\right] \\
&\quad=
\frac{b_T^2}{2}
\left\{
\dot{\Psi}_h^{\mathrm{dist}}
\left[
\boldsymbol{Q}_0^{\prime\prime}
\right]
+
\dot{\Psi}_h^{\mathrm{imp}}
\left[
\boldsymbol{Q}_0^{\prime\prime}
\right]
\right\}
+
o_{\mathbb{P}}\left(
a_T
\right).
\end{aligned}
\]
Using equation~\eqref{eq:propagation-weights},
\[
\begin{aligned}
&\frac{1}{2}
\left\{
\dot{\Psi}_h^{\mathrm{dist}}
\left[
\boldsymbol{Q}_0^{\prime\prime}
\right]
+
\dot{\Psi}_h^{\mathrm{imp}}
\left[
\boldsymbol{Q}_0^{\prime\prime}
\right]
\right\} \\
&\quad=
\frac{1}{2}
\left[
\sum_{j=1}^{n}
\int_{0}^{1}
\omega_{h,j}^{\mathrm{dist}}\left(
p
\right)
Q_{j0}^{\prime\prime}\left(
p
\right)
\,\dif p
+
\int_{0}^{1}
\omega_{h,k}^{\mathrm{imp}}\left(
p
\right)
Q_{k0}^{\prime\prime}\left(
\tau_{\delta}\left(
p
\right)
\right)
\,\dif p
\right] \\
&\quad=
B_h^{\mathrm{S}}.
\end{aligned}
\]
Substituting this identity into equation~\eqref{eq:proof-smoothing-integrated-linearization} yields
\begin{equation}
\Psi_h\left(
\boldsymbol{\hat{\beta}},
\boldsymbol{\hat{Q}}_{b_T}^{\mathrm{S}}
\right)
-
\Psi_h\left(
\boldsymbol{\hat{\beta}},
\boldsymbol{\hat{Q}}^{\mathrm{E}}
\right)
=
b_T^2B_h^{\mathrm{S}}
+
o_{\mathbb{P}}\left(
a_T
\right).
\label{eq:proof-smoothing-integrated-bias}
\end{equation}
Stopping before the second-order substitution gives the operator-level bias stated in Remark~\ref{rem:additional-theorem-conditions}.

For simulation draw \(s\), define
\[
\begin{aligned}
\Delta D_{h,s,T}
&=
D_h\left(
\boldsymbol{P}_{s,1:h};
\boldsymbol{\hat{\beta}},
\boldsymbol{\hat{Q}}_{b_T}^{\mathrm{S}},
\boldsymbol{y},
\boldsymbol{a},
k,
\delta
\right) \\
&\quad-
D_h\left(
\boldsymbol{P}_{s,1:h};
\boldsymbol{\hat{\beta}},
\boldsymbol{\hat{Q}}^{\mathrm{E}},
\boldsymbol{y},
\boldsymbol{a},
k,
\delta
\right).
\end{aligned}
\]
Because both terms use the same rank path,
\[
\begin{aligned}
\mathbb{E}\left[
\left.
\Delta D_{h,s,T}
\right|
\mathcal{Y}_T
\right]
&=
\Psi_h\left(
\boldsymbol{\hat{\beta}},
\boldsymbol{\hat{Q}}_{b_T}^{\mathrm{S}}
\right)
-
\Psi_h\left(
\boldsymbol{\hat{\beta}},
\boldsymbol{\hat{Q}}^{\mathrm{E}}
\right).
\end{aligned}
\]
The \(L^2\) part of Lemma~\ref{lem:recursive-response-differentiability}, applied along the segment joining the two quantile estimates, gives a \(\mathcal{Y}_T\)-measurable sequence \(C_T=O_{\mathbb{P}}\left(1\right)\) such that
\[
\mathbb{E}\left[
\left.
\left(
\Delta D_{h,s,T}
\right)^2
\right|
\mathcal{Y}_T
\right]
\leq
C_T
\left\|
\boldsymbol{\Delta}_T
\right\|^2.
\]
Conditional independence across simulation draws and equation~\eqref{eq:proof-smoothing-quantile-order} imply
\[
\begin{aligned}
&\operatorname{Var}\left[
\left.
\frac{1}{S}
\sum_{s=1}^{S}
\left\{
\Delta D_{h,s,T}
-
\mathbb{E}\left[
\left.
\Delta D_{h,s,T}
\right|
\mathcal{Y}_T
\right]
\right\}
\right|
\mathcal{Y}_T
\right] \\
&\quad\leq
\frac{C_T}{S}
\left\|
\boldsymbol{\Delta}_T
\right\|^2
=
O_{\mathbb{P}}\left(
S^{-1}a_T^2
\right).
\end{aligned}
\]
Conditional Chebyshev's inequality therefore gives
\begin{equation}
\frac{1}{S}
\sum_{s=1}^{S}
\left\{
\Delta D_{h,s,T}
-
\mathbb{E}\left[
\left.
\Delta D_{h,s,T}
\right|
\mathcal{Y}_T
\right]
\right\}
=
O_{\mathbb{P}}\left(
S^{-1/2}a_T
\right).
\label{eq:proof-smoothing-paired-simulation}
\end{equation}
Combining equations~\eqref{eq:proof-smoothing-integrated-bias} and~\eqref{eq:proof-smoothing-paired-simulation} gives
\[
\begin{aligned}
\hat{\psi}_{h,S}^{\mathrm{S}}
-
\hat{\psi}_{h,S}^{\mathrm{E}}
&=
b_T^2B_h^{\mathrm{S}}
+
o_{\mathbb{P}}\left(
T^{-1/2}+b_T^2
\right) \\
&\quad+
O_{\mathbb{P}}\left[
S^{-1/2}
\left(
T^{-1/2}+b_T^2
\right)
\right].
\end{aligned}
\]
This proves equations~\eqref{eq:smoothing-expansion} and~\eqref{eq:smoothing-remainder}.

For a fixed collection of \(M\) responses, the preceding bounds apply componentwise. Since \(M\) is fixed, the maximum of the componentwise \(o_{\mathbb{P}}\left(a_T\right)\) remainders remains \(o_{\mathbb{P}}\left(a_T\right)\), while the conditional covariance matrix of the paired Monte Carlo term is \(O_{\mathbb{P}}\left(S^{-1}a_T^2\right)\). Stacking the componentwise expansions proves the joint result.
\end{proof}

\begin{proof}[Proof of Corollary~\ref{cor:smoothing-regimes}]
Let \(\boldsymbol{R}_{T,S}^{\mathrm{S}}\) stack the remainders in equation~\eqref{eq:smoothing-remainder}. Multiplying the joint version of equation~\eqref{eq:smoothing-expansion} by \(\sqrt{T}\) gives
\begin{equation}
\sqrt{T}
\left(
\boldsymbol{\hat{\psi}}_{S}^{\mathrm{S}}
-
\boldsymbol{\hat{\psi}}_{S}^{\mathrm{E}}
\right)
=
\lambda_T
\boldsymbol{B}^{\mathrm{S}}
+
\sqrt{T}
\boldsymbol{R}_{T,S}^{\mathrm{S}},
\label{eq:proof-smoothing-root-t-difference}
\end{equation}
where
\begin{equation}
\sqrt{T}
\boldsymbol{R}_{T,S}^{\mathrm{S}}
=
o_{\mathbb{P}}\left(
1+\lambda_T
\right)
+
O_{\mathbb{P}}\left[
S^{-1/2}
\left(
1+\lambda_T
\right)
\right].
\label{eq:proof-smoothing-root-t-remainder}
\end{equation}

If \(\lambda_T\rightarrow0\), equation~\eqref{eq:proof-smoothing-root-t-remainder} and \(S\rightarrow\infty\) give
\[
\sqrt{T}
\boldsymbol{R}_{T,S}^{\mathrm{S}}
=
o_{\mathbb{P}}\left(
1
\right).
\]
Equation~\eqref{eq:proof-smoothing-root-t-difference} therefore implies
\[
\sqrt{T}
\left(
\boldsymbol{\hat{\psi}}_{S}^{\mathrm{S}}
-
\boldsymbol{\hat{\psi}}_{S}^{\mathrm{E}}
\right)
\xrightarrow{\mathbb{P}}
\boldsymbol{0}.
\]
Consequently,
\[
\sqrt{T}
\left(
\boldsymbol{\hat{\psi}}_{S}^{\mathrm{S}}
-
\boldsymbol{\psi}
\right)
=
\sqrt{T}
\left(
\boldsymbol{\hat{\psi}}_{S}^{\mathrm{E}}
-
\boldsymbol{\psi}
\right)
+
o_{\mathbb{P}}\left(
1
\right).
\]
Corollaries~\ref{cor:main-gaussian-limit} and~\ref{cor:feasible-wald-inference} then give the same first-order distribution and the same consistent covariance estimator.

If \(\lambda_T\rightarrow\lambda\in\left(0,\infty\right)\), equation~\eqref{eq:proof-smoothing-root-t-remainder} gives
\[
\sqrt{T}
\boldsymbol{R}_{T,S}^{\mathrm{S}}
=
o_{\mathbb{P}}\left(
1
\right).
\]
Hence,
\[
\begin{aligned}
\sqrt{T}
\left(
\boldsymbol{\hat{\psi}}_{S}^{\mathrm{S}}
-
\boldsymbol{\psi}
\right)
&=
\sqrt{T}
\left(
\boldsymbol{\hat{\psi}}_{S}^{\mathrm{E}}
-
\boldsymbol{\psi}
\right)
+
\lambda_T
\boldsymbol{B}^{\mathrm{S}}
+
o_{\mathbb{P}}\left(
1
\right).
\end{aligned}
\]
Corollary~\ref{cor:main-gaussian-limit} and Slutsky's lemma, in the form stated by \citet[Lemma~2.8]{VanDerVaart1998}, give
\[
\sqrt{T}
\left(
\boldsymbol{\hat{\psi}}_{S}^{\mathrm{S}}
-
\boldsymbol{\psi}
\right)
\xrightarrow{\mathrm{d}}
\mathcal{N}\left(
\lambda\boldsymbol{B}^{\mathrm{S}},
\boldsymbol{\Omega}
\right),
\]
which proves equation~\eqref{eq:smoothed-local-bias-limit}.

Suppose \(\lambda_T\rightarrow\infty\) and \(B_h^{\mathrm{S}}\neq0\). Dividing equation~\eqref{eq:smoothing-remainder} by \(b_T^2\) gives
\[
\begin{aligned}
b_T^{-2}R_{h,T,S}^{\mathrm{S}}
&=
o_{\mathbb{P}}\left(
1+\lambda_T^{-1}
\right)
+
O_{\mathbb{P}}\left[
S^{-1/2}
\left(
1+\lambda_T^{-1}
\right)
\right] \\
&=
o_{\mathbb{P}}\left(
1
\right).
\end{aligned}
\]
Corollary~\ref{cor:main-gaussian-limit} gives
\[
\sqrt{T}
\left(
\hat{\psi}_{h,S}^{\mathrm{E}}
-
\psi_h
\right)
=
O_{\mathbb{P}}\left(
1
\right),
\]
and therefore
\[
b_T^{-2}
\left(
\hat{\psi}_{h,S}^{\mathrm{E}}
-
\psi_h
\right)
=
\lambda_T^{-1}
O_{\mathbb{P}}\left(
1
\right)
=
o_{\mathbb{P}}\left(
1
\right).
\]
Using equation~\eqref{eq:smoothing-expansion},
\[
\begin{aligned}
b_T^{-2}
\left(
\hat{\psi}_{h,S}^{\mathrm{S}}
-
\psi_h
\right)
&=
B_h^{\mathrm{S}}
+
b_T^{-2}R_{h,T,S}^{\mathrm{S}}
+
b_T^{-2}
\left(
\hat{\psi}_{h,S}^{\mathrm{E}}
-
\psi_h
\right) \\
&=
B_h^{\mathrm{S}}
+
o_{\mathbb{P}}\left(
1
\right),
\end{aligned}
\]
which proves equation~\eqref{eq:smoothing-bias-dominates}.
\end{proof}

\begin{proof}[Derivation of equation~\ref{eq:smoothed-wald-coverage}]
Let
\[
\hat{\sigma}_h^2
=
\hat{\Omega}_{hh}.
\]
Corollary~\ref{cor:feasible-wald-inference} gives
\[
\hat{\sigma}_h
\xrightarrow{\mathbb{P}}
\sigma_h.
\]
Under \(\lambda_T\rightarrow\lambda\in\left(0,\infty\right)\), Part~\textup{(ii)} of Corollary~\ref{cor:smoothing-regimes} and Slutsky's lemma, as stated by \citet[Lemma~2.8]{VanDerVaart1998}, give
\[
\frac{
\sqrt{T}
\left(
\hat{\psi}_{h,S}^{\mathrm{S}}
-
\psi_h
\right)
}{
\hat{\sigma}_h
}
\xrightarrow{\mathrm{d}}
Z
+
\frac{
\lambda B_h^{\mathrm{S}}
}{
\sigma_h
},
\qquad
Z
\sim
\mathcal{N}\left(
0,
1
\right).
\]
The Wald interval covers \(\psi_h\) when the absolute value of the statistic on the left does not exceed \(z_{1-\alpha/2}\). Its limiting coverage is therefore
\[
\begin{aligned}
&\Pr\left(
-z_{1-\alpha/2}
\leq
Z
+
\frac{
\lambda B_h^{\mathrm{S}}
}{
\sigma_h
}
\leq
z_{1-\alpha/2}
\right) \\
&\quad=
\Phi\left(
z_{1-\alpha/2}
-
\frac{
\lambda B_h^{\mathrm{S}}
}{
\sigma_h
}
\right)
-
\Phi\left(
-z_{1-\alpha/2}
-
\frac{
\lambda B_h^{\mathrm{S}}
}{
\sigma_h
}
\right),
\end{aligned}
\]
which proves equation~\eqref{eq:smoothed-wald-coverage}.

Let
\[
c
=
\frac{
\lambda B_h^{\mathrm{S}}
}{
\sigma_h
}.
\]
The coverage expression is even in \(c\). For \(c>0\), its derivative is
\[
\phi\left(
z_{1-\alpha/2}+c
\right)
-
\phi\left(
z_{1-\alpha/2}-c
\right)
<
0.
\]
Hence the coverage is strictly below \(1-\alpha\) whenever \(c\neq0\).

Under \(\lambda_T\rightarrow\infty\), equation~\eqref{eq:smoothing-bias-dominates} gives
\[
\left|
\frac{
\sqrt{T}
\left(
\hat{\psi}_{h,S}^{\mathrm{S}}
-
\psi_h
\right)
}{
\hat{\sigma}_h
}
\right|
=
\frac{
\lambda_T
}{
\hat{\sigma}_h
}
\left|
B_h^{\mathrm{S}}
+
o_{\mathbb{P}}\left(
1
\right)
\right|
\xrightarrow{\mathbb{P}}
\infty.
\]
Therefore, the coverage probability of the uncorrected interval converges to zero.
\end{proof}

\begin{proof}[Proof of Corollary~\ref{cor:smoothing-special-cases}]

For a quantile perturbation \(\boldsymbol{q}\), the pathwise response derivative is
\[
\begin{aligned}
\dot{D}_h^{\xi}\left[
\boldsymbol{q}
\right]
&=
\sum_{j=1}^{n}
\left(
\Lambda_{h,1,j}^{\xi}
-
\Lambda_{h,1,j}^{0}
\right)
q_j\left(
P_{j1}
\right) \\
&\quad+
\sum_{s=2}^{h}
\sum_{j=1}^{n}
\left(
\Lambda_{h,s,j}^{\xi}
-
\Lambda_{h,s,j}^{0}
\right)
q_j\left(
P_{js}
\right).
\end{aligned}
\]
The impact displacement \(\xi\boldsymbol{e}_k\) is fixed, so no quantile perturbation is evaluated at a shifted rank. Taking expectations and conditioning on the corresponding rank gives
\[
\begin{aligned}
\dot{\Psi}_h^{\xi}\left[
\boldsymbol{q}
\right]
&=
\sum_{j=1}^{n}
\int_{0}^{1}
\left[
\mathbb{E}\left[
\left.
\Lambda_{h,1,j}^{\xi}
-
\Lambda_{h,1,j}^{0}
\right|
P_{j1}=p
\right] \right. \\
&\qquad\left.
+
\sum_{s=2}^{h}
\mathbb{E}\left[
\left.
\Lambda_{h,s,j}^{\xi}
-
\Lambda_{h,s,j}^{0}
\right|
P_{js}=p
\right]
\right]
q_j\left(
p
\right)
\,\dif p.
\end{aligned}
\]
Applying Assumption~\ref{ass:smoothing} with
\[
q_j
=
\frac{1}{2}
Q_{j0}^{\prime\prime}
\]
gives the leading smoothing coefficient for a fixed additive shock.

At impact,
\[
\boldsymbol{Y}_1^{\delta}
-
\boldsymbol{Y}_1^{0}
=
\boldsymbol{B}_0
\boldsymbol{e}_k
\left[
Q_{k0}\left(
\tau_{\delta}\left(
P_{k1}
\right)
\right)
-
Q_{k0}\left(
P_{k1}
\right)
\right].
\]
For \(j=2,\ldots,h\), the common future innovations cancel:
\[
\boldsymbol{Y}_j^{\delta}
-
\boldsymbol{Y}_j^{0}
=
\boldsymbol{A}_0
\left(
\boldsymbol{Y}_{j-1}^{\delta}
-
\boldsymbol{Y}_{j-1}^{0}
\right).
\]
Iteration gives
\[
\begin{aligned}
&D_h\left(
\boldsymbol{P}_{1:h};
\boldsymbol{\beta}_0,
\boldsymbol{Q}_0
\right) \\
&\quad=
\boldsymbol{a}^{\prime}
\boldsymbol{A}_0^{h-1}
\boldsymbol{B}_0
\boldsymbol{e}_k
\left[
Q_{k0}\left(
\tau_{\delta}\left(
P_{k1}
\right)
\right)
-
Q_{k0}\left(
P_{k1}
\right)
\right].
\end{aligned}
\]
Therefore, for a quantile perturbation \(\boldsymbol{q}\),
\[
\dot{\Psi}_h\left[
\boldsymbol{q}
\right]
=
\boldsymbol{a}^{\prime}
\boldsymbol{A}_0^{h-1}
\boldsymbol{B}_0
\boldsymbol{e}_k
\int_{0}^{1}
\left[
q_k\left(
\tau_{\delta}\left(
p
\right)
\right)
-
q_k\left(
p
\right)
\right]
\,\dif p.
\]
Substituting
\[
q_k
=
\frac{1}{2}
Q_{k0}^{\prime\prime}
\]
proves equation~\eqref{eq:affine-smoothing-bias}.

Under a fixed additive shock, the path response at a generic common transition parameter is
\[
D_h^{\xi}
=
\xi
\boldsymbol{a}^{\prime}
\boldsymbol{A}^{h-1}
\boldsymbol{B}
\boldsymbol{e}_k.
\]
This expression does not depend on the innovation quantile functions or the simulation ranks. Hence the empirical and smoothed estimators coincide when they use the same transition estimate.
\end{proof}

\subsection{Bootstrap and finite-simulation results}
\label{app:proofs-inference}

\begin{proof}[Proof of Theorem~\ref{thm:bootstrap-validity}]
Fix one response specification. The vector result follows by stacking the same argument over the fixed response collection. All conditional stochastic orders below are understood in probability with respect to the original sample.

Add and subtract the exactly integrated original and bootstrap responses:
\[
\begin{aligned}
&\sqrt{T}
\left(
\hat{\psi}_{h,S}^{\mathrm{E},*}
-
\hat{\psi}_{h,S}^{\mathrm{E}}
\right)
\\
&\quad=
\sqrt{T}
\left[
\Psi_{h,S}\left(
\boldsymbol{\hat{\beta}}^{*},
\boldsymbol{\hat{Q}}^{\mathrm{E},*}
\right)
-
\Psi_h\left(
\boldsymbol{\hat{\beta}}^{*},
\boldsymbol{\hat{Q}}^{\mathrm{E},*}
\right)
\right]
\\
&\qquad-
\sqrt{T}
\left[
\Psi_{h,S}\left(
\boldsymbol{\hat{\beta}},
\boldsymbol{\hat{Q}}^{\mathrm{E}}
\right)
-
\Psi_h\left(
\boldsymbol{\hat{\beta}},
\boldsymbol{\hat{Q}}^{\mathrm{E}}
\right)
\right]
\\
&\qquad+
\sqrt{T}
\left[
\Psi_h\left(
\boldsymbol{\hat{\beta}}^{*},
\boldsymbol{\hat{Q}}^{\mathrm{E},*}
\right)
-
\Psi_h\left(
\boldsymbol{\hat{\beta}},
\boldsymbol{\hat{Q}}^{\mathrm{E}}
\right)
\right].
\end{aligned}
\]
The proof of Lemma~\ref{lem:negligible-simulation-error}, applied uniformly over the neighborhoods in Assumption~\ref{ass:bootstrap}\textup{(i)}, gives
\[
\sqrt{T}
\left[
\Psi_{h,S}\left(
\boldsymbol{\hat{\beta}}^{*},
\boldsymbol{\hat{Q}}^{\mathrm{E},*}
\right)
-
\Psi_h\left(
\boldsymbol{\hat{\beta}}^{*},
\boldsymbol{\hat{Q}}^{\mathrm{E},*}
\right)
\right]
=
o_{\mathbb{P}^{*}}\left(1\right)
\]
in probability. The same lemma gives
\[
\sqrt{T}
\left[
\Psi_{h,S}\left(
\boldsymbol{\hat{\beta}},
\boldsymbol{\hat{Q}}^{\mathrm{E}}
\right)
-
\Psi_h\left(
\boldsymbol{\hat{\beta}},
\boldsymbol{\hat{Q}}^{\mathrm{E}}
\right)
\right]
=
o_{\mathbb{P}}\left(1\right).
\]
Both conclusions use \(S/T\rightarrow\infty\). Therefore,
\begin{equation}
\sqrt{T}
\left(
\hat{\psi}_{h,S}^{\mathrm{E},*}
-
\hat{\psi}_{h,S}^{\mathrm{E}}
\right)
=
\sqrt{T}
\left[
\Psi_h\left(
\boldsymbol{\hat{\beta}}^{*},
\boldsymbol{\hat{Q}}^{\mathrm{E},*}
\right)
-
\Psi_h\left(
\boldsymbol{\hat{\beta}},
\boldsymbol{\hat{Q}}^{\mathrm{E}}
\right)
\right]
+
o_{\mathbb{P}^{*}}\left(1\right).
\label{eq:proof-bootstrap-remove-simulation}
\end{equation}

Assumption~\ref{ass:bootstrap}\textup{(ii)--(iii)} implies
\[
\sqrt{T}
\left[
\left\|
\boldsymbol{\hat{\beta}}^{*}
-
\boldsymbol{\hat{\beta}}
\right\|
+
\left\|
\boldsymbol{\hat{Q}}^{\mathrm{E},*}
-
\boldsymbol{\hat{Q}}^{\mathrm{E}}
\right\|
\right]
=
O_{\mathbb{P}^{*}}\left(1\right).
\]
Let \(\boldsymbol{A}_{h,T}\), \(\dot{\Psi}_{h,T}^{\mathrm{dist}}\), and \(\dot{\Psi}_{h,T}^{\mathrm{imp}}\) denote the derivative objects evaluated at \(\left(\boldsymbol{\hat{\beta}},\boldsymbol{\hat{Q}}^{\mathrm{E}}\right)\). Lemma~\ref{lem:recursive-response-differentiability} establishes the required Hadamard differentiability, and the conditional delta method in \citet[Theorem~3.10.11]{VanDerVaartWellner2023} gives
\begin{equation}
\begin{aligned}
&\sqrt{T}
\left[
\Psi_h\left(
\boldsymbol{\hat{\beta}}^{*},
\boldsymbol{\hat{Q}}^{\mathrm{E},*}
\right)
-
\Psi_h\left(
\boldsymbol{\hat{\beta}},
\boldsymbol{\hat{Q}}^{\mathrm{E}}
\right)
\right]
\\
&\quad=
\boldsymbol{A}_{h,T}^{\prime}
\sqrt{T}
\left(
\boldsymbol{\hat{\beta}}^{*}
-
\boldsymbol{\hat{\beta}}
\right)
+
\dot{\Psi}_{h,T}^{\mathrm{dist}}
\left[
\sqrt{T}
\left(
\boldsymbol{\hat{Q}}^{\mathrm{E},*}
-
\boldsymbol{\hat{Q}}^{\mathrm{E}}
\right)
\right]
\\
&\qquad+
\dot{\Psi}_{h,T}^{\mathrm{imp}}
\left[
\sqrt{T}
\left(
\boldsymbol{\hat{Q}}^{\mathrm{E},*}
-
\boldsymbol{\hat{Q}}^{\mathrm{E}}
\right)
\right]
+
o_{\mathbb{P}^{*}}\left(1\right).
\end{aligned}
\label{eq:proof-bootstrap-functional-expansion}
\end{equation}
Uniform differentiability and consistency imply
\[
\boldsymbol{A}_{h,T}
\xrightarrow{\mathbb{P}}
\boldsymbol{A}_h,
\qquad
\left\|
\dot{\Psi}_{h,T}^{\mathrm{dist}}
-
\dot{\Psi}_{h}^{\mathrm{dist}}
\right\|_{\mathrm{op}}
+
\left\|
\dot{\Psi}_{h,T}^{\mathrm{imp}}
-
\dot{\Psi}_{h}^{\mathrm{imp}}
\right\|_{\mathrm{op}}
\xrightarrow{\mathbb{P}}
0.
\]

Collect the bootstrap empirical-quantile contributions in
\[
\boldsymbol{\Xi}_t^{*}
=
\left(
\Xi_{1,t}^{*},
\ldots,
\Xi_{n,t}^{*}
\right)^{\prime}
\]
and define
\[
\boldsymbol{R}_t^{*}
=
\left(
p\mapsto
\boldsymbol{r}_{1}^{*}\left(p\right)^{\prime}
\boldsymbol{L}_t^{*},
\ldots,
p\mapsto
\boldsymbol{r}_{n}^{*}\left(p\right)^{\prime}
\boldsymbol{L}_t^{*}
\right)^{\prime}.
\]
Assumption~\ref{ass:bootstrap}\textup{(ii)--(iii)} gives
\[
\sqrt{T}
\left(
\boldsymbol{\hat{\beta}}^{*}
-
\boldsymbol{\hat{\beta}}
\right)
=
\frac{1}{\sqrt{T}}
\sum_{t=1}^{T}
\boldsymbol{L}_t^{*}
+
o_{\mathbb{P}^{*}}\left(1\right)
\]
and
\[
\sqrt{T}
\left(
\boldsymbol{\hat{Q}}^{\mathrm{E},*}
-
\boldsymbol{\hat{Q}}^{\mathrm{E}}
\right)
=
\frac{1}{\sqrt{T}}
\sum_{t=1}^{T}
\left(
\boldsymbol{\Xi}_t^{*}
+
\boldsymbol{R}_t^{*}
\right)
+
o_{\mathbb{P}^{*}}\left(1\right)
\]
in the product weighted norm. Substituting these expansions into equation~\eqref{eq:proof-bootstrap-functional-expansion} yields
\[
\begin{aligned}
&\sqrt{T}
\left[
\Psi_h\left(
\boldsymbol{\hat{\beta}}^{*},
\boldsymbol{\hat{Q}}^{\mathrm{E},*}
\right)
-
\Psi_h\left(
\boldsymbol{\hat{\beta}},
\boldsymbol{\hat{Q}}^{\mathrm{E}}
\right)
\right]
\\
&\quad=
\frac{1}{\sqrt{T}}
\sum_{t=1}^{T}
\left\{
\boldsymbol{A}_{h,T}^{\prime}
\boldsymbol{L}_t^{*}
+
\dot{\Psi}_{h,T}^{\mathrm{dist}}
\left[
\boldsymbol{R}_t^{*}
\right]
+
\dot{\Psi}_{h,T}^{\mathrm{imp}}
\left[
\boldsymbol{R}_t^{*}
\right]
\right.
\\
&\qquad\left.
+
\dot{\Psi}_{h,T}^{\mathrm{dist}}
\left[
\boldsymbol{\Xi}_t^{*}
\right]
+
\dot{\Psi}_{h,T}^{\mathrm{imp}}
\left[
\boldsymbol{\Xi}_t^{*}
\right]
\right\}
+
o_{\mathbb{P}^{*}}\left(1\right).
\end{aligned}
\]
Define
\[
\begin{aligned}
Z_{h,t}^{\mathrm{tr},*}
&=
\boldsymbol{A}_{h,T}^{\prime}
\boldsymbol{L}_t^{*},
&
Z_{h,t}^{\mathrm{res},*}
&=
\dot{\Psi}_{h,T}^{\mathrm{dist}}
\left[
\boldsymbol{R}_t^{*}
\right]
+
\dot{\Psi}_{h,T}^{\mathrm{imp}}
\left[
\boldsymbol{R}_t^{*}
\right],
\\
Z_{h,t}^{\mathrm{dist},*}
&=
\dot{\Psi}_{h,T}^{\mathrm{dist}}
\left[
\boldsymbol{\Xi}_t^{*}
\right],
&
Z_{h,t}^{\mathrm{imp},*}
&=
\dot{\Psi}_{h,T}^{\mathrm{imp}}
\left[
\boldsymbol{\Xi}_t^{*}
\right].
\end{aligned}
\]
Combining the preceding display with equation~\eqref{eq:proof-bootstrap-remove-simulation} proves equation~\eqref{eq:bootstrap-linearization} for one response. Componentwise stacking proves the stated expansion for the fixed response vector.

Let \(\operatorname{BL}_1\left(\mathbb{R}^{M}\right)\) denote the class of functions bounded by one with Lipschitz constant at most one. Assumption~\ref{ass:bootstrap}\textup{(iii)} gives
\[
\sup_{f\in\operatorname{BL}_1\left(\mathbb{R}^{M}\right)}
\left|
\mathbb{E}^{*}
\left[
f\left(
\frac{1}{\sqrt{T}}
\sum_{t=1}^{T}
\boldsymbol{Z}_t^{*}
\right)
\right]
-
\mathbb{E}
\left[
f\left(
\boldsymbol{G}
\right)
\right]
\right|
\xrightarrow{\mathbb{P}}
0,
\qquad
\boldsymbol{G}
\sim
\mathcal{N}\left(
\boldsymbol{0},
\boldsymbol{\Omega}
\right).
\]
Let \(\boldsymbol{r}_T^{*}=o_{\mathbb{P}^{*}}\left(1\right)\) denote the remainder in equation~\eqref{eq:bootstrap-linearization}. Then
\[
\sup_{f\in\operatorname{BL}_1\left(\mathbb{R}^{M}\right)}
\left|
\mathbb{E}^{*}
\left[
f\left(
\frac{1}{\sqrt{T}}
\sum_{t=1}^{T}
\boldsymbol{Z}_t^{*}
+
\boldsymbol{r}_T^{*}
\right)
-
f\left(
\frac{1}{\sqrt{T}}
\sum_{t=1}^{T}
\boldsymbol{Z}_t^{*}
\right)
\right]
\right|
\leq
\mathbb{E}^{*}
\left[
\min\left\{
\left\|
\boldsymbol{r}_T^{*}
\right\|,
2
\right\}
\right]
=
o_{\mathbb{P}}\left(1\right).
\]
This proves equation~\eqref{eq:bootstrap-conditional-limit}.

Corollary~\ref{cor:feasible-wald-inference} gives
\[
\hat{\sigma}_h^2
\xrightarrow{\mathbb{P}}
\Omega_{hh}.
\]
Assumption~\ref{ass:bootstrap}\textup{(iii)} and the bootstrap version of the covariance-consistency argument give
\[
\left(
\hat{\sigma}_h^{*}
\right)^2
\xrightarrow{\mathbb{P}^{*}}
\Omega_{hh}
\]
in probability. If \(\Omega_{hh}>0\), the conditional delta method in \citet[Theorem~3.10.11]{VanDerVaartWellner2023} applied to the division map, together with Slutsky's lemma for the original statistic as stated in \citet[Lemma~2.8]{VanDerVaart1998}, gives
\[
R_{h,T}^{*}
\xrightarrow{\mathrm{d}^{*}}
\mathcal{N}\left(0,1\right)
\quad
\text{in probability},
\qquad
R_{h,T}
\xrightarrow{\mathrm{d}}
\mathcal{N}\left(0,1\right).
\]
Because the standard normal distribution function is continuous, the conditional analogue of P\'olya's lemma, obtained by applying the monotonicity argument in \citet[Lemma~2.11]{VanDerVaart1998} along almost-surely convergent subsequences, gives
\[
\sup_{x\in\mathbb{R}}
\left|
\Pr^{*}\left(
R_{h,T}^{*}
\leq x
\right)
-
\Phi\left(x\right)
\right|
\xrightarrow{\mathbb{P}}
0.
\]
The unconditional form of \citet[Lemma~2.11]{VanDerVaart1998} gives
\[
\sup_{x\in\mathbb{R}}
\left|
\Pr\left(
R_{h,T}
\leq x
\right)
-
\Phi\left(x\right)
\right|
\rightarrow
0.
\]
The triangle inequality proves the scalar studentization claim.

Suppose every diagonal element of \(\boldsymbol{\Omega}\) is positive. Define
\[
\mathcal{R}_T
=
\max_{1\leq m\leq M}
\left|
\frac{
\sqrt{T}
\left(
\hat{\psi}_{m,S}^{\mathrm{E}}
-
\psi_m
\right)
}{
\hat{\sigma}_m
}
\right|,
\qquad
\mathcal{R}_T^{*}
=
\max_{1\leq m\leq M}
\left|
\frac{
\sqrt{T}
\left(
\hat{\psi}_{m,S}^{\mathrm{E},*}
-
\hat{\psi}_{m,S}^{\mathrm{E}}
\right)
}{
\hat{\sigma}_m^{*}
}
\right|.
\]
The continuous-mapping theorem applied to the joint original and conditional bootstrap limits gives convergence of both statistics to
\[
\max_{1\leq m\leq M}
\left|
\frac{G_m}{\sqrt{\Omega_{mm}}}
\right|,
\qquad
\boldsymbol{G}
\sim
\mathcal{N}\left(
\boldsymbol{0},
\boldsymbol{\Omega}
\right).
\]
For each \(c\geq0\), the boundary of the event that this maximum is at most \(c\) is contained in the finite union
\[
\bigcup_{m=1}^{M}
\left[
\left\{
G_m
=
c\sqrt{\Omega_{mm}}
\right\}
\cup
\left\{
G_m
=
-c\sqrt{\Omega_{mm}}
\right\}
\right],
\]
which has probability zero. The limiting distribution is therefore continuous. Applying the preceding conditional P\'olya argument to \(\mathcal{R}_T^{*}\) and the unconditional argument to \(\mathcal{R}_T\) proves validity of the maximum absolute studentized statistic.
\end{proof}

\begin{proof}[Proof of Proposition~\ref{prop:path-only-resampling}]
Let
\[
D_{h,T}^{\dagger}
=
D_h\left(
\boldsymbol{P}_{1:h}^{\dagger};
\boldsymbol{\hat{\beta}},
\boldsymbol{\hat{Q}}^{\mathrm{E}},
\boldsymbol{y},
\boldsymbol{a},
k,
\delta
\right)
\]
and let \(D_{h,0}^{\dagger}\) denote the same path response evaluated at \(\boldsymbol{\beta}_0\) and \(\boldsymbol{Q}_0\). Conditional on \(\mathcal{F}_T\), the path-only draws are independent copies of \(D_{h,T}^{\dagger}\), with
\[
\mathbb{E}^{\dagger}
\left[
D_{h,T}^{\dagger}
\mathrel{\big|}
\mathcal{F}_T
\right]
=
\hat{\psi}_h^{\infty}.
\]

The \(L^2\) part of Lemma~\ref{lem:recursive-response-differentiability}, consistency of \(\boldsymbol{\hat{\beta}}\) and \(\boldsymbol{\hat{Q}}^{\mathrm{E}}\), and the maintained moment envelope give
\[
\mathbb{E}^{\dagger}
\left[
\left.
\left|
D_{h,T}^{\dagger}
-
D_{h,0}^{\dagger}
\right|^2
\right|
\mathcal{F}_T
\right]
\xrightarrow{\mathbb{P}}
0.
\]
Consequently,
\[
\hat{\psi}_h^{\infty}
\xrightarrow{\mathbb{P}}
\psi_h
\]
and
\begin{equation}
\operatorname{Var}^{\dagger}
\left(
D_{h,T}^{\dagger}
\mathrel{\big|}
\mathcal{F}_T
\right)
\xrightarrow{\mathbb{P}}
\operatorname{Var}\left(
D_{h,0}^{\dagger}
\right)
=
\mathcal{V}_h.
\label{eq:proof-path-only-variance}
\end{equation}

Define
\[
X_{h,T}^{\dagger}
=
D_{h,T}^{\dagger}
-
\hat{\psi}_h^{\infty},
\qquad
X_{h,0}^{\dagger}
=
D_{h,0}^{\dagger}
-
\psi_h.
\]
The preceding \(L^2\) convergence implies
\[
\mathbb{E}^{\dagger}
\left[
\left.
\left|
X_{h,T}^{\dagger}
-
X_{h,0}^{\dagger}
\right|^2
\right|
\mathcal{F}_T
\right]
\xrightarrow{\mathbb{P}}
0.
\]
By conditional Cauchy--Schwarz,
\[
\begin{aligned}
&\mathbb{E}^{\dagger}
\left[
\left.
\left|
\left(
X_{h,T}^{\dagger}
\right)^2
-
\left(
X_{h,0}^{\dagger}
\right)^2
\right|
\right|
\mathcal{F}_T
\right]
\\
&\quad\leq
\left\{
\mathbb{E}^{\dagger}
\left[
\left.
\left|
X_{h,T}^{\dagger}
-
X_{h,0}^{\dagger}
\right|^2
\right|
\mathcal{F}_T
\right]
\right\}^{1/2}
\left\{
\mathbb{E}^{\dagger}
\left[
\left.
\left(
\left|
X_{h,T}^{\dagger}
\right|
+
\left|
X_{h,0}^{\dagger}
\right|
\right)^2
\right|
\mathcal{F}_T
\right]
\right\}^{1/2}
\xrightarrow{\mathbb{P}}
0.
\end{aligned}
\]
Thus, the conditional squares are uniformly integrable in probability. Since \(\varepsilon\sqrt{S}\rightarrow\infty\), for every \(\varepsilon>0\),
\[
\mathbb{E}^{\dagger}
\left[
\left.
\left(
X_{h,T}^{\dagger}
\right)^2
\mathbf{1}
\left\{
\left|
X_{h,T}^{\dagger}
\right|
>
\varepsilon\sqrt{S}
\right\}
\right|
\mathcal{F}_T
\right]
=
o_{\mathbb{P}}\left(1\right).
\]
Conditional on \(\mathcal{F}_T\), the Lindeberg--Feller theorem in \citet[Proposition~2.27]{VanDerVaart1998} and equation~\eqref{eq:proof-path-only-variance} therefore give
\[
\sqrt{S}
\left(
\hat{\psi}_{h,S}^{\dagger}
-
\hat{\psi}_h^{\infty}
\right)
\xrightarrow{\mathrm{d}^{\dagger}}
\mathcal{N}\left(
0,
\mathcal{V}_h
\right)
\quad
\text{in probability}.
\]
This proves equation~\eqref{eq:path-only-limit}. The same argument includes the degenerate case \(\mathcal{V}_h=0\), in which the limiting distribution is concentrated at zero.

The identity
\[
\sqrt{T}
\left(
\hat{\psi}_{h,S}^{\dagger}
-
\hat{\psi}_h^{\infty}
\right)
=
\sqrt{\frac{T}{S}}
\sqrt{S}
\left(
\hat{\psi}_{h,S}^{\dagger}
-
\hat{\psi}_h^{\infty}
\right)
\]
gives the remaining conclusions. If \(S/T\rightarrow\infty\), the first factor converges to zero and the second is conditionally tight, so the product converges to zero in conditional probability. If \(S/T\rightarrow\kappa\in\left(0,\infty\right)\), conditional Slutsky gives
\[
\sqrt{T}
\left(
\hat{\psi}_{h,S}^{\dagger}
-
\hat{\psi}_h^{\infty}
\right)
\xrightarrow{\mathrm{d}^{\dagger}}
\mathcal{N}\left(
0,
\frac{\mathcal{V}_h}{\kappa}
\right)
\quad
\text{in probability}.
\]
Both limits condition on the fitted transition and residual quantiles and therefore contain no term involving \(\Omega_{hh}\).
\end{proof}

\begin{proof}[Proof of Theorem~\ref{thm:finite-simulation-limit}]
For the fixed collection of responses, define
\[
\boldsymbol{\Psi}\left(
\boldsymbol{\beta},
\boldsymbol{Q}
\right)
=
\mathbb{E}
\left[
\boldsymbol{D}\left(
\boldsymbol{P}_{1:H};
\boldsymbol{\beta},
\boldsymbol{Q}
\right)
\right].
\]
Let
\[
\boldsymbol{D}_T\left(
\boldsymbol{P}_{1:H}
\right)
=
\boldsymbol{D}\left(
\boldsymbol{P}_{1:H};
\boldsymbol{\hat{\beta}},
\boldsymbol{\hat{Q}}^{\mathrm{E}}
\right),
\qquad
\boldsymbol{D}_0\left(
\boldsymbol{P}_{1:H}
\right)
=
\boldsymbol{D}\left(
\boldsymbol{P}_{1:H};
\boldsymbol{\beta}_0,
\boldsymbol{Q}_0
\right).
\]
For any square-integrable vector-valued function \(\boldsymbol{f}\) of a master rank path, define
\[
\mathbb{G}_S\boldsymbol{f}
=
\frac{1}{\sqrt{S}}
\sum_{s=1}^{S}
\left\{
\boldsymbol{f}\left(
\boldsymbol{P}_{s,1:H}
\right)
-
\mathbb{E}
\left[
\boldsymbol{f}\left(
\boldsymbol{P}_{1:H}
\right)
\right]
\right\}.
\]

Equations~\eqref{eq:proof-main-functional-expansion}--\eqref{eq:proof-main-quantile-decomposition} concern \(\Psi_h\) and therefore do not use the rate of \(S\). Stacking the fixed response collection gives
\begin{equation}
\sqrt{T}
\left[
\boldsymbol{\Psi}\left(
\boldsymbol{\hat{\beta}},
\boldsymbol{\hat{Q}}^{\mathrm{E}}
\right)
-
\boldsymbol{\psi}
\right]
=
\frac{1}{\sqrt{T}}
\sum_{t=1}^{T}
\boldsymbol{Z}_t
+
o_{\mathbb{P}}\left(1\right).
\label{eq:proof-finite-s-exact-integration}
\end{equation}

By definition,
\[
\sqrt{T}
\left(
\boldsymbol{\hat{\psi}}_{S}^{\mathrm{E}}
-
\boldsymbol{\psi}
\right)
=
\sqrt{T}
\left[
\boldsymbol{\Psi}\left(
\boldsymbol{\hat{\beta}},
\boldsymbol{\hat{Q}}^{\mathrm{E}}
\right)
-
\boldsymbol{\psi}
\right]
+
\sqrt{\frac{T}{S}}
\mathbb{G}_S\boldsymbol{D}_T.
\]
Substituting equation~\eqref{eq:proof-finite-s-exact-integration} gives
\begin{equation}
\sqrt{T}
\left(
\boldsymbol{\hat{\psi}}_{S}^{\mathrm{E}}
-
\boldsymbol{\psi}
\right)
=
\frac{1}{\sqrt{T}}
\sum_{t=1}^{T}
\boldsymbol{Z}_t
+
\sqrt{\frac{T}{S}}
\mathbb{G}_S\boldsymbol{D}_T
+
o_{\mathbb{P}}\left(1\right).
\label{eq:proof-finite-s-initial-decomposition}
\end{equation}

Conditional on the observed sample, the simulation ranks are independent and
\[
\mathbb{E}
\left[
\left.
\left\|
\mathbb{G}_S
\left(
\boldsymbol{D}_T
-
\boldsymbol{D}_0
\right)
\right\|^2
\right|
\boldsymbol{Y}_0,
\ldots,
\boldsymbol{Y}_T
\right]
\leq
\mathbb{E}
\left[
\left.
\left\|
\boldsymbol{D}_T\left(
\boldsymbol{P}_{1:H}
\right)
-
\boldsymbol{D}_0\left(
\boldsymbol{P}_{1:H}
\right)
\right\|^2
\right|
\boldsymbol{Y}_0,
\ldots,
\boldsymbol{Y}_T
\right].
\]
The right-hand side converges to zero in probability by the \(L^2\) part of Lemma~\ref{lem:recursive-response-differentiability}, equation~\eqref{eq:beta-linearization}, equation~\eqref{eq:generated-quantile-expansion}, and the fixed response collection. Conditional Chebyshev's inequality gives
\[
\mathbb{G}_S
\left(
\boldsymbol{D}_T
-
\boldsymbol{D}_0
\right)
=
o_{\mathbb{P}}\left(1\right).
\]
Since \(T/S\rightarrow\kappa^{-1}\), equation~\eqref{eq:proof-finite-s-initial-decomposition} becomes
\[
\begin{aligned}
\sqrt{T}
\left(
\boldsymbol{\hat{\psi}}_{S}^{\mathrm{E}}
-
\boldsymbol{\psi}
\right)
&=
\frac{1}{\sqrt{T}}
\sum_{t=1}^{T}
\boldsymbol{Z}_t
\\
&\quad+
\sqrt{\frac{T}{S}}
\frac{1}{\sqrt{S}}
\sum_{s=1}^{S}
\left[
\boldsymbol{D}_0\left(
\boldsymbol{P}_{s,1:H}
\right)
-
\boldsymbol{\psi}
\right]
+
o_{\mathbb{P}}\left(1\right),
\end{aligned}
\]
which proves equation~\eqref{eq:finite-simulation-expansion}.

For every \(\boldsymbol{c}\in\mathbb{R}^{M}\), the variables
\[
\boldsymbol{c}^{\prime}
\left[
\boldsymbol{D}_0\left(
\boldsymbol{P}_{s,1:H}
\right)
-
\boldsymbol{\psi}
\right],
\qquad
s=1,\ldots,S,
\]
are independent and identically distributed, centered, and have a finite \(2+\eta\) moment. The Lindeberg--Feller theorem in \citet[Proposition~2.27]{VanDerVaart1998} and the Cram\'er--Wold device give
\[
\frac{1}{\sqrt{S}}
\sum_{s=1}^{S}
\left[
\boldsymbol{D}_0\left(
\boldsymbol{P}_{s,1:H}
\right)
-
\boldsymbol{\psi}
\right]
\xrightarrow{\mathrm{d}}
\mathcal{N}\left(
\boldsymbol{0},
\boldsymbol{\Omega}_{\mathrm{MC}}
\right).
\]
Proposition~\ref{prop:recursive-joint-clt} gives
\[
\frac{1}{\sqrt{T}}
\sum_{t=1}^{T}
\boldsymbol{Z}_t
\xrightarrow{\mathrm{d}}
\mathcal{N}\left(
\boldsymbol{0},
\boldsymbol{\Omega}
\right).
\]

For every \(T\) and \(S\), the first sum in equation~\eqref{eq:finite-simulation-expansion} is measurable with respect to the observed sample, while the second is measurable with respect to the independently generated rank array. Thus, the two sums are independent. For \(\boldsymbol{c},\boldsymbol{d}\in\mathbb{R}^{M}\), their joint characteristic function factors as
\[
\begin{aligned}
&\mathbb{E}
\exp
\left\{
\mathrm{i}
\boldsymbol{c}^{\prime}
\frac{1}{\sqrt{T}}
\sum_{t=1}^{T}
\boldsymbol{Z}_t
+
\mathrm{i}
\boldsymbol{d}^{\prime}
\frac{1}{\sqrt{S}}
\sum_{s=1}^{S}
\left[
\boldsymbol{D}_0\left(
\boldsymbol{P}_{s,1:H}
\right)
-
\boldsymbol{\psi}
\right]
\right\}
\\
&\quad=
\mathbb{E}
\exp
\left\{
\mathrm{i}
\boldsymbol{c}^{\prime}
\frac{1}{\sqrt{T}}
\sum_{t=1}^{T}
\boldsymbol{Z}_t
\right\}
\mathbb{E}
\exp
\left\{
\mathrm{i}
\boldsymbol{d}^{\prime}
\frac{1}{\sqrt{S}}
\sum_{s=1}^{S}
\left[
\boldsymbol{D}_0\left(
\boldsymbol{P}_{s,1:H}
\right)
-
\boldsymbol{\psi}
\right]
\right\}.
\end{aligned}
\]
The limiting characteristic function is that of two independent centered Gaussian vectors with covariance matrices \(\boldsymbol{\Omega}\) and \(\boldsymbol{\Omega}_{\mathrm{MC}}\). Since \(\sqrt{T/S}\rightarrow\kappa^{-1/2}\), Slutsky's lemma, as stated in \citet[Lemma~2.8]{VanDerVaart1998}, and equation~\eqref{eq:finite-simulation-expansion} give
\[
\sqrt{T}
\left(
\boldsymbol{\hat{\psi}}_{S}^{\mathrm{E}}
-
\boldsymbol{\psi}
\right)
\xrightarrow{\mathrm{d}}
\mathcal{N}\left(
\boldsymbol{0},
\boldsymbol{\Omega}
+
\kappa^{-1}
\boldsymbol{\Omega}_{\mathrm{MC}}
\right).
\]
The characteristic-function factorization also proves asymptotic independence of the sampling and simulation components.
\end{proof}

\begin{proof}[Consistency of the correction in equation~\eqref{eq:finite-simulation-correction}]
Let
\[
\boldsymbol{D}_{s,0}
=
\boldsymbol{D}_0\left(
\boldsymbol{P}_{s,1:H}
\right).
\]
The \(L^2\) convergence used in Step~3 of the preceding proof and the conditional law of large numbers give
\[
\frac{1}{S}
\sum_{s=1}^{S}
\left\|
\boldsymbol{\hat{D}}_s^{\mathrm{E}}
-
\boldsymbol{D}_{s,0}
\right\|^2
=
o_{\mathbb{P}}\left(1\right).
\]
By Cauchy--Schwarz,
\[
\left\|
\frac{1}{S}
\sum_{s=1}^{S}
\boldsymbol{\hat{D}}_s^{\mathrm{E}}
\left(
\boldsymbol{\hat{D}}_s^{\mathrm{E}}
\right)^{\prime}
-
\frac{1}{S}
\sum_{s=1}^{S}
\boldsymbol{D}_{s,0}
\boldsymbol{D}_{s,0}^{\prime}
\right\|
=
o_{\mathbb{P}}\left(1\right).
\]
The ordinary law of large numbers gives
\[
\frac{1}{S}
\sum_{s=1}^{S}
\boldsymbol{D}_{s,0}
\xrightarrow{\mathbb{P}}
\boldsymbol{\psi},
\qquad
\frac{1}{S}
\sum_{s=1}^{S}
\boldsymbol{D}_{s,0}
\boldsymbol{D}_{s,0}^{\prime}
\xrightarrow{\mathbb{P}}
\mathbb{E}
\left[
\boldsymbol{D}_0\left(
\boldsymbol{P}_{1:H}
\right)
\boldsymbol{D}_0\left(
\boldsymbol{P}_{1:H}
\right)^{\prime}
\right].
\]
Therefore,
\[
\boldsymbol{\hat{\Omega}}_{\mathrm{MC}}^{\mathrm{E}}
\xrightarrow{\mathbb{P}}
\boldsymbol{\Omega}_{\mathrm{MC}}.
\]
Together with \(\boldsymbol{\hat{\Omega}}\xrightarrow{\mathbb{P}}\boldsymbol{\Omega}\) and \(T/S\rightarrow\kappa^{-1}\), this yields
\[
\boldsymbol{\hat{\Omega}}_{T,S}^{\mathrm{tot}}
\xrightarrow{\mathbb{P}}
\boldsymbol{\Omega}
+
\kappa^{-1}
\boldsymbol{\Omega}_{\mathrm{MC}}.
\]
Thus, equation~\eqref{eq:finite-simulation-correction} consistently estimates the covariance matrix in equation~\eqref{eq:finite-simulation-limit}.
\end{proof}

\begin{proof}[Common-random-number covariance in equation~\eqref{eq:paired-mc-covariance}]
For \(r\in\left\{\mathrm{E},\mathrm{S}\right\}\), let
\[
\boldsymbol{D}_T^r\left(
\boldsymbol{P}_{1:H}
\right)
\]
denote the corresponding fitted path-response vector. Consistency of both quantile estimators and the \(L^2\) continuity in Lemma~\ref{lem:recursive-response-differentiability} imply
\[
\mathbb{E}
\left[
\left.
\left\|
\boldsymbol{D}_T^r\left(
\boldsymbol{P}_{1:H}
\right)
-
\boldsymbol{D}_0\left(
\boldsymbol{P}_{1:H}
\right)
\right\|^2
\right|
\boldsymbol{Y}_0,
\ldots,
\boldsymbol{Y}_T
\right]
\xrightarrow{\mathbb{P}}
0.
\]
Consequently, for \(r,q\in\left\{\mathrm{E},\mathrm{S}\right\}\),
\[
\operatorname{Cov}
\left[
\boldsymbol{D}_T^r\left(
\boldsymbol{P}_{1:H}
\right),
\boldsymbol{D}_T^q\left(
\boldsymbol{P}_{1:H}
\right)
\mathrel{\big|}
\boldsymbol{Y}_0,
\ldots,
\boldsymbol{Y}_T
\right]
\xrightarrow{\mathbb{P}}
\boldsymbol{\Omega}_{\mathrm{MC}}.
\]
The conditional law of large numbers gives
\[
\boldsymbol{\hat{\Omega}}_{\mathrm{MC}}^{rq}
\xrightarrow{\mathbb{P}}
\boldsymbol{\Omega}_{\mathrm{MC}}.
\]
Using the same rank paths for both estimators,
\[
\boldsymbol{\hat{\Omega}}_{\mathrm{MC}}^{\Delta}
=
\boldsymbol{\hat{\Omega}}_{\mathrm{MC}}^{\mathrm{SS}}
+
\boldsymbol{\hat{\Omega}}_{\mathrm{MC}}^{\mathrm{EE}}
-
\boldsymbol{\hat{\Omega}}_{\mathrm{MC}}^{\mathrm{SE}}
-
\boldsymbol{\hat{\Omega}}_{\mathrm{MC}}^{\mathrm{ES}}
\xrightarrow{\mathbb{P}}
\boldsymbol{0}.
\]
If the estimators use independent rank arrays, their conditional cross-covariances are zero. The simulation covariance of their difference then converges to
\[
\boldsymbol{\Omega}_{\mathrm{MC}}
+
\boldsymbol{\Omega}_{\mathrm{MC}}
=
2\boldsymbol{\Omega}_{\mathrm{MC}}.
\]
\end{proof}
\section{Implementation and reporting conventions}
\label{app:implementation}

The theoretical results allow different transition estimators, residual normalizations, numerical differentiation routines, and rank smoothers, provided the corresponding assumptions are verified. Comparisons across estimators require a stricter convention: every element of the response specification other than the quantile estimator must remain fixed.

\begin{enumerate}[label=\textbf{\arabic*.},leftmargin=3.0em]
\item \textbf{Transition and residuals.} Estimate \(\boldsymbol{\beta}_0\) by the procedure that justifies equation~\eqref{eq:beta-linearization}. Recover the structural residuals with the same residual map and impose the same location, scale, sign, and ordering normalizations in the original sample and in every bootstrap replication.

\item \textbf{Empirical quantiles and endpoint treatment.} Construct each component of \(\boldsymbol{\hat{Q}}^{\mathrm{E}}\) from the corresponding residual order statistics. Any clipping or endpoint rule must be applied identically to all estimators and bootstrap replications. Its effect on the reported response must satisfy the negligibility condition in Remark~\ref{rem:residual-quantile-clipping}. For a smoothed comparison, report the smoothing operator, bandwidth, and boundary treatment and verify Assumption~\ref{ass:smoothing}.

\item \textbf{Response simulation.} Generate one master array of rank paths through the largest reported horizon and use it for every response in the fixed collection. Use the same array for empirical and smoothed estimates. The path count \(S\) and the estimated numerical-integration shares should be reported. When these shares are material, use the covariance correction in equation~\eqref{eq:finite-simulation-correction}.

\item \textbf{Sampling inference.} Influence-function inference uses the observation-level contributions in equation~\eqref{eq:estimated-influence-contributions} and the long-run covariance estimator in equation~\eqref{eq:feasible-long-run-covariance}. Bootstrap inference uses Algorithm~\ref{alg:recursive-residual-bootstrap} and requires full regeneration and re-estimation. The bootstrap theorem conditions on fixed response-simulation ranks and assumes \(S/T\rightarrow\infty\). When \(S/T\) has a finite positive limit, the analytical correction in equation~\eqref{eq:finite-simulation-correction} accounts for numerical integration; Theorem~\ref{thm:bootstrap-validity} alone does not supply a finite-\(S\) bootstrap correction.

\item \textbf{Reported information.} A replication archive should record the transition estimator, innovation normalizations, initial state, response variable, shocked component, shock map, future innovation law, horizons, random-number seeds, number of response paths, covariance or bootstrap procedure, and any endpoint or smoothing choices. These records are sufficient to verify that an estimator comparison preserves the population response defined in equation~\eqref{eq:population-irf}.
\end{enumerate}

\bibliographystyle{apalike}
\bibliography{nonlinear_irf_references}

\end{document}